\documentclass[10pt,usenames,dvipsnames]{article}
\usepackage[letterpaper, left=1.1truein, right=1.1truein, top = 1in, bottom = 1in]{geometry}

\usepackage{enumerate}
\usepackage{amsmath,amssymb}
\usepackage{natbib}
\usepackage{caption}
\usepackage{enumitem}
\usepackage{comment}
\usepackage[usenames]{color}
\usepackage{bm}
\usepackage{ying}
\usepackage{multirow}
\usepackage{rotating}
\usepackage{fullpage}
\usepackage{authblk}
\usepackage{mathtools}
\usepackage{parskip}
\usepackage{float}
\usepackage{subcaption}
\usepackage{tikz,tabularx}
\usetikzlibrary{patterns}
\usetikzlibrary{arrows}
\usetikzlibrary{tikzmark, positioning, fit, shapes.misc}

\usetikzlibrary{decorations.pathreplacing, calc}
\tikzset{brace/.style={decorate, decoration={brace}},
  brace mirrored/.style={decorate, decoration={brace,mirror}},
}

\newcolumntype{g}{>{\columncolor{red}}c}

\usepackage{graphicx,amssymb}
\usepackage{xcolor}

\usepackage[colorlinks,
            linkcolor=red,
            anchorcolor=blue,
            citecolor=blue
            ]{hyperref}
\usepackage{algorithm}
\usepackage{algorithmic}

\let\emptyset\varnothing 

\def\asto{{\stackrel{\textrm{a.s.}}{\to}}}

\def \iid {\stackrel{\text{i.i.d.}}{\sim}}

\def \calib {\textrm{calib}}

\def \test {\textrm{test}}

\def \hete {\textnormal{hete}}
\def \homo {\textnormal{homo}}
\def \fr {{\textnormal{FR}}}
\def \sdr {{\textnormal{SDR}}}
\def \tdr {{\textnormal{TDR}}}
\def \mdr {{\textnormal{MDR}}}
\def \mF {{\mathrm{F}}}
\def \cs {{\textnormal{CS}}}

\providecommand{\keywords}[1]
{{ 
  \fontsize{9}{12}\selectfont
  \textbf{\textit{Keywords:}} #1
}}

\theoremstyle{plain}

\allowdisplaybreaks

\usepackage{hyperref}
\def\@#1\@{\begin{align}#1\end{align}}
\def\$#1\${\begin{align*}#1\end{align*}}

\usepackage{color-edits}
\addauthor[Ying]{ying}{magenta}
\addauthor[Tian]{tian}{blue}

\usepackage{xcolor}
 
\title{Conformal Selective Prediction with General Risk Control}
\author[1]{Tian Bai}
\author[2]{Ying Jin}%\thanks{Email: yjinstat@wharton.upenn.edu}}
\affil[1]{Department of Statistics, Stanford University}
\affil[2]{Department of Statistics and Data Science, University of Pennsylvania}
\date{}

\begin{document}

\maketitle

\begin{abstract}
    In deploying artificial intelligence (AI) models, selective prediction offers the option to abstain from making a prediction when uncertain about model quality. To fulfill its promise, it is crucial to enforce strict and precise error control over cases where the model is trusted.  
    We propose Selective Conformal Risk control with E-values (SCoRE), a new framework for deriving such decisions for any trained model and any user-defined, bounded and continuously-valued risk. SCoRE offers two types of guarantees on the risk among ``positive'' cases in which the system opts to trust the model.  
    Built upon conformal inference and hypothesis testing ideas, SCoRE first constructs a class of (generalized) e-values, which are non-negative random variables whose product with the unknown risk has expectation no greater than one. Such a property is ensured by data exchangeability without requiring any modeling assumptions. Passing these e-values on to hypothesis testing procedures, we yield the binary trust decisions with finite-sample error control. SCoRE avoids the need of uniform concentration, and can be readily extended to settings with distribution shifts. 
    We evaluate the proposed methods with simulations and demonstrate their efficacy through applications to error management in drug discovery, health risk prediction, and large language models.
\end{abstract}
\keywords{Selective prediction; Conformal inference; Hypothesis testing; Multiple testing; Trustworthy AI.}
% \fontsize{10}{12}\selectfont

% !TEX root = main.tex

\section{Introduction}

% \yingcomment{Alternative story: uncertainty quantification for decision making -- one way is to use uncertainty to decide whether to trust a model (abstain or ask for help otherwise, etc.).}

% Trust in artificial intelligence (AI) systems is essential for their safe and effective deployment across science, healthcare, and industry. 
Limiting errors when deploying AI models is an indispensable component of their life cycle~\citep{wiens2019no,kompa2021second}. 
As model prediction errors are inevitable---arising from inadequate modeling, sampling uncertainty, and randomness in training---post-training mechanisms that manage  errors at deployment are especially important. 
A prominent approach is to deploy a model with an \emph{abstention} (or \emph{rejection}) option: a model is used only when it appears reliable and is withheld otherwise~\citep{chow2009optimum,el2010foundations}. 
% Selective prediction offers  the option to abstain from using a model when uncertain about its quality, thereby ensuring only trustworthy outputs are deployed~\citep{chow1957optimum,el2010foundations}. 
This paradigm, often called \emph{selective prediction}, aims to control errors precisely among the predictions we choose to deploy while maintaining high coverage, i.e., deploying as often as possible~\citep{geifman2017selective}.
% As highlighted in~\cite{geifman2017selective}, the ultimate goal is to equip the model with a mechanism to determine when to \emph{trust} it with precise error control, while keeping the coverage (i.e., how often the model is trusted) as high as possible. 
This leads to the  general question:
\vspace{0.25em}
\begin{quote}
\emph{Given a black-box model $f$, labeled data $\{(X_i,Y_i)\}_{i=1}^n$ and a new instance $X_{n+1}$, can we derive a trust decision $\psi_{n+1}\in \{0,1\}$ that controls an unknown risk $L_{n+1}$ among those with $\psi_{n+1}=1$?}
\end{quote}
\vspace{0.25em}

% We shall also leverage a set of labeled data $\{(X_i,Y_i)\}_{i=1}^n$ where $n\in \NN$. 
Most prior work addresses this problem for classifiers $f$ with binary risks $L_{n+1}\in \{0,1\}$, typically offering either asymptotic control of a selective error rate or  finite-sample bound based on  uniform concentration of empirical classification errors. 
Recent extensions of conformal prediction provide finite-sample, distribution-free guarantees for selective tasks with binary risks~\citep{vovk2005algorithmic,jin2023selection,jin2023model}, and have been used to identify trustworthy AI outputs in applications such as compound screening~\citep{Bai_Tang_Xu_Svetnik_Khalili_Yu_Yang_2024},large language models~\citep{gui2024conformal,jung2024trust}, and medical foundation models~\citep{jin2026act}. 
However, many high-stakes applications demand control of \emph{continuously-valued} risks, where a principled and powerful ``trust'' mechanism remains underdeveloped:
% \vspace{0.25em}
\begin{itemize}[left=0.5em]
    \item In \emph{drug discovery}, the early screening phase uses AI models to identify drug candidates with high binding affinities for follow-up experiments. False leads waste resources, and a natural quantitative risk is a (continuous) development cost incurred by pursuing an inactive candidate~\citep{jin2023selection,Bai_Tang_Xu_Svetnik_Khalili_Yu_Yang_2024}, e.g., $L_{n+1}=\text{cost}\cdot\ind\{Y_{n+1}\leq c\}$ for the unknown affinity $Y_{n+1}$ and a known threshold $c\in \RR$.  
    \item In \emph{radiology report generation}, an AI-generated report  is useful only when it is sufficiently close to expert references~\citep{gui2024conformal}. 
    Here, the risk can be naturally continuous, such as a  semantic distance between the model output $f(X_{n+1})$ and the (unknown) expert-level reference report $Y_{n+1}$. 
    \item In \emph{healthcare management}, hospitals routinely use predictions of continuous outcomes, such as ICU length of stay, to support downstream planning and interventions~\citep{bertsimas2020predictive,marafino2021evaluation,hu2025implementing}. Practitioners may seek to deploy only highly accurate predictions~\citep{jin2026act}, where the risk can often be a continuous metric such as the squared prediction error. 
\end{itemize}
\vspace{0.25em}
Besides the focus on continuous risks, these settings also call for different \emph{notions} of risk control tied to downstream objectives: one may seek to bound the expected total risk accumulated over deployed instances, while another may prioritize the expected risk per deployed instance. As we shall see, these considerations reflect distinct error notions. Ideally, after all, such guarantees should be finite-sample and distribution-free, applying to any black-box model under mild exchangeability assumptions.

\begin{figure} 
    \centering
    \includegraphics[width=\linewidth]{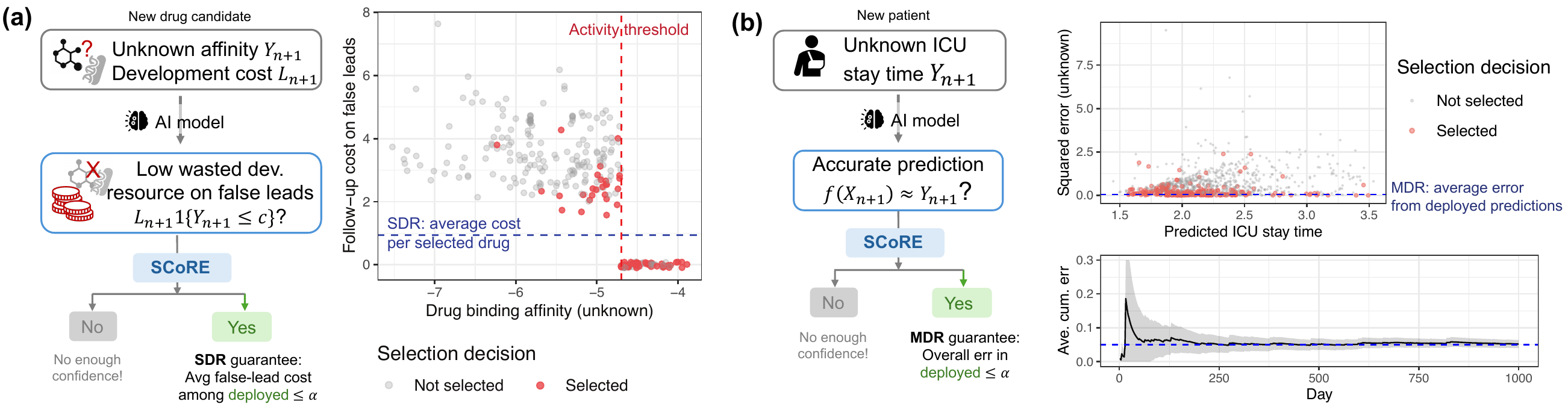}
    \caption{%Application of SCoRE to drug discovery and clinical prediction error management. 
    {\small \textbf{Application of SCoRE. (a) Drug discovery.} Left: given predictions of an unknown drug binding affinity $Y_{n+1}$, SCoRE controls the average cost $L_{n+1}\ind\{Y_{n+1}\leq c\}$ among the selected compounds. 
    Right: in a real drug discovery dataset, the average cost among selected candidates (red dots below activity threshold) is below $\alpha=1$. 
    \textbf{(b) Clinical prediction.}  Left: SCoRE identifies predictions of health outcomes with small error $f(X_{n+1})\approx Y_{n+1}$ with MDR control, ensuring a low total squared error in deployment. Right: selection results in a semi-synthetic dataset (upper), and mean squared error per day  when 50 patients await predictions every day (lower).}}\vspace{-1em}
    \label{fig:intro_example}
\end{figure}

\subsection{Our contributions}

We introduce Selective Conformal Risk control with E-values (SCoRE), a new framework that provides 
finite-sample, distribution-free control of bounded, continuously-valued risks in selectively trusting any model. 
Viewing trust as a binary decision for each test instance, we formalize two criteria:
\vspace{0.25em}
\begin{enumerate}[label=(\roman*)]
    \item Marginal deployment risk (MDR): $\EE[L_{n+1}\psi_{n+1}]$, the expected risk incurred by deployed instances; 
    \item Selective deployment risk (SDR): $\EE [ (\sum_j L_{n+j}\psi_{n+j})/(1\vee \sum_j \psi_{n+j})]$ when  given multiple test instances $\{X_{n+j}\}_{j=1}^m$, which quantifies the average risk \emph{per} deployed instance.
\end{enumerate}
\vspace{0.25em}
See Section~\ref{subsec:def_risk} for formal definitions. 
Both metrics target ``positive'' deployed cases and conceptually parallel  type-I error metrics in hypothesis testing. The SDR, which requires an intrinsically ``selective'' treatment, extends the selective prediction literature beyond binary risks~\citep{chow2009optimum,el2010foundations,geifman2017selective}, while the MDR offers a complementary perspective within our unified framework.

Figure~\ref{fig:intro_example} previews two representative applications. 
In a drug discovery task (panel (a)), our SDR-control procedure selects compounds while controlling the average cost wasted on false leads. In a clinical prediction task (panel (b)), our MDR-control procedure identifies highly accurate predictions and tightly controls the total prediction error accumulated across daily batches  (lower right, divided by 50). 
Achieving such guarantees is nontrivial because the MDR and SDR concern the unknown risk on a data-dependent subset of test instances: we must decide which test instances to deploy using calibration data (and predictions for the risk), yet the deployment risk of each selected instance depends on an unseen outcome.
% In panel (a), we apply SCoRE to a drug discovery task where the goal is to identify compounds with unknown binding affinity $Y_{n+1}>c$ for a pre-specified value $c\in \RR$, while limiting a continuously-valued development cost of inactive compounds. Our SDR-control variant selects compounds (red dots) while ensuring the average wastage of development cost  per selected molecule is controlled below a given threshold. 
% In panel (b), we apply SCoRE to identify accurate predictions for a health outcome. Simulating a scenario where $50$ patients arrive per day, our MDR-control variant selects those with small prediction error (red dots in the upper right), and the total prediction error in deployment among the daily batch of patients is tightly controlled (lower right, divided by 50). 

% 
Methodologically, SCoRE connects selective deployment to hypothesis testing with e-values~\citep{vovk2021values}.  
The key idea is to connect a deploy decision with a reject decision in hypothesis testing.  
We show that applying standard hypothesis testing procedures that threshold a class of (risk-adjusted) e-values obeying $E_{n+j}\geq 0$ and $E[L_{n+j} E_{n+j}] \leq 1$ leads to finite-sample MDR and SDR control. 
For each test point, we use a set of labeled calibration data to construct such an e-value  under standard exchangeability conditions, which then leads to risk control.  Figure~\ref{fig:viz} summarizes the workflow.
While e-values have been used to test (deterministic) hypotheses~\citep{vovk2021values,ramdas2024hypothesis}, their expectation-based validity is a natural match for controlling the expectation of unknown risks.  
Notably, our guarantees  only require  the exchangeability among data. It avoids uniform concentration arguments common in selective prediction~\citep{geifman2017selective}, accommodates dependence among data (e.g., predictions from graphs)~\citep{huang2024uncertainty}, and extends naturally to covariate shift settings~\citep{TibshiraniBCR19}.

% The validity of such e-values, and thus the guarantees of our methods, only requires the exchangeability among data. This brings several advantages. First, inheriting the model-free nature of conformal inference, our method does not require modeling assumptions on the data and is applicable to any  ``black-box'' model. Second, it avoids the need of  uniform concentration inequalities common in selective prediction, which can be conservative when the sample size is limited~\citep{geifman2017selective}. 
% Third, it accommodates dependency such as predictions from graph neural networks (GNNs) trained on test data~\citep{huang2024uncertainty}. 
% Finally,  it
% enables natural extensions to  complex settings such as those with distribution shifts~\citep{TibshiraniBCR19}, which we demonstrate as an extension of our main framework. % and online decisions with streaming data~\citep{xu2024online}. 

Finally, risk control should be balanced with utility~\citep{geifman2017selective}:
a mechanism that abstains too often  limits the model power.
We analyze the power through any user-specified reward of deployment,
% In addition, in certain settings, deploying the model on specific key instances may be preferable: in drug discovery, scientists may prioritize drug candidates whose molecular structures are substantially dissimilar from existing ones.
% We thus propose the notion of deployment reward that describes the value of deploying an instance, and define the power as the total reward in deployed instances. 
leading to a Neyman--Pearson-type characterization of the asymptotically optimal scores that guide selection. 
We also develop practical strategies to achieve this when the risk is consistently estimated.

\begin{figure} 
    \centering
    \includegraphics[width=0.8\linewidth]{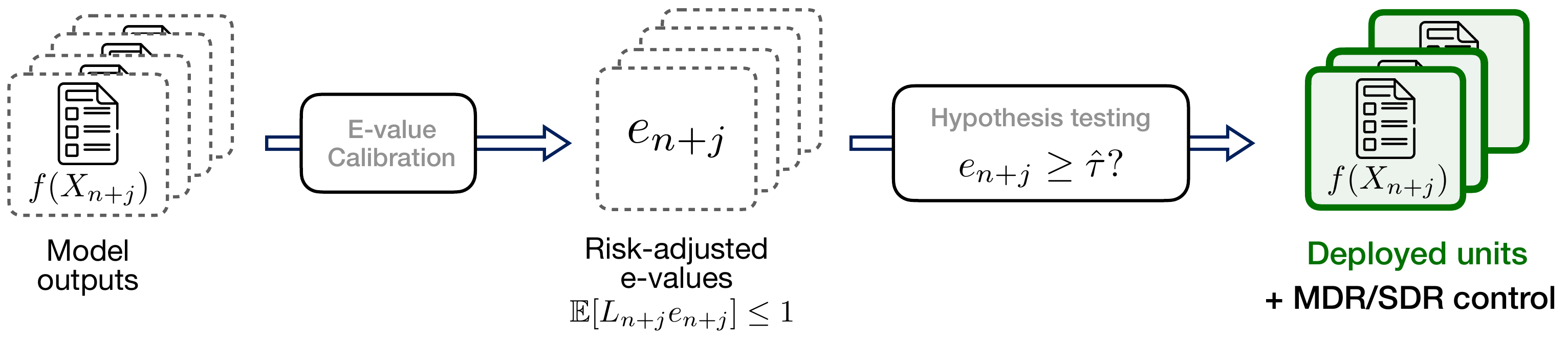}
    \caption{{\small Visualization of the SCoRE workflow. Starting with any model outputs for unlabeled test points and a score that estimates the deployment risks, we use a set of calibration data to construct a risk-adjusted e-value for every test sample, and pass them on to hypothesis testing procedures and select test samples with reliable prediction.}}
    \label{fig:viz}
\end{figure}

\vspace{-0.5em}
\paragraph{Paper outline.} The rest of the paper is organized as follows. Section~\ref{sec:setup} sets up the problem, introducing the two deployment risk metrics with concrete examples. Section~\ref{sec:general_method} introduces the general method of SCoRE, including the notion of risk-adjusted e-values and how they can be used to achieve MDR and SDR control via hypothesis testing. 
Section~\ref{sec:marginal} and Section~\ref{sec:selective} introduce the concrete procedures in constructing these e-values for MDR and SDR control, respectively. 
Section~\ref{sec:weighted} briefly describes a natural extension to the covariate shift setting. 
Demonstration of representative applications and simulations are in Section~\ref{sec:real} and Section~\ref{sec:simu}. 

\vspace{-0.5em}
\paragraph{Data and code.} Reproducibility code for both our simulation and real data experiment can be found at the Github repository \url{https://github.com/Tian-Bai/SCoRE}.

\section{Problem setup}
\label{sec:setup}

\subsection{Defining deployment risk}
\label{subsec:def_risk}

We begin by introducing the setup and our notions of deployment risk. Assume access to a set of labeled (calibration) data $\cD_\calib = \{(X_i,Y_i)\}_{i=1}^n$, 
and a set of unlabeled (test) data $\cD_\test = \{X_{n+j}\}_{j=1}^m$ 
whose labels $\{Y_{n+j}\}_{j=1}^m$ are unobserved. 
For now, we assume  that $\{(X_i,Y_i)\}_{i=1}^{m+n}$ are exchangeable across $i\in [m+n]$; we relax this in Section~\ref{sec:weighted} to covariate shift settings. 
Here $X_i\in \cX$ is the feature and $Y_i\in \cY$ is the label. 
We are interested in deploying a model $f\colon \cX\to \cY$. 
% for convenience, we assume its training process is independent of the calibration and test data. 
% , but all techniques apply under the minimal condition that $\{(f(X_i),Y_i)\}_{i=1}^{n+m}$ are exchangeable, such as with GNNs trained over an entire graph~\citep{huang2024uncertainty}. 
It may be a regression model with $\cY=\RR$, or a classification model with $\cY = \{1,\dots,K\}$, or a language model where $\cY$ is the space of natural language.

We quantify the consequence of erroneously deploying $f$ on a new instance $X$ with unknown outcome $Y$ by a numerical \emph{risk} $\cL(f,X,Y) \in \RR^+$, where $\cL(\cdot)$ is a known mapping. 
Throughout, we work with a bounded risk, and without loss of generality, assume $\cL(f,X,Y)\in [0,1]$. 
Concrete examples of risks are discussed in Section~\ref{subsec:intro_app}. 
The risk for the $j$-th test point is denoted as $L_{n+j} = \cL(f,X_{n+j},Y_{n+j})$, which is \emph{unknown} since the label $Y_{n+j}$ is not observed. 
To formalize optimality of deployment outcomes, we allow a user-specific \emph{reward} for deployment, represented by a random variable $r(f,X,Y)\in \RR^+$, where $r(\cdot)$ is a known mapping. Intuitively, $r$ captures the utility of deploying a model on a test instance, such as the scientific value, operational benefit, or downstream savings in resources. 

We will use a pre-trained score function $s\colon \cX\to \mathbb{R}$ to calibrate the deployment decisions, and our procedure prioritizes instances with smaller scores $s(X)$ (so smaller scores shall indicate preliminary evidence for safer instances). The validity of our procedures does not rely on the choice of of $s$.  
We assume for convenience that both $f(\cdot)$ and $s(\cdot)$ are trained independently of $\cD_\calib$ and $\cD_\test$. More generally, our results apply as long as the triplets $(s(X_i),f(X_i),Y_i)$'s are exchangeable across $i\in [n+m]$, such as with graph neural networks trained over an entire graph with a separate labeled training data and all features of $i\in [n+m]$~\citep{huang2024uncertainty}. 
% (e.g., a graph neural network trained over the entire graph). 
A natural idea is to set $s(X)$ as a prediction for $\cL(f,X,Y)$; we discuss optimal score choice later.

Our goal is to construct binary decisions $\hat\psi_{n+j} \in \{0,1\}$ for all $j\in [m]$, which may depend on both $\cD_\calib$ and $\cD_\test$. 
Here, $\hat\psi_{n+j}=1$ means to deploy/trust the model for $X_{n+j}$, and $\hat\psi_{n+j}=0$ means abstention. 
What \emph{deploying} a model means depends on the context (e.g., sending a compound to wet-lab follow-up, accepting an automated clinical prediction, or releasing an LLM-generated report). 
% In drug discovery where $X_{n+j}$ is a molecule, it means we select $X_{n+j}$ as a promising candidate for further investigation. 
% In cancer subtyping where $X_{n+j}$ is the  scanned image of a tumor, it means we trust the diagnosis given by an AI pathological model. 
Following~\cite{geifman2017selective}, we emphasize risk control over the trusted cases and consider two error metrics. 

\vspace{-0.5em}
\paragraph{Marginal deployment risk (MDR).}
The first error metric concerns the overall (expected) risk. Given a user-specified error rate $\alpha$, we aim to develop $\hat\psi_{n+j}\in \{0,1\}$ such that 
\@\label{eq:def_mgn_risk}
\mdr := \EE[L_{n+1} \cdot \hat\psi_{n+1}  ]
\@
is controlled below $\alpha$. 
This is an analogue for classical type-I error control~\citep{lehmann1986testing} for a random and non-binary risk. 
% This notion is related to  classical hypothesis testing for a deterministic null hypothesis $H_0$. There, the Type-I error is defined as $\EE[\ind\{H_0\text{ is true}\} \cdot \hat\psi ]$ for a test function $\hat\psi\in \{0,1\}$, where $\hat\psi=1$ means rejecting $H_0$~\citep{lehmann1986testing}.  
% In our context, however, the risk is random and non-binary.
% Conformal inference can be viewed as inverting tests with a transformed risk:
% \$
% \EE[ \ind\{ V(X_{n+1},Y_{n+1}) > c\} \ind\{\psi(X_{n+1};c) = 1\} ] \leq \alpha
% \$
% where $\psi(X_{n+1};c)$ is the test statistic for testing a hypothesis $H_0(c)\colon V(X_{n+1},Y_{n+1})\leq c$, and the risk is equal to $1$ whenever $H_0(c)$ is true. 
% 
It is useful to interpret MDR with multiple test points, in which  controlling~\eqref{eq:def_mgn_risk} at $\alpha $ implies control over the \emph{total deployment risk} (TDR): 
\@\label{eq:def_total_risk}
\tdr:= \EE\big[   \textstyle{\sum_{j=1}^m} L_{n+j} \hat\psi_{n+j}\big] \leq \alpha  m.
\@
That is, the total risk accumulated by the deployed instances $\cR = \{j\colon \hat\psi_{n+j}=1\}$ is controlled. 
% In this case, one finds a subset $\cR = \{j\colon \hat\psi_{n+j}=1\}$ of test instances on which the model can be deployed while controlling the total risk below a budget limit.  

\vspace{-0.5em}
\paragraph{Selective deployment risk (SDR).}
The second type of error we study measures the average risk per deployed unit. Formally, letting $\cR = \{j\in [m]\colon \hat\psi_{n+j}=1\}$ be the set of deployed units, we define  
\@\label{eq:def_sel_risk}
\sdr:= \EE\Bigg[ \frac{\sum_{j=1}^m L_{n+j} \cdot \ind\{j\in \cR\}}{1\vee|\cR|}   \Bigg]  .
\@ 
The SDR is motivated by, and generalizes, the false discovery rate (FDR) in classical hypothesis testing~\citep{benjamini1995controlling}. 
In particular, if we set  $L_{n+j} = \ind\{H_{0,j}\text{ is true}\}$ for a set of deterministic null hypotheses $\{H_{0,j}\}_{j=1}^m$, then SDR reduces to the usual FDR. 
In prediction problems, SDR connects to  the model-free selective inference problem~\citep{jin2023selection,jin2023model,gui2024conformal} when $L_{n+j} = \ind\{Y_{n+j}\leq c_{n+j}\}$ represents a binary ``bad event'' (e.g., the outcome is not sufficiently large relative to a cutoff $c_{n+j}\in \RR$), and our SDR-control procedure~\eqref{eq:def_sel_risk}  reduces to the methods studied there. When $m\to \infty$, by the law of large numbers, the SDR is close to the risk conditional on deployment~\citep{geifman2017selective} $\EE[L_{n+1}\given \psi_{n+1}=1]$, and more broadly, to  marginal FDR-type notions~\citep{storey2002direct}. However, our formulation allows the development of effective solutions, whereas those criteria can be difficult to control in finite sample in a model-free fashion.

% Note that these risks are never observed on the test points at deployment time, and the deployment set can be chosen in a data-dependent way; providing distribution-free control of the risk among deployed instances therefore requires careful calibration beyond naive score thresholding.

\vspace{-0.5em}
\paragraph{When to use which?} The two metrics serve different goals. MDR is natural when there is a fixed risk budget and does not require the risk to scale with the number of deployments: a procedure may deploy few but comparatively risky cases yet still controlling the MDR.  
%, while the SDR is suitable %for large-scale problems where the total error can grow with the frequency of deployment.  
On the other hand,  SDR is suitable when one requires that \emph{only low-risk cases are deployed}, so the incurred risks scale with the number of deployed cases. Such distinctions mirror those between the type-I error and FDR, which have been extensively discussed in the statistics literature (see, e.g.,~\cite{ioannidis2005most,benjamini1995controlling}).

%Also, similar to the discussion there, the marginal control of~\eqref{eq:def_total_risk} does not imply the deployed units are of a high quality. Consider an extreme case where $L_{n+1}\equiv 1$, then, one may set $\hat\psi_{n+1}=1$ with probability $\alpha$, in which case the total risk will be controlled, yet every deployed instance incurs a high error. 

% In general, $Y_{n+j}$ may represent some risk associated with 
% each $j$; in some decision-making problems, the risk of making a false positive 
% may depend on the actual outcome. 
% For instance, in early disease diagnosis, 
% a sensible task is to filter out `safe' patients (whose health risk is low) while 
% controlling false negatives; 
% in this case, the risk of any false negative often depends on 
% how large the actual risk is, and counting the number of errors may not be appropriate. 
 
% \tiancomment{Should we introduce reward $r$ earlier, say here?}

\subsection{Examples of application scenarios}
\label{subsec:intro_app}

To further contextualize the discussion, we now give several concrete examples of the deployment risks and how the MDR/SDR translate into practical guarantees in four representative applications. Readers interested in methodology may skip the rest of the section without missing key information.

\vspace{-0.5em}
\paragraph{Drug discovery with low risk.} 
Early stages of drug discovery aims to select promising drug candidates from a large library. While traditional approaches rely on exhaustive physical screening to evaluate their properties~\citep{szymanski2011adaptation,macarron2011impact}, it is increasingly popular to rely on AI predictions to shortlist drug candidates~\citep{carracedo2021review,dara2022machine}. In this case, $X$ is the physical/chemical structure of a drug candidate, and a model $f$ generates an imperfect prediction $f(X)$ for the unknown property of interest $Y$. Here, a decision to trust $f$ for a test instance $X_{n+j}$ means selecting it for future development, where a false positive may incur a waste of subsequent cost $\ell(X_{n+j},Y_{n+j}) \in [0,1]$. In~\cite{jin2023selection,Bai_Tang_Xu_Svetnik_Khalili_Yu_Yang_2024}, the risk is binary $\ell(X_{n+j},Y_{n+j})=\ind\{Y_{n+j}\leq c\}$ where $c$ is a known threshold.
Controlling the TDR~\eqref{eq:def_total_risk} limits the total expected cost of false leads. 
Controlling the SDR~\eqref{eq:def_sel_risk} implies that the average cost per selected compound is limited. 

\vspace{-0.5em}
\paragraph{Finding small-error predictions.} 
For a regression model $f\colon \cX\to \RR$, practitioners may rely on its predictions only when sufficiently accurate, for tasks such as auto-labeling and  decision support. In this case, 
a natural risk is $\cL(f,X,Y) = \ind\{|Y-f(X)|>c\}$ for a fixed tolerance $c>0$, or $\cL(f,X,Y) = |Y-f(X)|^2$ for mean squared error (MSE).  
In the former case, controlling the MDR~\eqref{eq:def_mgn_risk} limits the probability of deploying a high-error case, while controlling the SDR~\eqref{eq:def_sel_risk} limits the fraction of high-error cases among deployed ones. 
With the MSE risk, limiting the SDR~\eqref{eq:def_sel_risk} controls the average MSE among the deployed units.

\vspace{-0.5em}
\paragraph{Deploying LLMs with low semantic error.} 
In using LLMs for radiology report generation, the input $X$ is a medical image, and the output $f(X)$ is a natural-language report describing the findings from the image. Since the report will be handed to clinicians to make medical decisions, it is useful to control risks in cases where LLM reports are adopted. In \cite{gui2024conformal}, the unknown label $Y$ is a human-expert report, and $\cL(f,X,Y)$ is a binary risk which equals $1$ if $f(X)$ differs from $Y$ based on CheXbert labels~\citep{smit2020chexbert}. More generally, $\cL(f,X,Y)$ may measure semantic distances  or number of deviations in key findings between the reports. 
Here, controlling the SDR~\eqref{eq:def_sel_risk} below a expert-defined error rate would be useful for ensuring that the LLM models are deployed only when they are comparable to experts.

 \vspace{-0.5em}
\paragraph{Selecting accurate diagnosis with few follow-ups.} 
For multi-class diagnosis such as a disease subtype $Y\in [K]$, a foundation model $f$ produces probability estimates $f(X,k)$ for each label $k$, leading to a ranking of labels $f(X,[1])\geq f(X,[2])\geq \dots \geq f(X,[K])$, where $([1],[2],\dots,[K])$ is a permutation of $(1,\dots,K)$. 
Clinical workflows may proceed down this list with confirmatory tests until the true label is reached. 
% To use a model in disease diagnosis, clinicians can go down the list of ranked labels, conduct follow-up examinations, and figure out the exact diagnosis. 
To expedite the process, it is useful to only use high-quality predictions where one does not need to go too far down the list to reach the correct label (an extreme case is when the top-1 prediction is correct). 
One may define $\cL(f,X,Y) = \frac{1}{K}\sum_{k=1}^K \ind\{f(X,k)\leq f(X,Y)\}$, the number of steps needed before reaching the true label. Then, controlling the TDR~\eqref{eq:def_total_risk} finds units needing fewer than $\alpha K\cdot m$ follow-up steps in total, while controlling the SDR~\eqref{eq:def_sel_risk} means \emph{each} deployed unit needs $\alpha  K$ follow-up steps. Arguably, the SDR is more sensible for AI integration: we trust AI only when it improves efficiency upon traditional human inspections.

\subsection{Related work}

\paragraph{Selective prediction.} This paper is motivated by the philosophy of selective prediction, that is, we only deploy a model when confident  and control errors on the deployed cases~\citep{chow2009optimum,el2010foundations,geifman2017selective,mozannar2020consistent}. Much of this literature addresses classification settings with asymptotic guarantees for selective risk. This is related to, and expanded by our SDR notion (see~\cite{gui2024conformal} for a discussion on the distinctions for binary risks). We contribute to this literature from the conformal inference perspective. Our methods provide both selective and marginal guarantees, work in finite sample, and address  general, continuously-valued risks. 

\vspace{-0.5em}
\paragraph{Selective conformal inference.} 
Methodologically, SCoRE is closest to the work on selective inference and multiple testing in prediction problems via conformal inference~\citep{jin2023selection,jin2023model,bai2024optimized,huo2024real,lee2025selection,nair2025diversifying,gui2025acs,gazin2025selecting,liu2025online,huang2025selective}. As we shall discuss in Section~\ref{subsec:warmup}, this literature builds on conformal p-values to control a binary risk, adapts them to selective settings. The key technical distinction is that we target continuous risks with e-values instead of p-values. 
Other works using conformal prediction to address selective prediction include~\cite{fisch2022calibrated,sokol2024conformalized}, yet they focus on distinct aspects like calibration or directly using prediction sets, instead of valid error control among selected cases. Finally, our work connects to the line of work on conformal risk control (CRC)~\citep{angelopoulos2022conformal}and learn-then-test (LLT)~\citep{angelopoulos2025learn}. These methods address related marginal or selective risk notions, primarily for binary risks. Our setting differs in targeting continuous risks via exact calibration with e-values. In particular, our SDR variant targets a selective criterion that avoids uniform concentration (over a grid) needed there, while our MDR variant provides an e-value perspective that connects to CRC and enables a unified analysis of finite-sample validity, covariate shift, and asymptotic optimality (see more detailed discussion later). 
% Specifically, they addresses selective risk notions similar to the SDR (for binary risk); with general risks, the selective risk is non-monotone in the cutoff, so applying their LTT approach needs to leverage uniform concentration-type arguments instead of exact calibration via multiple testing ideas. Our MDR variant can be viewed as an instance of  conformal risk control (which deals with marginal risks), whereas our framework provides a unified view from e-values and offers asymptotic and optimality analysis. 

\vspace{-0.5em}
\paragraph{Conformal inference with e-values.} The e-values, as a parallel to p-values, have attracted recent interest in hypothesis testing and related tasks due to advantages such as compatibility with dependence~\citep{vovk2021values,wang2022false,waudbysmith2021estimating,ramdas2024hypothesis}. E-values in conformal prediction date back to~\cite{vovk2025conformal}, and have   attracted recent attention~\citep{balinsky2024enhancing,koning2023post,gauthier2025values,koning2025optimal,gauthier2025adaptive}. Distinct from other works that harness advantages of e-values like any-time validity, we leverage  e-values is to control the expectation of unknown risks (though our construction of risk-adjusted e-values is related to the soft-rank e-values~\citep{gauthier2025adaptive}; see discussion in Section~\ref{subsec:marginal_construction}). Finally, this work generalizes conformal selection methods that can be interpreted via e-values, yet with a different goal of controlling risks (see Section~\ref{subsec:risk_adjusted_e_values} for detailed discussion).

\vspace{-0.5em}
\paragraph{Statistical hypothesis testing.} This work is deeply connected to classical statistical hypothesis testing. While most of the works focus on binary type-I error control of rejecting a null hypothesis, there are methods that incorporate ``weights'' for the hypotheses in defining the type-I error~\citep{benjamini1997multiple,roeder2009genome,basu2018weighted,benjamini2017weighted}. Our risks $L_{n+j}$ in both error metrics can be viewed as unknown, random weights, and we provide a solution with e-values, which might be useful for other problems where a similar structure is present. While the connection is not straightforward, this relates to~\cite{grunwald2024beyond} which uses e-values to control the downstream costs in distinct test decisions. 
Another related line of work considers selecting multiple families of hypotheses, so that the average risk (such as the FDP in the family) is controlled among the selected families~\citep{heller2009flexible,sun2011multiple,benjamini2014selective}; while we use quite different techniques, our methods may be applicable in their setting if knowledge of the risk is available in some ``calibration'' families.

\section{General strategy: testing with risk-adjusted e-values}
\label{sec:general_method}

This section presents the  high-level strategy for controlling the two metrics. Section~\ref{subsec:warmup} warms up via an existing framework with binary risk control. Section~\ref{subsec:risk_adjusted_e_values} introduces the concept of risk-adjusted e-values, and Section~\ref{subsec:general_strategies} shows how any risk-adjusted e-values yield MDR and SDR control. 

\subsection{Warm-up: conformal p-value for binary risk}
\label{subsec:warmup}

We briefly review the binary-risk setting to motivate our framework.  
Conformal selection methods (e.g.,~\citet{jin2023selection,jin2023model,bai2024optimized} and references therein) address the problem of identifying sufficiently large outcomes $Y_{n+j}>c$ for a pre-specified constant $c>0$ while controlling a binary error $\ind\{Y_{n+j}\leq c\}$. 
\cite{jin2023selection} formalizes this problem as testing a random hypothesis $H_j\colon Y_{n+j}\leq c$, where rejecting $H_j$ implies declaring a large outcome. 
They leverage conformal prediction~\citep{vovk2005algorithmic} to construct  conformal p-values $\{p_j\}$ obeying 
\$
\PP(Y_{n+j}\leq c,~ p_j\leq t)\leq t,\quad \text{for all} ~ t\in [0,1].
\$
This resembles the null property of valid p-values in classical hypothesis testing. 
However, the null event is random and is not conditioned upon; instead, it appears jointly with the p-value in the probability statement.
With this in hand, rejecting $H_j$ when $p_j\leq \alpha$ naturally leads to the control of the binary MDR, as $\EE[\ind\{Y_{n+j}\leq c\}\ind\{p_j\leq \alpha\}]\leq \alpha$. 
In addition, \cite{jin2023selection} show that, when the calibration and test samples are exchangeable, passing multiple p-values $\{p_j\}_{j=1}^m$ to the Benjamini-Hochberg procedure~\citep{benjamini1995controlling} at level $\alpha\in(0,1)$ produces a selection set $\cR$ with FDR control:
\$
\EE\Bigg[ \frac{\sum_{j=1}^m\ind\{Y_{n+j}\leq c\} \ind\{j\in \cR\}}{1\vee|\cR|} \Bigg] \leq\alpha,
\$
which coincides with~\eqref{eq:def_sel_risk} when taking $L_{n+j}=\ind\{Y_{n+j}\leq c\}$. 

Conformal selection draws upon classical hypothesis testing to control the expectation of a binary risk  by the tail probability of a uniformly distributed p-value. 
However, tail probability is not a natural instrument for quantifying and controlling the expectation  of continuous risks. 
This motivates the use of e-values whose validity is defined through expectation~\citep{vovk2021values}. 
The remaining challenge is then to construct e-values that remain valid when the ``null'' is an unknown random risk.

\subsection{Risk-adjusted e-values}
\label{subsec:risk_adjusted_e_values}

We now introduce our key technical tool inspired by e-values~\citep{vovk2021values}. Specifically, for each test unit, we construct a non-negative random variable obeying the following definition. 

\begin{definition}[Risk-adjusted e-value]\label{def:eval}
    For the random risk $L_{n+j}=\cL(f,X_{n+j},Y_{n+j})$, we say a random variable $E_{n+j}$ is a \emph{risk-adjusted e-value} if $E_{n+j}\geq 0$ almost surely and  $\EE[E_{n+j}L_{n+j}]\leq 1$. 
\end{definition}

The concrete constructions of risk-adjusted e-values based on the scores $s(X_i)$ and observed risks $\cL(f,X_i,Y_i)$ will be tailored to each error metric and introduced later.

Similar to the null property of conformal p-values, the defining property of risk-adjusted e-values characterizes the joint behavior of risks and e-values.
This joint control naturally allows these e-values to be combined with hypothesis testing procedures to produce binary trust decisions $\{\hat\psi_{n+j}\}$. 
Intuitively, a large value of $E_{n+j}$ provides evidence that the risk $L_{n+j}$ is small due to the validity condition $\EE[L_{n+j}E_{n+j}]\leq 1$.  Definition~\ref{def:eval} generalizes the notion of e-values in statistical hypothesis testing: when testing a deterministic null hypothesis $H_0$, a random variable $E\geq 0$ is an e-value if $\EE[E]\leq 1$ under $H_0$ (that is, the risk being $\ind\{H_0\text{ is true}\}$), so that a large value of $E$ suggests evidence against the null~\citep{ramdas2024hypothesis}.

% \yingcomment{mention conformal selection e-value and optCS e-value.}

Even closer to us is the e-value perspective of conformal selection~\citep{jin2023selection}.
While the original method relies on p-values, several works construct e-values $e_{n+j}$ obeying $\EE[e_{n+j}\ind\{Y_{n+j}\leq c\}]\leq 1$ for controlling the FDR in online selection~\citep{xu2024online}, promoting selection diversity~\citep{nair2025diversifying}, and addressing hierarchical data~\citep{lee2025selection}. 
Other works that address other issues in conformal p-values, including covariate shift in~\cite{jin2023model} and the model optimization in \cite{bai2024optimized}, can also be interpreted as implicitly using certain e-values obeying such property. 

\subsection{General strategies for MDR and SDR control}
\label{subsec:general_strategies}

Once any risk-adjusted e-values are available, Theorem~\ref{thm:general_mgn_risk} offers a general strategy for deriving trust decisions that control the MDR~\eqref{eq:def_mgn_risk} in finite samples, and Theorem~\ref{thm:general_sel_risk} provides such a strategy for SDR~\eqref{eq:def_sel_risk}. 
% Their proofs are simple and provided below. 

\begin{theorem}\label{thm:general_mgn_risk}
    Suppose $E_{n+j}$ obeys Definition~\ref{def:eval}. Setting the trust decision as $\hat\psi_{n+j} = \ind\{E_{n+j}\geq 1/\alpha\}$ yields the marginal risk control: $\EE[L_{n+j}\cdot \hat\psi_{n+j}]\leq \alpha$. 
\end{theorem}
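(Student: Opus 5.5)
The argument is a Markov-type inequality applied pointwise rather than through a tail bound. The key observation is that on the event where we deploy, the threshold gives a lower bound on the e-value. Specifically, $\hat\psi_{n+j} = \ind\{E_{n+j}\geq 1/\alpha\}$, so whenever $\hat\psi_{n+j}=1$ we have $\alpha E_{n+j}\geq 1$. I would therefore establish the deterministic inequality
\begin{align*}
\ind\{E_{n+j}\geq 1/\alpha\} \leq \alpha E_{n+j},
\end{align*}
which holds almost surely. On the event $\{E_{n+j}\geq 1/\alpha\}$ the left side equals $1\leq \alpha E_{n+j}$. On the complement the left side is $0$, and $\alpha E_{n+j}\geq 0$ because $E_{n+j}\geq 0$ almost surely by Definition~\ref{def:eval}.

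Next I multiply both sides by $L_{n+j}$. Since $L_{n+j}=\cL(f,X_{n+j},Y_{n+j})\in[0,1]$ is nonnegative, the inequality is preserved, giving $L_{n+j}\hat\psi_{n+j}\leq \alpha L_{n+j}E_{n+j}$ almost surely. Taking expectations (both sides are nonnegative, so the expectations are well defined) and using the defining property $\EE[E_{n+j}L_{n+j}]\leq 1$ gives
\begin{align*}
\EE[L_{n+j}\hat\psi_{n+j}]\leq \alpha\,\EE[L_{n+j}E_{n+j}]\leq \alpha.
\end{align*}

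There is no real obstacle here. The only points needing care are the two sign conditions: nonnegativity of $E_{n+j}$ is needed on the abstention event, and nonnegativity of $L_{n+j}$ is needed when multiplying the inequality. Note also that no independence between $L_{n+j}$ and $E_{n+j}$ is required, because the validity condition is stated jointly. This is exactly what makes the argument go through even though the risk is random and unobserved.
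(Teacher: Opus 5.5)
Your proof is correct and follows the paper's argument: the pointwise bound $L_{n+j}\ind\{E_{n+j}\geq 1/\alpha\}\leq \alpha L_{n+j}E_{n+j}$, which uses nonnegativity of both $L_{n+j}$ and $E_{n+j}$. Taking expectations and applying the joint validity condition $\EE[L_{n+j}E_{n+j}]\leq 1$ then gives the claim.
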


\begin{proof}[Proof of Theorem~\ref{thm:general_mgn_risk}]
    Since $L_{n+j}\geq 0$ and $E_{n+j}\geq 0$, we have $L_{n+j} \hat\psi_{n+j} = L_{n+j} \ind\{E_{n+j}\geq 1/\alpha\} \leq L_{n+j}E_{n+j} \cdot \alpha$. 
    Taking the expectation gives $\mdr=\EE[L_{n+j} \hat\psi_{n+j}]\leq \alpha$ due to Definition~\ref{def:eval}. 
\end{proof}

% \subsection{General strategy for SDR control}

We apply the e-BH procedure~\citep{wang2022false} to risk-adjusted e-values to control the SDR.

\begin{theorem}\label{thm:general_sel_risk}
    Suppose $\{E_{n+j}\}_{j=1}^m$ obey  Definition~\ref{def:eval}.  Let $\hat\psi_{n+j}=1$ if and only if $j$ is selected by the e-BH procedure applied to $\{E_{n+j}\}_{j=1}^m$ at level $\alpha\in (0,1)$. 
    That is,  $\hat\psi_{n+j}=\ind\{E_{n+j}\geq m /(\alpha\hat\tau)\}$, where $\hat\tau = \max\{\tau\colon \sum_{j=1}^m \ind\{E_{n+j}\geq m/(\alpha\tau)\}\geq \tau \}$. 
    Then, it holds that $\EE\big[  \sum_{j=1}^m L_{n+j} \cdot \hat\psi_{n+j} / (1\vee \sum_{j=1}^m \hat\psi_{n+j}) \big]  \leq \alpha$. 
\end{theorem}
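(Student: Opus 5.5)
The plan follows the standard e-BH argument of \cite{wang2022false}, with the indicator of a true null replaced by the random risk $L_{n+j}\in[0,1]$. The key property is that e-BH is self-consistent. Write $\cR=\{j\colon \hat\psi_{n+j}=1\}$ and $R=|\cR|$. If $R\geq 1$, the definition of $\hat\tau$ gives $\hat\tau\leq R$. Every selected $j$ satisfies $E_{n+j}\geq m/(\alpha\hat\tau)\geq m/(\alpha R)$, since $\hat\tau\le R$. It will also be worth recording that in fact $\hat\tau=R$: indeed $\sum_j\ind\{E_{n+j}\ge m/(\alpha R)\}\ge R$, so $\tau=R$ is feasible. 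This refinement is not needed, though.

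The pointwise bound is the main step. For each $j$,
\begin{align*}
\frac{L_{n+j}\hat\psi_{n+j}}{1\vee R}\;\leq\; \frac{L_{n+j}\,\hat\psi_{n+j}\, \alpha E_{n+j}}{m}\;\leq\;\frac{\alpha}{m}\,L_{n+j}E_{n+j}.
\end{align*}
To check it, consider two cases. If $\hat\psi_{n+j}=0$, both sides are zero, or the right side is nonnegative because $L_{n+j}\ge0$ and $E_{n+j}\ge0$. If $\hat\psi_{n+j}=1$, then $R\geq1$ and $1/R\leq \alpha E_{n+j}/m$ by self-consistency. Multiplying by $L_{n+j}\geq 0$ gives the bound. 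This uses only nonnegativity, which comes from Definition~\ref{def:eval} and the fact that $\cL$ takes values in $[0,1]$.

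Next, sum over $j\in[m]$ and take expectations. Linearity, together with $\EE[L_{n+j}E_{n+j}]\le 1$ from Definition~\ref{def:eval}, gives
\begin{align*}
\sdr\le \frac{\alpha}{m}\sum_{j=1}^m \EE[L_{n+j}E_{n+j}]\le \alpha.
\end{align*}
No independence or exchangeability among the e-values is required, because the argument is entirely pointwise and then linear. This mirrors why e-BH works under arbitrary dependence.

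The only delicate point is getting the inequality $1/(1\vee R)\le \alpha E_{n+j}/m$ on the selection event directly from the stated definition of $\hat\tau$. Care is needed with the $\max$ over $\tau$: I would take $\tau$ ranging over $\{0,1,\dots,m\}$ (with $\hat\tau=0$ meaning no selections), and note that $\hat\tau\le R$ follows because $\hat\tau$ satisfies its own defining inequality, which states exactly that at least $\hat\tau$ indices are selected. Everything else is routine.
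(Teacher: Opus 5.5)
Your proof is correct and is essentially the paper's argument. The paper uses $1\vee|\cR|\geq\hat\tau$ on $\{\hat\tau>0\}$, together with the pointwise bound $L_{n+j}\ind\{E_{n+j}\geq m/(\alpha\hat\tau)\}\leq \alpha\hat\tau L_{n+j}E_{n+j}/m$. That is the same self-consistency plus Markov-type step you take, except that you apply $\hat\tau\leq R$ to the threshold before dividing. Both arguments then conclude by summing over $j$ and using $\EE[L_{n+j}E_{n+j}]\leq 1$.
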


\begin{proof}[Proof of Theorem~\ref{thm:general_sel_risk}]
    By the definition of $\hat\tau$, we have 
    \$
    \sdr = \EE\Bigg[ \frac{\sum_{j=1}^m L_{n+j} \ind\{E_{n+j}\geq m/(\alpha\hat\tau)\}}{1\vee \sum_{j=1}^m   \ind\{E_{n+j}\geq m/(\alpha\hat\tau)\} } \ind\{\hat\tau>0\}  \Bigg] 
    \leq \sum_{j=1}^m \EE\bigg[ \frac{ L_{n+j} \ind\{E_{n+j}\geq m/(\alpha\hat\tau)\}}{\hat\tau }   \bigg]. 
    \$
    Since $L_{n+j}\geq 0$ and $E_{n+j}\geq 0$, we have $ L_{n+j} \ind\{E_{n+j}\geq m/(\alpha\hat\tau)\} \leq L_{n+j}E_{n+j} \cdot \alpha \hat\tau/m$. Therefore, 
    \$
    \sdr \leq \sum_{j=1}^m  \EE\bigg[\frac{L_{n+j}E_{n+j} \cdot \alpha \hat\tau}{\hat\tau \cdot m}\bigg] \leq \frac{\alpha}{m}\sum_{j=1}^m \EE[L_{n+j}E_{n+j}] \leq \alpha,
    \$
    since $E_{n+j}$ obeys Definition~\eqref{def:eval}.
\end{proof}

The remaining task then reduces to constructing valid risk-adjusted e-values; the MDR and SDR control then follow automatically by Theorems~\ref{thm:general_mgn_risk} and~\ref{thm:general_sel_risk}.

The strategy in Theorem~\ref{thm:general_mgn_risk} for controlling the MDR is related to \cite{grunwald2024beyond}, though the connection is not straightforward. There, e-values are used to control risk in classical hypothesis testing where scientists are allowed to derive rules of taking multiple actions---more than just reject or not, each with a known risk. 
In contrast, we use e-values to control unobserved risks in prediction problems. 
Our e-values are also compatible with other techniques for e-values such as multiple testing, and can lead to control of interpretable metrics like the SDR in predictive inference settings.

% !TEX root = main.tex

\section{Marginal risk control with conformal e-values} \label{sec:marginal}

While Section~\ref{sec:general_method} shows that any collection of risk-adjusted e-values can lead to valid, finite-sample control of the MDR or SDR, the power or utility of the procedure depends critically on the quality of the e-values. Poorly designed e-values can result in an excessively large number of unnecessary abstentions.

In this section, we study the concrete construction of risk-adjusted e-values tailored for MDR control based on conformal inference and data exchangeability, thereby completes the strategy in Theorem~\ref{thm:general_mgn_risk}. Owing to the distinct testing structure, the corresponding e-value construction for SDR control differs and is presented in Section~\ref{sec:selective}.
We then discuss an efficient computation shortcut that produces trust decisions directly, bypassing the explicit numerical search for e-values. Finally, we derive optimal choices within the proposed family of e-values.

\subsection{Constructing e-values} \label{subsec:marginal_construction}

Recall that we have a pre-trained score function $s\colon \cX\to [0,1]$ that predicts $\cL(f,X,Y)$ or a related notion of uncertainty. 
The construction below produces an e-value---and hence the deploy/abstain decision---based on the magnitude of score $s(X_{n+j})$. 
Let the observed calibration risks be $L_i = \cL(f,X_i,Y_i)$ for $i\in [n]$. 

Fix any constant $\gamma \in (0,1)$. We define 
\@\label{def:e-value-mgn}
E_{\gamma, n+1} = \inf_{\ell \in[0,1]} ~\Bigg\{ \frac{(n+1)\cdot \ind\{s(X_{n+1})\leq t_\gamma(\ell)\}}{\sum_{i=1}^n L_i \ind\{s(X_{i})\leq t_{\gamma}(\ell)\} + \ell\ind\{s(X_{n+1})\leq t_\gamma(\ell)\} } \Bigg\}.
\@
Here $\ell\in[0,1]$ is a candidate value of the unknown risk $L_{n+1}$, and $t_\gamma(\ell)$ is a data-dependent threshold chosen so that an empirical risk estimate does not exceed $\gamma$. Concretely,
\@\label{eq:def_F_single}
t_{\gamma} (\ell) = \max \big\{t \in \cM\colon \textrm{F}(t;\ell)\leq \gamma \big\},\quad 
\textrm{F}(t;\ell) =  \frac{\sum_{i=1}^n L_i \ind\{s(X_{i})\leq t\} + \ell\ind\{s(X_{n+1})\leq t\} }{ n+1 }.
\@
Here we define $\cM:=\{s(X_i)\}_{i=1}^{n+1}$. By convention, $\max \emptyset = -\infty$, and $E_{\gamma, n+1} = 0$ when $\inf_{\ell\in[0,1]}t_{\gamma}(\ell) = -\infty$.
Put differently,  $E_{\gamma,n+1} = \inf_{\ell\in [0,1]} \{ \ind\{s(X_{n+1})\leq t_\gamma(\ell)\} /  \textrm{F}(t_\gamma(\ell);\ell)  \}$. 

\begin{remark}\label{rem:range_marginal}
In~\eqref{def:e-value-mgn}, we take the infimum over the entire range $[0,1]$. In principle, this search domain can be reduced to the values that can be attained, i.e.,   $\ell\in  \RR^+\cap \{\cL(X_{n+1},y)\colon y\in \cY\}$. While our computation strategies and numerical experiments are tied to~\eqref{def:e-value-mgn}, such a replacement may lead to larger e-values and faster computation. For binary risk, this reduction allows computing $E_{\gamma,n+1}$ by simply plugging in $\ell=1$. 
\end{remark}

The SCoRE procedure for MDR control is  summarized in Algorithm~\ref{alg:mdr}. (In practice, we recommend setting $\gamma=\alpha$; see Section~\ref{subsec:mgn_computation}.)

\begin{algorithm}
    \small
    \captionsetup{font=small}
    \caption{SCoRE-MDR}
    \label{alg:mdr}
\begin{algorithmic}[1]
    \REQUIRE{Labeled data $\{(X_i, Y_i)\}_{i=1}^n$, test data $X_{n+1}$, pre-trained score function $s(\cdot)$, MDR target $\alpha \in (0,1)$.} \\[0.5ex]
    \STATE Compute calibration risks $L_i = \cL(f, X_i, Y_i)$ for $i = 1, \dots, n$.
    \STATE Obtain the scores $\cM:=\{s(X_i)\}_{i=1}^{n+1}$.
    \STATE Compute $E_{\alpha,n+1}$ as in~\eqref{def:e-value-mgn}.
    \STATE Compute $\hat\psi_{n+1} = \ind\{E_{\alpha,n+1}\geq 1/\alpha\}$.
    \ENSURE{Deployment decision $\hat\psi_{n+1}$.}
\end{algorithmic}
\end{algorithm}

Theorem~\ref{thm:single} confirms that $E_{\gamma,n+1}$ obeys Definition~\ref{def:eval}, whose proof is in Appendix~\ref{app:subsec_thm_single}. Consequently, Algorithm~\ref{alg:mdr} controls the MDR below $\alpha$ in finite samples. Importantly, Theorem~\ref{thm:single} only relies on exchangeability among data, without requiring  the score function $s$ to accurately predict the risk.

\begin{theorem}\label{thm:single}
    Suppose $\{(X_i,Y_i)\}_{i=1}^{n+1}$ are exchangeable. Then, $\EE[L_{n+1} E_{\gamma,n+1}  ]\leq 1$ for any fixed $\gamma \in(0,1)$.
\end{theorem}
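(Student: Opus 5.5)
Plugging in the true value $\ell = L_{n+1}$ bounds the infimum, and exchangeability of the resulting "oracle" quantity does the rest. Since $L_{n+1}\in[0,1]$, the infimum in~\eqref{def:e-value-mgn} satisfies
\[
E_{\gamma,n+1}\le \frac{(n+1)\ind\{s(X_{n+1})\le T\}}{\sum_{i=1}^{n+1} L_i\ind\{s(X_i)\le T\}},
\qquad T := t_\gamma(L_{n+1}).
\]
We use the convention that the right-hand side is $0$ when the indicator vanishes, including when $T=-\infty$.

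The key observation is that $T$ is a symmetric function of the full data. The function $\mathrm{F}(t;L_{n+1}) = \frac{1}{n+1}\sum_{i=1}^{n+1} L_i\ind\{s(X_i)\le t\}$ and the grid $\cM=\{s(X_i)\}_{i=1}^{n+1}$ are both invariant under permutations of the $n+1$ pairs $(s(X_i),L_i)$. Hence $T$ is invariant as well, and we may write $T = T(\mathcal{Z})$, where $\mathcal{Z}$ is the unordered multiset of these pairs.

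Multiplying the bound by $L_{n+1}$ gives
\[
L_{n+1}E_{\gamma,n+1}\le \frac{(n+1)L_{n+1}\ind\{s(X_{n+1})\le T\}}{\sum_{i=1}^{n+1}L_i\ind\{s(X_i)\le T\}}.
\]
We condition on $\mathcal{Z}$. By exchangeability, $(s(X_{n+1}),L_{n+1})$ is, conditional on $\mathcal{Z}$, uniformly distributed over the $n+1$ elements of $\mathcal{Z}$, counted with multiplicity. Since $T$ and the denominator are fixed given $\mathcal{Z}$, the conditional expectation of the right-hand side equals
\[
\frac{1}{n+1}\sum_{k=1}^{n+1}\frac{(n+1)L_k\ind\{s(X_k)\le T\}}{\sum_{i=1}^{n+1}L_i\ind\{s(X_i)\le T\}},
\]
which equals $1$ when the denominator is positive and $0$ otherwise. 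Taking expectations yields $\EE[L_{n+1}E_{\gamma,n+1}]\le 1$.

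The main obstacle is the edge cases.
\begin{itemize}
\item When the denominator is zero but $s(X_{n+1})\le T$, the bound on $E_{\gamma,n+1}$ is infinite. This is harmless because then $L_{n+1}=0$, and under the convention $0\cdot\infty=0$ the product contributes $0$. Formally, I will handle this by working directly with the product $L_{n+1}\ind\{\cdot\}/(\cdot)$, which is well defined and zero on this event.
\item When $T=-\infty$, the indicator is zero, consistent with the stated convention that $E_{\gamma,n+1}=0$ then.
\item Ties in the scores cause no trouble, since the uniform-over-multiset argument handles them.
\end{itemize}
Note that the argument never uses $\gamma$ beyond the fact that $T$ is symmetric.
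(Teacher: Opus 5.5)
Your proposal is correct and follows the paper's proof essentially step for step. You bound the infimum by plugging in $\ell = L_{n+1}$, observe that $T = t_\gamma(L_{n+1})$ is permutation-invariant, condition on the unordered data, and average over the uniformly distributed position of the test point. Two minor refinements go beyond what the paper writes out: conditioning only on the multiset of pairs $(s(X_i),L_i)$, and noting that the conditional expectation is $0$ rather than $1$ when the denominator vanishes.
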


The intuition of~\eqref{def:e-value-mgn} is as follows. 
Should $L_{n+1}$ be known, any random variable of the form 
\$
\frac{(n+1)\cdot L_{n+1} A_{n+1}}{\sum_{i=1}^n L_i A_i + L_{n+1}A_{n+1}},
\$ 
has expectation equal to $1$ if $\{(L_i,A_i)\}_{i=1}^{n+1}$ are exchangeable. 
Thus, we can define $E_{n+1}:= \frac{(n+1)\cdot A_{n+1}}{\sum_{i=1}^n L_i A_i + L_{n+1}A_{n+1}}$ for some  $\{(L_i,A_i)\}_{i=1}^{n+1}$ that are exchangeable, which is a risk-adjusted e-value obeying $\EE[E_{n+1}L_{n+1}]\leq 1$.  

While the choice of  $\{A_{i}\}$ can be quite flexible, we set $A_{i} = \ind\{s(X_{i})\leq T\}$, where $T$ is a random variable that is permutation invariant to $\{(X_i,Y_i)\}_{i=1}^{n+1}$. 
This is because in applying Theorem~\ref{thm:general_mgn_risk} to obtain MDR control, 
a crucial inequality is $ \ind\{E_{n+1}\geq 1/\alpha\}\leq \alpha \cdot E_{n+1}$, which is tight only if $E_{n+1}$ takes value in $\{0,1/\alpha\}$. This motivates the ``one-hot'' form of the e-value. 
Since $L_{n+1}$ is unobserved, 
we construct a conservative e-value by taking the smallest value over all the possible values of $L_{n+1}$ via $\ell \in [0,1]$. 
Finally, the $t_\gamma(\ell)$ in~\eqref{eq:def_F_single} can be viewed as an empirical calibration: we note that $\textrm{F}(t;L_{n+1})$ estimates $\EE[L\ind\{s(X)\leq t\}]$ in a way that preserves exchangeability.

\begin{remark}
  We may generalize $s(x)$ to any label-dependent scores  
   $V\colon \cX\times\cY\to \RR$. Define
    \@\label{def:e-value-mgn-y}
E_{\gamma, n+1}^{\textnormal{general}} = \inf_{y\in \cY} ~\Bigg\{ \frac{(n+1)\cdot \ind\{V(X_{n+1},y)\leq t_\gamma(y)\}}{\sum_{i=1}^n L_i \ind\{V(X_{i},Y_i)\leq t_{\gamma}(y)\} + \cL(X_{n+1},y)\ind\{V(X_{n+1},y)\leq t_\gamma(y)\} } \Bigg\},
\@
where $t_{\gamma} (y) = \max \big\{t \in \cM\colon \mF^{\textnormal{general}}(t;y)\leq \gamma \big\}$, and 
\@\label{eq:def_F_single_y}
\mF^{\textnormal{general}}(t;y) =  \frac{\sum_{i=1}^n L_i \ind\{V(X_{i},Y_i)\leq t\} + \cL(X_{n+1},y)\ind\{V(X_{n+1},y)\leq t\} }{ n+1 }.
\@
Then, the definition in~\eqref{def:e-value-mgn} is a special case with $V(x,y)=s(x)$. One can still follow the proof idea of Theorem~\ref{thm:single} outlined above to show that $\EE[E_{\gamma,n+1}^{\textnormal{general}}L_{n+1}]\leq 1$ under exchangeability. However, $E_{\gamma,n+1}^{\textnormal{general}}\geq 1/\alpha$ requires $V(X_{n+1},y)\leq t_\gamma(y)$ for all $y\in \cY$, which might be harder to satisfy in general. The computational and statistical benefits of this definition are beyond the scope of the current work. 
\end{remark}

\subsection{Efficient computation}
\label{subsec:mgn_computation}

The definition of $E_{\gamma,n+1}$ in~\eqref{def:e-value-mgn} involves an infimum over a continuous  variable $\ell \in [0,1]$. 
Fortunately, for MDR control we only need the thresholding decision $\ind\{E_{\gamma,n+1}\geq 1/\alpha\}$, not the exact value of  $E_{\gamma,n+1}$. 
The next proposition shows how to streamline the computation, whose proof is in Appendix~\ref{app:subsec_single_equiv}.  

\begin{prop}\label{prop:single_equiv}
    For $\gamma \leq \alpha$, we have 
    \$
    \ind\{E_{\gamma,n+1}\geq 1/\alpha\} = \ind \bigg\{  \frac{1+\sum_{i=1}^n L_i \ind {\{s(X_i)\leq s(X_{n+1})\}}}{n+1} \leq \gamma \bigg\}.
    \$
    For $\gamma > \alpha$, we have
    \$
    \ind\{E_{\gamma,n+1}\geq 1/\alpha\} = \ind \bigg\{  &\frac{1+\sum_{i=1}^n L_i \ind {\{s(X_i)\leq s(X_{n+1})\}}}{n+1} \leq \gamma, ~ \\
    &\quad \text{and} \quad \frac{\ell +\sum_{i=1}^n L_i \ind  \{s(X_i)\leq t\} }{n+1} \notin (\alpha,\gamma ],~~\forall t\in \cM, \ell\in [0,1]
    \bigg\},
    \$
    where $\cM = \{s(X_{i})\}_{i=1}^{n+1}$ is the set of all calibration and test scores.
    % In particular, when $\gamma = \alpha$, we have
    % \$
    % \ind\{E_{\alpha,n+1}\geq 1/\alpha\} = \ind \bigg\{  \frac{1+\sum_{i=1}^n L_i \ind {\{\hat\mu(X_i)\leq \hat\mu(X_{n+1})\}}}{n+1} \leq \alpha \bigg\}.
    % \$
\end{prop}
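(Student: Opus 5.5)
The plan is to unpack the infimum in \eqref{def:e-value-mgn} pointwise in $\ell$. Write $S_{n+1}=s(X_{n+1})$ and $G(\ell) = \ind\{S_{n+1}\leq t_\gamma(\ell)\}/\mF(t_\gamma(\ell);\ell)$, so that $E_{\gamma,n+1}=\inf_{\ell\in[0,1]} G(\ell)$, with $G(\ell)=0$ when $t_\gamma(\ell)=-\infty$. For each fixed $\ell$ the term $G(\ell)$ takes only two kinds of values.
\begin{itemize}
\item It equals $0$ if $S_{n+1}>t_\gamma(\ell)$.
\item Otherwise it equals $1/\mF(t_\gamma(\ell);\ell)$. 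By the definition \eqref{eq:def_F_single} of $t_\gamma(\ell)$ we have $\mF(t_\gamma(\ell);\ell)\leq\gamma$, so this value is at least $1/\gamma$. If the denominator happens to be $0$, the value is $+\infty$, which is harmless.
\end{itemize}
Hence $E_{\gamma,n+1}\geq 1/\alpha$ holds if and only if, for every $\ell\in[0,1]$, both (a) $S_{n+1}\leq t_\gamma(\ell)$ and (b) $\mF(t_\gamma(\ell);\ell)\leq\alpha$ hold. The whole proof then reduces to rewriting (a) and (b) in closed form.

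For (a), I will use two monotonicity facts, both following from $L_i\geq 0$ and $\ell\geq 0$. First, $\mF(t;\ell)$ is nondecreasing in $t$. Second, $\mF(t;\ell)$ is nondecreasing in $\ell$. Because $S_{n+1}\in\cM$ and $\mF(\cdot;\ell)$ is monotone, the feasible set $\{t\in\cM:\mF(t;\ell)\leq\gamma\}$ is a lower set of $\cM$. Therefore $S_{n+1}\leq t_\gamma(\ell)$ holds exactly when $\mF(S_{n+1};\ell)\leq\gamma$, and ties in the scores cause no trouble because the indicators use $\leq$. Requiring this for all $\ell\in[0,1]$ is, by monotonicity in $\ell$, the same as requiring it at $\ell=1$. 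At $\ell=1$ the condition reads $\bigl(1+\sum_{i=1}^n L_i\ind\{s(X_i)\leq S_{n+1}\}\bigr)/(n+1)\leq\gamma$, which is the displayed first condition. When $\gamma\leq\alpha$, condition (b) is implied by $\mF(t_\gamma(\ell);\ell)\leq\gamma\leq\alpha$, which proves the first claim.

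For $\gamma>\alpha$, I assume (a) holds, so every $t_\gamma(\ell)$ is a finite element of $\cM$. I then show that (b) for all $\ell$ is equivalent to the statement that $\mF(t;\ell)\notin(\alpha,\gamma]$ for all $t\in\cM$ and $\ell\in[0,1]$.
\begin{itemize}
\item Suppose (b) fails at some $\ell$. Then $\mF(t_\gamma(\ell);\ell)\in(\alpha,\gamma]$, and $t=t_\gamma(\ell)\in\cM$ witnesses a violation of the displayed condition.
\item Conversely, suppose some pair $(t,\ell)$ satisfies $\mF(t;\ell)\in(\alpha,\gamma]$. Then $t$ is feasible for $t_\gamma(\ell)$, so $t_\gamma(\ell)\geq t$. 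By monotonicity in $t$, $\mF(t_\gamma(\ell);\ell)\geq\mF(t;\ell)>\alpha$, so (b) fails.
\end{itemize}
Combining this with the characterization of (a) gives the second claim.

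The only delicate step is the reduction of (a) to a single check at $t=S_{n+1}$ and $\ell=1$. It needs care with tied scores, with the convention $\max\emptyset=-\infty$, and with the possibility of a zero denominator in $G(\ell)$. I expect all three to be handled by the monotonicity of $\mF$ and the fact that $S_{n+1}\in\cM$.
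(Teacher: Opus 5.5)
Your argument follows the paper's proof essentially step for step. It uses the same pointwise characterization of $\{E_{\gamma,n+1}\ge 1/\alpha\}$ as ``(a) and (b) for every $\ell$'', the same reduction of (a) to the single check $\mathrm{F}(s(X_{n+1});1)\le\gamma$, and the same reduction of (b) to a non-membership condition in $(\alpha,\gamma]$. Your two-direction argument for (b) is in fact more explicit than the paper's.

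You skip one small step at the end of the $\gamma>\alpha$ case. You prove that $\mathrm{F}(t;\ell)\notin(\alpha,\gamma]$ for all $t\in\cM$ and $\ell\in[0,1]$. The statement instead requires $(\ell+\sum_{i=1}^n L_i\ind\{s(X_i)\le t\})/(n+1)\notin(\alpha,\gamma]$. This drops the factor $\ind\{s(X_{n+1})\le t\}$, so it differs from $\mathrm{F}(t;\ell)$ whenever $t<s(X_{n+1})$.

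The two conditions are equivalent once the first displayed condition holds. Fix $t\in\cM$ with $t<s(X_{n+1})$.
\begin{itemize}
\item $\mathrm{F}(t;\ell)$ equals the displayed quantity at $\ell=0$. So any violation of your condition is a violation of the displayed one.
\item The displayed quantity at $(t,\ell)$ is at most $\mathrm{F}(s(X_{n+1});\ell)\le \mathrm{F}(s(X_{n+1});1)\le\gamma$. So if it exceeds $\alpha$, then $t=s(X_{n+1})\in\cM$ gives $\mathrm{F}(s(X_{n+1});\ell)\in(\alpha,\gamma]$, a violation of your condition.
\end{itemize}
Adding these two lines closes the proof. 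The paper makes the same move, only tersely, by appealing to the first condition (``since we can already assume the first condition here'').
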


% \yingcomment{
% Why boosting does not control MDR here?
% Boosting: $\hat\psi_{n+1}=\ind\{E_{\gamma,n+1}/\xi\geq 1/\alpha\} = \ind\{\xi \leq \alpha\cdot E_{\gamma,n+1}\}$. 
% Error control: $\EE[L_{n+1}\hat\psi_{n+1}] = \EE[L_{n+1}\ind\{\xi \leq \alpha\cdot E_{\gamma,n+1}\}] = \EE[L_{n+1}\min\{1, \alpha E_{\gamma,n+1}\}]\leq \alpha \EE[L_{n+1}E_{\gamma,n+1}]$}
% \tiancomment{I think any value of $\gamma$ would work for the proposition (but since our goal is to control at $\alpha$, we should finally set $\gamma = 1/\alpha$); I also tried to simplify the proof. By the way, the LHS in RHS indicator seems also to be the same as the conformal p-value for single selection when the score is clipped. Is this also a specific case under the conformal risk control procedure?} 

\begin{remark}
    Proposition~\ref{prop:single_equiv} justifies setting the parameter $\gamma$ equal to the nominal level $\alpha$. When $\gamma < \alpha$, the proposition implies  $\ind\{E_{\gamma, n+1} \geq 1/\alpha\} \leq \ind\{E_{\alpha, n+1} \geq 1/\alpha\}$, and  SCoRE always selects less frequently than $\gamma = \alpha$.
    On the other hand, if $\gamma > \alpha$, one must impose an extra thresholding condition that almost always fails in practice, yielding asymptotically zero power (Theorem \ref{thm:opt_mdr}) under standard regularity conditions.
\end{remark}

% \tiancomment{TODO: try something like $\gamma = 1.1\alpha$.}

Proposition~\ref{prop:single_equiv} allows us to connect the MDR control instantiation of SCoRE with existing works in conformal inference and selective inference. 
First, SCoRE with binary risks reduces to the conformal selection    framework~\citep{jin2023selection} discussed in Section~\ref{subsec:warmup}. 
To select test instances with responses exceeding a specific threshold $Y_{n+1} > c$, it constructs p-values
\@\label{eq:def_conformal_pval}
p_{n+1} = \frac{1+\sum_{i=1}^n\ind\{V(X_i, Y_i) \leq V(X_{n+1}, c)\}}{n+1},
\@
where $V: \mathcal{X} \times \mathcal{Y} \to \mathbb{R}$, $V(x, y) = \infty\ind\{y > c\} + s(x)$ is the clipped nonconformity score\footnote{We flipped the sign of the scores in~\cite{jin2023selection} to be consistent with the current setup.}. 
One can check that defining the risks as $L_i = \ind\{Y_i \leq c\}$, Proposition~\ref{prop:single_equiv} yields $\ind\{E_{\alpha,n+1}\geq 1/\alpha\}  =  \ind\{p_{n+1} \leq \alpha\}$.
% \$
%     \ind\{E_{\alpha,n+1}\geq 1/\alpha\}  &= \ind \bigg\{  \frac{1+\sum_{i=1}^n \ind\{Y_i \leq c\} \ind {\{s(X_i)\leq s(X_{n+1})\}}}{n+1} \leq \alpha \bigg\}   = \ind\{p_{n+1} \leq \alpha\}.
% \$
Thus, SCoRE is procedurally equivalent to conformal selection for one hypothesis with type-I error control. 
% where $V: \mathcal{X} \times \mathcal{Y} \to \mathbb{R}$, $V(x, y) = \infty\ind\{y > c\} + \hat\mu(x)$ is the clipped nonconformity score\footnote{We note that the original clipped score in CS is $V(x, y) = M \cdot\ind\{y > c\} - \hat\mu(x)$, where $M$ is a large constant serving as the relaxation of infinity. Here, we directly use $M = \infty$, and also flip the sign of the predicted value in the score since the predictor $\hat\mu$ are defined differently in the two methods. In SCoRE, $\hat\mu$ estimates the risk $L_{n+1}$, while in CS, it estimates the probability of $Y_{n+j} > c$, which is $1 - L_{n+1}$.\label{ftnt:nonconformity_score}}, and $p_1$ denotes the conformal p-value in CS. 
% Since only one instance is considered in this setting, the BH procedure in CS selects the test instance if $p_1 \leq \alpha$, which exactly corresponds to the condition $E_{\alpha,n+1}\geq 1/\alpha$ required by SCoRE. Thus, under this setup, the two procedures are functionally equivalent.

Furthermore, our e-value is related to conformal risk control~\citep{angelopoulos2022conformal} with risk functions $L_i(\lambda) := L_i\ind\{-s(X_i) \geq \lambda\}$, and $\lambda\in \Lambda =[-1, 0]$. Given any bounded, non-increasing risk function,  conformal risk control determines a parameter $\hat\lambda \in \Lambda$ so that the test risk $\EE[L_{n+1}(\hat\lambda)]$ is controlled. With this risk, it yields
$
    \hat\lambda = \inf \big\{ \lambda \in \Lambda: \frac{1 + \sum_{i=1}^n L_i(\lambda)}{n+1} \leq \alpha \big\},
$
We observe that $\hat\psi_{n+1}=1$ given by Theorem~\ref{thm:general_mgn_risk} (with $\gamma=\alpha$) is equivalent to  $-s(X_{n+1}) \geq \hat\lambda$. That is, SCoRE opts to deploy a unit if and only if it is a risk-controlled decision. We defer the detailed explanation of this fact to Appendix~\ref{app:subsec_connect_crc}. 

\subsection{Asymptotics and optimality}
\label{subsec:asymptotics_mdr}

While the MDR control holds regardless of the score function $s(\cdot)$, 
the usefulness of the procedure depends on the average number of deployable instances, and more generally, on the downstream reward from deploying the model.  
To navigate this choice, we define a general notion of power:
\@\label{eq:def_mdr_power}
\text{Power} := \EE\big[ r(X_{n+1},Y_{n+1})\hat{\psi}_{n+1}  \big],
\@
where $r\colon \cX\times\cY\to [0,1]$  encodes a bounded ``reward'' of deploying the model on a test instance which may depend on the unknown label. 
It may also depend on the model $f$ but we omit this   for simplicity.
When $r(x,y)\equiv 1$, the power is  the probability of deployment. 
This flexibility allows practitioners to prioritize deployment on more valuable instances. For example, drug discovery scientists may assign a high reward to ``novel'' instances and maximize the reward in the selected candidates  while controlling the total wastage.

Theorem~\ref{thm:opt_mdr} establishes a ``Neyman-Pearson lemma''-like rule~\citep{lehmann1986testing} for asymptotically optimal scoring functions that maximizes~\eqref{eq:def_mdr_power} subject to MDR control.  Its proof is in Appendix~\ref{app:subsec_opt_mdr}. Throughout, we treat $f(\cdot)$ and $s(\cdot)$ as fixed while taking the calibration sample size $n$ to infinity.
 
\begin{theorem}\label{thm:opt_mdr}
    Suppose  $\{(X_i,Y_i)\}_{i=1}^{n+1}$ are i.i.d.~from some unknown distribution $P$. Define $\mathrm{F}^*(t) := \EE[\cL(f,X,Y)\ind\{s(X)\leq t\}]$ for an independent copy $(X,Y)\sim P$, and $f,s$ are viewed as fixed. Define $t^*:= \sup\{t\in [0,1]\colon \mathrm{F}^*(t)\leq \gamma\}$. 
    % Assume $\mathrm{F}^*(t)$ is continuous in $t\in [0,1]$ and non-constant in a small neighborhood around $t^*$, and the distribution of $\hat\mu(X)$ is non-atomic. 
    Suppose the distribution of $s(X)$ is non-atomic, and $\mathrm{F}^*(t)$ is strictly increasing at  $t^*$. 
    Then  the following holds:
    \begin{enumerate}[label=(\roman*).]
        \item As  $n\to \infty$, $\sup_{\ell\in [0,1]}|t_\gamma(\ell) - t^*|~\asto~ 0$.
        \item $\lim_{n\to\infty}\textnormal{Power}=\EE[r(X_{n+1},Y_{n+1})\ind\{s(X_{n+1})\leq t^*\}]$ if $\gamma\leq\alpha$, and $\lim_{n\to\infty}\textnormal{Power}=0$ if $\gamma>\alpha$. Furthermore, for a fixed $s(\cdot)$, the asymptotic power is optimized at $\gamma=\alpha$. 
        \item Fix $\gamma=\alpha$. Define $l(x):=\EE[\cL(f,X,Y)\given X=x]$ and $r(x):=\EE[r(X,Y)\given X=x]$. Suppose $r(X)>0$ a.s., and the distribution of $l(X)/r(X)$ is non-atomic. Then, the asymptotic power is optimized at any $s(x)$ that is strictly increasing in $l(x)/r(x)$. 
    \end{enumerate} 
\end{theorem}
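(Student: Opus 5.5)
Part (i) is a uniform law of large numbers, part (ii) follows from Proposition~\ref{prop:single_equiv} and dominated convergence, and part (iii) is a fractional-knapsack (Neyman--Pearson) argument.

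\textbf{Part (i).} Write $\hat{\mathrm F}_n(t) = \frac{1}{n+1}\sum_{i=1}^n L_i\ind\{s(X_i)\le t\}$. For every $\ell\in[0,1]$ we have
\[
\mathrm F(t;\ell)=\hat{\mathrm F}_n(t)+\ell\ind\{s(X_{n+1})\le t\}/(n+1),
\]
so $\sup_{t,\ell}|\mathrm F(t;\ell)-\mathrm F^*(t)|\le \sup_t|\hat{\mathrm F}_n(t)-\mathrm F^*(t)|+2/(n+1)$. The class $\{(x,y)\mapsto \cL(f,x,y)\ind\{s(x)\le t\}\}_t$ is a bounded monotone (VC-type) class, so the Glivenko--Cantelli theorem makes the right side vanish almost surely.

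Fix $\epsilon>0$. Because $\mathrm F^*$ is nondecreasing and strictly increasing at $t^*$, we have $\mathrm F^*(t^*+\epsilon)>\gamma$. I also need $\mathrm F^*(t^*-\epsilon)<\gamma$; this follows from the same strict-increase assumption, since $\mathrm F^*(t^*-\epsilon)<\mathrm F^*(t^*)\le\gamma$, where the last inequality uses right-continuity of $\mathrm F^*$.

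Since $s(X)$ is non-atomic, $\cM$ becomes dense in the support of $s(X)$ near $t^*$. In particular, there are points of $\cM$ in $[t^*-\epsilon,t^*]$ eventually, provided $t^*$ lies in the support. If it does not, the distribution of $s(X)$ puts no mass near $t^*$, so $\mathrm F^*$ is flat there, which contradicts strict increase. Combining the two facts, eventually every $\ell$ has $t_\gamma(\ell)\in[t^*-\epsilon,t^*+\epsilon]$ almost surely. The boundary cases $t^*=0$ or $t^*=1$ need separate handling.

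\textbf{Part (ii).} I would use Proposition~\ref{prop:single_equiv}.

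For $\gamma\le\alpha$, $\hat\psi_{n+1}=\ind\{(1+\sum_i L_i\ind\{s(X_i)\le s(X_{n+1})\})/(n+1)\le\gamma\}$. The statistic inside is $\mathrm F^*(s(X_{n+1}))+o(1)$ uniformly. Hence $\hat\psi_{n+1}\to\ind\{\mathrm F^*(s(X_{n+1}))\le\gamma\}$ except on the event $\{\mathrm F^*(s(X_{n+1}))=\gamma\}$. That event is contained in $\{s(X_{n+1})=t^*\}$ plus a null set, by strict increase and non-atomicity. Away from it, $\{\mathrm F^*(s)\le\gamma\}=\{s\le t^*\}$ up to a null set. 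Dominated convergence then gives the stated limit.

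For $\gamma>\alpha$, the extra condition requires that no pair $(t,\ell)$ with $t\in\cM$ and $\ell\in[0,1]$ puts $\mathrm F(t;\ell)$ in $(\alpha,\gamma]$. But $\mathrm F^*$ is continuous: it is nondecreasing with jumps only at atoms of $s(X)$, and there are none. So $\mathrm F^*$ takes some value in $(\alpha,\gamma)$ at some $t_0$ in the interior of the support, provided it crosses $\alpha$. The empirical version on the dense grid $\cM$ eventually hits $(\alpha,\gamma]$, so the power tends to $0$. If $\mathrm F^*$ never exceeds $\alpha$, then $\mathrm F^*(1)\le\alpha<\gamma$; this degenerate case needs a separate remark or is excluded by the assumption that $\mathrm F^*$ strictly increases at $t^*$.

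Optimality at $\gamma=\alpha$ is immediate, since $t^*$ is nondecreasing in $\gamma$ and $r\ge0$.

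\textbf{Part (iii).} With $\gamma=\alpha$, the limiting rule $\ind\{s(X)\le t^*\}$ satisfies $\EE[l(X)\ind\{s\le t^*\}]\le\alpha$. For any other fixed score $s'$, the limiting decision $\phi'(X)=\ind\{s'(X)\le t'^*\}$ likewise satisfies $\EE[l(X)\phi'(X)]\le\alpha$.

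Consider the relaxed problem: maximize $\EE[r(X)\phi(X)]$ subject to $\EE[l(X)\phi(X)]\le\alpha$ over $\phi\colon\cX\to[0,1]$. This is a fractional knapsack, and the Neyman--Pearson argument solves it by thresholding $l/r$. Take $\phi^*=\ind\{l(X)/r(X)\le c\}$ with $\EE[l\phi^*]=\alpha$, which exists because $l/r$ is non-atomic, or take $\phi^*\equiv1$ if the constraint is slack. Then
\[
\EE[(r-l/c)(\phi^*-\phi)]\ge0,
\]
which gives $\EE[r\phi^*]\ge\EE[r\phi]$.

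It remains to check that a score $s$ strictly increasing in $l/r$ attains $\phi^*$. Thresholding $s$ at $t^*$ is the same as thresholding $l/r$. Also $\mathrm F^*(t)=\EE[l\ind\{s\le t\}]$ is continuous, so $\mathrm F^*(t^*)=\alpha$ whenever the constraint binds.

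\textbf{Main obstacle.} The hardest step is (i): making the uniformity over $\ell$ and the discreteness of the grid $\cM$ rigorous, including the edge cases in which $t^*$ sits at the boundary of the support. The nondegeneracy needed for the $\gamma>\alpha$ claim in (ii) is the second delicate point.
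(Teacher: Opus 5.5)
Your proposal is correct. It follows the paper in (i) and (iii), but takes a genuinely different route in (ii).

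\textbf{How the paper proves (ii).} It builds on (i). It uses the characterization that $E_{\gamma,n+1}\ge 1/\alpha$ if and only if $s(X_{n+1})\le t_\gamma(\ell)$ and $\mathrm{F}(t_\gamma(\ell);\ell)\le\alpha$ for all $\ell$. It then introduces an exceptional event $\cE$ on which either the uniform convergence of $\mathrm{F}(t;\ell)$ or the convergence $t_\gamma(\ell)\to t^*$ fails by more than $\delta_1,\delta_2$. The power is sandwiched between $\EE[R_{n+1}\ind\{s(X_{n+1})\le t^*\mp\delta_2\}]\pm\EE[R_{n+1}\ind_{\cE}]$. For $\gamma>\alpha$ the paper uses $\mathrm{F}^*(t^*)=\gamma>\alpha$, so that $\mathrm{F}(t_\gamma(\ell);\ell)\le\alpha$ eventually fails.

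\textbf{How you prove (ii).} You work directly from the closed form in Proposition~\ref{prop:single_equiv}. For $\gamma\le\alpha$, the decision is an explicit statistic converging to $\mathrm{F}^*(s(X_{n+1}))$. The limit then follows from pointwise convergence off the null set $\{s(X_{n+1})=t^*\}$ plus dominated convergence, and (i) is not needed at all. For $\gamma>\alpha$, you show the side condition $\mathrm{F}(t;\ell)\notin(\alpha,\gamma]$ eventually fails. The reason is that a calibration score eventually lands where $\mathrm{F}^*$ lies strictly inside $(\alpha,\gamma)$, a set of positive $s(X)$-mass.

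\textbf{Trade-off.} Your route is more elementary and decouples (ii) from the threshold analysis. The paper's route gives one unified sandwich for both regimes and reuses (i). Your nondegeneracy remark for $\gamma>\alpha$ (that $\mathrm{F}^*$ must actually reach $\gamma$ at an interior $t^*$) is the same assumption the paper uses silently when it writes $\mathrm{F}^*(t^*)=\gamma$ ``by continuity.''

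\textbf{Parts (i) and (iii).} In (iii) you optimize over $[0,1]$-valued rules rather than binary $b$. This gives slightly more (optimality even against randomized rules), with the same Neyman--Pearson inequality. In (i) you are more careful than the paper, which passes over in one line the fact that $t_\gamma(\ell)$ is a maximum over the finite grid $\cM$. To finish the lower bound with a margin, pick $t_1\in(t^*-\epsilon,t^*)$ with $\mathrm{F}^*(t^*-\epsilon)<\mathrm{F}^*(t_1)<\gamma$. Then $s(X)$ has positive mass in $(t^*-\epsilon,t_1]$, so a calibration score eventually lands there, at a point where $\mathrm{F}^*$ is bounded away from $\gamma$.

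\textbf{One small correction.} The inequality $\mathrm{F}^*(t^*)\le\gamma$ needs left-continuity of $\mathrm{F}^*$ at $t^*$, not right-continuity. It holds here because $s(X)$ is non-atomic, which you already invoke elsewhere.
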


With a constant reward, Theorem~\ref{thm:opt_mdr} suggests using standard estimators of the conditional prediction error. For example, in multi-class classification, one may be interested in whether the top-1 prediction (i.e., the label with the highest predicted probability) equals the true class, thereby defining $\cL(f,x,y) = \ind\{ y\neq \argmax_{y'} f(x,y')\}$ where $f(x,y)$ is the predicted probability of label $y$. Letting $\hat{y} = \argmax_{y'} f(x,y')$, a natural estimator for $l(x)$ is then $\sum_{y'\neq \hat{y}}f(x,y') = 1- f(x,\hat{y})$. In regression tasks with point prediction $f(x)$, it is natural to consider the mean squared error (MSE) $\cL(f,x,y) = (y-f(x))^2$, in which case  $s(x)$ should estimate the conditional MSE  $\EE[(Y-f(x))^2\given X=x]$. 

When the reward is non-constant,  Theorem~\ref{thm:opt_mdr} implies that the score function $s(X)$ should aim to preserve the ranking of the risk-to-reward ratio $l(x)/r(x)$. It changes the choice of the optimal score (compared with that for a constant reward) only when dividing by $r(x)$ substantially changes the ranking of $l(x)$ alone. We shall see  that this seems to rarely happen in real datasets, but our simulations do find some settings where the optimal scores under constant/non-constant rewards make a difference in the final decisions. 

% this is intuitively very similar to a continuous knapsack problem, where a greedy approach monotonous in l/r gives optimal solution
% just a random thought: when the distribution of test points are different, the optimal strategy (max reward subject to TDR) should assign different alpha to each test point? first sort by l/r, and
% when the budget is below alpha*m, just assign 1 (always deploy). otherwise assign the remaining value (use sdr control)?

% !TEX root = main.tex

\section{Selective risk control with conformal e-values} \label{sec:selective}

This section provides a construction of risk-adjusted e-values tailored to SDR control, which completes the procedure in Theorem~\ref{thm:general_sel_risk}. 
The key distinction from MDR control is that SDR concerns the average risk among selected instances (this notion is closer to the standard ideas in selective prediction~\citep{geifman2017selective}). Accordingly, the e-values are designed to integrate effectively with the e-BH filter. Section~\ref{subsec:construct_eval} presents the construction, along with an efficient algorithm for e-value computation that avoids grid search and runs in quadratic time. For multiple testing with the e-BH filter, Section~\ref{subsec:boosting} introduces a boosting strategy. Finally, Section~\ref{subsec:asymptotics_sdr} characterizes the asymptotically optimal choice of score.

% \subsection{General method}

% Assume access to non-negative variables $\{E_{\gamma, n+j}\}_{j=1}^m$ obeying $\EE[Y_{n+j}E_{\gamma, n+j}]\leq 1$ for all $j\in [m]$. Let 
% $\cR$ as the output of the eBH procedure~\citep{wang2022false} at nominal level $\alpha$, i.e., 
% \$
% \cR = \{j\colon E_{\gamma, n+j} \geq m/(\alpha k^*) \},\quad \textrm{where}\quad 
% k^* = \max \Big\{k\colon \sum_{j=1}^m \ind\{E_{\gamma, n+j} \geq m/(\alpha k)\} \geq k\Big\}.
% \$
% The validity of this procedure is in Theorem~\ref{thm:sel_general}, whose proof is in Appendix~\ref{app:subsec_thm_sel_general}.  

% \begin{theorem}\label{thm:sel_general}
% If $\EE[Y_{n+j}E_{\gamma, n+j}]\leq 1$ for all $j\in [m]$,  then $\cR$ obeys selective risk control~\eqref{eq:def_sel_risk}. 
% \end{theorem}

% While any e-values obeying the condition $\EE[Y_{n+j}E_{\gamma, n+j}]\leq 1$ achieves selective risk control, the specific construction of e-values is important for achieving satisfactory power. 

\subsection{Construction of e-values} \label{subsec:construct_eval}

We construct e-values for SDR control using the same exchangeability idea as in Section~\ref{sec:marginal}, but with a thresholding rule calibrated to approximate the SDR incurred by selecting low-score test points.

As before, let the calibration risks be $L_i=\cL(f,X_i,Y_i)$ for $i=1,\dots,n$, and let $s\colon \cX\to [0,1]$ be any pre-trained score. Fixing any constant $\gamma>0$, we define 
\@\label{eq:def_eval_sel}
E_{\gamma,n+j} = \inf_{\ell \in [0,1]} \bigg\{ \frac{(n+1)\cdot \ind\{s(X_{n+j}) \leq t_{\gamma,n+j}(\ell)\} }{\ell \ind\{s(X_{n+j}) \leq t_{\gamma,n+j}(\ell)\}+  \sum_{i=1}^n  L_i\ind\{s(X_{i}) \leq t_{\gamma,n+j}(\ell)\}} \bigg\}.
\@
The threshold   $t_{\gamma,n+j}(\ell) = \max\{t \in \cM\colon  \fr_{n+j}(t;\ell) \leq \gamma \}$ is chosen as the largest score cutoff such that a plug-in estimate of the SDR does not exceed $\gamma$, and 
\$
\fr_{n+j}(t;\ell) = \frac{\ell \ind\{s(X_{n+j})\leq t\} + \sum_{i=1}^n L_i\ind\{s(X_i)\leq t\}}{1+\sum_{k\neq j} \ind\{s(X_{n+k}) \leq t\}}\cdot \frac{m}{n+1}.
\$
Here  $\cM = \{s(X_i)\}_{i=1}^{m+n}$ is the empirical calibration and test scores, $\max \varnothing=-\infty$, and we set $E_{\gamma, n+j} = 0$ when $\inf_{\ell \in [0,1]} t_{\gamma, n+j}(\ell) = -\infty$. 
A slightly more conservative yet computationally  efficient version is discussed in Appendix~\ref{app:subsec_alt_evalue}. 
We summarize the entire procedure in Algorithm~\ref{alg:sdr}.

\begin{algorithm}
    \small
    \captionsetup{font=small}
    \caption{SCoRE-SDR}
    \label{alg:sdr}
\begin{algorithmic}[1]
    \REQUIRE{Labeled data $\{(X_i, Y_i)\}_{i=1}^n$, test data $\{X_{n+j}\}_{j=1}^m$, pre-trained score $s$, SDR target $\alpha \in (0,1)$, constant $\gamma>0$.} \\[0.5ex]
    \STATE Compute calibration risks $L_i = \cL(f, X_i, Y_i)$ for $i = 1, \dots, n$.
    \STATE Obtain the scores $\cM:=\{s(X_i)\}_{i=1}^{n+m}$.
    \STATE Compute $E_{\gamma,n+j}$ as in~\eqref{eq:def_eval_sel} (or the conservative version in Appendix~\ref{app:subsec_alt_evalue}) for $j=1,\dots,m$.
    \STATE Compute $\cR$ as the selection set of the eBH procedure applied to $\{E_{\gamma,n+j}\}_{j=1}^m$ at level $\alpha$.
    \ENSURE{Deployment decision $\hat\psi_{n+j} = \ind\{j\in \cR\}$.}
\end{algorithmic}
\end{algorithm}

Theorem~\ref{thm:evalue_sel} establishes the validity of $E_{\gamma,n+j}$ as a risk-adjusted e-value, whose proof is in Appendix~\ref{app:subsec_thm_sel_eval}. As a consequence, the output of Algorithm~\ref{alg:sdr} achieves finite-sample SDR control per Theorem~\ref{thm:general_sel_risk}.

% \tiancomment{$\fr$ is similar to $\textrm{F}^{-1}$ in the last section. Therefore should we restrict $\gamma 
% \in (0,1)$ here to match the condition $\gamma > 1$ in the last section? (Also for convenience, take the reciprocal of one to match the other one.)}

\begin{theorem}\label{thm:evalue_sel}
    Assume $\{(X_i,Y_i)\}_{i=1}^{n+m}$ are exchangeable. Then $E_{\gamma,n+j}$ defined in~\eqref{eq:def_eval_sel} obeys $\EE[L_{n+j}E_{\gamma,n+j}]\leq 1$ for any fixed $\gamma > 0$. 
\end{theorem}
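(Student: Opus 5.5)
The plan mirrors the intuition given after Theorem~\ref{thm:single}: replace the infimum over $\ell$ by the true value $L_{n+j}$ and then use exchangeability with a permutation-invariant threshold.

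\textbf{Step 1: remove the infimum.} Since $L_{n+j}\in[0,1]$ is a feasible value of $\ell$,
\[
E_{\gamma,n+j}\le \tilde E_{n+j}:=\frac{(n+1)\,\ind\{s(X_{n+j})\le T_j\}}{\sum_{i\in\{1,\dots,n,n+j\}} L_i\ind\{s(X_i)\le T_j\}},
\qquad T_j:=t_{\gamma,n+j}(L_{n+j}).
\]
(If $\inf_\ell t_{\gamma,n+j}(\ell)=-\infty$, then $E_{\gamma,n+j}=0$ and the bound is trivial.) It therefore suffices to show $\EE[L_{n+j}\tilde E_{n+j}]\le 1$.

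\textbf{Step 2: identify an invariant threshold.} With $\ell=L_{n+j}$, the numerator of $\fr_{n+j}(t;L_{n+j})$ equals $\sum_{i\in\{1,\dots,n,n+j\}}L_i\ind\{s(X_i)\le t\}$, which is symmetric in the $n+1$ indices $\{1,\dots,n,n+j\}$. The denominator, however, is $1+\sum_{k\ne j}\ind\{s(X_{n+k})\le t\}$, which singles out index $n+j$. The fix is to condition on the other test points $\{(X_{n+k},Y_{n+k})\}_{k\ne j}$ and on the unordered multiset of $\{(X_i,Y_i)\}_{i\in\{1,\dots,n,n+j\}}$. Under exchangeability of all $n+m$ points, the conditional law of which element of that multiset sits at position $n+j$ is uniform.

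Given this conditioning, the denominator is a fixed function of $t$, the numerator is fixed, and $\cM$ (the full set of scores) is fixed. Hence $T_j$ is a deterministic function $T$ of the conditioning information. Writing $A_i=\ind\{s(X_i)\le T\}$ for $i\in\{1,\dots,n,n+j\}$, we get
\[
\EE\big[L_{n+j}\tilde E_{n+j}\,\big|\,\cdot\,\big]=\frac{1}{n+1}\sum_{i}\frac{(n+1)L_iA_i}{\sum_{i'}L_{i'}A_{i'}}\le 1,
\]
with the convention $0/0=0$ when the denominator vanishes. Taking expectations completes the argument.

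\textbf{Main obstacle.} Step~2 carries the weight. One must check carefully that $T_j$ is invariant under permutations of the $n+1$ points, including the case where $s(X_{n+j})$ enters the denominator through the other test points. It does not: the denominator sums only over $k\ne j$, so it depends only on the conditioned points. One must also confirm that the search set $\cM$, which includes the test score $s(X_{n+j})$, is permutation invariant; it is, because $\cM$ is the full set of all $n+m$ scores.

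Ties and the $-\infty$ case need only routine handling. The factor $m/(n+1)$ and the value of $\gamma$ play no role in validity, which is why any $\gamma>0$ is allowed.
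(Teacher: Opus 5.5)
Your proposal is correct and follows the paper's proof. Both arguments bound the infimum by plugging in $\ell=L_{n+j}$, note that $t_{\gamma,n+j}(L_{n+j})$ is fixed once you condition on the unordered set $[Z_1,\dots,Z_n,Z_{n+j}]$ and the ordered other test points $\{Z_{n+k}\}_{k\neq j}$, and then average $(n+1)L_kA_k/\sum_i L_iA_i$ over a uniformly random placement before applying the tower property. Your explicit $0/0=0$ convention, which gives ``$\le 1$'' where the paper writes ``$=1$'', handles the degenerate case where every thresholded risk vanishes slightly more carefully than the paper does.
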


Similar to Remark~\ref{rem:range_marginal}, the infimum over $\ell\in [0,1]$ in~\eqref{eq:def_eval_sel} can be restricted to attainable risk values, i.e., $\ell\in \RR^+\cap \{\cL(X_{n+j},y)\colon y\in \cY\}$, leading to sharper e-values when the range of risk is narrower. However, for unified statements, we keep the current definition throughout.

The high-level intuitions of~\eqref{eq:def_eval_sel} are as follows.  $E_{\gamma,n+j}$   conservatively approximates $\frac{(n+1)A_{n+j} }{\sum_{i=1}^n A_iL_i + A_{n+j}L_{n+j}}$ where $A_i = \ind\{s(X_i)\leq T\}$ are random variables such that the $(A_i,L_i)$'s are exchangeable.
Here the ``stopping time'' $T$ is approximated by $t_{\gamma,n+j}(\ell)$, which is carefully designed to align with the e-BH filter. This choice is inspired by the stopping-time interpretation of the BH procedure~\citep{benjamini1995controlling,storey2002direct}  as inverting an empirical-process estimate of the false discovery proportion (FDP). In our context, $\fr_{n+j}(t;\ell)$ estimates the SDR when selecting test units with $s(X_{n+j})\leq t$. Specifically, note 
\$
\fr_{n+j}(t;\ell) \approx \frac{\sum_{i=1}^n L_i\ind\{s(X_i)\leq t\}/n}{\#\{\ell\in[m]\colon s(X_{n+\ell})\leq t\} / m}  \approx \frac{\sum_{\ell=1}^m L_{n+\ell}\ind\{s(X_{n+\ell})\leq t\}/m}{\#\{\ell\in[m]\colon s(X_{n+\ell})\leq t\} / m}  
\$ 
due to exchangeability among data, 
where the right-handed side approximates the SDR for $\psi_{n+j}=\ind\{s(X_i)\leq t\}$. 
Indeed, with binary risk and $\ell=1$, our $\fr_{n+j}(t;\ell)$ reduces to the FDP estimator in~\cite{storey2002direct} in the context of conformal selection~\citep{jin2023selection}.

% \subsection{Efficient computation}
% \label{subsec:sel_eval_compute}

\paragraph{Efficient computation.} Computing $E_{\gamma,n+j}$ in~\eqref{eq:def_eval_sel} necessitates a search over $\ell\in[0,1]$ which can be computationally prohibitive. We develop an efficient computation of $E_{\gamma,n+j}$ in Algorithm~\ref{alg:sel} that avoids such a search. The key idea is to reduce the continuous search over $\ell \in [0,1]$ to a search over the finite set of values attained by $t_{\gamma,n+j}(\ell)$ in $\cM \cup \{-\infty\}$. The proof of Proposition~\ref{prop:mult_equiv} is deferred to Appendix~\ref{app:subsec_mult_equiv}. 

\begin{prop} \label{prop:mult_equiv}
    The output of Algorithm~\ref{alg:sel} equals $E_{\gamma,n+j}$ defined in~\eqref{eq:def_eval_sel}, whose computation complexity is at most $\cO((n+m)m + (n+m)\log(n+m) )$.
\end{prop}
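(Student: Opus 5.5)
The plan is to fix $j$ and use the structure of $\fr_{n+j}(t;\ell)$ in $\ell$. This reduces the infimum over $\ell\in[0,1]$ to a minimum over finitely many candidate thresholds in $\cM$. I then check that this minimum is what Algorithm~\ref{alg:sel} computes, presumably as a single sweep over the sorted scores. Throughout, $s_j:=s(X_{n+j})$, $S(t):=\sum_{i=1}^n L_i\ind\{s(X_i)\leq t\}$ and $N_j(t):=\sum_{k\neq j}\ind\{s(X_{n+k})\leq t\}$.

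\emph{Step 1 (structure in $\ell$).} For $t<s_j$, $\fr_{n+j}(t;\ell)=\frac{m\,S(t)}{(n+1)(1+N_j(t))}$ does not depend on $\ell$. For $t\geq s_j$, it is affine and nondecreasing in $\ell$. Hence such a $t$ is feasible iff $\ell\leq \ell_t$, where
\[
\ell_t:=\gamma(n+1)(1+N_j(t))/m - S(t).
\]
Consequently $t_{\gamma,n+j}(\ell)$ is nonincreasing and piecewise constant in $\ell$, taking values in $\cM\cup\{-\infty\}$.

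\emph{Step 2 (the zero case).} If $t_{\gamma,n+j}(\ell)<s_j$ for some $\ell$, then the ratio in~\eqref{eq:def_eval_sel} equals $0$ at that $\ell$, so $E_{\gamma,n+j}=0$. By monotonicity in $\ell$, this happens iff $t_{\gamma,n+j}(1)<s_j$, i.e.\ iff no $t\in\cM$ with $t\geq s_j$ has $\ell_t\geq 1$. This case also covers the convention for $-\infty$.

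\emph{Step 3 (the nonzero case).} Otherwise $t_{\gamma,n+j}(\ell)\geq s_j$ for all $\ell\in[0,1]$, and the objective equals $(n+1)/(\ell+S(t_{\gamma,n+j}(\ell)))$.
\begin{itemize}
\item On each piece where $t_{\gamma,n+j}(\ell)\equiv t$, the denominator increases in $\ell$, so the infimum is attained at the right endpoint of the piece.
\item Because feasibility is the closed condition $\ell\leq\ell_t$, the piece for $t$ is right-closed, and its right endpoint is $\min(1,\ell_t)$.
\item A threshold $t\geq s_j$ is actually realized as $t_{\gamma,n+j}(\ell)$ for some $\ell\in[0,1]$ iff $\ell_t\geq 0$ and $\ell_t>\max\{\ell_{t'}: t'\in\cM,\,t'>t\}$. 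The latter is a suffix maximum over larger thresholds (vacuous if $t=\max\cM$), intersected with $[0,1]$.
\end{itemize}
Therefore
\[
E_{\gamma,n+j}=\min_{t\ \text{realized}} \frac{n+1}{\min(1,\ell_t)+S(t)}.
\]
The algorithm's output should be matched to this formula.

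\emph{Step 4 (complexity).} Sort $\cM$ once, in $O((n+m)\log(n+m))$. For each $j$, the cumulative sums $S(t)$ and $N_j(t)=N(t)-\ind\{s_j\leq t\}$ along the sorted order cost $O(n+m)$. A backward pass then computes the suffix maxima of $\ell_t$ and the minimum above in $O(n+m)$. Summing over $j$ gives $O((n+m)m+(n+m)\log(n+m))$.

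The main obstacle is Step 3. Ties in $\cM$ and piece endpoints that fall outside $[0,1]$ must be handled carefully, so that the infimum is genuinely attained and no spurious thresholds enter the minimum. I would first verify the claimed right-closedness, and the strict-inequality realization criterion at ties, directly from the definition of $\max$ over the finite set $\cM$.
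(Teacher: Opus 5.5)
\paragraph{Assessment.} Your proposal follows the paper's route. You split the infimum over $\ell$ by the level sets of $\ell\mapsto t_{\gamma,n+j}(\ell)$ and evaluate each piece at its right endpoint. You settle the zero case through $t_{\gamma,n+j}(1)$, and you bound the cost by one sort plus a linear prefix-sum/suffix-max pass per $j$.

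Your closed form $E_{\gamma,n+j}=\min_{t\ \text{realized}}(n+1)/(\min(1,\ell_t)+S(t))$ is correct. It is in fact more careful than the paper's derivation, which writes the level set as $[\max_{t'>t}\bar\ell(t'),\bar\ell(t)]$ and so ignores both the open left end and the cap at $1$.

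\paragraph{The gap.} The proposition asserts that the \emph{output of Algorithm~\ref{alg:sel}} equals $E_{\gamma,n+j}$, and you defer exactly this step (``should be matched''). It is not a term-by-term identity. Write $t_0=t_{\gamma,n+j}(0)$ and $t_1=t_{\gamma,n+j}(1)$, and consider the branch $t_1<t_0$.
\begin{itemize}
\item Line 13 uses the uncapped $\bar\ell(t)=\ell_t$. The threshold $t_1$ survives the pruning in Line 12 with $\ell_{t_1}\ge 1$, generically $>1$. So the algorithm's term $(n+1)/(\ell_{t_1}+S(t_1))$ can be strictly smaller than the true value on that piece, $(n+1)/(1+S(t_1))$.
\item Line 12 keeps ties $\ell_t=\max_{t'>t}\ell_{t'}$. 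Your strict criterion correctly discards these as unrealized.
\end{itemize}
So the algorithm minimizes over a different set of terms, some of them too small. You need an argument that this never changes the minimum.

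\paragraph{The missing observation.} By definition, $\bar\ell(t)+S(t)=\gamma(n+1)(1+N_j(t))/m$.
\begin{itemize}
\item \emph{The algorithm's value.} Every term in Line 13 equals $m/\{\gamma(1+N_j(t))\}$, which is nonincreasing in $t$. The largest element of $\cM^*$ is $t_0$, which is never pruned. Hence the algorithm returns $m/\{\gamma(1+N_j(t_0))\}$.
\item \emph{Your formula.} Every realized $t$ satisfies $t\le t_0$ and
$(n+1)/(\min(1,\ell_t)+S(t))\ge (n+1)/(\ell_t+S(t))=m/\{\gamma(1+N_j(t))\}\ge m/\{\gamma(1+N_j(t_0))\}$.
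Equality holds at $t=t_0$: it is realized by $\ell=0$, and $\ell_{t_0}<1$ because $\fr_{n+j}(t_0;1)>\gamma$ when $t_0>t_1$.
\end{itemize}
Hence both sides equal $m/\{\gamma(1+N_j(t_0))\}$ in the Lines 10--13 branch.

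\paragraph{Remaining branches.} In the branch $t_0=t_1$, only $t_1$ is realized and $\min(1,\ell_{t_1})=1$, which reproduces Line 9. The zero branch (Lines 6--7) is your Step 2. With these checks added, the equality is proved, and your complexity count applies to the algorithm as stated.
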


\begin{algorithm}
    \small
    \captionsetup{font=small}
    \caption{Efficient computation of e-values for SDR control}
    \label{alg:sel}
\begin{algorithmic}[1]
    \REQUIRE{Labeled data $\{(X_i, Y_i)\}_{i=1}^n$, test data $\{X_{n+j}\}_{j=1}^m$, pretrained score $s$.} \\[0.5ex]
    \STATE Compute  calibration risks $L_i = \cL(f, X_i, Y_i)$ for $i = 1, \dots, n$.
    \STATE Compute the scores for calibration and test data $\cM:=\{s(X_i)\}_{i=1}^{n+m}$.
    \FOR{$j=1, \dots, m$}  
    \STATE Compute $\bar{\ell}(t)=\frac{\gamma(n+1)}{m} \big(1 + \sum_{\ell \neq j}\ind\{s(X_{n+\ell }) \leq t\} \big) - \sum_{i=1}^n L_i \ind\{s(X_{i}) \leq t\}$ for $t\in \cM$.
        \STATE Compute the thresholds $t_{\gamma, n+j}(0)$ and $t_{\gamma, n+j}(1)$.
        \IF{$s(X_{n+j}) > t_{\gamma, n+j}(1)$}
            \STATE Set $E_{\gamma, n+j} = 0$.
        \ELSIF{$t_{\gamma, n+j}(0) = t_{\gamma, n+j}(1)$}
            \STATE $\displaystyle
            \textnormal{Set } E_{\gamma, n+j} = \frac{n+1}{1 + \sum_{i=1}^n L_i \ind\{s(X_i) \leq t_{\gamma, n+j}(1)\} }.
            $
        \ELSE
            \STATE Initialize the set
            $\displaystyle 
                \cM^* = \{t \in \cM: t \geq s(X_{n+j})~\text{and}~ \fr_{n+j}(t; 0) \leq \gamma\} \cap [t_{\gamma, n+j}(1), t_{\gamma, n+j}(0)].
            $
            \STATE Remove all element $t \in \cM^*$ if there exists any $t' \in \cM, t' > t, \fr(t';0) \leq \gamma$ such that $\ell(t') > \ell(t)$.
            \STATE $\displaystyle
            \textnormal{Set } E_{\gamma, n+j} = \inf_{t \in \cM^*}\frac{n+1}{\bar\ell(t) + \sum_{i=1}^n L_i \ind\{s(X_i) \leq t\} }.
            $
        \ENDIF
    \ENDFOR  
    \ENSURE{E-values $\{E_{\gamma, n+j}\}_{j=1}^m$.}
\end{algorithmic}
\end{algorithm}

\paragraph{Connection to conformal selection.} Since  SDR extends FDR to quantitative risks, it is helpful to connect the SDR-controlling procedure of SCoRE---which combines Theorem~\ref{thm:general_sel_risk} and~\eqref{eq:def_eval_sel}---and the conformal selection procedure of~\cite{jin2023selection}. 
Define conformal p-values $p_{n+1},\dots,p_{n+m}$ for all test points analogously to~\eqref{eq:def_conformal_pval}. Conformal selection  applies the BH procedure to $\{p_{n+j}\}_{j=1}^m$ and controls the FDR---which equals the SDR~\eqref{eq:def_sel_risk} under the binary risk $\cL(f,X,Y)=\ind\{Y\leq c\}$---at nominal level $\alpha\in(0,1)$. 
It can be shown that the conformal selection set, denoted $\cS^{\text{CS}}$,  is equivalent to the e-BH output applied to $\{e_{n+j}\}_{j=1}^m$ at level $\alpha$, where we $e_{n+j} = \ind\{p_{n+j} \leq t\} / t$  with $t = \alpha|\cS^{\text{CS}}| / m$; see, e.g.,~\cite{wang2022false}. Thus, conformal selection can also be interpreted as an e-value-based selection method.

Our first result relates the SCoRE e-values to the conformal selection e-values $\{e_i\}$. Its proof is in Appendix~\ref{app:subsec_thm_sdr_conn_CS}. For convenience, write $E_{\gamma,n+j}(\ell)$ for the quantity inside the infimum in~\eqref{eq:def_eval_sel}.

\begin{prop} \label{thm:sdr_conn_CS}
    Assume a binary risk function $\cL(f, X, Y) = \ind\{Y \leq c\}$, where $c \in \mathbb{R}$ is a constant. Then,  $E_{\alpha, n+j}(1) \geq e_{n+j}$ deterministically for any $j\in [m]$. Furthermore, $e_{n+j} = 0$ implies $E_{\alpha, n+j} = 0$.
\end{prop}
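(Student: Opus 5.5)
The plan is to write both objects in terms of the same empirical quantities and compare them through the step-up (maximality) property of BH. With $L_i=\ind\{Y_i\leq c\}$ and the clipped score, the conformal p-value of test point $j$ is
\[
p_{n+j}=\frac{1+\sum_{i=1}^n L_i\ind\{s(X_i)\leq s(X_{n+j})\}}{n+1}.
\]
This is nondecreasing in $s(X_{n+j})$. For brevity I write $G(t):=1+\sum_{i=1}^n L_i\ind\{s(X_i)\leq t\}$, so that $p_{n+j}=G(s(X_{n+j}))/(n+1)$. Let $k=|\cS^{\text{CS}}|$, so that $e_{n+j}=\ind\{p_{n+j}\leq \alpha k/m\}\cdot m/(\alpha k)$, with $e_{n+j}=0$ when $k=0$. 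Put $T:=t_{\alpha,n+j}(1)$. Then
\[
E_{\alpha,n+j}(1)=\frac{(n+1)\ind\{s(X_{n+j})\leq T\}}{G(T)\,\ind\{s(X_{n+j})\leq T\}+\sum_{i}L_i\ind\{s(X_i)\leq T\}\,\ind\{s(X_{n+j})> T\}}.
\]
In the relevant case $s(X_{n+j})\leq T$ this reduces to $(n+1)/G(T)$. I will assume scores are distinct, or that ties are broken consistently in both procedures.

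\textbf{Key lemma.} If $s(X_{n+j})\leq T$, then $k\geq N$, where $N:=1+\#\{k'\neq j\colon s(X_{n+k'})\leq T\}$ is the number of test points with score at most $T$. To prove it, use the constraint $\fr_{n+j}(T;1)\leq\alpha$, which reads $G(T)\leq \alpha(n+1)N/m$. Every test point $k'$ with $s(X_{n+k'})\leq T$ has $p_{n+k'}\leq G(T)/(n+1)\leq \alpha N/m$, by monotonicity of $G$. So at least $N$ p-values lie below $\alpha N/m$, and the step-up definition of BH gives $k\geq N$. The same argument shows $p_{n+j}\leq \alpha k/m$, so $j\in\cS^{\text{CS}}$.

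\textbf{First claim.} If $e_{n+j}=0$ there is nothing to prove. Otherwise $k\geq 1$ and $j\in\cS^{\text{CS}}$. Let $t^\dagger$ be the largest test score among the selected points. By monotonicity of the p-values in $s$, the selected set is exactly the set of test points with score at most $t^\dagger$, and $s(X_{n+j})\leq t^\dagger$. Evaluate the constraint at $t^\dagger$:
\begin{itemize}
\item the numerator of $\fr_{n+j}(t^\dagger;1)$ is $G(t^\dagger)=(n+1)\,p_{(\text{max selected})}\leq (n+1)\alpha k/m$;
\item the denominator $1+\#\{k'\neq j\colon s(X_{n+k'})\leq t^\dagger\}$ equals $k$.
\end{itemize}
Hence $\fr_{n+j}(t^\dagger;1)\leq\alpha$, so $T\geq t^\dagger\geq s(X_{n+j})$ and the indicator in $E_{\alpha,n+j}(1)$ equals one. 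The lemma then gives
\[
E_{\alpha,n+j}(1)=\frac{n+1}{G(T)}\geq \frac{m}{\alpha N}\geq \frac{m}{\alpha k}=e_{n+j}.
\]

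\textbf{Second claim.} Suppose $e_{n+j}=0$, so either $k=0$ or $j\notin\cS^{\text{CS}}$. If $s(X_{n+j})\leq T$, the lemma would give $k\geq N\geq 1$ and $j\in\cS^{\text{CS}}$, a contradiction. Therefore $s(X_{n+j})>T$, so $E_{\alpha,n+j}(1)=0$. Since $E_{\alpha,n+j}$ is an infimum over $\ell$ that includes $\ell=1$, we get $0\leq E_{\alpha,n+j}\leq E_{\alpha,n+j}(1)=0$.

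\textbf{Main obstacle.} The delicate step is the bookkeeping that identifies the BH rejection set with a score-threshold set. This relies on the clipped-score p-values being monotone in $s(X_{n+j})$, and it requires the denominator count in $\fr_{n+j}$ (which includes the $+1$ for point $j$ itself) to match $k$ exactly. Ties in scores need care: I would either assume distinct scores or argue with the tie-breaking implicit in the $\leq$ conventions used in both definitions.
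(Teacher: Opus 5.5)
Your proof is correct and follows essentially the same route as the paper: both use the step-up property of BH to show that $s(X_{n+j})\le t_{\alpha,n+j}(1)$ holds exactly when $j\in\cS^{\cs}$, and that the number $N$ of test scores at or below $t_{\alpha,n+j}(1)$ is then at most (in fact equal to) $|\cS^{\cs}|$. From this, $E_{\alpha,n+j}(1)=(n+1)/G(T)\ge m/(\alpha N)\ge e_{n+j}$ follows.

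The one difference is that you work directly with the actual conformal p-values $G(s(X_{n+\ell}))/(n+1)$ instead of the paper's leave-one-out proxy p-values $p^{(j)}_\ell$. This is legitimate, because once $t\ge s(X_{n+j})$ the numerator of $\fr_{n+j}(t;1)$ is exactly $G(t)$. For the same reason your caveat about ties is unnecessary: every step uses only non-strict inequalities and the monotonicity of $G$.
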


Proposition~\ref{thm:sdr_conn_CS} shows that when evaluated at a positive risk level ($\ell=1$), SCoRE is no more conservative than conformal selection. 
On the other hand,  the SCoRE selection set cannot be larger than the conformal selection set, as any $j\notin \cS^{\cs}$ must obey $e_{n+j}=0$, which implies $E_{\alpha,n+j} =0$. 
In general, since SCoRE takes an infimum over $\ell\in [0,1]$, the comparison between $E_{\alpha,n+j}$ and $e_j$ is not immediate.
Nevertheless, next we show that with binary risk, slightly modifying the SCoRE procedure---by tightening the range the infimum is taken over---recovers the conformal selection procedure. Its proof is in Appendix~\ref{app:cor_compare_cs}.  
% \yingcomment{Just notice that we sometimes use risk/loss interchangeably. To be consistent we shall use ``risk''.}

\begin{corollary}\label{cor:compare_cs}
    Under the conditions in Proposition~\ref{thm:sdr_conn_CS}, define $E_{\gamma, n+j}' = E_{\gamma, n+j}(1)$ and let $\cS'$ be the output of eBH applied to $\{E_{\gamma, n+j}'\}_{j=1}^m$ at nominal level $\alpha\in(0,1)$. Then the following holds.
    \begin{enumerate}
        \item[(i)]  $\cS'$ achieves finite-sample selective risk control below $\alpha$.
        \item[(ii)] If we set $\gamma = \alpha$ in defining the \textnormal{SCoRE} e-values, then $\cS' = \cS^{\cs}$, where $\cS^{\cs}$ is the output of conformal selection at level $\alpha$ using p-values defined similar to~\eqref{eq:def_conformal_pval}.
    \end{enumerate}
\end{corollary}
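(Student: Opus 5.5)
Part (i) follows from Theorem~\ref{thm:general_sel_risk} once each $E_{\gamma,n+j}'$ is shown to be a risk-adjusted e-value. With binary risk $L_{n+j}=\ind\{Y_{n+j}\leq c\}\in\{0,1\}$, we have $L_{n+j}E_{\gamma,n+j}' = L_{n+j}E_{\gamma,n+j}(1)$. On the event $L_{n+j}=1$, this equals $E_{\gamma,n+j}(L_{n+j})$, the oracle quantity obtained by plugging the true risk into~\eqref{eq:def_eval_sel}; on $L_{n+j}=0$, both sides vanish. Hence
\$
L_{n+j}E_{\gamma,n+j}' = L_{n+j}E_{\gamma,n+j}(L_{n+j}).
\$
The proof of Theorem~\ref{thm:evalue_sel} goes through the bound $L_{n+j}E_{\gamma,n+j}\leq L_{n+j}E_{\gamma,n+j}(L_{n+j})$ and then shows $\EE[L_{n+j}E_{\gamma,n+j}(L_{n+j})]\leq 1$ by exchangeability. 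I would reuse that second step directly. The argument takes $T=t_{\gamma,n+j}(L_{n+j})$, which is invariant under permutations of the calibration units and unit $n+j$ (the denominator of $\fr_{n+j}$ involves only the other test points), and writes $(n+1)L_{n+j}A_{n+j}/\sum_{i\in[n]\cup\{n+j\}}L_iA_i$, which has expectation at most one. This gives (i).

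For (ii), set $\gamma=\alpha$. The goal is to show that eBH on $\{E'_{\alpha,n+j}\}$ returns exactly $\cS^{\cs}$. By the remark after Proposition~\ref{thm:sdr_conn_CS}, conformal selection equals eBH applied to $e_{n+j}=\ind\{p_{n+j}\leq \alpha|\cS^{\cs}|/m\}\cdot m/(\alpha|\cS^{\cs}|)$. The proof then has two halves.

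\textbf{Containment $\cS^{\cs}\subseteq\cS'$.} Proposition~\ref{thm:sdr_conn_CS} gives $E'_{\alpha,n+j}\geq e_{n+j}$ deterministically, and eBH is monotone in its inputs. So eBH on the $E'$ selects at least $\cS^{\cs}$, and its $\hat\tau'\geq|\cS^{\cs}|$.

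\textbf{Containment $\cS'\subseteq\cS^{\cs}$.} Proposition~\ref{thm:sdr_conn_CS} also says $e_{n+j}=0$ implies $E_{\alpha,n+j}=0$, but that statement concerns the infimum version, not $E'$. So I would redo the argument directly for $\ell=1$. With binary risk and $\ell=1$, the quantity $\fr_{n+j}(t;1)$ is the Storey-type FDP estimate $\frac{m}{n+1}\cdot\frac{1+\#\{i\leq n:L_i=1,s(X_i)\leq t\}}{1+\#\{k\neq j:s(X_{n+k})\leq t\}}$ when $s(X_{n+j})\leq t$. The p-values are $p_{n+j}=(1+\#\{i:L_i=1,s(X_i)\leq s(X_{n+j})\})/(n+1)$, and BH's threshold corresponds to the largest score cutoff $t$ with $\frac{m}{n+1}(1+N_0(t))/R(t)\leq\alpha$, where $R(t)$ counts test points with score at most $t$. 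I would show that $j\notin\cS^{\cs}$ forces $s(X_{n+j})>t_{\alpha,n+j}(1)$. For $j$ unselected, $R(t)=1+\#\{k\neq j:s(X_{n+k})\leq t\}$ at every $t\geq s(X_{n+j})$, so the two estimates coincide there. BH's step-up maximality then says no such $t$ has estimate $\leq\alpha$. Hence $E'_{\alpha,n+j}=0$, and $\cS'\subseteq\cS^{\cs}$.

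The main obstacle is the bookkeeping in that last step. It requires matching the BH stopping rule (via the Storey form of BH with conformal p-values and ties in scores) to the leave-one-out denominator $1+\sum_{k\neq j}$. Arguing through the step-up characterization $\hat k=\max\{k: p_{(k)}\leq \alpha k/m\}$, written in score-cutoff form, should handle this cleanly.
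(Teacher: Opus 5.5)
Your proposal is correct and follows the paper's proof. Part (i) uses $L_{n+j}E'_{\gamma,n+j}=L_{n+j}E_{\gamma,n+j}(L_{n+j})$ together with the exchangeability step from the proof of Theorem~\ref{thm:evalue_sel}. Part (ii) uses $E'_{\alpha,n+j}\geq e_{n+j}$ plus eBH monotonicity for one containment, and ``$j\notin\cS^{\cs}\Rightarrow s(X_{n+j})>t_{\alpha,n+j}(1)\Rightarrow E'_{\alpha,n+j}=0$'' for the other. Your direct argument for that second containment is the contrapositive of part (ii) of the paper's Lemma~\ref{lemma:cs_sel_vs_score_sel}. It also settles the infimum-versus-$E'$ concern you raised, since both $E_{\alpha,n+j}\neq 0$ and $E'_{\alpha,n+j}\neq 0$ are equivalent to $s(X_{n+j})\leq t_{\alpha,n+j}(1)$, because $t_{\alpha,n+j}(\ell)$ is nonincreasing in $\ell$.
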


% shall we use R' / R^CS instead of S?

% By Corollary~\ref{cor:compare_cs}, with a careful definition of the risk-adjusted e-values, SCoRE can be viewed as a generalization of the conformal selection procedure, just as we generalize the binary risk to continuous ones. 

% Although the theorem explicitly compares only the \textit{oracle} SCoRE e-values with the conformal e-values, a simple adjustment in the binary risk case extends its applicability to the empirical counterparts. Specifically, by defining $E_{\alpha, n+j}' := E_{\alpha, n+j}(1)$ and directly apply eBH to this set of e-values $\{E_{\alpha, n+j}'\}_{j=1}^m$, we also obtain valid SDR control. To verify this, note that for each $j$, we have $L_{n+j}E_{\alpha, n+j}(L_{n+j}) = L_{n+j}E_{\alpha, n+j}'$, which implies that $E_{\alpha, n+j}'$ are also valid risk-adjusted e-values by Theorem~\ref{thm:evalue_sel}. With this new set of e-values, it holds that $E_{\alpha, n+j}' = E_{\alpha, n+j}(L_{n+j}) \geq e_j$ for any $j$ with $Y_j \leq c$. In other words, our selection procedure is less conservative on the null, thereby potentially achieving higher power in identifying desirable candidates. In fact, the following result guarantees that SCoRE with this modified set of e-values achieves performance no worse than conformal selection.

\subsection{Improving power by boosting e-values}
\label{subsec:boosting}

We can further enhance the power of SCoRE-SDR without sacrificing SDR control,
inspired by the pruning technique in~\cite{jin2023selection, bai2024optimized,fithian2022conditional} and the strategies in~\cite{xu2024online} designed for FDR control.
For notational simplicity, in this section we write $E_{n+j}=E_{\gamma,n+j}$. 

The first variant, heterogeneous boosting, generates $\xi_{n+j}\iid \textrm{Unif}([0,1])$ independent of everything else, and set
\$
\cR_{\hete} = \{j\colon E_{n+j}/\xi_{n+j} \geq m/(\alpha k_\hete^*) \},~~ \textrm{where}~~ 
k_\hete^* = \max \Big\{k\colon \textstyle{\sum_{j=1}^m }\ind\{E_{n+j}/\xi_{n+j} \geq m/(\alpha k)\} \geq k\Big\}.
\$
% Here we write the e-values $E_{\gamma, n+j}$ as $E_{n+j}$ throughout this section for notational simplicity. 
Alternatively, homogeneous boosting generates $\xi_{n+j} \equiv \xi\sim  \textrm{Unif}([0,1])$, and set 
\$
\cR_{\homo} = \{j\colon E_{n+j}/\xi \geq m/(\alpha  k_{\homo}^*) \},~~ \textrm{where}~~ 
k_{\homo}^* = \max \Big\{k\colon \textstyle{\sum_{j=1}^m }\ind\{E_{n+j}/\xi \geq m/(\alpha k)\} \geq k\Big\}.
\$

It has been shown~\citep{bai2024optimized} that both $\cR_{\text{hete}}$ and $\cR_{\text{homo}}$ are supersets of the selection set of BH applied to $\{E_{n+j}\}$, and the next theorem states that SDR control is preserved with proof in Appendix~\ref{app:subsec_boost}.

\begin{theorem}\label{thm:eval_boost}
    Suppose the e-values $\{E_{n+j}\}_{j=1}^m$ satisfy Definition~\ref{def:eval}. Then, $\cR_\hete$ and $\cR_\homo$ run at level $\alpha\in (0,1)$ control the SDR below  $\alpha$.
\end{theorem}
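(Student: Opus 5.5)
The plan is to reduce both boosted procedures to the argument in the proof of Theorem~\ref{thm:general_sel_risk}, replacing the deterministic thresholding inequality $\ind\{E\geq c\}\leq E/c$ by its randomized analogue. The key observation is that for $\xi\sim\textrm{Unif}([0,1])$ independent of $(L,E)$ and any constant $c>0$,
\$
\EE\big[\ind\{E/\xi\geq c\}\mid L,E\big]=\PP(\xi\leq E/c\mid E)=\min\{1,E/c\}\leq E/c.
\$
The boosted e-values $E_{n+j}/\xi_{n+j}$ are therefore not themselves e-values, but they behave like e-values once the uniform is integrated out. The difficulty is that the threshold $m/(\alpha k^*)$ is random and depends on $\xi$, so we cannot integrate $\xi_{n+j}$ out directly.

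\textbf{Heterogeneous boosting.} I would follow the conditional-calibration route for e-BH. First, write
\$
\sdr\leq \sum_{j=1}^m \EE\Big[\frac{L_{n+j}\ind\{E_{n+j}/\xi_{n+j}\geq m/(\alpha k^*_\hete)\}}{k^*_\hete}\Big]
\$
as in the proof of Theorem~\ref{thm:general_sel_risk}. On the event $j\in\cR_\hete$, $k^*_\hete$ equals $k^*_{j}$, the e-BH count obtained with the $j$-th boosted e-value replaced by $+\infty$; this is the standard self-consistency property of step-up procedures. The quantity $k^*_j$ depends on $\xi_{n+j}$ only through nothing, i.e.\ it is a function of $\{E_{n+i}\}_{i=1}^m$ and $\{\xi_{n+i}\}_{i\neq j}$. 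Conditioning on everything except $\xi_{n+j}$, the $j$-th term is then bounded by
\$
\EE\Big[L_{n+j}\min\{1,\alpha k^*_j E_{n+j}/m\}/k^*_j\Big]\leq \frac{\alpha}{m}\EE[L_{n+j}E_{n+j}]\leq\frac{\alpha}{m}.
\$
Summing over $j$ gives $\alpha$. The step I would verify most carefully is the identity $k^*_\hete=k^*_j$ on $\{j\in\cR_\hete\}$, together with $k^*_j\geq 1$. This is routine: if $j$ is selected, raising its e-value keeps it selected and leaves the set of feasible $k$ unchanged.

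\textbf{Homogeneous boosting.} Here a single $\xi$ is shared, so the leave-one-out trick fails and this is the main obstacle. Instead, I would use that $k^*_\homo$ is a nonincreasing function of $\xi$ given the e-values, and that selection of $j$ means $\xi\leq \alpha k^*_\homo(\xi)E_{n+j}/m$. Conditional on $\{E_{n+i}, L_{n+i}\}$, the aim is to show
\$
\EE_\xi\big[\ind\{\xi\leq \alpha k(\xi)E/m\}/k(\xi)\big]\leq \alpha E/m
\$
for any nonincreasing step function $k(\cdot)$ with values in $[m]$. I would prove this by splitting $[0,1]$ into intervals on which $k(\xi)$ is constant at value $k$. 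On such an interval the integrand is at most $1/k$ and is supported on $\xi\leq \alpha kE/m$. Because $k(\cdot)$ is nonincreasing, the intervals with larger $k$ come first. Bounding the integral on each piece and telescoping then gives $\int_0^1 \ind\{\xi\leq \alpha k(\xi)E/m\}/k(\xi)\,d\xi\leq \alpha E/m$; this follows from the layer-cake bound $\ind\{\xi\leq a k(\xi)\}/k(\xi)\leq a/\xi$-type estimates combined with monotonicity, which is the known argument from \citet{xu2024online} and \citet{bai2024optimized}. Multiplying by $L_{n+j}$, taking expectations, and summing over $j$ together with Definition~\ref{def:eval} then yields $\sdr\leq\alpha$ for $\cR_\homo$.
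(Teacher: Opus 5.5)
Your proposal is correct. For heterogeneous boosting it is the paper's argument: replace the $j$-th boosted e-value by $+\infty$, use that the resulting count equals $k^*_\hete$ on $\{j\in\cR_\hete\}$ and is independent of $\xi_{n+j}$, and integrate out $\xi_{n+j}$.

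For homogeneous boosting you take a genuinely different route.

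\textbf{The paper's route.} It conditions on the e-values and risks, writes $\ind\{|\cR|=k\}=\ind\{|\cR|\le k\}-\ind\{|\cR|\le k-1\}$, and invokes PRDS of $(\xi/E_{n+1},\dots,\xi/E_{n+m})$, adapted from Lemma C.1 of \citet{jin2023model}. This lets it compare $\PP(|\cR|\le k\mid \xi/E_{n+j}\le \alpha k/m)$ with the same probability conditioned on $\xi/E_{n+j}\le \alpha(k+1)/m$. It then telescopes, which is the classical BH-under-PRDS template.

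\textbf{Your route.} You use that $k^*_\homo$ is a nonincreasing function of the single scalar $\xi$ once the e-values are fixed, and you reduce everything to a one-dimensional integral bound. That bound is true. However, make its proof explicit rather than appealing to ``$a/\xi$-type'' estimates, which are not integrable at $0$ and cannot close the argument on their own. Here is a clean version:
\begin{itemize}
\item Set $c=\alpha E_{n+j}/m$ and $k(\cdot)=k^*_\homo(\cdot)$.
\item The set $\{\xi\colon k(\xi)\ge 1,\ \xi\le c\,k(\xi)\}$ is down-closed, because $\xi<\xi'\le c\,k(\xi')$ implies $\xi<c\,k(\xi')\le c\,k(\xi)$ and $k(\xi)\ge k(\xi')\ge 1$.
\item So it is an interval with right endpoint $\xi_0$, and on it $1/k(\xi)$ is nondecreasing.
\item Hence $\int_0^{\xi_0}d\xi/k(\xi)\le \xi_0/k(\xi_0^-)$.
\item Finally $\xi_0\le c\,k(\xi_0^-)$, since $\xi\le c\,k(\xi)=c\,k(\xi_0^-)$ for all $\xi<\xi_0$ close to $\xi_0$, so the integral is at most $c$.
\end{itemize}
Writing the integrand as $\ind\{j\in\cR_\homo\}/(1\vee k^*_\homo(\xi))$ handles the case of no selections.

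\textbf{Comparison.} Your route buys brevity and self-containment: it needs no PRDS lemma and no conditional-probability manipulations. The paper's route buys generality: it uses only positive regression dependence of the boosted p-values. It would therefore extend to other positively dependent boosting schemes, not just a single shared uniform, and it mirrors the standard BH proof.
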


\begin{remark}
    For a set of standard e-values in classical hypothesis testing, the boosting strategy described above remains valid. Specifically, given e-values $\{e_j\}_{i=1}^m$ and independent boosting factors $\{\xi_j\}_{j=1}^m$, the application of the eBH procedure to the adjusted inputs $\{e_j / \xi_j\}_{j=1}^m$ ensures valid FDR control. This procedure can be interpreted as a special case of the e-weighted p-testing framework~\citep{ramdas2019unified, ramdas2024hypothesis,xu2024online}, where the e-values are $\{e_j\}_{i=1}^m$ and the p-values are vacuously defined as the boosting factors $\{\xi_j\}_{j=1}^m$. Accordingly, Theorem~\ref{thm:eval_boost} can be viewed as a generalization of this result, extending from standard e-values to risk-adjusted conformal e-values.
    % On connection to e-weighted p-testing procedures.
\end{remark}

\begin{remark}
    Since MDR coincides with SDR when $m = 1$, the boosting strategy can, in principle, also be applied to the MDR setting introduced in Section~\ref{sec:marginal}. However, Proposition~\ref{prop:single_equiv} shows boosting brings little benefit. Let $\xi \sim \textnormal{Unif}([0,1])$ be independently generated and set $\gamma = \alpha$. Then we have
    \begin{align*}
        \ind\{E_{\gamma,n+1}/\xi \geq 1/\alpha\} = \ind\{E_{\gamma,n+1} \geq 1/(\alpha/\xi)\} = \ind\{E_{\gamma,n+1} \geq 1/\alpha\}
    \end{align*}
    where the second equality follows from Proposition~\ref{prop:single_equiv} and the fact that $\gamma = \alpha \leq \alpha/\xi$. Hence, the test function $\hat\psi_{n+1}$ will remain unaffected after the boosting operation.
    % On why boosting is not meaningful for MDR case.
\end{remark}

% \tiancomment{Can we define the risk-adjusted p-value to be something like $\PP(p_{n+j} \leq \alpha \mid L_{n+j}) \leq \alpha /L_{n+j}$...} \yingcomment{this property might lead to guarantees we like, but for continuous risk, it seems hard to find such a p-value?}

\subsection{Asymptotics and optimality}
\label{subsec:asymptotics_sdr}

To complete the picture, we now study the asymptotic behavior of our SDR-controlling procedure to gain insights on the choice of $s(\cdot)$. Again, we view the model $f$ and the score function $s$ as fixed. 
We define
\@\label{eq:def_power_sdr}
\text{Power} := \EE\bigg[\frac{1}{m} \sum_{j=1}^m r(X_{n+j},Y_{n+j}) \hat\psi_{n+j} \bigg],
\@
where $r\colon \cX \times\cY\to [0,1]$ is a user-specified reward function. Intuitively, this notion of the power captures the total reward in the selectively deployed units (scaled by $1/m$), such as the expected rewards in investing in promising drugs. 
The asymptotic behavior of our SDR-controlling procedure, as well as the optimal choice of the score function, are characterized in Theorem~\ref{thm:asymptotic_sdr}, whose proof is in Appendix~\ref{app:subsec_proof_asymptotic_sdr}. 

\begin{theorem}\label{thm:asymptotic_sdr}
    Assume the distribution of $s(X)$ has no point mass. Define 
    \$
    \fr(t) =  \frac{\EE[L\ind\{s(X)\leq t\}]}{\PP(s(X)\leq t)},\quad \text{and}\quad t_\gamma^* = \max\{t\colon \fr(t)\leq \gamma\}.
    \$
    We further assume that for any sufficiently small $\delta>0$, we have $\fr(t)<\gamma$ for $t\in (t_\gamma^*-\delta,t_\gamma^*)$. Then the following statements hold:
    \begin{enumerate}[label=(\roman*).]
    \item As $n,m\to \infty$, $\sup_{1\leq j\leq m} \sup_{ \ell\in [0,1]} \big| t_{\gamma,n+j} (\ell) - t_\gamma^* \big|  ~\asto ~ 0$. 
    \item $\lim_{n,m\to\infty}\textnormal{Power} = \EE[r(X_{n+1},Y_{n+1})\ind\{s(X_{n+1})\leq t_\gamma^*\}]$ if $\gamma < \alpha$, and  $\lim_{n,m\to\infty} \textnormal{Power}= 0$ if $\gamma>\alpha$. Thus, for a fixed score function $s(\cdot)$, the asymptotic power is optimized as $\gamma\uparrow \alpha$. 
    \item Let $r(x):=\EE[r(X,Y)\given X=x]$ and $l(x):=\EE[\cL(f,X,Y)\given X=x]$ be the conditional expectation of the reward and risk, and suppose $r(X)>0$ almost surely, and the distribution of $(l(X)-\alpha)_+/r(X)$ has no point mass. Let $\gamma\uparrow \alpha$, then $\lim_{n,m\to\infty}\textnormal{Power}$ is optimized at any score function $s(\cdot)$ such that $s(x)$ is monotone in $(l(x)-\alpha)/r(x)$.  
    \end{enumerate}
\end{theorem}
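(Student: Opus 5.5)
The proof proceeds in three parts: a uniform law of large numbers for the threshold, a reduction of the e-BH selection to score thresholding, and a pointwise optimization of the limiting power.

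\textbf{Part (i).} Define the empirical quantities
\[
\hat G_n(t)=\frac{1}{n+1}\sum_{i=1}^n L_i\ind\{s(X_i)\le t\},
\qquad
\hat H_{m,j}(t)=\frac{1}{m}\Big(1+\sum_{k\ne j}\ind\{s(X_{n+k})\le t\}\Big).
\]
Then $\fr_{n+j}(t;\ell)=\bigl(\hat G_n(t)+\ell\ind\{s(X_{n+j})\le t\}/(n+1)\bigr)/\hat H_{m,j}(t)$. The $\ell$-term and the leave-one-out correction each perturb numerator and denominator by at most $1/(n+1)$ and $1/m$, uniformly in $j$ and $\ell$. By Glivenko--Cantelli (applied to the class of functions $L\ind\{s(X)\le t\}$, which has bounded VC type), almost surely
\[
\sup_t|\hat G_n(t)-G(t)|\to 0,\qquad \sup_t|\hat H_{m,j}(t)-H(t)|\to 0,
\]
where $G(t)=\EE[L\ind\{s(X)\le t\}]$ and $H(t)=\PP(s(X)\le t)$. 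Because the empirical test scores come from the test marginal, which shares the calibration distribution under exchangeability (i.i.d. here), both limits are the population quantities.

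On any set $\{t: H(t)\ge c>0\}$ this gives $\sup_{j,\ell,t}|\fr_{n+j}(t;\ell)-\fr(t)|\to0$. Near $t$ with small $H(t)$ the ratio is unstable, but the threshold $t_\gamma^*$ sits where $H(t^*_\gamma)>0$, provided $t_\gamma^*$ is nontrivial. The assumption that $\fr(t)<\gamma$ on $(t^*_\gamma-\delta,t^*_\gamma)$, together with $\fr(t)>\gamma$ for all $t>t^*_\gamma$ (by maximality), gives the usual argmax-type consistency. First, for any $\epsilon$ there is a grid point of $\cM$ in $(t^*_\gamma-\epsilon,t^*_\gamma)$ with $\fr_{n+j}\le\gamma$ eventually, since the scores are dense because $s(X)$ has no atoms. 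Second, $\fr$ is bounded away above $\gamma$ on $[t^*_\gamma+\epsilon,1]$ by continuity and compactness. Together these yield $|t_{\gamma,n+j}(\ell)-t^*_\gamma|\le\epsilon$ eventually, uniformly in $j$ and $\ell$. I expect this uniformity in $j$ and in $\ell$, and the handling of small $H$, to be the main technical care point.

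\textbf{Part (ii).} Using (i), each $E_{\gamma,n+j}$ is asymptotically either $0$ (when $s(X_{n+j})>t^*_\gamma$) or, up to $o(1)$,
\[
\frac{n+1}{(n+1)G(t^*_\gamma)}=\frac{1}{G(t^*_\gamma)}=\frac{1}{\gamma H(t^*_\gamma)},
\]
using $\fr(t^*_\gamma)=\gamma$ by continuity. The e-BH rule selects every nonzero e-value if
\[
\frac{1}{\gamma H}\ge \frac{m}{\alpha k}\qquad\text{with } k\approx mH,
\]
that is, if $\gamma<\alpha$ (with margin absorbing the $o(1)$ errors). If $\gamma>\alpha$, no nonempty self-consistent set exists, since all nonzero e-values are about $1/(\gamma H)$ and any set of size $k\le mH(1+o(1))$ needs $E\ge m/(\alpha k)>1/(\gamma H)$. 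Hence $\hat\psi_{n+j}=\ind\{s(X_{n+j})\le t^*_\gamma\}$ up to vanishing events in the first case, and $\hat\psi_{n+j}\equiv 0$ in the second. Bounded convergence plus exchangeability then give the stated power limits. Since $t^*_\gamma$ is nondecreasing in $\gamma$, the power is maximized as $\gamma\uparrow\alpha$.

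\textbf{Part (iii).} The limiting power is $\EE[r(X)\ind\{s(X)\le t^*\}]$ subject to $\EE[(l(X)-\alpha)\ind\{s(X)\le t^*\}]\le0$. Over all selection sets $A$, I will maximize $\EE[r\ind_A]$ subject to $\EE[(l-\alpha)\ind_A]\le0$, which is a fractional knapsack. Points with $l<\alpha$ contribute negative cost with positive reward and should all be included. The remaining budget is then filled by points in increasing order of $(l-\alpha)/r$, by a Neyman--Pearson exchange argument: swapping mass from a higher ratio to a lower ratio at equal cost does not decrease reward. A score monotone in $(l-\alpha)/r$ realizes exactly such a set, since level sets $\{s\le t\}$ are sublevel sets of the ratio, and continuity from no point mass lets the constraint bind with equality at $t^*$. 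Any other score gives a feasible $A$ and therefore no larger power.
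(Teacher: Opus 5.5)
Your proposal is correct and follows essentially the same route as the paper. Both arguments rest on three steps:
\begin{enumerate}
\item[(a)] uniform convergence of $\fr_{n+j}(t;\ell)$ to $\fr(t)$, which gives uniform consistency of $t_{\gamma,n+j}(\ell)$;
\item[(b)] the fact that nonzero e-values concentrate at $1/G(t_\gamma^*)=1/(\gamma\,\PP(s(X)\le t_\gamma^*))$, so e-BH selects all of them iff $\gamma<\alpha$ (the paper phrases this through $\hat\eta=\inf\{\eta\colon \eta\hat F(\eta)\ge 1/\alpha\}$, which is equivalent to your direct self-consistency check);
\item[(c)] a Neyman--Pearson/fractional-knapsack argument for (iii).
\end{enumerate}

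One small correction to your justification in (i): points of $\cM$ in $(t_\gamma^*-\epsilon,t_\gamma^*)$ exist not because $s(X)$ has no atoms, but because of the local assumption itself. If $s(X)$ put no mass on some $(t_\gamma^*-\epsilon,t_\gamma^*)$, then $\fr$ would be constant there and equal to $\fr(t_\gamma^*)=\gamma$, contradicting $\fr<\gamma$ on $(t_\gamma^*-\delta,t_\gamma^*)$.
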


The conditions in Theorem~\ref{thm:asymptotic_sdr} resemble the standard mild assumptions in~\cite{storey2004strong} to obtain meaningful asymptotic analysis of the FDR. 

Theorem~\ref{thm:asymptotic_sdr} demonstrates that the score $s(X)$ should aim to rank test instances by their \emph{excess risk per unit reward} $(l(x)-\alpha)/r(x)$. 
This is distinct from the optimality result for MDR control (Theorem~\ref{thm:opt_mdr}).  
Intuitively, the optimal procedure explores the cost--benefit tradeoff: it prioritizes instances that achieves high reward per unit of risk. 
Compared with the intuitive choice in which $s(x)$ estimates $l(x)$, this  makes a substantial difference only when dividing by $r(x)$ drastically changes the ranking, such as when $l(x)-\alpha$ and $r(x)$ are very positively correlated. Finally, an intuitive way to implement this is to plug in estimators for the two functions and set $s(x)=(\hat{l}(x)-\alpha)/\hat{r}(x)$.

% !TEX root = main.tex

\section{Extension: SCoRE under distribution shift}
\label{sec:weighted}

The techniques of constructing risk-adjusted e-values based on exchangeability enable  broader methodology. 
Here, we present a natural extension of SCoRE to scenarios where the calibration and test data are only weighted exchangeable, referred to as the covariate shift setting~\citep{TibshiraniBCR19}. Such settings are particularly useful in applications like drug discovery where there is often differences between labeled and unlabeled data~\citep{krstajic2021critical,jin2023model,laghuvarapu2023codrug,laghuvarapu2026confhit}. 

\vspace{-0.5em}
\begin{assumption}\label{assump:cov_shift}
    The labeled data follow $(X_i,Y_i)\iid P$ while the test data follow  $(X_{n+j},Y_{n+j})\iid Q$, and the two distributions obey $\ud Q/\ud P(x,y) = w(x)$ for a known or estimable weight function $w\colon \cX\to \RR^+$.
\end{assumption}

The key strategy is to construct risk-adjusted e-values obeying $\EE_Q[L_{n+j}E_{n+j}]\leq 1$ under the \emph{test} distribution. The MDR and SDR control then follow by the same testing arguments in Section~\ref{sec:general_method}.
We first address the case where $w(\cdot)$ is known, in which case an extension of SCoRE provides finite-sample MDR/SDR control. We then briefly discuss robustness properties with estimated weights where the guarantees become asymptotic to accommodate estimation errors.
% We shall note that this section aims to demonstrate the breadth of the conceptual framework of SCoRE, and we leave more fine-grained analysis of SCoRE under estimated weights for future work.

\subsection{Marginal risk control under covariate shift}

We use the same thresholding rule 
$\hat\psi_{n+1} = \ind\{E_{\gamma, n+1} \geq 1/\alpha\}$, where the \emph{weighted} e-value is defined as 
\@\label{eq:eval_mdr_w}
E_{\gamma, n+1} = \inf_{\ell \in[0,1]} ~\Bigg\{ \frac{  \ind\{s(X_{n+1})\leq t_\gamma(\ell)\} \cdot \sum_{i=1}^{n+1} w_i }{\sum_{i=1}^n w_i\cdot L_i \ind\{s(X_{i})\leq t_\gamma(\ell)\} + w_{n+1}\cdot \ell\ind\{s(X_{n+1})\leq t_\gamma(\ell)\} } \Bigg\}.
\@
Here we set $w_i=w(X_i)$ for $i\in [n+1]$,  $t_\gamma(\ell) = \max \big\{t \in \cM\colon \textrm{F}(t;\ell) \leq \gamma \big\}$, and 
\$ 
\textrm{F}(t;\ell) =  \frac{ \sum_{i=1}^n w_i L_i \ind\{s(X_{i})\leq t\} + w_{n+1} \cdot \ell \ind\{s(X_{n+1})\leq t\} }{\sum_{i=1}^{n+1}w_i  }.
\$
Here we define $\cM:=\{s(X_i)\}_{i=1}^{n+1}$, and again set $E_{\gamma,n+1} = 0$ when $\inf_{\ell \in[0,1]} t_{\gamma}(\ell) = -\infty$. 
The following theorem, whose proof is in Appendix~\ref{app:thm_w_single}, demonstrates the validity of the weighted SCoRE e-value.

\begin{theorem} \label{thm:w_single}
    % Assuming the covariate shift setting introduced in Section~\ref{sec:weighted},
    Under Assumption~\ref{assump:cov_shift}, 
    for any fixed constant $\gamma \in(0,1)$, it holds that $\EE_Q[L_{n+1} E_{\gamma,n+1}  ]\leq 1$. 
\end{theorem}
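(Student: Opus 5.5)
The plan is to mirror the proof of Theorem~\ref{thm:single}, with exchangeability replaced by weighted exchangeability~\citep{TibshiraniBCR19}. First I will upper bound the infimum by plugging in the true risk $\ell=L_{n+1}$. Since $L_{n+1}\in[0,1]$, this gives
\$
L_{n+1}E_{\gamma,n+1}\le \frac{L_{n+1}\ind\{s(X_{n+1})\le T\}\sum_{i=1}^{n+1}w_i}{\sum_{i=1}^{n+1} w_iL_i\ind\{s(X_i)\le T\}},
\$
where $T:=t_\gamma(L_{n+1})$ and $0/0$ is read as $0$. The key structural observation is that, once $\ell=L_{n+1}$ is plugged in, $\mathrm F(t;L_{n+1})=\sum_{i=1}^{n+1}w_iL_i\ind\{s(X_i)\le t\}/\sum_{i=1}^{n+1}w_i$ and $\cM=\{s(X_i)\}_{i=1}^{n+1}$ are both symmetric functions of the multiset of data points $\{Z_i\}_{i=1}^{n+1}$, $Z_i=(X_i,Y_i)$. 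Hence $T$ is a function of the unordered collection only. The convention $E=0$ when the threshold is $-\infty$ is consistent with this bound, since $\inf_\ell t_\gamma(\ell)=-\infty$ forces the right-hand side to be nonnegative anyway.

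Next I will condition on the unordered set (the empirical measure) $\mathcal E=[Z_1,\dots,Z_{n+1}]$. Under Assumption~\ref{assump:cov_shift}, the joint density of $(Z_1,\dots,Z_{n+1})$ with respect to $P^{n+1}$ is $w(X_{n+1})$. Therefore, conditional on $\mathcal E=[z_1,\dots,z_{n+1}]$, the test point equals $z_k$ with probability $w(x_k)/\sum_i w(x_i)$ (summing over permutations in the usual way, with ties handled by multiplicity), which is Lemma~3 of \cite{TibshiraniBCR19}. Given $\mathcal E$, the quantities $T$, $\sum_i w_i$ and the denominator $D:=\sum_i w_iL_i\ind\{s(X_i)\le T\}$ are all fixed. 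The conditional expectation of the bound is then
\$
\sum_{k=1}^{n+1}\frac{w_k}{\sum_i w_i}\cdot\frac{L_k\ind\{s(x_k)\le T\}\sum_i w_i}{D}=\frac{\sum_k w_kL_k\ind\{s(x_k)\le T\}}{D}\le 1,
\$
with equality when $D>0$ and value $0$ otherwise. Taking the outer expectation gives $\EE_Q[L_{n+1}E_{\gamma,n+1}]\le1$.

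The main obstacle is the conditioning step. I must make sure that everything other than the identity of the test index is genuinely a function of $\mathcal E$, and the weighted structure makes this delicate. The weight $w_{n+1}$ appears attached to the test point in both $\mathrm F$ and the e-value. This is exactly why plugging in $\ell=L_{n+1}$ restores symmetry: the term $w_{n+1}\ell\ind\{s(X_{n+1})\le t\}$ becomes $w_{n+1}L_{n+1}\ind\{\cdot\}$, matching the calibration terms. I will also verify the weighted-exchangeability conditional law carefully for data with ties, by working with permutations rather than distinct values. Finally, I need to check that the prefactor $\sum_{i=1}^{n+1}w_i$ cancels the normalizing constant of that conditional law exactly.
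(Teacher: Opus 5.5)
Your proposal is correct and follows the paper's proof essentially step by step. Plugging in $\ell=L_{n+1}$ yields a threshold $t_\gamma(L_{n+1})$ that is permutation-invariant in $(Z_1,\dots,Z_{n+1})$; conditioning on the unordered set then uses the weighted-exchangeability law (the test index is $k$ with probability $w(x_k)/\sum_i w(x_i)$), so the $\sum_i w_i$ prefactor cancels and the conditional expectation is at most one. Your explicit treatment of the $0/0$ and $-\infty$ conventions is a small clarification the paper leaves implicit.
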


Extending our discussion below Theorem~\ref{thm:single}, the main idea of~\eqref{eq:eval_mdr_w} is based on the observation that, should $L_{n+1}$ be known, any random variable of the form %(see, e.g., the weighted exchangeability idea in~\cite{TibshiraniBCR19})
\$
\frac{w_{n+1} \cdot L_{n+1} A_{n+1}}{\sum_{i=1}^n w_i\cdot  L_i A_i  + w_{n+1}\cdot L_{n+1}A_{n+1}},
\$ 
has expectation equal to $1$ under the covariate shift assumption~\citep{TibshiraniBCR19}.
 
As in the unweighted case, one can often avoid computing the infimum in~\eqref{eq:eval_mdr_w} explicitly. Proposition \ref{prop:single_equiv} in Appendix~\ref{app:subsec_shortcut_w} presents an equivalent shortcut.

\subsection{Selective risk control under covariate shift} \label{subsec:sdr_control_shift}

For the SDR control, we define the \emph{weighted} e-values as
\begin{align} \label{eq:def_eval_sel_w}
E_{\gamma, n+j} = \inf_{\ell \in [0,1]} \bigg\{ \frac{\ind\{s(X_{n+j}) \leq t_{\gamma,n+j}(\ell)\} \cdot (w_{n+j}+\sum_{i=1}^n w_i) }{w_{n+j} \cdot \ell\ind\{s(X_{n+j}) \leq t_{\gamma,n+j}(\ell)\}+  \sum_{i=1}^n  w_i \cdot L_i\ind\{s(X_{i}) \leq t_{\gamma,n+j}(\ell)\}} \bigg\}, \qquad
\end{align}
where $w_i=w(X_i)$ for $i\in[n+m]$, $t_{\gamma,n+j}(\ell) = \max\big\{t\colon  {\textrm{FR}_{n+j}}(t;\ell) \leq \gamma\big\}$, and
\$
{\textrm{FR}_{n+j}}(t;\ell) = \frac{w_{n+j} \cdot \ell\ind\{s(X_{n+j})\leq t\} + \sum_{i=1}^n w_i \cdot L_i\ind\{s(X_i)\leq t\}}{1+\sum_{k\neq j} \ind\{s(X_{n+k}) \leq t\}}\cdot \frac{m}{w_{n+j}+\sum_{i=1}^n w_i}.
\$
Here $\max\varnothing=-\infty$, and set $E_{\gamma,n+j} = 0$ when $\inf_{\ell \in[0,1]} t_{\gamma,n+j}(\ell) = -\infty$. 

The construction~\ref{eq:def_eval_sel_w} mirrors the construction in Section~\ref{sec:selective} while accounting for covariate shift weights. The proof for the theorem below can be found in Appendix~\ref{app:thm_w_evaluesel}.

\begin{theorem} \label{thm:w_evaluesel}
    % Assuming the covariate shift setting introduced in Section~\ref{sec:weighted}, 
    % i changed to the later version to clarify
    Under Assumption~\ref{assump:cov_shift}, 
    % Let $\{(X_i, Y_i)\}_{i=1}^{n}$ be drawn i.i.d. from a distribution $P$, and let $\{(X_{n+j}, Y_{n+j})\}_{j=1}^n$ be drawn i.i.d. from a distribution $Q$. We assume that $\mathrm{d}Q/\mathrm{d}P(x,y) = w(x)$. Then, 
    $\EE_Q[L_{n+j} E_{\gamma,n+j}  ]\leq 1$ for any fixed $\gamma \in(0,1)$ and $j \in [m]$.
\end{theorem}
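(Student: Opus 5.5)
The plan is to reduce the claim to a weighted-exchangeability ``oracle'' identity by replacing the infimum over $\ell$ with the single value $\ell=L_{n+j}$. Concretely, define the oracle threshold $T:=t_{\gamma,n+j}(L_{n+j})$ and the oracle quantity
\[
E^*_{n+j}:=\frac{\ind\{s(X_{n+j})\leq T\}\cdot (w_{n+j}+\sum_{i=1}^n w_i)}{w_{n+j}L_{n+j}\ind\{s(X_{n+j})\leq T\}+\sum_{i=1}^n w_iL_i\ind\{s(X_i)\leq T\}}.
\]
Since the infimum in~\eqref{eq:def_eval_sel_w} is taken over $\ell\in[0,1]$ and $L_{n+j}\in[0,1]$, we have $E_{\gamma,n+j}\leq E^*_{n+j}$. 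It therefore suffices to show $\EE_Q[L_{n+j}E^*_{n+j}]\leq 1$. The degenerate conventions, where the threshold is $-\infty$ or the numerator vanishes, only make $E_{\gamma,n+j}=0$, so they are harmless.

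To show this, I would condition on the test points other than $j$, namely $\{X_{n+k}\}_{k\neq j}$, and on the unordered multiset $\mathcal{Z}=\{Z_1,\dots,Z_n,Z_{n+j}\}$ of the calibration points together with the $j$-th test point, where $Z_i=(X_i,Y_i)$. The key structural observation is that, once $\ell=L_{n+j}$ is plugged in, $\fr_{n+j}(t;L_{n+j})$ is a symmetric function of $\mathcal{Z}$ given the other test points. Indeed, its numerator is $\sum_{i\in\{1..n,n+j\}} w_iL_i\ind\{s(X_i)\leq t\}$, its denominator involves only test points $k\neq j$, and the normalizer $w_{n+j}+\sum_i w_i$ is symmetric. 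The grid $\cM$ is also symmetric (being the set of all scores). Hence $T$ is a function of $\mathcal{Z}$ and $\{X_{n+k}\}_{k\neq j}$, and so is the denominator $D:=\sum_{i\in \mathcal{I}} w_iL_iA_i$, where $\mathcal{I}=\{1,\dots,n,n+j\}$ and $A_i=\ind\{s(X_i)\leq T\}$. Then
\[
L_{n+j}E^*_{n+j}=\frac{W\cdot w_{n+j}^{-1}\,w_{n+j}L_{n+j}A_{n+j}}{D}, \qquad W:=\sum_{i\in\mathcal I}w_i .
\]

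Next I use the weighted exchangeability lemma of \cite{TibshiraniBCR19}. The $j$-th test point is independent of the other test points, and under Assumption~\ref{assump:cov_shift} the joint density of $(Z_1,\dots,Z_n,Z_{n+j})$ is $\prod_{i\le n}p(z_i)\cdot w(x_{n+j})p(z_{n+j})$. Consequently, conditional on $\mathcal{Z}$ and the other test points, the probability that $Z_{n+j}$ equals the element $z_i$ of the multiset is $w_i/W$. Therefore
\[
\EE_Q\big[L_{n+j}E^*_{n+j}\,\big|\,\mathcal{Z},\{X_{n+k}\}_{k\neq j}\big]=\sum_{i\in\mathcal I}\frac{w_i}{W}\cdot\frac{W\, L_iA_i}{D}=\frac{\sum_i w_iL_iA_i}{D}\leq 1,
\]
with the convention $0/0=0$. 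There is a small inconsistency in the display above: the factor $w_{n+j}^{-1}$ in the numerator should not appear, since the e-value numerator carries no weight $w_{n+j}$. With the e-value as defined, the weight $w_i$ supplied by the conditional probability is exactly what converts $L_iA_i$ into $w_iL_iA_i$ in the sum, so the formula for $L_{n+j}E^*_{n+j}$ should read $W L_{n+j}A_{n+j}/D$. Taking the outer expectation then gives the claim.

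The main obstacle is the symmetry step. I must verify carefully that $T$ does not depend on which element of $\mathcal{Z}$ plays the role of the test point $j$. In particular, the denominator $1+\sum_{k\neq j}\ind\{s(X_{n+k})\leq t\}$ excludes point $j$ itself and involves only the other test points, which are held fixed under the conditioning. I also need to handle ties in $\cM$ and the $-\infty$ cases, which I expect to be straightforward.
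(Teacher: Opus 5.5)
Your proposal is correct and follows essentially the same route as the paper. Both arguments bound the infimum by plugging in $\ell = L_{n+j}$, use the fact that the resulting threshold and denominator are invariant to permutations of $(Z_1,\dots,Z_n,Z_{n+j})$ given the remaining test points, and apply the weighted-exchangeability law $\PP(Z_{n+j}=z_k\mid\cdot)=w(x_k)/\sum_{i} w(x_i)$, so that the conditional expectation collapses to $\sum_k w_k L_k A_k/D\le 1$. The ``inconsistency'' you flag is harmless, since $w_{n+j}^{-1}w_{n+j}=1$. Your $0/0=0$ convention is actually slightly more careful than the paper's claim of exact equality to $1$.
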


As in the unweighted case, SCoRE under covariate shift also admits a computational shortcut. We outline the algorithm in Algorithm~\ref{alg:w_sel} and prove its equivalence to~\eqref{eq:def_eval_sel_w} in Appendix~\ref{app:subsec_mult_equiv_w}.

% repeat the proof in appendix or not..?

% \yingcomment{decide later: whether put weighted in appendix or add estimated weight stuffs}

% \paragraph{Efficient computation.}
% Following the unweighted case, the computation shortcut still applies, yet with modified definitions of intermediate quantities. 

\subsection{Robustness to estimated weights}

When the weight function $w(\cdot)$ is unknown, it is natural to first obtain an estimator $\hat{w}(\cdot)$ and compute the MDR/SDR e-values in~\eqref{eq:eval_mdr_w} and~\eqref{eq:def_eval_sel_w} with $w_i=\hat{w}(X_i)$. 
Our analysis shows that the SCoRE procedure asymptotically controls the MDR and SDR provided that the estimated weight function asymptotically converges to the true weight function. 
% The proof can be found in Appendix~\ref{app:w_mdr_asymp} and \ref{app:w_sdr_asymp}.

% the estimated weight can depend on the n,m calib/test data in arbitrary way, right?

\begin{theorem} \label{thm:w_mdr_asymp}
    Under Assumption~\ref{assump:cov_shift}, assume we have access to a sequence of random weight estimates $\{\bar{w}_{n}(\cdot)\}$ trained independent of $\{(X_i,Y_i)\}_{i=1}^{n+1}$ obeying $\|\bar{w}_{n}(\cdot)-w(\cdot)\|_{L_2(\PP_X)}=o_P(1)$ as $n \to \infty$. In addition, assume the function $F(t) = \EE_P[w(X) l(X) \ind\{s(X) \leq t\}]/\EE_P[w(X)]$ is continuous and strictly increasing at $t^* = \sup\{t: F(t) \leq \alpha\}$. Set  $\gamma = \alpha$ and denote by $\textnormal{MDR}_{n}$ the MDR of SCoRE using the e-values~\eqref{eq:eval_mdr_w} with $\bar{w}_{n}(\cdot)$ in place of $w(\cdot)$. 
    Then, we have  $\limsup_{n \to \infty} \textnormal{MDR}_{n} \leq \alpha$.
\end{theorem}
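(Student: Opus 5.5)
We will not go through the e-value validity, which fails once the weights are misspecified. Instead we use the shortcut of Proposition~\ref{prop:single_equiv} in its weighted form (Appendix~\ref{app:subsec_shortcut_w}) with $\gamma=\alpha$. It writes the decision as a threshold rule $\hat\psi_{n+1}=\ind\{s(X_{n+1})\le \hat t_n\}$, up to the event $\{s(X_{n+1})\le t\}$ being tied to $\ell=1$ at the test point. The natural candidate is
\[
\hat t_n=\max\Big\{t\in\cM\colon \textstyle\sum_{i=1}^n \bar w_n(X_i)L_i\ind\{s(X_i)\le t\}+\bar w_n(X_{n+1})\ind\{s(X_{n+1})\le t\}\le \alpha\sum_{i=1}^{n+1}\bar w_n(X_i)\Big\}.
\]
The plan is to show $\hat t_n\to t^*$ in probability. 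Then $\textnormal{MDR}_n=\EE_Q[L_{n+1}\ind\{s(X_{n+1})\le \hat t_n\}]$ is close to $\EE_Q[l(X)\ind\{s(X)\le t^*\}]=\EE_P[w l\ind\{s\le t^*\}]=F(t^*)\EE_P[w]$. Since $\EE_P[w]=1$ (as $w=\ud Q/\ud P$), this equals $F(t^*)\le\alpha$.

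\textbf{Step 1 (uniform convergence of the empirical weighted risk curve).} Define
\[
\hat F_n(t)=\frac{\sum_{i=1}^n \bar w_n(X_i)L_i\ind\{s(X_i)\le t\}}{\sum_{i=1}^n\bar w_n(X_i)}.
\]
We show $\sup_t|\hat F_n(t)-F(t)|=o_P(1)$ in three parts. First, replace $\bar w_n$ by $w$ in the numerator: the error is at most $n^{-1}\sum_i|\bar w_n-w|(X_i)$, whose conditional mean given $\bar w_n$ is $\|\bar w_n-w\|_{L_1(P)}\le\|\bar w_n-w\|_{L_2(P)}=o_P(1)$, using independence of $\bar w_n$ from the data. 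Second, apply a Glivenko--Cantelli argument for the monotone-in-$t$ class $\{wL\ind\{s\le t\}\}$, which has an integrable envelope $w$. Third, the denominator converges by the same reasoning. The test-point terms $\bar w_n(X_{n+1})/n$ are $o_P(1)$ because $\bar w_n(X_{n+1})=O_P(1)$ under $Q$.

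The main obstacle is that $\bar w_n$ converges in $L_2(P_X)$ but we evaluate it at the test point drawn from $Q$. We bound $\EE_Q|\bar w_n-w|(X)=\EE_P[w|\bar w_n-w|]$. This requires $w\in L_2(P)$ or truncation; we would truncate at a level $M$ and let $M\to\infty$. Otherwise, the extra term only enters divided by $n$, so tightness suffices, which follows from Markov's inequality conditional on $\bar w_n$.

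\textbf{Step 2 (threshold consistency).} Using monotonicity of $\hat F_n$ and $F$ in $t$, continuity of $F$, and strict increase of $F$ at $t^*$, we get the following. For any $\delta>0$, with probability tending to one, $\hat F_n(t^*+\delta)>\alpha+\eta$ for some $\eta>0$, so $\hat t_n\le t^*+\delta$. The lower bound $\hat t_n\ge t^*-\delta$ is not needed for an upper bound on the MDR.

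\textbf{Step 3 (conclusion).} On the event $\{\hat t_n\le t^*+\delta\}$,
\[
L_{n+1}\hat\psi_{n+1}\le L_{n+1}\ind\{s(X_{n+1})\le t^*+\delta\}.
\]
Since $L\le 1$, bounded convergence gives
\[
\textnormal{MDR}_n\le \EE_P[w l\ind\{s\le t^*+\delta\}]+o(1)=F(t^*+\delta)+o(1).
\]
Letting $\delta\downarrow0$ and using continuity of $F$ yields $\limsup_n\textnormal{MDR}_n\le F(t^*)\le\alpha$.
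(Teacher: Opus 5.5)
Your proof is correct, and it is more direct than the paper's. The paper gives no standalone proof. It obtains Theorem~\ref{thm:w_mdr_asymp} as case (i) of the doubly robust Theorem~\ref{thm:dr_mdr}, taking $\hat l\equiv 1$ so that the balancing condition is said to hold automatically. That proof differs from yours in three ways:
\begin{itemize}
\item It upper bounds the decision by plugging the true test risk into the infimum, $\hat\psi_{n+1}\le\ind\{s(X_{n+1})\le t_\alpha(L_{n+1})\}$, instead of your exact shortcut. Since $t_\alpha(1)\le t_\alpha(L_{n+1})$, the two are interchangeable here.
\item It proves two-sided consistency $t_\alpha(L_{n+1})\to t^*$ from uniform (VC-type) convergence of the weighted risk curve plus strict monotonicity.
\item It controls the test-point weight through the bound $\sup_i\hat w_i\le M$.
\end{itemize}

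Your route needs less. Only the upper deviation $\hat t_n\le t^*+\delta$ matters for an MDR upper bound. That follows from a pointwise law of large numbers at the single point $t^*+\delta$ plus monotonicity in $t$, because $F(t^*+\delta)>\alpha$ holds by the definition of $t^*$ as a supremum. So strict increase of $F$ is never used; you need only continuity at $t^*$ to get $F(t^*)\le\alpha$.

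More importantly, you handle $\bar w_n(X_{n+1})$ probabilistically rather than assuming bounded weights. Your argument therefore proves the theorem under exactly its stated hypotheses. The paper's reduction relies on hypotheses of Theorem~\ref{thm:dr_mdr}, notably $\sup_i\hat w_i\le M$ and the balancing and $t^\dagger$ conditions, which are not among the assumptions of Theorem~\ref{thm:w_mdr_asymp}. What the paper's route buys is generality: the same argument also gives MDR control under misspecified weights when the risk model is consistent.

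One small repair. Bounding $\EE_Q|\bar w_n-w|(X)=\EE_P[w|\bar w_n-w|]$, or applying Markov to $\EE_Q[\bar w_n]$, can give $+\infty$ when $w\notin L_2(P)$, and truncating inside the expectation does not fix this. Truncate in probability instead. Conditionally on $\bar w_n$, for any $K>0$,
\[
Q\big(|\bar w_n-w|>\epsilon\big)=\EE_P\big[w\,\ind\{|\bar w_n-w|>\epsilon\}\big]\le K\,\|\bar w_n-w\|_{L_2(P)}^2/\epsilon^2+\EE_P\big[w\,\ind\{w>K\}\big].
\]
This gives $\bar w_n(X_{n+1})-w(X_{n+1})=o_P(1)$, hence the tightness of $\bar w_n(X_{n+1})$ that your Step~1 needs.
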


\begin{theorem} \label{thm:w_sdr_asymp}
    Under Assumption~\ref{assump:cov_shift}, assume we have access to a sequence of random weight estimates $\{\bar{w}_{n,m}(\cdot)\}$ trained independent of $\{(X_i,Y_i)\}_{i=1}^{n+m}$ obeying $\|\bar{w}_{n,m}(\cdot)-w(\cdot)\|_{L_2(\PP_X)}=o_P(1)$ as $n,m \to \infty$. Assume that the distribution of $s(X)$ is non-atomic, and the function
    $ 
        F(t) = \frac{\EE_P[w(X) L \ind\{s(X) \leq t\}]}{\PP_Q(s(X) \leq t) \cdot \EE_P[w(X)]}
    $ 
    is continuous and strictly increasing at $t^* = \sup\{t: F(t) \leq \alpha\}$. Set  $\gamma = \alpha$ and denote by $\textnormal{SDR}_{n,m}$ the SDR of SCoRE using the e-values~\eqref{eq:def_eval_sel_w} with $\bar{w}_{n,m}(\cdot)$ in place of $w(\cdot)$.  Then, we have  $\limsup_{n,m \to \infty} \textnormal{SDR}_{n,m} \leq \alpha$.
    % Under Assumption~\ref{assump:cov_shift}, we assume the true weight  $\mathrm{d}Q/\mathrm{d}P(x,y) = w^*(x)$ is not observed. Assume we have access to a sequence of weight estimates $\{\hat{w}_n(\cdot)\}_{n\geq 1}$ obeying $\|\hat{w}_n(\cdot)-w^*(\cdot)\|_{\infty}=o_P(1)$ as $n\to\infty$. In addition, suppose the distribution of $s(X)$ is non-atomic. Then,
    % \begin{enumerate}
    %     \item[(i)] For any fixed $\gamma, j$, the e-value $E_{\gamma,n+j}^{(n)}$ based on $\hat{w}_n(\cdot)$ converges to $E_{\gamma,n+j}^*$ based on $w^*(\cdot)$.
    %     \item[(ii)] Define $ F^*(\eta) = \frac{1}{m} \sum_{j=1}^m \ind\{E_{\gamma,n+j}^* \geq \eta\}$, 
    %     and let the e-BH cutoff be $\eta^* = \inf\{\eta: \eta F^*(\eta) \leq \alpha\}$. Then, if the distribution of $E_{\gamma,n+j}^*$ has no atom at $\eta^*$, we have $\lim_{n \to \infty} \textnormal{SDR}_n \leq \alpha$.
    % \end{enumerate}
\end{theorem}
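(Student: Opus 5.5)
The plan is to compare the SCoRE-SDR procedure run with the estimated weights $\bar w_{n,m}$ to the same procedure run with the true weights $w$. The latter has finite-sample SDR control at $\alpha$ by Theorem~\ref{thm:w_evaluesel} combined with Theorem~\ref{thm:general_sel_risk}. It then suffices to show that, with probability tending to one, the two procedures select essentially the same set up to a vanishing fraction of test points. Because the SDR integrand is bounded by $1$, such an agreement transfers the guarantee in the limit. I will argue through the asymptotic threshold characterization used in Theorem~\ref{thm:asymptotic_sdr}, rather than comparing e-values pointwise, since the infimum over $\ell$ makes direct perturbation awkward.

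\textbf{Step 1: uniform convergence of the weighted empirical process.} For both $\bar w_{n,m}$ and $w$, I will show
\[
\sup_{1\le j\le m}\ \sup_{\ell\in[0,1]}\ \sup_t \big|\textrm{FR}_{n+j}(t;\ell) - F(t)\big| \to 0
\]
in probability. The numerator $\frac{1}{n}\sum_i \bar w(X_i)L_i\ind\{s(X_i)\le t\}$ is handled by writing $\bar w = w + (\bar w - w)$. Cauchy--Schwarz and the $L_2(\PP_X)$ consistency bound the error term, and since $\bar w_{n,m}$ is independent of the data a conditional law of large numbers applies to $\frac1n\sum_i|\bar w - w|(X_i)$. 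Glivenko--Cantelli over the monotone class $\{\ind\{s\le t\}\}$ handles the main term. The extra test term $w_{n+j}\ell/(n)$ is uniformly negligible; this may require $\max_j \bar w(X_{n+j})=o_P(n)$, which I would add or obtain from an integrability argument. The denominator $\frac1m(1+\sum_{k\ne j}\ind\{s(X_{n+k})\le t\})$ converges to $\PP_Q(s(X)\le t)$, and the normalization $\frac{1}{n}(w_{n+j}+\sum_i \bar w_i)\to\EE_P[w]$.

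\textbf{Step 2: threshold convergence.} Using continuity and strict increase of $F$ at $t^*$ as in Theorem~\ref{thm:asymptotic_sdr}(i), the thresholds $t_{\alpha,n+j}(\ell)$ converge uniformly in $j$ and $\ell$ to $t^*$ under both weightings. The e-values then take the form $E_{n+j}\approx \ind\{s(X_{n+j})\le t^*\}/(F(t^*)\PP_Q(s\le t^*))$ up to $o_P(1)$ factors. The e-BH step with $\gamma=\alpha$ therefore selects, asymptotically, $\{j: s(X_{n+j})\le t^*+o_P(1)\}$.

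\textbf{Step 3: conclusion.} The selected set of the estimated-weight procedure is sandwiched between $\{s\le t^*-\epsilon\}$ and $\{s\le t^*+\epsilon\}$ with probability $\to1$. Non-atomicity of $s(X)$ makes the symmetric difference between these sets an $O(\epsilon)$ fraction of the test points. The SDR then equals $F(t^*)+o(1)\le\alpha+o(1)$ by a direct law-of-large-numbers computation of the selected risk ratio. The ratio $\sum_j L_{n+j}\ind\{s\le t\}/\sum_j\ind\{s\le t\}\to \EE_Q[L\ind\{s\le t\}]/\PP_Q(s\le t)=F(t)$ because $\EE_Q[L\,\cdot]=\EE_P[wL\,\cdot]/\EE_P w$. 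Bounded convergence then gives $\limsup\textnormal{SDR}_{n,m}\le\alpha$.

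\textbf{Main obstacle.} The main obstacle is Step 2 at the boundary $\gamma=\alpha$. There the e-BH cutoff and the e-value level $1/(F(t^*)\PP_Q)$ essentially coincide, so small errors could flip selections wholesale. I would first handle this by choosing $\gamma$ and the comparison slightly below $\alpha$ and letting the slack vanish, using strict increase of $F$ at $t^*$. Alternatively, I would argue that even a wholesale change only moves selections among points with scores near $t^*$, which keeps the empirical selected risk ratio below $\alpha+o(1)$.
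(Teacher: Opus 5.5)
There is a genuine gap, located exactly at what you call the main obstacle, and neither of your fallbacks closes it. Step~3 needs the \emph{lower} half of the sandwich: that e-BH at $\gamma=\alpha$ selects all but a vanishing fraction of $\{j: s(X_{n+j})\le t^*-\epsilon\}$. The upper half, $\cR\subseteq\{j: s(X_{n+j})\le t^*+\epsilon\}$, is easy, since a nonzero e-value needs $s(X_{n+j})\le t_{\alpha,n+j}(1)$. But it is useless on its own, because a subset of that set can have a selected-risk ratio well above $F(t^*)$.

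The lower half is precisely the knife edge. By your own Step~2 computation, every nonzero e-value is asymptotically $1/(\alpha\,\PP_Q(s(X)\le t^*))$. This equals the e-BH cutoff $m/(\alpha k)$ at $k\approx m\,\PP_Q(s(X)\le t^*)$; the paper notes in the proof of Theorem~\ref{thm:asymptotic_sdr} that the limit at $\gamma=\alpha$ is unclear for this reason. Which units clear the cutoff is then decided by lower-order terms ($\bar w_{n,m}(X_{n+j})$, leave-one-out counts, grid position), not by how close $s(X_{n+j})$ is to $t^*$. So fallback (b) misdescribes what can happen and proves nothing. Fallback (a) changes the procedure, and there is no monotone coupling across $\gamma$: a smaller $\gamma$ zeroes some e-values but enlarges the others, so the e-BH sets are not nested. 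The comparison with the true-weight procedure in your first paragraph hits the same wall: two procedures sitting at the e-BH boundary can return very different sets even when their e-values are uniformly close.

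The missing idea, which is the route the paper takes through Theorem~\ref{thm:dr_sdr}, is to never identify $\cR$. Instead, bound each selected unit's contribution using e-BH self-consistency. With $\hat t_{n+j}:=t_{\alpha,n+j}(L_{n+j})$ (true risk plugged in), you have $E_{n+j}\le \bar E_{n+j}$, where
\[
\bar E_{n+j}=\frac{m\,\ind\{s(X_{n+j})\le \hat t_{n+j}\}}{\fr_{n+j}(\hat t_{n+j};L_{n+j})\bigl(1+\sum_{k\neq j}\ind\{s(X_{n+k})\le \hat t_{n+j}\}\bigr)}.
\]
If $j\in\cR$, then $E_{n+j}\ge m/(\alpha|\cR|)$, which forces $\sum_{k=1}^m\ind\{s(X_{n+k})\le \hat t_{n+j}\}\le \alpha|\cR|/\fr_{n+j}(\hat t_{n+j};L_{n+j})$.

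Here you need $\fr_{n+j}(\hat t_{n+j};L_{n+j})\ge \alpha-o_P(1)$ uniformly in $j$. The fact $\fr\le\alpha$ points the wrong way, and the paper's display \eqref{eq:barE_sel_eq} should be read with this lower bound. The lower bound holds for three reasons, each with probability tending to one:
\begin{itemize}
\item $\hat t_{n+j}$ is the last point of $\cM$ before the ratio exceeds $\alpha$.
\item Stepping to the next point of $\cM$ adds at most one calibration term, raising the ratio by $O(\max_i \bar w_{n,m}(X_i)/n)$; another test point only lowers it.
\item $\hat t_{n+j}\to t^*$ with $\PP_Q(s(X)\le t^*)>0$.
\end{itemize}
Hence
\[
\frac{L_{n+j}\ind\{j\in\cR\}}{1\vee|\cR|}\le \bigl(1+o_P(1)\bigr)\,\frac{L_{n+j}\ind\{s(X_{n+j})\le \hat t_{n+j}\}}{1\vee\sum_{k=1}^m \ind\{s(X_{n+k})\le \hat t_{n+j}\}},
\]
and the right-hand side contains no e-BH quantity.

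To finish, sum over $j$ and use your Steps~1--2; the uniform convergence there is only needed, and only true, on $t\ge t^*-\epsilon$. A $t^*\pm\epsilon$ sandwich and the law of large numbers then give $\limsup \sdr_{n,m}\le \EE_Q[L\ind\{s(X)\le t^*\}]/\PP_Q(s(X)\le t^*)=F(t^*)\le\alpha$. As a by-product, any nonempty $\cR$ contains all but a vanishing fraction of $\{j: s(X_{n+j})\le t^*\}$. So the knife edge is an all-or-nothing flip, and both outcomes are harmless.
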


% To preserve weighted exchangeability, the weight estimates $\hat{w}_k$ must depend on the calibration and test data in a homogeneous manner. A simple approach is to estimate $\hat{w}_k$ using a separate, independent dataset of size $n'$. If the estimator is consistent and $n' \to \infty$, then Theorem~\ref{thm:w_mdr_asymp} and Theorem~\ref{thm:w_sdr_asymp} apply, establishing asymptotically risk-controlled deployments. 
%When data are very limited, one may instead adopt a full training strategy that preserves exchangeability, analogous to that described in~\cite{bai2024optimized}.

% \yingcomment{A potential issue is that we don't know how $k$ and $n,m$ scale with each other, and $\hat{w}_k$ might be estimated with some data. Could we change the theorems to $n,m\to \infty$ only?}

% A double robustness result is in Appendix~\ref{app:subsec_mdr_dr_discussion}. \todo{discussion}

Under mild assumptions, the SCoRE procedure exhibits a \emph{double robustness} property, further relaxing the dependence on accurate weight estimation in the results above. We thus omit the proofs of Theorems \ref{thm:w_mdr_asymp} and \ref{thm:w_sdr_asymp}, as they follow directly from the proofs of these double robustness results (deferred to  Appendices~\ref{app:subsec_mdr_dr_discussion} and~\ref{app:subsec_sdr_dr} for brevity). Those results  mirror the established results in conformal prediction and selection, where valid inference is maintained even if part of the model is misspecified, which we briefly discuss below.

\begin{remark}[Doubly robust calibration]
    A series of research has shown that conformal prediction and selection under covariate shift enjoy ``double robustness'' properties~\citep{lei2020conformal,yang2024doubly,jin2023model} in the sense that they achieve the desired guarantee (coverage or FDR control) when either (i) the estimated weights are consistent or (ii) certain score function converges to an ideal score (conditional quantiles in~\cite{lei2020conformal} or conditional distribution functions in~\cite{jin2023model}). We remark that, with the threshold-based decisions and the expected risk control target, it is nontrivial to prove analogous double robustness results for SCoRE when only plug-in weights are used and no bias-adjustment terms like~\cite{yang2024doubly}. Nevertheless, it is possible to achieve so by calibrating the weights to a finite-sample balancing condition~\citep{hainmueller2012entropy,zubizarreta2015stable,jin2025cross}. As the development of this approach is somewhat technical, we defer the statements and theory of MDR (resp.~SDR) control to Appendix~\ref{app:subsec_mdr_dr_discussion} (resp.~\ref{app:subsec_sdr_dr}). In a nutshell, our results show that, if the estimated weights additionally satisfy a finite-sample balancing condition based on an estimated conditional risk $\hat{l}(x)$ for $l(x) = \EE[\cL(f,X,Y)\given X=x]$, then SCoRE achieves (asymptotic) MDR/SDR control if either (i) the weights are consistent, or (ii) the conditional risk model is consistent. 
\end{remark}

% !TEX root = main.tex

\section{Real data applications}
\label{sec:real}

We apply SCoRE to three applications that require selective deployment with continuous, task-specific risks: drug discovery under covariate shift (Section~\ref{subsec:app_drug}), selective use of ICU length-of-stay predictions (Section~\ref{subsec:app_icu}) and abstention of  radiology report generation with large language models (Section~\ref{subsec:app_llm}). 
Each application specifies a distinct risk function $\cL$ and (optionally) a reward function $r$, and we evaluate  MDR and SDR control together with various notions of selection power. Throughout the applications we focus on SCoRE procedures, and we compare with natural baselines to demonstrate the advantages of SCoRE in Section~\ref{sec:simu}.

\subsection{Application to drug discovery} \label{subsec:app_drug}
% MDR/SDR: similarity to a holdout set; cost $L_i$ of low-activity drugs ($L_i \ind\{Y_i\leq c\}$) (4 subplots)

% We apply SCoRE to virtual screening tasks in drug discovery, where predictions from machine learning models are used to shortlist promising drug candidates. 

We first apply SCoRE to drug discovery to select promising drug candidates while controlling the wasted resources. 
Since wet-lab essays for drug properties (e.g.,  activity against a disease target) are expensive~\citep{macarron2011impact}, ML models are often used to prioritize candidates for follow-up experiments. Existing conformal selection methods in this area typically control the fraction of false leads~\citep{Bai_Tang_Xu_Svetnik_Khalili_Yu_Yang_2024,bai2024optimized,gui2025acs,huo2024real}, which is appropriate when each false lead incurs a similar downstream cost. In practice, however, follow-up costs can vary substantially across molecules, and one may also want to encourage secondary objectives such as diversity~\citep{nair2025diversifying}.   

\vspace{-0.5em}
\paragraph{Risk and reward functions.}
Each sample is a drug candidate with features $X \in \cX$ and a biological property $Y \in \cY \subseteq \mathbb{R}$. 
We aim to control the expected \emph{wasted resources} among \emph{false leads}. Consider a pre-determined threshold $c\in \RR$, and a general cost of development $L(X)\in \RR$. We define the risk   $\cL(f,X,Y) = L(X)\cdot \ind\{Y\leq c\}$. 
Here we use the  synthetic accessibility (SA) score \citep{ertl2009estimation}, denoted as $\text{SA}(x)$, as a proxy for cost (difficulty of development), which is fully determined by its chemical structure.  
Here, MDR control implies limited total wastage of resources, while SDR control implies limited average wastage among selected candidates, which is more appropriate when wastage is allowed to scale with the number of follow-ups.
Throughout, we run SCoRE after normalizing the risks to $[0,1]$ and report results on the original scale.
To reflect secondary factors, we consider three rewards:

\vspace{0.25em}
\begin{enumerate}[label=(\alph*)]
    \item \emph{Diversity.} To encourage the selection of diverse molecules by setting a reward function as the dissimilarity to a hold-out reference set. Here, we use $r_1(X, Y) = 1 - \mathrm{AvgTanimoto}(X)$ where $\mathrm{AvgTanimoto}(X)$ is $X$'s mean Tanimoto coefficient with respect to molecules in the reference set $\cD_{\text{train}}$. 
    \item \emph{Activity.} To  prioritize candidates with exceptional activity, we set the reward as $r_2(X, Y) = Y$. 
    \item Finally, we can set a constant reward $r_0(X,Y) = 1$ to promote more discoveries.
\end{enumerate} 
\vspace{0.25em}

\vspace{-0.5em}
\paragraph{Datasets and models.} We apply SCoRE to four drug property prediction tasks with data from Therapeutic Data Commons \citep{huang2021therapeutics}. Since it is common to observe distribution shift in the drug discovery setting, we apply an artificial shift defined by $w(X) = \text{sigmoid}(|\text{mw}(X) - 400|/400$), where $\text{sigmoid}(z) = 1/(1+e^{-z})$ and $\text{mw}(X)$ denotes the molecular weight of the molecule $X$. This distribution shift is unknown to the learner yet may be learned by deep learning models. %Accordingly, we employ the weighted SCoRE procedure, with weights learned via a random forest classifier.

\begin{figure}
    \centering
    \includegraphics[width=\linewidth]{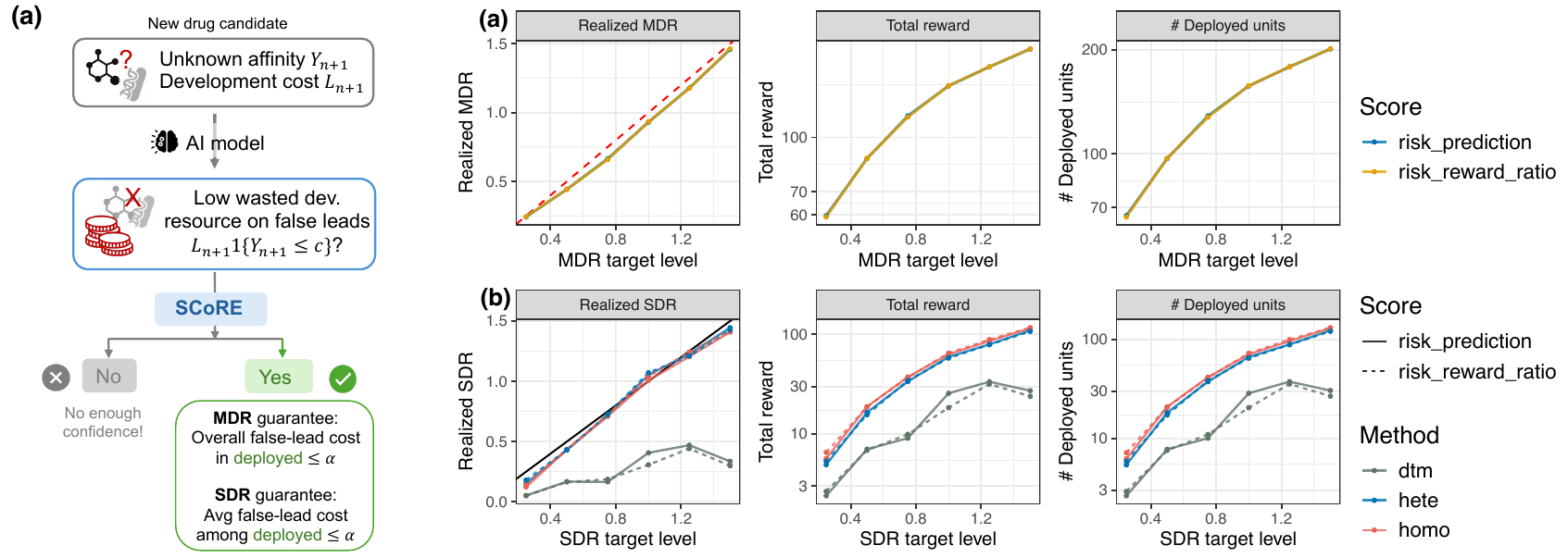}
    \caption{{\small SCoRE for selecting drugs with cost efficiency under covariate shift. \textbf{(a) Overview}: Given predicted drug activities, the goal is to identify highly active drugs with cost wastage control; SCoRE provides MDR and SDR guarantees among shortlisted drug candidates. \textbf{(b) MDR control}: realized MDR at various target levels in the original scale (left), total reward of selected drugs, number of selected drugs (right). \textbf{(c) SDR control}: realized SDR at various target levels (left), total reward of selected drugs (middle), number of selected drugs (right).}}
    \label{fig:real-drug} 
    % what was the XX (that i deleted) after the total reward? "total reward (XX)"
\end{figure}

Each dataset is randomly split into training ($\cD_{\text{train}}$, 40\%), calibration ($\cD_{\text{calib}}$, 30\%) and test ($\cD_{\text{test}}$, 30\%) folds, and the artificial shift is applied to draw the test data $\cD_{\text{test}}$ using rejection sampling. 
The training fold is used to train the risk and reward predictors using the \texttt{DeepPurpose} Python library \citep{huang2020deeppurpose} with the \texttt{DGL\_AttentiveFP} molecule embedding. 
We also set aside a subset of shifted data to train the covariate shift weights via probabilistic classification.
% We also estimate the covariate shift weight function via probabilistic classification on two datasets derived from $\cD_{\text{train}}$, one shifted and one unshifted. \yingcomment{set aside some shifted data to train the weights via classification}
Given the predictors and estimated weights, we apply SCoRE  to $\mathcal{D}_{\text{calib}}$ and $\mathcal{D}_{\text{test}}$.  
For each reward function, we use two score choices suggested by our optimality analysis, a \texttt{risk\_prediction} score $s(x) = \hat{l}(x)$ and a \texttt{risk\_reward\_ratio} score  $s=\hat{l}(x)/\hat{r}(x)$ (MDR case), $s = (\hat{l}(x) - \alpha) / \hat{r}(x)$ (SDR case), where $\hat{l}(\cdot)$ and $\hat{r}(\cdot)$ denotes the learned risk and reward functions. 
We repeat the whole pipeline for $N=100$ independent runs.
For the SCoRE-MDR procedure (Figure~\ref{fig:real-drug}b), the average realized MDR, reward and deployed units are computed by averaging $\psi_{n+1} L_{n+1}$, $\psi_{n+1}r_{1,n+1}$ and $\psi_{n+1}$ over the test data and $N=100$ independent runs.  
For the SCoRE-SDR procedure (Figure~\ref{fig:real-drug}c), these metrics are computed as $\frac{1}{1\vee |\cR|} \sum_{j=1}^m \psi_{n+j} L_{n+j}$, $\frac{1}{m}\sum_{j=1}^m \psi_{n+j} r_{1,n+j}$ and $|\cR|$ respectively, averaged over $N=100$ runs.

\vspace{-0.5em}
\paragraph{Results.} Figure~\ref{fig:real-drug} illustrates the pipeline and results on the \texttt{caco2\_wang} dataset ($906$ drug candidates in total) with the diversity reward $r_1$; see Appendix~\ref{app:drug_more} for additional  results for all the datasets and reward functions.
SCoRE achieves robust MDR and SDR control with useful selection power, even when the covariate shift weights are estimated.  
Among boosting strategies, consistent to earlier observations in~\cite{jin2023model, bai2024optimized}, homogeneous e-value boosting typically achieves the highest selection power. While theory (Theorem~\ref{thm:opt_mdr} and Theorem~\ref{thm:asymptotic_sdr}) suggests a tradeoff between selecting more units (\texttt{risk\_prediction} score) and accumulating higher total reward (\texttt{risk\_reward\_ratio} score), we see only a small empirical difference, likely because dividing by the reward function does not drastically change the priority of candidates in SCoRE.
% \todo{add more datasets in appendix}

\subsection{Application to clinical prediction error management} \label{subsec:app_icu}

% $L_2$ loss for regression (ICU stay) \todo{find the dataset}

Our second applications concerns the management of predictive error in clinical settings where resource allocation relies on noisy model predictions. We focus on selecting accurate predictions for the length of stay for patients in the Intensive Care Unit (ICU) using the \texttt{MIMIC-IV} dataset~\citep{PhysioNet-mimiciv-3.1}. Each data point corresponds to a patient, whose features $X$ include relevant personal and clinical information such as ethnicity, diagnoses, and medications. The response  $Y\in \RR^+$ is the patient's length of stay in the ICU. 

\vspace{-0.5em}
\paragraph{Risk and reward functions.} The primary objective is to select test cases for which a trained stay length predictor $f(X)$ are sufficiently close to the ground truth thus reliable for clinical deployment.
We define the risk function as the $\ell_2$ loss of prediction, $\cL(f, X, Y) = (Y - f(X))^2$. Besides the constant reward $r_0(X,Y) = 1$, we use  $r_1(X,Y) = Y$ to prioritize reliable predictions of patients with long ICU stays. Again, we rescale the outcomes so the boundedness conditions apply. 

\vspace{-0.5em}
\paragraph{Dataset and models.}
The ICU stay data from the \texttt{MIMIC-IV} dataset is pre-processed with an adapted version of the pipelines developed by~\cite{gupta2022extensive}. 
After processing, we  subsample $10000$ observations, half of which are used to train the length of stay predictor, $f$, which was instantiated as a random forest model without tuning. The remaining data are then split into the training subset $\cD_{\text{train}}$, the calibration subset $\cD_{\text{calib}}$, and the test subset $\cD_{\text{test}}$ in a $3:1:1$ ratio. 
We train the risk predictor using a random forest model on $\mathcal{D}_{\text{train}}$, and reuse $f$ as the reward predictor. No  covariate shift was imposed on the dataset for this task, and all the other setups are the same as in Section~\ref{subsec:app_drug}.

\vspace{-0.5em}
\paragraph{Results.}
Figure~\ref{fig:icu} presents the results for this application. Again, SCoRE achieves tight MDR and SDR control in selecting error-controlled predictions without observing the true labels, while exhibiting good selection power. 
In the SDR-controlling variants, homogeneous and heterogeneous boosting leads to comparable power as the deterministic version, yet with realized error closer to the target level. 

\begin{figure}
    \centering
    \includegraphics[width=\linewidth]{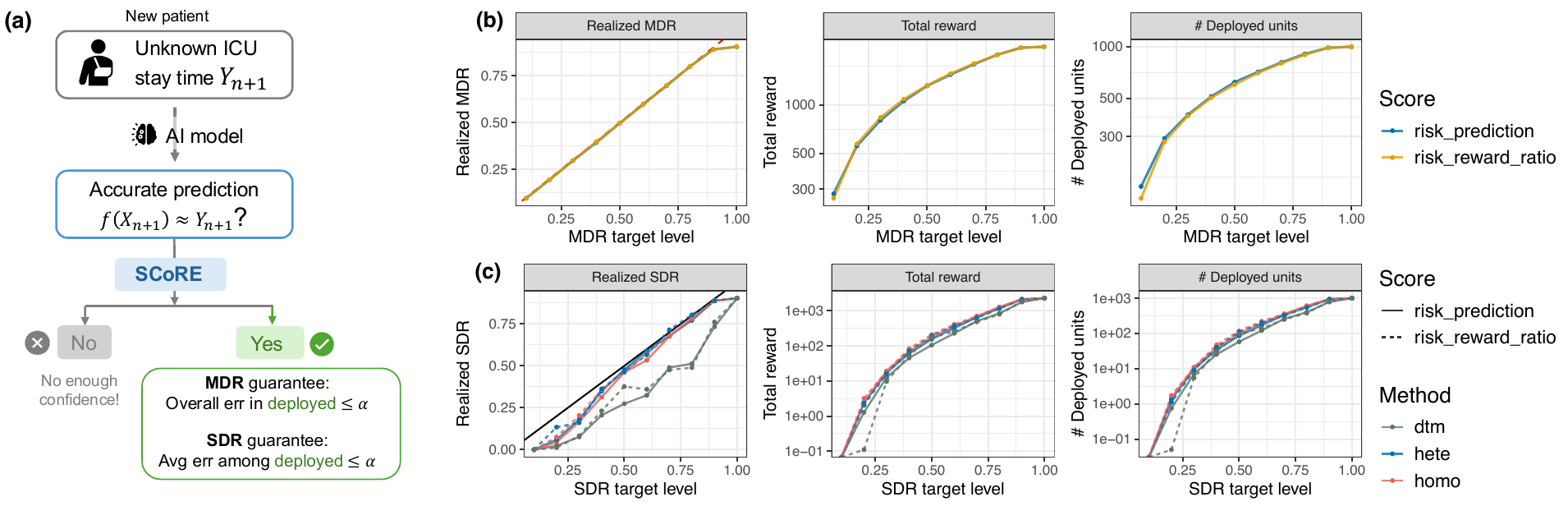}
    \caption{{\small SCoRE for identifying accurate ICU stay time prediction. \textbf{(a) Overview}: Given model predictions, the goal is to identify predictions that are close to the unknown ICU stay time; SCoRE provides MDR and SDR guarantees among identified cases. \textbf{(b) MDR control}: realized MDR at various target levels (left), total reward (stay time) of deployed units, scaled by $1/m$ (middle), number of deployed units (right). \textbf{(c) SDR control}: realized SDR at various target levels (left), total reward of deployed units (middle), number of deployed units (right).}}
    \label{fig:icu}
\end{figure}

\subsection{Application to LLM abstention} \label{subsec:app_llm} 

Finally, we apply SCoRE to the task of aligning large language models for automated chest X-ray radiology report generation (Figure~\ref{fig:llm}). Given a collection of machine-generated diagnoses, the objective in this setting is to select a subset for deployment where the reports are both factually accurate and clinically valuable. 

\vspace{-0.5em}
\paragraph{Datasets and models.}
Following~\cite{gui2024conformal,bai2024optimized,gui2025acs}, each feature $X \in \cX$ is a radiology image serving as a ``prompt''. A vision-to-language model $f: \cX \to \cY$ processes this image to generate a report summarizing its findings, where $\cY$ denotes the space of reports. The ground-truth response $Y$ represents ``gold-standard'' for each image, such as a report authored by human experts.  
We use a subset of the \texttt{MIMIC-CXR} dataset \citep{johnson2019mimic}, with the vision-to-language model $f$ is an encoder-decoder model identical to the one fine-tuned in \cite{gui2024conformal}. 

\vspace{-0.5em}
\paragraph{Risk and reward functions.}  Our risk and reward functions rely on the 14-dimensional label vectors produced by CheXbert~\citep{smit2020chexbert} based on any report, where each vector indicates the status of a specific finding---categorized as present, absent, uncertain, or unmentioned. 
We define the risk function $\cL(f,X,Y)$  as a weighted sum of the false negatives and false positives when comparing $f(X)$ and $Y$ across the CheXbert labels, which measures the alignment between generated reports and human-quality reports in continuous spectrum. 
% For the evaluation of factuality, we use CheXbert \citep{smit2020chexbert} to convert both the machine-generated and the ground-truth reference reports into 14-dimensional label vectors where  
We consider two reward functions: a constant reward $r_0$, and a confidence-weighted reward $r_1$ that assigns higher values to reports that have more correct labels for findings that are definitively present or absent (instead of uncertain or unmentioned). 
For the prediction of risk and reward, we extract 12 distinct numerical features from each report that heuristically measure the uncertainty of LLM-generated outputs similar to prior works. The risk is rescaled to $[0,1]$ before running SCoRE, and the results are reported on the original scale. 
 Further details on the dataset, model,  the specific formulations of the risk and reward functions, and risk/reward prediction models are provided in Appendix~\ref{app:llm_detail}. 
 
We sample 600 observations from the dataset, using 100 to fine-tune the hyper-parameters in the uncertainty features. The remaining observations are uniformly split into three folds of sizes $|\cD_{\text{train}}| = 200$, $|\cD_{\text{calib}}| = 100$, and $|\cD_{\text{test}}| = 200$ in each run of the experiments. Both the risk and reward predictors are implemented as random forest models without parameter tuning. As such, our experiments evaluate the application of SCoRE in scenarios with limited labeled data. The results are averaged over $N=100$ independent runs.

\vspace{-0.5em}
\paragraph{Results.}  Figure~\ref{fig:llm} presents the results for this task, which again demonstrate that SCoRE achieves tight risk control and satisfactory selection power, offering reliable guarantees when detecting high-quality radiology reports with continuous risk control. 
The SDR-control variants with homogeneous and heterogeneous boosting yield slightly higher reward  and closer-to-target SDR, yet the deterministic variant seems to achieve similar power (in terms of both of the two reward functions) with lower error. We also did not observe a significant difference in rewards when using the reward-aware scores, likely due to the fact that dividing the risk by the reward did not really change the ranking of units a lot. 

\begin{figure}
    \centering
    \includegraphics[width=\linewidth]{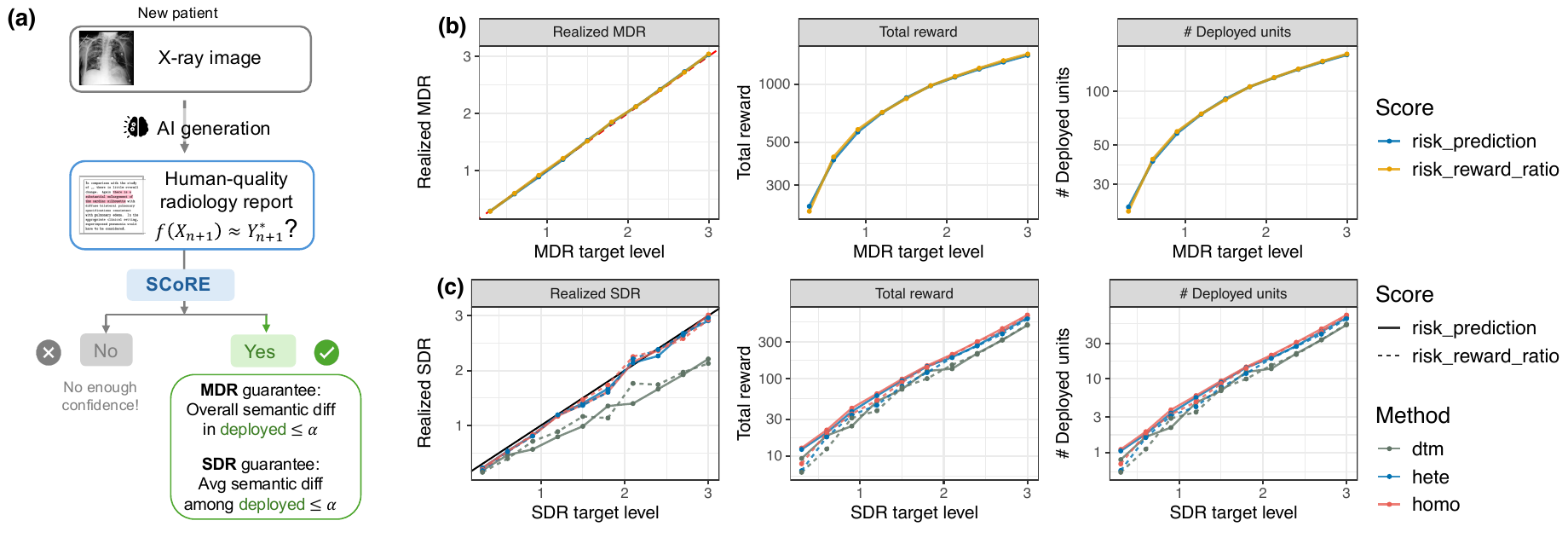}
    \caption{{\small SCoRE for identifying semantically coherent AI-generated radiology report. \textbf{(a) Overview}: The goal is to identify reports close to human-expert reports; SCoRE provides MDR and SDR guarantees among identified reports. \textbf{(b)  MDR control}: realized MDR at various target levels (left), total quality-based reward of deployed units, scaled by $1/m$ for readability (middle),  number of deployed units (right). \textbf{(c) SDR control}: realized SDR at various target levels (left), total quality-based reward of deployed units (middle), number of deployed units (right).}}
    \label{fig:llm}
\end{figure}

% !TEX root = main.tex
\section{Simulations} \label{sec:simu}

In addition to the real data applications, we conduct a series of simulation studies to comprehensively evaluate the SCoRE procedures. We focus on examining (i) validity of risk control under various settings, (ii) factors affecting the tightness of risk control, and (iii)  robustness under estimated weight functions.

\subsection{Simulation settings}

For both MDR and SDR control, we consider 2 distinct data generation processes (DGPs) adapted from \cite{jin2023selection} with nonlinear relationships, where the response variable is properly scaled to fit the current formulation. 
Each data generation process is assessed in two scenarios:
\vspace{0.25em}
\begin{enumerate}[label=(\roman*).]
    \item Exchangeable: both calibration and test samples are independently drawn from the same process.
    \item Covariate shift: the calibration data are drawn from the original process, while test data are generated from a reweighted version of the same process according to an unknown weight function $w$. 
\end{enumerate}
\vspace{0.25em}
Scenario (ii) is studied in Section~\ref{subsec:simu_weight}, where we use an estimator $\hat{w}$ in the SCoRE procedures. %Our simulation results thus assess the finite-sample performance with estimated weights to examine the theory in Theorem~\ref{thm:w_mdr_asymp} and Theorem~\ref{thm:w_sdr_asymp}.
Echoing the practical objectives in applied settings, we examine SCoRE with distinct risk functions:
\vspace{0.25em}
\begin{itemize}
    \item Excess risk: $\cL(f, X, Y) = Y \cdot \ind\{Y > c\}$ where $c$ is a pre-defined threshold;
    \item L2 risk: $\cL(f, X, Y) = (Y - f(X))^2$;
    \item Sigmoid risk: $\cL(f, X, Y) = \sigma(-\tau Y)$ where $\sigma(z) = 1/(1+e^{-z})$, and $\tau \in \RR^+  $ is a temperature parameter.
\end{itemize} 
\vspace{0.25em}

The excess risk is closely related to the expected shortfall~\citep{rockafellar2000optimization} which reflects the tail behavior of $Y$. The L2 risk, also used in Section~\ref{subsec:app_icu}, mirrors   selective prediction where a model $f$ should  be deployed in cases with sufficiently low   expected prediction error. 
The sigmoid risk can be viewed as a smooth relaxation of the indicator function $\ind\{Y > 0\}$ in \cite{jin2023selection}. Later, by varying the temperature parameter $\tau$, we examine how the distribution of the risk affects tightness of risk control. The details on the DGPs, weight functions,  predictive models, and score functions are  in Appendix~\ref{app:simu_detail}. 
% In the experiments, we fix the sample sizes as $n = 1000$ and vary the nominal SDR level $q \in \{0.05, 0.1, \dots, 0.5\}$.
% \todo{sample size ($n_\test=100$ for covariate shift?)}
% i wrote it in the 'marginal/selective risk control' paragraph

We consider two reward functions for each of the six combinations of the DGP and risk function: constant reward $r_0(X, Y) = 1$ and the squared reward $r_1(X, Y) = Y^2$. 
Given the risk estimator $\hat{l}(x)$  the reward estimator $\hat{r}_1(x)$ similar to the real data applications, we set the score function as either the predicted risk or the risk-reward ratio. Similar to Section~\ref{sec:real}, we  refer to the corresponding SCoRE procedures as the \texttt{risk\_prediction} and \texttt{risk\_reward\_ratio} variants respectively. The baseline methods under comparison are described in Section~\ref{subsec:simu_risk_control} for MDR and SDR control, respectively.
% According to the theoretical results, the \texttt{risk\_prediction} variant maximizes the average number of selection, whereas the \texttt{risk\_reward\_ratio} variant maximizes the average $y$-squared reward. 
% When the two score functions induce markedly different rankings of data points, we expect visible differences in their probability of deployment and reward.
% \subsection{Marginal risk control}

% \yingcomment{maybe add histograms of risk and reward? (how to display? maybe just for case 3 and vary $\tau$)}
 
\subsection{Risk control and power comparison}
\label{subsec:simu_risk_control}

We first verify the risk control of SCoRE procedures  without covariate shift, as well as validating that the design of the two score functions indeed perform as claimed in our theory.

\vspace{-0.5em}
\paragraph{Marginal risk control.}
We first evaluate the performance of SCoRE in MDR control tasks, using a calibration sample size of $n = 1000$ and averaging results over $m=100$ test samples in $N = 100$ independent runs. 
Besides SCoRE, we evaluate baselines based on uniform concentration inequalities for $\mdr(t):=\EE[\cL(f,X,Y)\ind\{s(X)\leq t\}]$. Namely, we set $\hat\psi_{n+1}= \ind\{s(X_{n+1})\leq \hat{t}\}$ for $\hat{t} = \max\{t\in \cG\colon  \widehat\mdr(t)+\epsilon_n\leq \alpha\}$  where $\widehat\mdr(t)=\frac{1}{n}\sum_{i=1}^n L_i\ind\{s(X_i)\leq t\}$, and $\epsilon_n$ is a slack computed by uniform concentration inequalities (\texttt{Hoeffding} and \texttt{Rademacher}) and $\cG$ is a search range; see Appendix~\ref{app:simu_baselines} for details. Strictly speaking, these baselines do not control MDR in theory  since the upper bound on $\mdr(t)$ holds with high probability, though we anticipate them to be overly-conservative.

\begin{figure} 
    \centering
    \includegraphics[width=\linewidth]{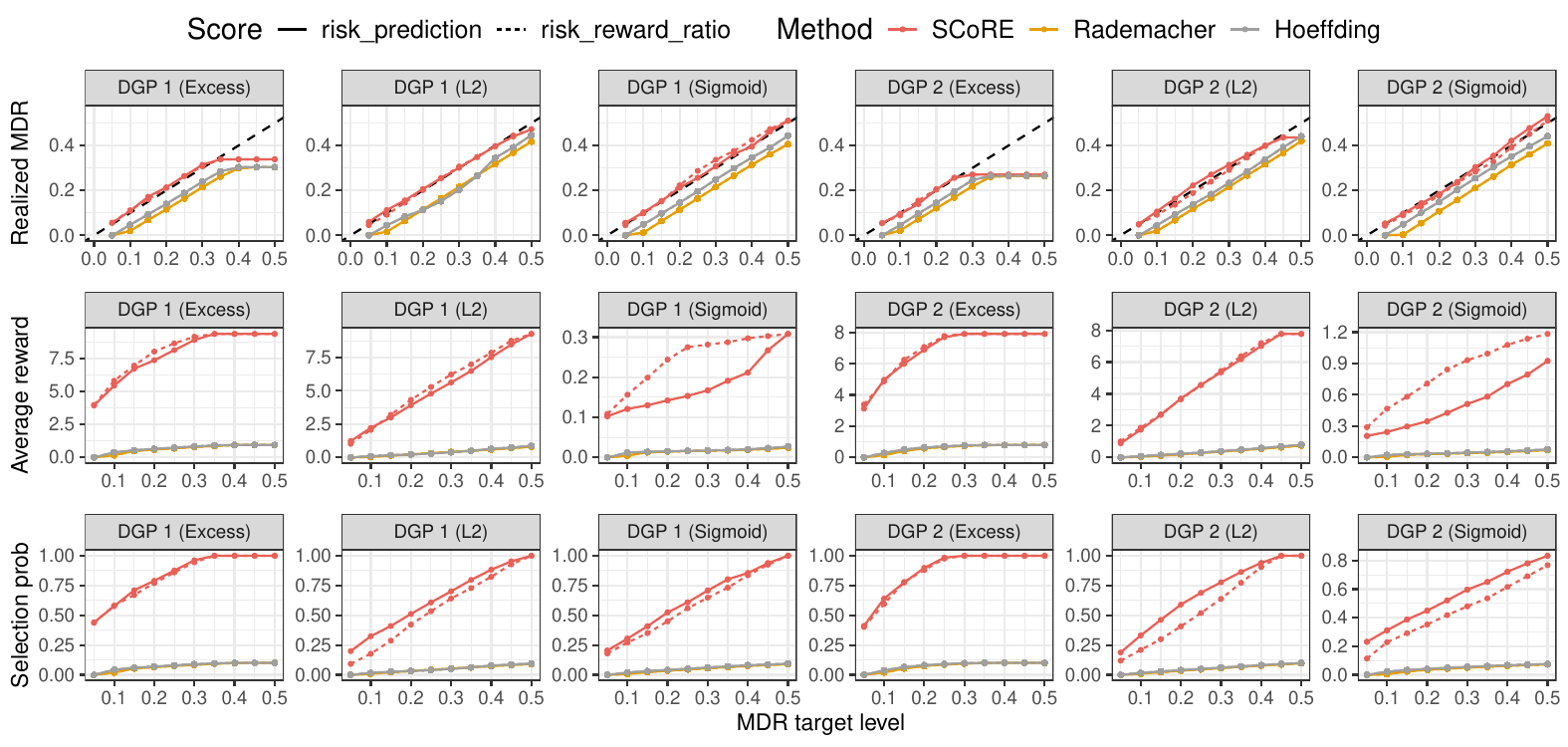}
    \caption{{\small Realized MDR, average reward and fraction of selection for varying nominal MDR levels under two DGPs and three risk functions. Each column corresponds to one pair of DGP and risk function. The dashed black line in the first row is $y=x$.}}  
    \label{fig:simu_mdr}
\end{figure}

Figure~\ref{fig:simu_mdr} presents the average realized MDR, average reward, and fraction of selection for both score function variants, as the nominal MDR level $q$ varies from 0.05 to 0.5 in increments of 0.05. Across all settings, both SCoRE variants demonstrate valid and tight MDR control. 
As anticipated, the \texttt{risk\_reward\_ratio} variant tends to achieve  a higher average reward, whereas the \texttt{risk\_prediction} variant yields a larger number of selections. The contrast between the two variants is most pronounced under the sigmoid loss function (where dividing by the predicted reward changes the ranking of units).  
These findings align with our theory in Theorem~\ref{thm:opt_mdr}. Compared with real applications, we conjecture the signal in the simulations is stronger, so the reward-aware score function makes a visible difference. Finally, the baseline methods based on concentration inequalities empirically control the MDR, yet  yield much lower power, showing the benefit of finite-sample exact MDR control via conformal calibration.

\vspace{-0.5em}
\paragraph{Selective risk control.}
For SDR control, the two choices of score functions are paired with three distinct e-value boosting methods, resulting in six variants in total. Besides SCoRE, we evaluate a baseline with $\hat\psi_{n+j} = \ind\{s(X_{n+j})\leq \hat{t}\}$ for $\hat{t}=\max\{t\in \cG\colon \widehat{\sdr}^+(t) \leq \alpha \}$, where $\widehat{\sdr}^+(t)$ is a uniformly valid upper bound on $\sdr^*(t):=\EE[\cL(f,X,Y)\given s(X)\leq t]$ with high probability, derived from uniform concentration inequalities (\texttt{Hoeffding} and \texttt{Rademacher}) detailed in Appendix~\ref{app:simu_baselines}. Again, these baselines provide high-probability, instead of exact, SDR control. We fix  $n = 1000$ and $m = 100$, and vary nominal level $q$    from $0.05$ to $0.5$ in increments of $0.05$. The results are averaged over $N = 100$ independent runs.

Figure~\ref{fig:simu_sdr} demonstrates that all of the six SCoRE variants maintain valid SDR control. While the deterministic boosting variants (\texttt{dtm}) tend to be overly conservative and fail to fully utilize the SDR budget, both the heterogeneous (\texttt{hete}) and homogeneous (\texttt{homo}) boosting variants achieve tight SDR control and higher power; the power is similar across \texttt{hete} and \texttt{homo}. %Consequently, the two boosting variants substantially outperform the deterministic method in both the number of selections and the achieved reward. 
Consistent with Theorem~\ref{thm:asymptotic_sdr}, the \texttt{risk\_prediction} and \texttt{risk\_reward\_ratio} variants outperform each other at their corresponding maximization targets, with the gap being most pronounced again under the sigmoid risk setting. Finally, the baselines using concentration inequalities are  conservative, leading to very low power and reward. This again shows the benefit of (near-) exact calibration via conformal inference. 

\begin{figure}[htbp]
    \centering
    \includegraphics[width=\linewidth]{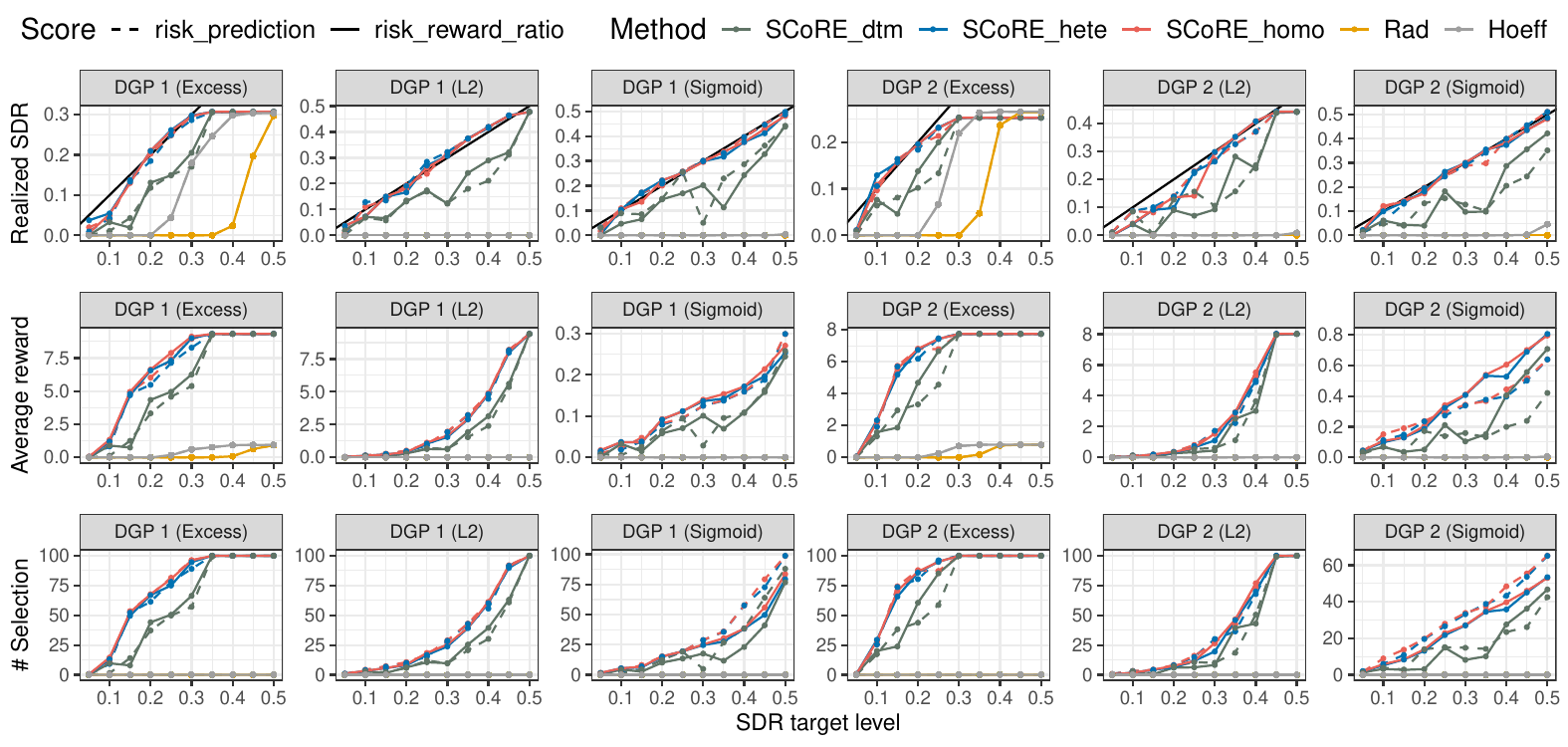}
    \caption{{\small Realized SDR, average reward and number of selection for varying nominal SDR levels. Each column corresponds to one pair of DGP and risk function. For subplots in the first row, the black line is $y=x$.}}
    \label{fig:simu_sdr}
\end{figure}

\subsection{Impact of risk distribution on tightness} 

Our MDR and SDR e-values takes the infimum over the unknown label value, and thus the MDR and SDR control may be slightly conservative since the inequality $\EE[L_{n+1}E_{n+1}]\leq 1$ may not be tight.  
By definition, the conservativeness of our e-values relies on whether the unknown $L_{n+1}$ attains the infimum, and Proposition~\ref{thm:sdr_conn_CS} confirms that it is the case for a binary risk function.  
On the other hand, if the calibration size $n$ is large enough, such conservativeness should be washed away by the law of large numbers. 
Our experiments vary these two aspects to study the tightness of SCoRE's error control.

In specific, we adopt the sigmoid risk function $\cL(f,x,y)=\sigma(-\tau y)$ while varying $\tau\in \{1,2,5,10,30\}$ to yield close approximation to the binary risk function $\ind\{y<0\}$ when $\tau$ is of a larger value. We also vary the calibration size $n\in \{100,300, 1000\}$ under the two DGPs.  
We evaluate our methods with two choices of score functions, with all the other details as before.

Figure~\ref{fig:simu_tau} reports the realized MDR (Panel a) and SDR (Panel b) for the two variants, respectively, averaged over  $N=100$ independent runs under each configuration. While maintaining the desired error control across all the settings, the conservativeness exhibits distinct patterns. 
In panel (a), we see that the MDR control is tight across settings, and the sample size and closeness to binary risk have no visible impact on the tightness. 
In panel (b), in contrast, increasing the value of $\tau$ or $n$ indeed tightens the error control. 
This could be attributed to the inherent structure of the eBH procedure, whose step-up rule induces interactions among the e-values.

\begin{figure}[htbp]
    \centering
    \includegraphics[width=\linewidth]{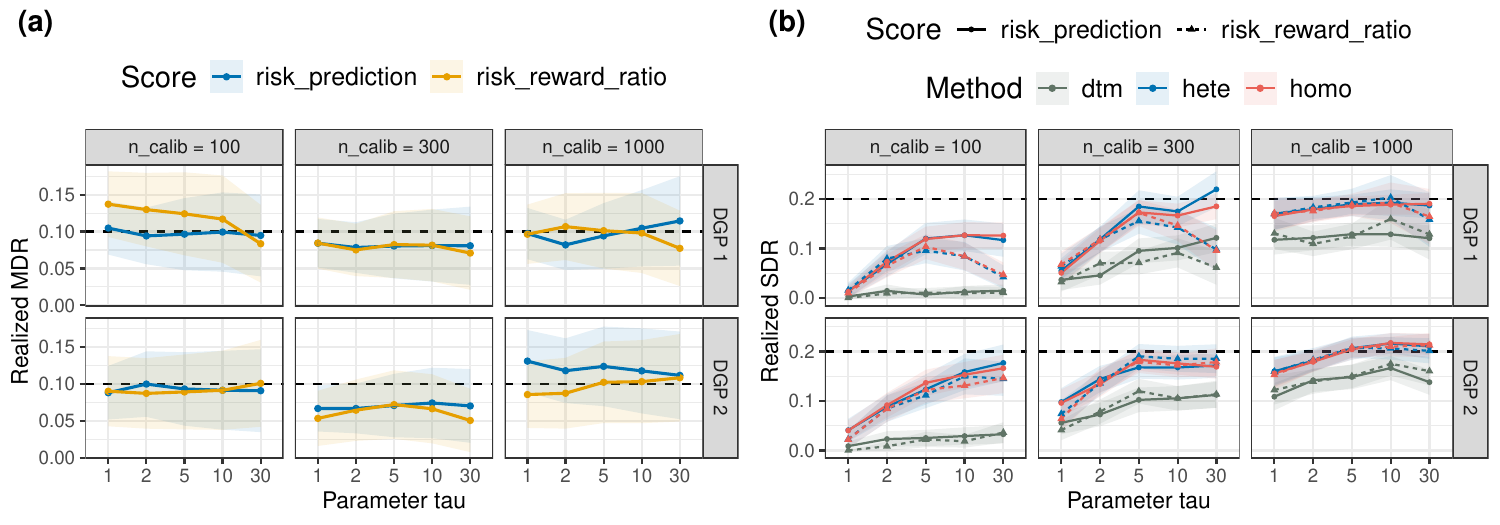}
    \caption{{\small MDR (left) and SDR (right) control when varying the parameter $\tau$ and calibration sample size $n$. Each row is a DGP and each column is a sample size. The nominal level is 0.1 for MDR and 0.2 for SDR. Details are otherwise the same as Figures~\ref{fig:simu_mdr} and~\ref{fig:simu_sdr}.}}
    \label{fig:simu_tau}
\end{figure}

\subsection{Robustness under covariate shift estimation}
\label{subsec:simu_weight}
Finally, we evaluate the robustness of the weighted variant of SCoRE with estimated weights when varying the complexity of weight models. 
We follow exactly the same evaluation procedures as before (using the homogeneous boosting variant for SDR control for conciseness), except that we employ  rejection sampling to create three unknown covariate shifts: (i) logistic model $w_1(x) = \text{sigmoid}(\theta^\top x)$ with $\theta_i=0.1 \cdot \ind\{i\leq 5\}$, (ii) a non-linear function with interactions $w_2(x) = \text{sigmoid}(0.5 (x_1x_2+ x_2x_3+ x_3x_4) + 0.3 \sin(x_1+x_2))$, and (iii) a multi-modal shift $w_3(x) = \text{sigmoid}(3\exp({-\|x'-a_1\|^2}) + 2.1 \exp({-\|x'-a_2\|^2}) - 2)$, where $x' = (x_1,x_2,x_3)$ denotes the first three entries of $x$ and $a_1 = (2,-1,1)$, $a_2 = (-2,1,-1)$. 
All the weight functions are estimated using probabilistic classification as in the previous settings.

% I just found that in (i) I was using weight 0.1 * \{i \leq 5\} .. that could be easier to predict (anyways (ii) and (iii) demonstrates the robustness)

The risk control of SCoRE is presented in Figure~\ref{fig:weight_mdr_sdr}. 
We observe robust MDR and SDR control with estimated weights when the true weights are of various complexity. 
For conciseness, we defer additional results on the power (number of selection and reward) of SCoRE to Appendix~\ref{app:simu_weight_res} and report the main messages here. Consistent with earlier observations, the \texttt{risk\_prediction} score leads to higher number of selections while the \texttt{risk\_reward\_ratio} score leads to higher total reward in deployed units (Appendix~\ref{app:simu_weight_res}). For the SCoRE-SDR variant, with covariate shifts, both homogeneous and heterogeneous boosting lead to comparable power, with substantial improvement over the deterministic version.

\begin{figure}[htbp]
    \centering
    \includegraphics[width=0.9\linewidth]{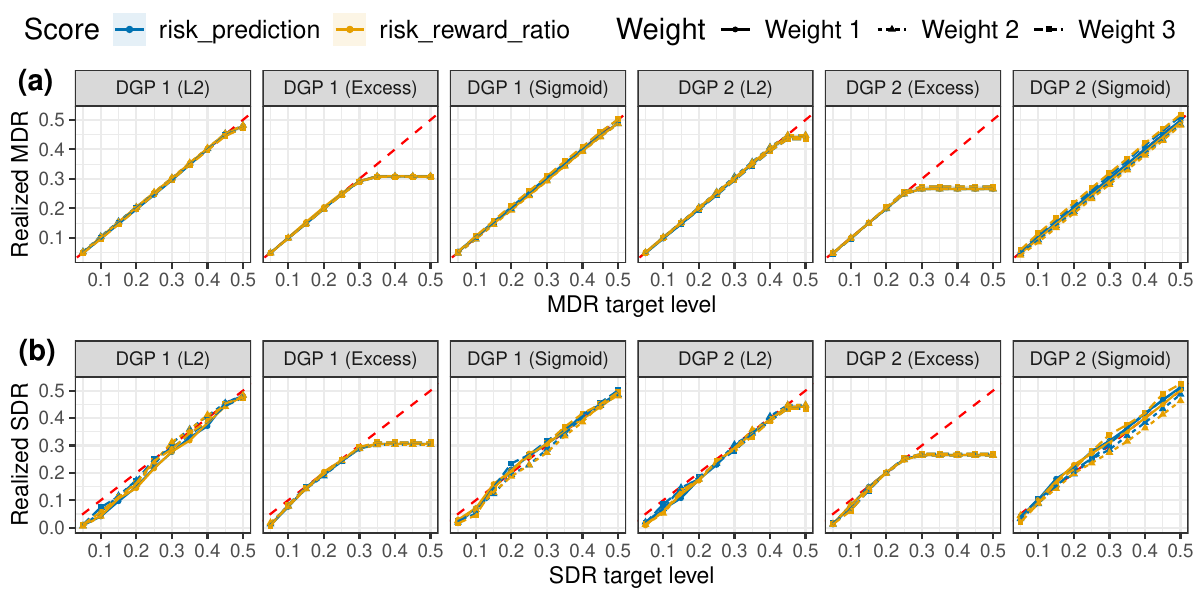}
    \caption{{\small The MDR (a) and SDR (b) control of SCoRE with estimated weights with two score functions under three weight models. Details are otherwise the same as Figure~\ref{fig:simu_mdr}.}}
    \label{fig:weight_mdr_sdr}
\end{figure}

% !TEX root = main.tex
\section{Discussion} \label{sec:discussion}

In this paper, we present SCoRE, a framework based on conformal inference and e-values to derive a selective trust mechanism for any prediction model with precise control of risks among trusted instances. We propose two complementary risk metrics, and show how each can be controlled by applying standard testing procedures to any ``risk-adjusted e-values''. We then propose concrete constructions of the e-values for each metric and analyze the optimal choice of scoring functions in these e-values. SCoRE's principles can be readily extended to settings with covariate shift. 
We demonstrate the utility of SCoRE in several real applications with diverse risk metrics, and conduct simulations to investigate factors that affect its performance. 

Several interesting directions remain open. First, while the asymptotic analysis offers guidance on the choice of score functions that determines which instances may be more trustworthy, it is naturally desirable to use data to optimize the scores. It is thus interesting to develop methods that allow rigorous risk control with data-driven score choices~\citep{bai2024optimized}. However, compared with the binary setting, maintaining validity with an unknown continuous test risk is substantially more challenging. 
% Second, the statistical properties of SCoRE under covariate shift may be worth further study. In both conformal prediction and conformal selection (a special case of SCoRE), existing research has established certain ``double-robustness'' properties, meaning that under covariate shift, the guarantees of these methods are approximately true when either the weights are consistently estimated or the scores converge to appropriate limits~\citep{lei2020conformal,jin2023model}. However, the thresholding form of SCoRE e-values makes such analysis less straightforward, which we leave for future research. 
In addition, the ideas of SCoRE may  extend  to  richer scenarios such as online settings where test instances arrive sequentially and real-time decisions need to be made, where the e-values might  be a useful tool~\citep{xu2024online}. It would also be interesting to apply SCoRE to selectively automate   workflows with tailored risks.

\section*{Acknowledgments}

The authors thank Ruth Heller for pointing out the connection to the multiple-family hypotheses testing problem and helpful discussions on the topic.

% \newpage
%%%%%%%%%%%%%%%%%%%%
%%% Bibliography %%%
%%%%%%%%%%%%%%%%%%%%
\bibliographystyle{apalike}
\bibliography{reference}

\newpage 
\appendix 
% !TEX root = main.tex

\section{Additional discussion}

\subsection{Connection between SCoRE and conformal risk control}
\label{app:subsec_connect_crc}

Here we continue the discussion on the equivalence of SCoRE and conformal risk control in Section~\ref{subsec:mgn_computation}. 
To see this, note that if SCoRE deploy the test instance, we have
\$
    \frac{1+\sum_{i=1}^n L_i \ind {\{s(X_i)\leq s(X_{n+1})\}}}{n+1} = \frac{1+\sum_{i=1}^n L_i(-s(X_{n+1}))}{n+1} \leq \alpha,
\$
so $-s(X_{n+1}) \geq \hat\lambda$ by definition of $\hat\lambda$. Conversely, $-s(X_{n+1}) \geq \hat\lambda$ implies that 
\$
    \frac{1+\sum_{i=1}^n L_i(-s(X_{n+1}))}{n+1} \leq \frac{1+\sum_{i=1}^n L_i(\hat\lambda)}{n+1} \leq \alpha
\$
by monotonicity of $L_i$, and the test instance is deployed by SCoRE.

\subsection{A simpler construction of e-values}
\label{app:subsec_alt_evalue}

Following conformal inference ideas, 
we will use the exchangeability among data to construct the e-values. At the same time, we use the estimation idea in the Benjamini-Hochberg procedure to ensure tight selective risk control. The idea is to set 
\$
e_j = \frac{\ind\{s(X_{n+j}) \leq \hat{t}_j\} }{\sum_{\ell=1}^m \ind\{s(X_{n+\ell}) \leq \tilde{t}_j\}}\cdot \frac{m}{\alpha}.
\$
where $\hat{t}_j$ and $\tilde{t}_j$ are stopping times chosen such that $\EE[L_{n+j}e_j]\leq 1$. 

Specifically, we set $\hat{t}_j\leq t_j(L_{n+j}) \leq \tilde{t}_j$ for some function $t_j(y)$ obeying 
\$
\EE\Bigg[ \frac{L_{n+j}\ind\{s(X_{n+j}) \leq t_j(L_{n+j})\} }{\sum_{\ell=1}^m \ind\{s(X_{n+\ell}) \leq t_j(L_{n+j})\}}   \Bigg] \leq \frac{\alpha}{m}. 
\$
Specifically, we set 
\$
\hat{t}_j = \max\Big\{t\colon \hat{\textrm{FR}}(t) \leq \alpha\Big\}, \quad \textrm{where}\quad \hat{\textrm{FR}}(t) = \frac{\ind\{s(X_{n+j})\leq t\} + \sum_{i=1}^n L_i\ind\{s(X_i)\leq t\}}{\sum_{\ell=1}^m \ind\{s(X_{n+\ell}) \leq t\}} \cdot \frac{m}{n+1}.
\$
On the other hand, 
\$
\tilde{t}_j = \max\Big\{t\colon \tilde{\textrm{FR}}(t) \leq \alpha\Big\}, \quad \textrm{where}\quad \tilde{\textrm{FR}}(t) = \frac{\sum_{i=1}^n L_i\ind\{s(X_i)\leq t\}}{ \sum_{\ell=1}^m \ind\{s(X_{n+\ell}) \leq t\}}\cdot \frac{m}{n+1}.
\$
For the purpose of proof, we define 
\$
t_j(\ell) = \max\Big\{t\colon  {\textrm{FR}}(t;\ell) \leq \alpha\Big\}, \quad \textrm{where}\quad {\textrm{FR}}(t;\ell) = \frac{\ell \ind\{s(X_{n+j})\leq t\} + \sum_{i=1}^n L_i\ind\{s(X_i)\leq t\}}{\sum_{\ell=1}^m \ind\{s(X_{n+\ell}) \leq t\}}\cdot \frac{m}{n+1}.
\$
By definition, we know that for any $\ell\in [0,1]$, 
\$
\hat{\textrm{FR}}(t) \geq  {\textrm{FR}}(t;\ell) \geq \tilde{\textrm{FR}}(t) ,
\$
and hence $\hat{t}_j \leq t_j(\ell) \leq \tilde{t}_j$. 
Therefore, writing $T_j = t_j(L_{n+j})$, we have 
\$
\EE[L_{n+j}e_j] &= \frac{m}{\alpha}\cdot \EE\Bigg[L_{n+j} \frac{\ind\{s(X_{n+j}) \leq \hat{t}_j\} }{\sum_{\ell=1}^m \ind\{s(X_{n+\ell}) \leq \tilde{t}_j\}}\Bigg]  \\ 
&\leq  \frac{m}{\alpha}\cdot\EE\Bigg[L_{n+j} \frac{\ind\{s(X_{n+j}) \leq  {t}_j(L_{n+j})\} }{\sum_{\ell=1}^m \ind\{s(X_{n+\ell}) \leq {t}_j(L_{n+j})\}}\Bigg] \\ 
&\leq  \frac{m}{\alpha}\cdot\EE\Bigg[ \frac{L_{n+j}\ind\{s(X_{n+j}) \leq T_j\} }{L_{n+j}\ind\{s(X_{n+j}) \leq  T_j\} 
 + \sum_{i=1}^n L_i\ind\{s(X_{i}) \leq T_j\}}\Bigg] \cdot \frac{\alpha(n+1)}{m}.
\$ 
Note that $T_j$ is permutation invariant to $(Z_1,\dots,Z_n,Z_{n+j})$ where $Z_{n+j}=(X_{n+j},Y_{n+j})$. Thus, by exchangeability, we have 
\$
\EE[L_{n+j}e_j] =  \frac{m}{\alpha}\cdot \frac{1}{n+1}  \cdot \frac{\alpha(n+1)}{m} \leq 1.
\$

\subsection{Computation shortcuts of SCoRE under distribution shift}
\label{app:subsec_shortcut_w}

Proposition~\ref{prop:w_single_equiv} presents the computation shortcuts for MDR control in SCoRE under covariate shift, parallel to Proposition~\ref{prop:single_equiv}. 

\begin{prop} 
\label{prop:w_single_equiv}
    For $\gamma \leq \alpha$, we have 
    \$
    \ind\{E_{\gamma,n+1}\geq 1/\alpha\} = \ind \bigg\{  \frac{w(X_{n+1}) +\sum_{i=1}^n w(X_i) L_i \ind {\{s(X_i)\leq s(X_{n+1})\}}}{\sum_{i=1}^{n+1} w(X_i)} \leq \gamma \bigg\}.
    \$
    For $\gamma > \alpha$, we have
    \$
    \ind\{E_{\gamma,n+1}\geq 1/\alpha\} = \ind \bigg\{  &\frac{w(X_{n+1}) +\sum_{i=1}^n w(X_i) L_i \ind {\{s(X_i)\leq s(X_{n+1})\}}}{\sum_{i=1}^{n+1} w(X_i)} \leq \gamma,~ \\
    &\quad \text{and} \quad \frac{\ell \cdot w(X_{n+1}) +\sum_{i=1}^n w(X_i) L_i \ind  \{s(X_i)\leq t\} }{\sum_{i=1}^{n+1} w(X_i)} \notin (\alpha,\gamma\big], ~\forall t\in \cM, \ell\in [0,1]
    \bigg\}.
    \$
\end{prop}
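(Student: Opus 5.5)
The argument repeats the one for Proposition~\ref{prop:single_equiv}, replacing the counts with weighted sums. Write $w_i=w(X_i)$, $W=\sum_{i=1}^{n+1}w_i$ and $S=s(X_{n+1})$. Define
\[
\mathrm{F}(t;\ell)=\frac{\sum_{i=1}^n w_iL_i\ind\{s(X_i)\le t\}+w_{n+1}\ell\ind\{S\le t\}}{W}.
\]
By~\eqref{eq:eval_mdr_w}, $E_{\gamma,n+1}=\inf_{\ell\in[0,1]}\ind\{S\le t_\gamma(\ell)\}/\mathrm{F}(t_\gamma(\ell);\ell)$. So $E_{\gamma,n+1}\ge 1/\alpha$ holds if and only if, for every $\ell\in[0,1]$, both (a) $S\le t_\gamma(\ell)$ and (b) $\mathrm{F}(t_\gamma(\ell);\ell)\le\alpha$. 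The plan is to translate (a) and (b) into the stated conditions.

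\emph{Step 1 (condition (a)).} The map $\ell\mapsto\mathrm{F}(t;\ell)$ is nondecreasing for each $t$, so $t_\gamma(\ell)$ is nonincreasing in $\ell$. Hence (a) holds for all $\ell$ if and only if $S\le t_\gamma(1)$.

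\emph{Step 2 (reformulating Step 1).} Next I show that $S\le t_\gamma(1)$ is equivalent to $\mathrm{F}(S;1)\le\gamma$. Here
\[
\mathrm{F}(S;1)=\frac{w_{n+1}+\sum_{i=1}^n w_iL_i\ind\{s(X_i)\le S\}}{W},
\]
which is exactly the first displayed ratio in the statement.
\begin{itemize}
\item If $\mathrm{F}(S;1)\le\gamma$: since $S\in\cM$, the definition of $t_\gamma(1)$ as a maximum gives $t_\gamma(1)\ge S$.
\item If $S\le t_\gamma(1)$: $t\mapsto\mathrm{F}(t;1)$ is nondecreasing in $t$ (all weights and risks are nonnegative), so $\mathrm{F}(S;1)\le\mathrm{F}(t_\gamma(1);1)\le\gamma$.
\end{itemize}

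\emph{Step 3, case $\gamma\le\alpha$.} By definition of $t_\gamma(\ell)$, $\mathrm{F}(t_\gamma(\ell);\ell)\le\gamma\le\alpha$ whenever $t_\gamma(\ell)>-\infty$. Under (a) this is always the case, so (b) holds automatically and the indicator reduces to the first condition.

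\emph{Step 3, case $\gamma>\alpha$.} Here (b) says that, for every $\ell$, the value $\mathrm{F}(t_\gamma(\ell);\ell)$ lies in $[0,\gamma]$ but not in $(\alpha,\gamma]$.
\begin{itemize}
\item Sufficiency: assume the first condition holds and $\mathrm{F}(t;\ell)\notin(\alpha,\gamma]$ for all $t\in\cM$ and $\ell\in[0,1]$. Then (a) holds by Steps 1–2, and since $\mathrm{F}(t_\gamma(\ell);\ell)\le\gamma$ by construction, it must be $\le\alpha$. So (b) holds.
\item Necessity: this is the main obstacle. We know $\mathrm{F}(t_\gamma(\ell);\ell)\le\alpha$ for all $\ell$, and must rule out any pair $(t,\ell)$ with $\mathrm{F}(t;\ell)\in(\alpha,\gamma]$.

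Suppose such a pair exists. Then $t\le t_\gamma(\ell)$ by maximality of $t_\gamma(\ell)$. Monotonicity in $t$ gives $\mathrm{F}(t_\gamma(\ell);\ell)\ge\mathrm{F}(t;\ell)>\alpha$, which contradicts (b) at this $\ell$.
\end{itemize}

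Combining the two directions gives the stated equivalence. Nothing beyond nonnegativity of the weights $w_i$ and risks $L_i$ is used, so the unweighted proof transfers verbatim.

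The only delicate points are bookkeeping ones:
\begin{itemize}
\item Both $\ell$ and $t$ enter $\mathrm{F}$ monotonically, and $\mathrm{F}(t;\ell)$ depends on $\ell$ only when $S\le t$. The conditions hold for all $\ell$ as required, so no issue arises.
\item The convention $E_{\gamma,n+1}=0$ when $t_\gamma(\ell)=-\infty$ is consistent with (a) failing in that case.
\end{itemize}
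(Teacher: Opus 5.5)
Your proof is correct and follows the same route as the paper's: reduce $E_{\gamma,n+1}\ge 1/\alpha$ to conditions (a) and (b) holding for every $\ell$, collapse (a) to $\mathrm{F}(s(X_{n+1});1)\le\gamma$ by monotonicity, and note that (b) is automatic for $\gamma\le\alpha$ and equivalent to the ``no value in $(\alpha,\gamma]$'' condition otherwise. Your sufficiency/necessity argument for $\gamma>\alpha$ is more explicit than the paper's.

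One bookkeeping point should be stated rather than waved away. The statement's second ratio contains $\ell\, w(X_{n+1})$ without the indicator $\ind\{s(X_{n+1})\le t\}$, so it differs from your $\mathrm{F}(t;\ell)$ when $t<s(X_{n+1})$. This is harmless, and the paper glosses over it the same way ("assuming the first condition, this reduces to"):
\begin{itemize}
\item Under (a) you have $t_\gamma(\ell)\ge s(X_{n+1})$, so $\mathrm{F}(t_\gamma(\ell);\ell)$ equals the statement's ratio evaluated at $t_\gamma(\ell)$.
\item That ratio is nondecreasing in $t$.
\item Every $t<s(X_{n+1})$ lies below $t_\gamma(\ell)$.
\end{itemize}
Hence both directions of your argument go through with the statement's ratio in place of $\mathrm{F}$. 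In the necessity step, for $t<s(X_{n+1})$, use $t<s(X_{n+1})\le t_\gamma(\ell)$ instead of maximality.
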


Similarly, Proposition~\ref{prop:mult_equiv_w} gives the covariate shift analogue of Proposition~\ref{prop:mult_equiv} and Algorithm~\ref{alg:sel}. 

\begin{algorithm}
    \small
    \captionsetup{font=small}
    \caption{Efficient computation of e-values for SDR control under covariate shift}
    \label{alg:w_sel}
\begin{algorithmic}[1]
    \REQUIRE{Labeled data $\{(X_i, Y_i)\}_{i=1}^n$, test data $\{X_{n+j}\}_{j=1}^m$, pretrained predictor $s$, covariate shift weights $w$.} \\[0.5ex]
    \STATE Compute the true calibration risks $L_i = \cL(f, X_i, Y_i)$ for $i = 1, \dots, n$.
    \STATE Obtain the predicted risks for calibration and test data $\cM:=\{s(X_i)\}_{i=1}^{n+m}$.
    \FOR{$j=1, \dots, m$}  
        \STATE For all $t \in \cM$, compute 
    $$ \bar{\ell}(t)=\frac{\gamma}{m} \frac{\sum_{i=1}^n w(X_i) + w(X_{n+j}) }{w(X_{n+j})} \big(1 + \sum_{\ell \neq j}\ind\{s(X_{n+\ell}) \leq t\} \big) - \sum_{i=1}^n \frac{w(X_i)}{w(X_{n+j})} L_i \ind\{s(X_{i}) \leq t\}.$$
        \STATE Compute the thresholds $t_{\gamma, n+j}(0)$ and $t_{\gamma, n+j}(1)$.
        \IF{$s(X_{n+j}) > t_{\gamma, n+j}(1)$}
            \STATE Set $E_{\gamma, n+j} = 0$.
        \ELSIF{$t_{\gamma, n+j}(0) = t_{\gamma, n+j}(1)$}
            \STATE $\displaystyle
            \textnormal{Set } E_{\gamma, n+j} = \frac{\sum_{i=1}^n w(X_i) + w(X_{n+j})}{w(X_{n+j}) + \sum_{i=1}^n w(X_i)\cdot L_i \ind\{s(X_i) \leq t_{\gamma, n+j}(1)\} }.
            $
        \ELSE
            \STATE Initialize the set
            $\displaystyle 
                \cM^* = \{t \in \cM: t \geq s(X_{n+j})~\text{and}~ \fr_{n+j}(t; 0) \leq \gamma\} \cap [t_{\gamma, n+j}(1), t_{\gamma, n+j}(0)].
            $
            \STATE Remove all element $t \in \cM^*$ if there exists any $t' \in \cM, t' > t, \fr(t';0) \leq \gamma$ such that $\ell(t') > \ell(t)$.
            \STATE $\displaystyle
            \textnormal{Set } E_{\gamma, n+j} = \inf_{t \in \cM^*}\frac{\sum_{i=1}^n w(X_i) + w(X_{n+j})}{ w(X_{n+j}) \cdot \bar\ell(t) + \sum_{i=1}^n w(X_i)\cdot L_i \ind\{s(X_i) \leq t\} }.
            $
        \ENDIF
    \ENDFOR  
    \ENSURE{The computed e-values $\{E_{\gamma, n+j}\}_{j=1}^m$.}
\end{algorithmic}
\end{algorithm}

\begin{prop} \label{prop:mult_equiv_w}
    The output of Algorithm~\ref{alg:w_sel} equals $E_{\gamma,n+j}$ defined in~\eqref{eq:def_eval_sel_w}, whose computation complexity is at most $\cO((n+m)m + (n+m)\log(n+m) )$.
\end{prop}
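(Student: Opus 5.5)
The plan follows the unweighted argument for Proposition~\ref{prop:mult_equiv} with weights inserted. Fix $j$ and write $W=w(X_{n+j})+\sum_{i=1}^n w(X_i)$. Let $E(\ell)$ be the quantity inside the infimum in~\eqref{eq:def_eval_sel_w}, and $t(\ell)=t_{\gamma,n+j}(\ell)$.

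\textbf{Step 1 (monotonicity of the threshold).} The numerator of $\fr_{n+j}(t;\ell)$ is nondecreasing in $\ell$, since $w(X_{n+j})\ge 0$, and the denominator does not depend on $\ell$. Hence the feasible set $\{t\in\cM\colon \fr_{n+j}(t;\ell)\le\gamma\}$ shrinks as $\ell$ grows. So $t(\ell)$ is nonincreasing, piecewise constant, and takes values in $\cM\cap[t(1),t(0)]$.

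\textbf{Step 2 (the two easy branches).}
\begin{itemize}
\item If $s(X_{n+j})>t(1)$, then $E(1)=0$, so $E_{\gamma,n+j}=0$.
\item Otherwise $s(X_{n+j})\le t(1)\le t(\ell)$ for every $\ell$. The indicators are then all one, and
\[
E(\ell)=\frac{W}{w(X_{n+j})\,\ell+\sum_{i=1}^n w(X_i)L_i\ind\{s(X_i)\le t(\ell)\}}.
\]
\item If moreover $t(0)=t(1)$, then $t(\cdot)$ is constant. The denominator is increasing in $\ell$, so the infimum is attained at $\ell=1$. This gives the formula in the second branch of Algorithm~\ref{alg:w_sel}.
\end{itemize}

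\textbf{Step 3 (main obstacle: reducing the infimum over $\ell$ to one over thresholds).} Partition $[0,1]$ into the level sets $\{\ell\colon t(\ell)=t\}$ for $t\in\cM$.

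First, characterize feasibility. For $t\ge s(X_{n+j})$, the inequality $\fr_{n+j}(t;\ell)\le\gamma$ is linear in $\ell$. Solving it shows it is equivalent to $\ell\le\bar\ell(t)$, with $\bar\ell(t)$ exactly as defined in the algorithm; this is the point where the weights $w(X_i)/w(X_{n+j})$ enter.

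Next, describe each level set. We have $t(\ell)=t$ iff $\ell\le\bar\ell(t)$ and $\ell>\bar\ell(t')$ for every $t'>t$ in $\cM$. Thresholds $t'$ with $\fr_{n+j}(t';0)>\gamma$ are never feasible and can be ignored. The level set of $t$ is therefore the interval $\bigl(\max_{t'>t}\bar\ell(t'),\,\bar\ell(t)\bigr]$. It is nonempty iff no feasible $t'>t$ has $\bar\ell(t')\ge\bar\ell(t)$, which is precisely the pruning step of the algorithm.

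Then locate the infimum. On each level set the denominator is increasing in $\ell$, so the infimum over that piece is attained at its right endpoint $\ell=\bar\ell(t)$. Taking the infimum over the surviving $t\in\cM^*$ yields the stated formula.

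I expect the delicate points to be the following.
\begin{itemize}
\item Ties and boundary cases. I must check that the endpoint $\bar\ell(t)$ lies in $[0,1]$ for $t\in\cM^*$; at $t=t(1)$ the right endpoint should be $\min\{\bar\ell(t),1\}$. I would verify that restricting to $[t(1),t(0)]$ handles this. If it does not, I would argue that the case $\bar\ell\ge1$ only reproduces the value at $\ell=1$, which is already covered.
\item Reading "$\ell(t')>\ell(t)$" in the pruning step as $\bar\ell$, with weak versus strict inequality consistent with the half-open intervals above.
\end{itemize}

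\textbf{Step 4 (complexity).}
\begin{itemize}
\item Sort $\cM$ once, at cost $O((n+m)\log(n+m))$.
\item Prefix sums of $w(X_i)L_i\ind\{s(X_i)\le t\}$ and of the test counts give every $\fr_{n+j}(t;\ell)$ and $\bar\ell(t)$ in $O(1)$ per $t$. The leave-one-out count subtracts one indicator.
\item For each $j$, compute $t(0)$ and $t(1)$, perform the pruning by one backward scan maintaining a running suffix maximum of $\bar\ell$ over feasible $t'$, and take the final minimum. Each of these costs $O(n+m)$.
\end{itemize}
Summing over $j$ gives $O((n+m)m+(n+m)\log(n+m))$.
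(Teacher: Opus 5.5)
Your route is the paper's: monotonicity of $t_{\gamma,n+j}(\cdot)$ in $\ell$, the two easy branches, level sets $\{\ell\colon t(\ell)=t\}$ as intervals with right endpoint $\bar\ell(t)$, the infimum taken at that endpoint, and a sort-plus-prefix-sum complexity count. You are in fact more careful than the paper's proof, which writes the level sets as closed intervals $[\max\bar\ell(t'),\bar\ell(t)]$ and never caps $\bar\ell(t)$ at $1$.

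The two boundary issues you flag are real, however, and your proposed handling does not close them.

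\emph{The endpoint at $t(1)$.} We always have $\bar\ell(t(1))\ge 1$, because $t(1)$ is feasible at $\ell=1$. This $t(1)$ is never pruned, since every feasible $t'>t(1)$ has $\bar\ell(t')<1$. So Algorithm~\ref{alg:w_sel} evaluates its term at $\bar\ell(t(1))$. If $\bar\ell(t(1))>1$, that term is strictly smaller than the true infimum $E(1)$ over this level set. It therefore does not ``reproduce the value at $\ell=1$'', and restricting to $[t(1),t(0)]$ does not prevent this.

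\emph{Ties in the pruning step.} The pruning step uses a strict inequality. A $t$ tied with some feasible $t'>t$ therefore survives even though its level set is empty. You cannot fix this by rereading the algorithm.

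\emph{The missing identity.} Both issues are resolved by one observation. By definition of $\bar\ell$,
\[
w(X_{n+j})\,\bar\ell(t)+\sum_{i=1}^n w(X_i)L_i\ind\{s(X_i)\le t\}=\frac{\gamma W}{m}\Big(1+N(t)\Big),\qquad N(t)=\sum_{k\neq j}\ind\{s(X_{n+k})\le t\}.
\]
Hence every term the algorithm minimizes equals $m/\{\gamma(1+N(t))\}$, which is nonincreasing in $t$.

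\emph{Finishing the argument.} Consider the general branch $t(0)>t(1)$.
\begin{enumerate}
\item $t(0)\in\cM^*$ and is never pruned.
\item Its level set is $[0,\bar\ell(t(0))]$ with $\bar\ell(t(0))<1$. So its term is a genuinely attained value of $E(\ell)$, and it is the smallest term in the algorithm's minimum.
\item Every other true level-set infimum is at least this value. For $t(1)<t\le t(0)$ the infimum equals $m/\{\gamma(1+N(t))\}$. For $t(1)$, the condition $\fr_{n+j}(t(1);1)\le\gamma$ gives $E(1)\ge m/\{\gamma(1+N(t(1)))\}$.
\end{enumerate}
Thus both $E_{\gamma,n+j}$ and the algorithm's output equal $m/\{\gamma(1+N(t(0)))\}$, whatever spurious or uncapped elements $\cM^*$ contains. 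With this identity added, your argument and the complexity bound go through.
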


The proof of the Propositions can be found in Appendix Section~\ref{sec:prop_w_single_equiv} and Section~\ref{app:subsec_mult_equiv_w} respectively.

\subsection{Doubly robust calibration of MDR under covariate shift}
\label{app:subsec_mdr_dr_discussion}

In this part, we present a general approach to achieve double robustness in MDR control under unknown covariate shift when multiple samples from the test distribution $Q$ are available. 
The key idea is to use an estimate $\hat{l}(x)$ for the conditional risk $l(x):=\EE[\cL(f,X,Y)\given X=x]$ and calibrate the weights to satisfy finite-sample balance. 

The following assumption posits that the estimated weights must enforce a finite-sample balance condition in the thresholded, estimated loss, serving to protect against weight misspecification.  

\begin{assumption}\label{assump:balance_mdr}
    $\{\hat{w}_i\}_{i=1}^{n+m}$ obey 
    % $\sum_{i=1}^n \hat{w}_i=O_P(n)$ and % this is implied by the last condition
    the following approximate balancing condition:
    \$
    &\frac{1}{n}\sum_{i=1}^n \hat{w}_i \hat{l}(X_i)\ind\{s(X_i)\leq \hat{t}\} = \frac{1}{m}\sum_{j=1}^m \hat{l}(X_{n+j}) \ind\{s(X_{n+j})\leq \hat{t}\} + o_P(1),\quad \frac{1}{n}\sum_{i=1}^n \hat{w}_i  = 1+o_P(1),
    \$
    where $\hat{t} = \sup\{t\colon \frac{1}{m}\sum_{j=1}^m \hat{l}(X_{n+j}) \ind\{s(X_{n+j})\leq  {t}\} \leq \alpha\}$.  
\end{assumption}

Intuitively, the ``correct'' weights $w(X_i)$ should balance the empirical mean of any function across the two groups. While this might be difficult to achieve especially with misspecified weights, Assumption~\ref{assump:balance_mdr}  enforces this balance at a specific function: the estimated weights to have an equal (reweighted) mean between two groups at the estimate cutoff $\hat{t}$. Asymptotically, this ensures the unknown MDR under the test distribution is well approximated by the reweighted calibration data even though the weights may be misspecified.  

Given certain preliminary estimators $\hat{w}(\cdot)$ and $\hat{l}(\cdot)$, Assumption~\ref{assump:balance_mdr} can be fulfilled by weight calibration via an efficient covariate balancing procedure, see, e.g.,~\cite{zubizarreta2015stable,jin2025cross}. 

Theorem~\ref{thm:dr_mdr} ensures the double robustness in MDR control, in the sense that as long as either the weight or the conditional risk model is correct, we obtain asymptotic MDR control. Its proof is in Appendix~\ref{subsec:dr_mdr}.

\begin{theorem}\label{thm:dr_mdr}
    Take $\gamma=\alpha$, and suppose $\hat{l}(\cdot)$ is trained independent of the calibration and test data, and $s(X)$ has no point mass. Suppose Assumption~\ref{assump:balance_mdr} holds, and assume $\frac{1}{m}\sum_{i=1}^m (\hat{w}_i-\bar{w}(X_i))^2=o_P(1)$ %\tiancomment{$n$ or $m$ here?} \yingcomment{m; thanks!}
    and $\|\hat{l} - \bar{l}\|_{L_2(\PP_X)} = o_P(1)$ for some fixed functions $\bar{w}\colon \cX\to \RR$ and $\bar\ell \colon \cX\to \RR$, and $\sup_i \hat{w}_i\leq M$ for a fixed constant $M>0$. 
    In addition, denoting $G(t) =\EE_P[\bar{w}(X)l(X) \ind\{s(X)\leq t\}]$, we assume $G(t)$ is continuous and strictly increasing at $t^*:=\sup\{t\colon G(t)/\EE_P[\bar{w}(X)]\leq \alpha\}$.  
    Also assume the mapping $t\mapsto \EE_Q[\bar{l}(X)\ind\{s(X)\leq t\}]$ is continuous and strictly increasing at $t^\dagger:=\sup\{t\colon \EE_Q[\bar{l}(X)\ind\{s(X)\leq t\}]\leq \alpha\}$.
    Let $\textnormal{MDR}_{n,m}$ be the MDR of SCoRE with estimated weights $\hat{w}_i$. Then, we have $\limsup_{n,m\to\infty} \textnormal{MDR}_{n,m} \leq \alpha$ under either of the two conditions:
    \begin{enumerate}[label=(\roman*)]
        \item $\bar{w}(\cdot) = w(\cdot)$, i.e., the weights are consistent. 
        \item $\bar{l}(\cdot) = l(\cdot)$, i.e., the risk model is consistent.
    \end{enumerate}
\end{theorem}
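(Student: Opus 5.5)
The plan is to reduce the MDR of SCoRE with estimated weights to the difference between the test threshold and a deterministic limit, and then show that this limit has test-distribution risk at most $\alpha$ under either condition. Write $\hat\psi_{n+1}$ for the decision; by Proposition~\ref{prop:w_single_equiv} with $\gamma=\alpha$,
\[
\hat\psi_{n+1}=\ind\Bigl\{\tfrac{\hat w_{n+1}+\sum_{i=1}^n \hat w_i L_i\ind\{s(X_i)\le s(X_{n+1})\}}{\sum_{i=1}^{n+1}\hat w_i}\le\alpha\Bigr\}.
\]
Since $\hat w_i\le M$, the test point's own contribution is $O(1/n)$. Hence $\hat\psi_{n+1}$ agrees, up to boundary events of vanishing probability, with $\ind\{s(X_{n+1})\le \hat t_w\}$, where $\hat t_w=\sup\{t\colon \hat G_n(t)\le\alpha\}$ and $\hat G_n(t)=\sum_{i\le n}\hat w_i L_i\ind\{s(X_i)\le t\}/\sum_{i\le n}\hat w_i$. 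Because $s(X)$ has no atoms and $L\le1$, it suffices to show that $\hat t_w\to t_0$ in probability for some $t_0$ satisfying $\EE_Q[l(X)\ind\{s(X)\le t_0\}]\le\alpha$. Dominated convergence then gives $\limsup \mathrm{MDR}_{n,m}\le\alpha$.

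\textbf{Case (i), $\bar w=w$.} First, $\hat w_i$ can be replaced by $w(X_i)$ inside $\hat G_n$ at cost $o_P(1)$ uniformly in $t$. This uses the $L_2$ closeness of the weights, Cauchy--Schwarz, and boundedness of $L$. Next, a Glivenko--Cantelli argument for the class $\{\ind\{s(x)\le t\}\}$ multiplied by the bounded, integrable $wL$ gives $\sup_t|\hat G_n(t)-G(t)/\EE_P[w]|\to0$. Since $G$ is continuous and strictly increasing at $t^*$, the usual argmax/inversion step yields $\hat t_w\to t^*$. Finally, $G(t^*)/\EE_P[w]=\EE_Q[l(X)\ind\{s(X)\le t^*\}]$ by the change of measure, and continuity gives that this equals at most $\alpha$.

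\textbf{Case (ii), $\bar l=l$.} Here the weights may be wrong, and the balancing condition in Assumption~\ref{assump:balance_mdr} does the work. The same uniform argument gives $\hat t_w\to t^*$, now with $\bar w$ in place of $w$. The test-side quantity $\frac1m\sum_j \hat l(X_{n+j})\ind\{s(X_{n+j})\le t\}$ converges uniformly to $\EE_Q[l\ind\{s\le t\}]$, using $\|\hat l-l\|_{L_2}=o_P(1)$ and Glivenko--Cantelli. Hence $\hat t\to t^\dagger$ and $\EE_Q[l\ind\{s\le t^\dagger\}]\le\alpha$. It remains to identify $t^*=t^\dagger$. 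On the calibration side, $\EE[L_i\mid X_i]=l(X_i)$ and $\hat l\approx l$, so
\[
\tfrac1n\sum_{i=1}^n\hat w_iL_i\ind\{s(X_i)\le t\}\approx\tfrac1n\sum_{i=1}^n\hat w_i\hat l(X_i)\ind\{s(X_i)\le t\}
\]
up to a mean-zero martingale-type term, which is controlled by boundedness of $\hat w$. Evaluated at $t=\hat t$, balancing shows that this equals the test-side average plus $o_P(1)$, which in turn equals $\alpha+o_P(1)$. Together with $\frac1n\sum\hat w_i\to1$, this gives $G(t^\dagger)/\EE_P[\bar w]=\alpha$ in the limit. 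Strict monotonicity at $t^*$ then forces $t^*=t^\dagger$, so the limiting threshold has $Q$-risk at most $\alpha$.

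The main obstacle is the step in case (ii). The balancing holds only at the single data-dependent point $\hat t$, not uniformly in $t$, so identifying $t^*$ with $t^\dagger$ requires care. The first approach to try is a uniform law for the weighted empirical process $t\mapsto\frac1n\sum\hat w_i(L_i-\hat l(X_i))\ind\{s(X_i)\le t\}$, obtained by conditioning on the weights, treating them as bounded, and applying a symmetrization bound. Combined with the convergence $\hat t\to t^\dagger$, this lets the balance at $\hat t$ transfer to the population identity at $t^\dagger$.
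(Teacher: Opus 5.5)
Your architecture matches the paper's. You reduce to a threshold rule, show the threshold converges to $t^*$ by swapping $\hat w_i$ for $\bar w(X_i)$ via Cauchy--Schwarz plus a uniform LLN, finish case (i) by change of measure, and in case (ii) use $\hat t\to t^\dagger$ and the balance at $\hat t$ to force $t^\dagger=t^*$. The only structural difference is the first step. You invoke Proposition~\ref{prop:w_single_equiv} to get $\hat\psi_{n+1}\le\ind\{s(X_{n+1})\le\hat t_w\}$, with a threshold built from calibration data alone. This bound in fact holds deterministically: since $\alpha<1$, the test point's own term only tightens the inequality, so no boundary-event argument is needed. The paper instead plugs $\ell=L_{n+j}$ into the infimum to get an oracle e-value $\bar E_{n+j}\ge E_{n+j}$. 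Its threshold $\hat t_{n+j}$ depends on $j$ and on $L_{n+j}$, so the paper must prove convergence uniformly in $j$ (Claim~\ref{claim:t_conv_mdr}) through sandwich bounds. Your reduction is slightly cleaner; the paper's uses only the definition of the e-value.

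The step you single out as the main obstacle in case (ii) does not need to exist, and the remedy you propose relies on an unstated hypothesis. You treat $\frac1n\sum_i\hat w_i(L_i-\hat l(X_i))\ind\{s(X_i)\le t\}$ as a mean-zero term ``controlled by boundedness of $\hat w$'' and handle it by conditioning. That requires the $\hat w_i$ to be determined by the covariates, so that given all the $X$'s the $L_i-l(X_i)$ are independent and centered. The theorem allows general data-dependent weights, and for label-dependent weights boundedness alone gives nothing.

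There is no need to bring the labels into the identification at all. As the paper does, bound $\sup_t\bigl|\frac1n\sum_i\hat w_i\hat l(X_i)\ind\{s(X_i)\le t\}-G(t)\bigr|$ by three terms:
\begin{itemize}
\item $\frac1n\sum_i\hat w_i|\hat l(X_i)-l(X_i)|$,
\item $\frac1n\sum_i l(X_i)|\hat w_i-\bar w(X_i)|$,
\item a uniform-LLN term for the fixed function $\bar w\,l$.
\end{itemize}
The first two are $o_P(1)$ by Cauchy--Schwarz, $\hat w_i\le M$, and independence of $\hat l$ from the calibration data. Both sides of the balancing identity then converge uniformly to continuous limits. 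Evaluating at $\hat t\to t^\dagger$ gives $G(t^\dagger)=\EE_Q[l(X)\ind\{s(X)\le t^\dagger\}]=\alpha$ and $\EE_P[\bar w(X)]=1$, and the rest of your argument goes through. If you prefer your martingale route, first replace $\hat w_i$ by $\bar w(X_i)$ at $o_P(1)$ cost, exactly as you already did for the $L_i$-weighted process. The remaining term is then a genuinely centered partial-sum process.
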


Theorem~\ref{thm:dr_mdr} operates under the model convergence conditions $\frac{1}{n}\sum_{i=1}^m (\hat{w}_i-\bar{w}(X_i))^2=o_P(1)$ and $\|\hat{l} - \bar{l}\|_{L_2(\PP_X)} = o_P(1)$. If the weights $\{\hat{w}_i\}$ are obtained by a balancing approach with the data-driven features $\phi(x) = (\hat{w}(x), \hat{l}(x)\ind\{s(x)\leq \hat{t}\} )$ like in~\cite{jin2025cross}, the first condition would be fulfilled if the preliminary weight function $\hat{w}(\cdot)$ converges to any fixed function. The proof follows exactly the same idea as~\citet[Theorem 3.1]{jin2025cross}, which we omit here for brevity. 

Besides the standard model convergence conditions, Theorem~\ref{thm:dr_mdr} posits a mild condition on the limiting weighted risk function $G(t)$, which ensures $\hat{t}$ stabilizes at a constant to facilitate analysis. This can be ensured if $s(X)$ has continuous support and has no point mass, e.g., by adding tiny random perturbations.

\subsection{Doubly robust calibration of SDR under covariate shift}
\label{app:subsec_sdr_dr}

In this part, we present a strategy for doubly robust calibration of SDR control. To protect against weight misspecification, we impose the following balancing condition on the estimated weights. Compared with the MDR version, here we enforce balance at a distinct cutoff $\hat{t}$ that is relevant to the SDR. 

\begin{assumption}\label{assump:balance_sdr}
    $\{\hat{w}_i\}_{i=1}^{n+m}$ obey the following approximate balancing condition:
    \$
    &\frac{1}{n}\sum_{i=1}^n \hat{w}_i \hat{l}(X_i)\ind\{s(X_i)\leq \hat{t}\} = \frac{1}{m}\sum_{j=1}^m \hat{l}(X_{n+j}) \ind\{s(X_{n+j})\leq \hat{t}\} + o_P(1),\quad \frac{1}{n}\sum_{i=1}^n \hat{w}_i  = 1+o_P(1),
    \$
    where $\hat{t} = \sup\{t\colon\frac{1}{n}\sum_{i=1}^n \hat{w}_i \hat{l}(X_i)\ind\{s(X_i)\leq  {t}\} / (1\vee \sum_{j=1}^m  \ind\{s(X_{n+j})\leq  {t}\}) \leq \alpha\}$.
\end{assumption}

The balancing condition protects against weight misspecification and can ensure SDR control if the conditional risk is consistently estimated. The proof of Theorem~\ref{thm:dr_sdr} is in Appendix~\ref{app:subsec_proof_dr_sdr}. 

\begin{theorem}\label{thm:dr_sdr}
    Take $\gamma=\alpha$, and suppose $\hat{l}(\cdot)$ is trained independent of the calibration and test data, and $s(X)$ has no point mass. Suppose Assumption~\ref{assump:balance_sdr} holds, and assume  $\frac{1}{m}\sum_{i=1}^m (\hat{w}_i-\bar{w}(X_i))^2=o_P(1)$, $\|\hat{l} - \bar{l}\|_{L_2(\PP_X)} = o_P(1)$ for some fixed functions $\bar{w}\colon \cX\to \RR$ and $\bar\ell \colon \cX\to \RR$, and $\sup_i \hat{w}_i\leq M$ for a fixed constant $M>0$. Define
    \$
    \bar{F}(t) = \frac{\EE_P[\bar{w}(X)L\ind\{s(X)\leq t\}]}{ \PP_Q(s(X)\leq t)\cdot \EE_P[\bar{w}(X)]}.
    \$
    % $ \bar{F}(t) = \frac{\EE_P[\bar{w}(X)L\ind\{s(X)\leq t\}]}{ \PP(s(X_{n+j})\leq t)\cdot \EE_P[\bar{w}(X)]}$. 
    Suppose $\bar{F}(t)$ is continuous at $t^* = \sup\{t\colon \bar{F}(t)\leq \alpha\}$, and for any sufficiently small $\epsilon>0$, there exists some $t\in (t^*-\epsilon,t^*)$ such that $\bar{F}(t)<\alpha$. Let $\sdr_{n,m}$ be the SDR of SCoRE with estimated weights $\{\hat{w}_i\}$. Then $\limsup_{n,m}\sdr_{n,m}\leq \alpha$ under either of the two conditions:
    \begin{enumerate}[label=(\roman*)]
        \item $\bar{w}(\cdot) = w(\cdot)$, i.e., the weights are consistent. 
        \item $\bar{l}(\cdot) = l(\cdot)$, i.e., the risk model is consistent.
    \end{enumerate}
\end{theorem}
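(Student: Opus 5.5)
The plan is to show that, under either condition, the SCoRE-SDR selection set with estimated weights asymptotically coincides with the oracle thresholding rule $\{j\colon s(X_{n+j})\leq t^*\}$. The realized SDR then converges to $\EE_Q[L\ind\{s(X)\leq t^*\}]/\PP_Q(s(X)\leq t^*)$, and the balancing condition (Assumption~\ref{assump:balance_sdr}) or weight consistency is used to identify this limit with $\bar F(t^*)\leq\alpha$. I focus on the deterministic e-BH version; the boosted versions follow the same way, since the boosting factors only enlarge the selection inside the same limiting threshold region.

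\textbf{Step 1: uniform convergence of the thresholds.} Plug $\hat w_i$ into $\fr_{n+j}(t;\ell)$ in~\eqref{eq:def_eval_sel_w}. The numerator divided by $n$ is $\frac1n\sum_i \hat w_iL_i\ind\{s(X_i)\leq t\}$ plus a term $\ell\hat w_{n+j}/n=O(M/n)$. The bound $\sup_i\hat w_i\leq M$, the $L_2$ convergence of $\hat w_i$ to $\bar w(X_i)$ (via Cauchy--Schwarz), and Glivenko--Cantelli for the class $\{\bar w L\ind\{s\leq t\}\}_t$ together give
\[
\sup_t\Big|\tfrac1n\sum_{i=1}^n \hat w_iL_i\ind\{s(X_i)\leq t\}-\EE_P[\bar w(X)L\ind\{s(X)\leq t\}]\Big|\to0 .
\]
The denominator $\frac1m(1+\sum_{k\neq j}\ind\{s(X_{n+k})\leq t\})$ converges uniformly in $t$ and $j$ to $\PP_Q(s(X)\leq t)$. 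The normalization $\frac{m}{\hat w_{n+j}+\sum_i\hat w_i}$, together with $\frac1n\sum_i\hat w_i=1+o_P(1)$, turns this into $\EE_P[\bar w]$ in the limit; the balance condition forces $\EE_P[\bar w]=1$. Hence $\sup_{j,\ell,t}|\fr_{n+j}(t;\ell)-\bar F(t)|\to0$ on $t$-ranges where $\PP_Q(s\leq t)$ is bounded away from zero. Continuity of $\bar F$ at $t^*$ and the assumption that $\bar F<\alpha$ at points just left of $t^*$ then give $\sup_{j,\ell}|t_{\gamma,n+j}(\ell)-t^*|\to0$ in probability, exactly as in Theorem~\ref{thm:asymptotic_sdr}(i).

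\textbf{Step 2: stability of e-BH.} By Step 1, $E_{\gamma,n+j}=\ind\{s(X_{n+j})\leq t^*+o(1)\}\cdot\frac{m}{\alpha\,\#\{k\colon s(X_{n+k})\leq t^*\}}(1+o_P(1))$ uniformly in $j$. Since $s(X)$ has no atoms, only $o_P(m)$ test points fall in the vanishing band around $t^*$. I will therefore sandwich the e-BH rejection set $\cR$ between $\{j\colon s(X_{n+j})\leq t^*-\delta\}$ and $\{j\colon s(X_{n+j})\leq t^*+\delta\}$ with probability tending to one, then let $\delta\to0$. The left side of this sandwich is where the condition $\bar F(t)<\alpha$ on $(t^*-\epsilon,t^*)$ is needed. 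The resulting factor $1+o(1)$ in the e-value is what makes the e-BH count match $|\cR|$ up to $o_P(m)$.

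I expect Step 2 to be the main obstacle, because the e-BH step-up interacts with the $o(1)$ multiplicative errors. What I would try first is to bound $|\cR|/m$ from above and below by $\PP_Q(s\leq t^*\pm\delta)$ directly from the definition of $\hat\tau$. The FDP is bounded, so bounded convergence turns these sandwich bounds into
\[
\limsup \sdr_{n,m}\leq \frac{\EE_Q[L\ind\{s(X)\leq t^*\}]}{\PP_Q(s(X)\leq t^*)}.
\]
If $\PP_Q(s\leq t^*)=0$, the SDR tends to zero trivially.

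\textbf{Step 3: identifying the limit.} In case (i), $\bar w=w$, so $\EE_P[wL\ind\{s\leq t^*\}]=\EE_Q[L\ind\{s\leq t^*\}]$ and $\EE_P[w]=1$. The limit therefore equals $\bar F(t^*)\leq\alpha$ by continuity.

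In case (ii), $\bar l=l$. The same uniform-convergence argument as in Step 1 shows that the balancing cutoff $\hat t$ of Assumption~\ref{assump:balance_sdr} converges to $t^*$, because $\EE_P[\bar w\,l\,\ind\{s\leq t\}]=\EE_P[\bar wL\ind\{s\leq t\}]$. Here I read the normalization in the definition of $\hat t$ as per-$m$, so that it matches $\bar F$. Passing the balance identity to the limit, using $\|\hat l-l\|_{L_2}=o_P(1)$ and the absence of atoms, gives
\[
\EE_P[\bar w\,l\,\ind\{s\leq t^*\}]=\EE_Q[l\,\ind\{s\leq t^*\}]=\EE_Q[L\ind\{s\leq t^*\}].
\]
Hence the numerator of $\bar F(t^*)$ is the true test-distribution risk, and the limit again equals $\bar F(t^*)\leq\alpha$.
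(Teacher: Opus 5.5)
Your Steps 1 and 3 follow the paper. Step 1 is the uniform convergence of $\mathrm{FR}_{n+j}$ to $\bar F$, giving $\sup_{j,\ell}|t_{\gamma,n+j}(\ell)-t^*|\to 0$. Step 3 identifies the limit via $\bar w=w$, or via $\hat t\to t^*$ together with the balancing identity.

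The gap is in Step 2. Continuity of $\bar F$ at $t^*$ gives $\bar F(t^*)=\alpha$. Your own representation then puts every nonzero e-value at $\frac{m}{\alpha N^*}(1+o_P(1))$, with $N^*=\#\{k\colon s(X_{n+k})\le t^*\}$. That is exactly the e-BH cutoff $m/(\alpha\tau)$ for $\tau\approx N^*$, to first order. This is the critical case the paper explicitly calls unclear in the proof of Theorem~\ref{thm:asymptotic_sdr}.

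Whether e-BH rejects anything is therefore decided by the sign of the $o_P(1)$ terms, which your argument does not control. Your representation is consistent with all nonzero e-values lying slightly below $m/(\alpha\,\#\{k\colon E_{\gamma,n+k}>0\})$. In that case $\hat\tau=0$ and $\cR=\emptyset$. So neither the inclusion $\{j\colon s(X_{n+j})\le t^*-\delta\}\subseteq\cR$ with probability tending to one, nor a lower bound on $|\cR|/m$ ``directly from the definition of $\hat\tau$'', is available. The $1+o(1)$ factor does not make the e-BH count match $|\cR|$.

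What you need instead is a one-sided bound that holds on $\{\cR\neq\emptyset\}$, and only the upper half of your representation is required. If $j\in\cR$, then $\frac{m}{\alpha|\cR|}\le E_{\gamma,n+j}\le E_{\gamma,n+j}(0)=\frac{m}{\alpha N^*}(1+o_P(1))$, so $|\cR|\ge N^*(1-o_P(1))$. With high probability you also have $\cR\subseteq\{k\colon s(X_{n+k})\le t^*+\delta\}$. Hence, when $\cR\neq\emptyset$, the FDP is at most $\sum_k L_{n+k}\ind\{s(X_{n+k})\le t^*+\delta\}/(N^*(1-o_P(1)))$; when $\cR=\emptyset$ it is zero. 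That yields your limsup bound.

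The paper does this step deterministically. Plugging in $\ell=L_{n+j}$ gives $E_{n+j}\le\bar E_{n+j}\le\frac{m}{\alpha}\cdot\ind\{s(X_{n+j})\le\hat t_{n+j}\}/(1+\sum_{k\ne j}\ind\{s(X_{n+k})\le\hat t_{n+j}\})$, since $\bar{\mathrm F}_{n+j}(\hat t_{n+j})\le\alpha$; this is where $\gamma=\alpha$ enters. So selecting $j$ forces $|\cR|\ge 1+\sum_{k\ne j}\ind\{s(X_{n+k})\le\hat t_{n+j}\}$. The FDP is then bounded by $\sum_j L_{n+j}\ind\{s(X_{n+j})\le\hat t_{n+j}\}/(1\vee\sum_k\ind\{s(X_{n+k})\le\hat t_{n+j}\})$ before any limit is taken, and only afterwards is $\hat t_{n+j}\to t^*$ used.

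Also drop the claim about boosted variants. A superset of a possibly empty rejection set can be small, and boosted e-values are not bounded by $m/(\alpha\cdot\text{count})$, so the argument does not carry over. The theorem concerns plain e-BH in any case.
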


Apart from the standard convergence conditions, the conditions in Theorem~\ref{thm:asymptotic_sdr} on $\bar{F}(t)$ is standard in the literature~\citep{storey2004strong,jin2023selection,jin2023model} which ensures the selection cutoff in the (e-)BH procedure stabilizes around a constant value. Following~\cite{jin2025cross}, the convergence condition on $\{\hat{w}_i\}$ holds if one fulfills Assumption~\ref{assump:balance_sdr} by running a covariate-balancing program with balancing feature $(\hat{w}(x), \hat{l}(x)\ind\{s(x)\leq \hat{t}\})$ using a preliminary estimated weight function $\hat{w}(\cdot)$ that converges in $L_2$-norm to some fixed function.

\section{Technical proof}

\subsection{Proof of Theorem~\ref{thm:single}}
\label{app:subsec_thm_single}

\begin{proof}[Proof of Theorem~\ref{thm:single}]
% Suppose $E_{\gamma, n+1}\geq 0$ obeys $\EE[Y_{n+1}\cdot E_{\gamma, n+1}]\leq 1$, then by definition, since $Y_{n+1}$ is nonnegative, we have 
% \$
% \EE[Y_{n+1}\ind\{\psi(X_{n+1}) = 1\}]  
% =  \EE[Y_{n+1}\ind\{E_{\gamma, n+1}\geq 1/\alpha\}] \leq \alpha \cdot \EE[Y_{n+1}\cdot E_{\gamma, n+1}] \leq \alpha.
% \$
% We now prove $\EE[Y_{n+1}\cdot E_{\gamma, n+1}]\leq 1$. 
By definition, since $L_{n+1}\in [0,1]$, 
\$
\EE[L_{n+1} E_{\gamma, n+1}] &= 
\EE\Bigg[L_{n+1} \cdot \inf_{\ell \in[0,1]} ~\Bigg\{ \frac{(n+1)\cdot \ind\{s(X_{n+1})\leq t_\gamma(\ell)\}}{\sum_{i=1}^n L_i \ind\{s(X_{i})\leq t_{\gamma}(\ell)\} + \ell\ind\{s(X_{n+1})\leq t_\gamma(\ell)\} } \Bigg\} \Bigg]
\\
&\leq \EE\Bigg[ \frac{ L_{n+1} \cdot (n+1) \ind\{s(X_{n+1})\leq T_{\gamma,n+1}\}}{\sum_{i=1}^n L_i \ind\{s(X_{i})\leq T_{\gamma,n+1}\} + L_{n+1}\ind\{s(X_{n+1})\leq T_{\gamma,n+1}\} }  \Bigg],
\$
where $T_{\gamma,n+1}:= t_{\gamma}(L_{n+1}) = \max \big\{t\colon \textrm{F}(t, L_{n+1} )\leq \gamma \big\}$, and 
\$ 
\textrm{F}( t, L_{n+1} ) =  \frac{\sum_{i=1}^n L_i \ind\{s(X_{i})\leq t\} + L_{n+1}\ind\{s(X_{n+1})\leq t\} }{n+1}.
\$
Note that $T_{\gamma,n+1}$ is invariant to permutations of $(Z_1,\dots,Z_n,Z_{n+1})$ where $Z_i=(X_i,Y_i)$. 
Therefore, $T_{\gamma,n+1}$ is determined if we condition on the unordered set $[\cZ]=[Z_1,\dots,Z_n,Z_{n+1}]$. In addition, for any fixed values $z_1,\dots,z_{n+1}$, conditional on the event $[\cZ] = [z_1,\dots,z_{n+1}]$, the data sequence follows the distribution 
\$
(Z_1,\dots,Z_{n+1})\biggiven \big\{[\cZ]=[z_1,\dots,z_{n+1}] \big\} \quad \sim \quad \frac{1}{(n+1)!}\sum_{\sigma\in S_{n+1}} \delta_{(z_{\sigma(1)},\dots,z_{\sigma(n+1)})},
\$
where $\delta_x$ is the point mass at $x$, and $S_{n+1}$ is the collection of all permutations of $\{1,\dots,n+1\}$. 
Altogether, these imply that for any fixed values $[z_1,\dots,z_{n+1}]$ where $z_i = (x_i,y_i)$, 
\$
&\EE\Bigg[ \frac{ L_{n+1}\cdot (n+1) \ind\{s(X_{n+1})\leq T_{\gamma,n+1}\}}{\sum_{i=1}^n L_i \ind\{s(X_{i})\leq T_{\gamma,n+1}\} + L_{n+1}\ind\{s(X_{n+1})\leq T_{\gamma,n+1}\} } \Bigggiven [\cZ] =[z_1,\dots,z_{n+1}]\Bigg] \\ 
&= \frac{1}{(n+1)!} \sum_{\sigma \in S_{n+1}} \frac{(n+1) \ell_{\sigma(n+1)} \ind\{ s(x_{\sigma(n+1)}) \leq T_{\gamma,n+1} \} }{\sum_{i=1}^{n+1} \ell_i \ind\{s(x_i)\leq T_{\gamma,n+1}\}} \\ 
&=\frac{1}{(n+1)!} \sum_{\sigma \in S_{n+1}} \sum_{j=1}^{n+1} \frac{(n+1) \ind\{\sigma(n+1)=j\} \cdot \ell_{j} \ind\{ s(x_j) \leq T_{\gamma,n+1} \} }{\sum_{i=1}^{n+1} \ell_i \ind\{s(x_i)\leq T_{\gamma,n+1}\}} \\ 
&= \frac{1}{(n+1)!}  \sum_{j=1}^{n+1} \frac{(n+1)n! \cdot \ell_{j} \ind\{ s(x_j) \leq T_{\gamma,n+1} \} }{\sum_{i=1}^{n+1} \ell_i \ind\{s(x_i)\leq T_{\gamma,n+1}\}} = 1,
\$
where $\ell_i := L(f, x_i, y_i)$ and $T_{\gamma,n+1}$ is a function of $[z_1,\dots,z_{n+1}]$. 
Then by the tower property, we know that $\EE[L_{n+1}\cdot E_{\gamma,n+1}]\leq 1$,  which concludes the proof.
\end{proof}

\subsection{Proof of Proposition~\ref{prop:single_equiv}}
\label{app:subsec_single_equiv}

\begin{proof}[Proof of Proposition~\ref{prop:single_equiv}]

Fix any $\gamma \in(0,1)$. We first observe that 
\$
    E_{\gamma, n+1} \geq 1/\alpha \iff s(X_{n+1}) \leq t_{\gamma}(\ell), \text{ and } \textrm{F}(t_{\gamma}(\ell),\ell) \leq \alpha \text{ for any } \ell \in [0,1]. \tag{$*$}
\$
% \yingcomment{But we are interested in the event $E_{\gamma,n+1}\geq 1/\alpha$ for risk control at $\alpha$?} 
Indeed, for any $\ell$, if $s(X_{n+1}) \leq t_\gamma(\ell)$ and $\textrm{F}(t_{\gamma}(\ell),\ell) \leq \alpha$, then by definition, 
\$
    \ind\{s(X_{n+1}) \leq t_\gamma(\ell)\} / \textrm{F}(t_\gamma(\ell); \ell) = 1 / \textrm{F}(t_\gamma(\ell); \ell) \geq 1/\alpha.
\$
Expanding the left hand side, this is equivalent to
\$
    \ind\{s(X_{n+1}) \leq t_\gamma(\ell)\} / \textrm{F}(t_\gamma(\ell); \ell) =  \frac{(n+1)\cdot \ind\{s(X_{n+1})\leq t_\gamma(\ell)\}}{\sum_{i=1}^n L_i \ind\{s(X_{i})\leq t_{\gamma}(\ell)\} + \ell\ind\{s(X_{n+1})\leq t_\gamma(\ell)\} } \geq 1/\alpha.
\$ 
and if above inequality holds for any $\ell$, clearly we have
\$
    E_{\gamma,n+1} = \inf_{\ell \in[0,1]} ~\Bigg\{ \frac{(n+1)\cdot \ind\{s(X_{n+1})\leq t_\gamma(\ell)\}}{\sum_{i=1}^n L_i \ind\{s(X_{i})\leq t_{\gamma}(\ell)\} + \ell\ind\{s(X_{n+1})\leq t_\gamma(\ell)\} } \Bigg\} \geq 1/\alpha.
\$
We note that in above derivation, it is implicit that $t_\gamma(\ell) \neq -\infty$ as otherwise $s(X_{n+1}) \leq t_\gamma(\ell)$ cannot possibly be true. Therefore, the e-value follows the usual definition and not is forced to be zero.
We show the other direction by taking the contrapositive. If for some $\ell \in (0,1)$, $s(X_{n+1}) > t_{\gamma}(\ell)$, the infimum (and thus the e-value) is clearly zero.
On the other hand, if $\textrm{F}(t_{\gamma}(\ell),\ell) > \alpha$ for some $\ell$, then
\$
    \ind\{s(X_{n+1}) \leq t_\gamma(\ell)\} / \textrm{F}(t_\gamma(\ell); \ell) \leq 1 / \textrm{F}(t_\gamma(\ell); \ell) \leq 1 / \alpha,
\$
and we establish the equivalence.

We continue the proof by examining the two events: $s(X_{n+1}) \leq t_{\gamma}(\ell)$ and $\textrm{F}(t_{\gamma}(\ell),\ell) \leq \alpha$. Fix any $\ell \in [0,1]$, for the first event we observe that
\$
    s(X_{n+1}) \leq t_{\gamma}(\ell) \iff \textrm{F}(s(X_{n+1}), \ell) \leq \gamma.
\$
This fact is due to the definition of $t_\gamma(\ell) := \max\{t \in\cM: \textrm{F}(t;\ell) \leq \gamma\}$. Given that $\textrm{F}(s(X_{n+1}), \ell) \leq \gamma$, we have $t_\gamma(\ell) \neq -\infty$, and the left hand side automatically holds by definition. The other direction follows from the (non-decreasing) monotonicity of $\textrm{F}$ in the first argument together with the fact $s(X_{n+1}) \in \cM$, as $\textrm{F}(s(X_{n+1}), \ell) \leq \textrm{F}(t_\gamma(\ell),\ell) \leq \gamma$.

Above equivalence clearly continues to hold if all $\ell$ are considered at the same time, i.e.,
\$
    \forall~ \ell \in [0,1], s(X_{n+1}) \leq t_{\gamma}(\ell) \iff \forall~ \ell \in [0,1],\textrm{F}(s(X_{n+1}), \ell) \leq \gamma.
\$
By monotonicity of $\textrm{F}$ in the second argument, the right hand side is equivalent to $\textrm{F}(s(X_{n+1}), 1) \leq \gamma$. This condition is in turn equivalent to
\@
\frac{1+\sum_{i=1}^n L_i \ind\{s(X_{i})\leq s(X_{n+1})\}}{n+1} \leq \gamma. 
\@
% why no equation tag here??

For the second event, we first observe that this event automatically holds if $\gamma \leq \alpha$, as $\textrm{F}(t_\gamma(\ell), \ell) \leq \gamma \leq \alpha$ by definition.
Otherwise, for this condition to hold, we must ensure that there is no $t\in \cM$ with $\textrm{F}(t;\ell) \in (\alpha, \gamma]$. For the desired event $E_{\gamma, n+1} \geq 1/\alpha$ to hold, since we can already assume the first condition here, the second condition reduces to
\@ 
\textrm{F}(t;\ell) = \frac{\ell + \sum_{i=1}^n L_i\ind\{s(X_i) \leq t\} }{n+1} \notin (\alpha,\gamma], ~\forall t\in \cM, \ell \in [0,1].
\@

Finally, we obtain the claimed equivalence after condition 1, 2, and $(*)$.
\end{proof}

\subsection{Proof of Theorem~\ref{thm:opt_mdr}}
\label{app:subsec_opt_mdr}

\begin{proof}[Proof of Theorem~\ref{thm:opt_mdr}]
The given conditions imply that $\mathrm{F}^*(t)$ is continuous in $t\in [0,1]$ and non-constant in a small neighborhood around $t^*$. 
By the strong law of large numbers, since $L_i\in [0,1]$, we know that 
\$ \label{eq:conv_F_ty}
\sup_{\ell\in [0,1]}\sup_{t\in [0,1]} \big|  \mathrm{F}(t;\ell) - \mathrm{F}^*(t) \big| ~\asto~ 0, \tag{$*$}
\$
where recall that 
$ 
\mathrm{F}^*(t) =\EE[\cL(f,X,Y)\ind\{s(X)\leq t\}]
$
where $s$ and $f$ are viewed as fixed. Recall $t^* := \sup\{t\in [0,1]\colon \mathrm{F}^*(t)\leq \gamma\}$. Since $\mathrm{F}^*(t)$ is continuous and non-constant near $t^*$,~\eqref{eq:conv_F_ty} implies 
\$
\sup_{\ell \in [0,1]} |t_\gamma(\ell) - t^*| ~\asto~ 0.
\$  
Fix any $\delta_1\in (0,1)$, by the continuity of $\mF^*(t)$ around $t^*$, there exists some $\delta_2>0$ such that $\sup_{t\in [t^*-\delta_2, t^*+\delta_2]}\mF^*(t) < \alpha+\delta_1$. 
Since $\mF^*(t)$ is continuous around $t=t^*$, taking $\delta_1\to 0$, we can also take a corresponding sequence of $\delta_2\to 0$. 
We define the event 
\$
\cE := \bigg\{\sup_{\ell \in [0,1]}\sup_{t\in [0,1]} \big|  \mathrm{F}(t;\ell) - \mathrm{F}^*(t) \big| >\delta_1\bigg\}\cup\bigg\{ \sup_{\ell\in [0,1]} |t_\gamma(\ell) - t^*| > \delta_2 \bigg\}.
\$
For simplicity, we write $R_{n+1}=r(X_{n+1},Y_{n+1})$, and define the random variable $\ind_{\cE}$ being $1$ if $\cE$ occurs and $0$ otherwise. 
The a.s.~convergence above implies $\ind_{\cE}\asto 0$. 
The power is then
\$
&\EE[R_{n+1}\hat\psi_{n+1} ] = \EE\big[R_{n+1}\ind\{E_{\gamma,n+1}\geq 1/\alpha\}\big] \\ 
&= \EE\Big[R_{n+1} \ind \big\{ s(X_{n+1})\leq t_\gamma(\ell) ~ \text{and} ~ \mathrm{F}(t_\gamma(\ell),\ell) \leq \alpha,~\forall \ell\in [0,1]\big\} \Big] \\ 
&\leq \EE[R_{n+1}\ind_{ \cE }] + \EE\Big[R_{n+1} \ind_{\cE^c} \ind \big\{ s(X_{n+1})\leq t_\gamma(\ell) ~ \text{and} ~ \mathrm{F}(t_\gamma(\ell),\ell) \leq \alpha,~\forall \ell\in [0,1] \big\}  \Big]\\ 
&\leq \EE[R_{n+1}\ind_{ \cE }] + \EE\Big[R_{n+1}\ind_{\cE^c}\ind \big\{ s(X_{n+1})\leq t^*+\delta_2 ~ \text{and} ~ \sup_{t\in [t^*-\delta_2,t^*+\delta_2]}\mathrm{F}^*(t) \leq \alpha+\delta_1  \big\}  \Big]\\ 
&\leq \EE[R_{n+1}\ind_{ \cE }] + \EE\Big[ R_{n+1} \ind\big\{ s(X_{n+1})\leq t^*+\delta_2 ~ \text{and} ~ \sup_{t \in [t^*-\delta_2,t^*+\delta_2]}\mathrm{F}^*(t) \leq \alpha +\delta_1 \big\}  \Big] %\\ 
% &\leq \PP(\cE) + \PP\Big(  s(X_{n+1})\leq t^*+\delta_2  \Big),
\$
where the second equality uses the definition of $E_{\gamma, n+1}$, and 
the fourth line uses the definition of $\cE$. 
Here since $R_{n+1}$ is bounded and $\ind_{\cE}\asto 0$, we have $\EE[R_{n+1}\ind_{ \cE }]\to 0$ due to the dominated convergence theorem. 
Note that by continuity we have $\mF^*(t)=\gamma$, hence when $\gamma> \alpha$, we can take $\delta_1,\delta_2>0$ small enough such that $\sup_{t \in [t^*-\delta_2,t^*+\delta_2]}\mathrm{F}^*(t) > \alpha +\delta_1$. We thus have $\EE[R_{n+1}\hat\psi_{n+1} ]\to 0$. 
On the other hand, for $\gamma\leq \alpha$, taking $\delta_1\to 0$ and $\delta_2\to 0$ we have 
\$
\limsup_{n\to\infty}\EE[R_{n+1}\hat\psi_{n+1} ] 
\leq \EE[R_{n+1}\ind_{ \cE }] + \EE\big[ R_{n+1} \ind \{ s(X_{n+1})\leq t^*  \}  \big]
\$ 
Similarly, 
\$
&\EE[R_{n+1}\hat\psi_{n+1} ] = \EE\big[R_{n+1}\ind\{E_{\gamma,n+1}\geq 1/\alpha\}\big]  \\  
&\geq \EE\Big[R_{n+1} \ind_{\cE^c} \ind \big\{ s(X_{n+1})\leq t_\gamma(\ell) ~ \text{and} ~ \mathrm{F}(t_\gamma(\ell),\ell) \leq \alpha,~\forall \ell\in [0,1] \big\}  \Big] \\ 
&\geq  \EE\Big[R_{n+1} \ind_{\cE^c} \ind \big\{ s(X_{n+1})\leq t^*- \delta_2 ~ \text{and} ~ \sup_{t\in [t^*-\delta_2,t^*+\delta_2]}\mathrm{F}^*(t) \leq \alpha +\delta_1  \big\} \Big] \\
&\geq  \EE\Big[R_{n+1} \ind \big\{ s(X_{n+1})\leq t^*- \delta_2 ~ \text{and} ~ \sup_{t\in [t^*-\delta_2,t^*+\delta_2]}\mathrm{F}^*(t) \leq \alpha +\delta_1  \big\} \Big]  - \EE[R_{n+1}\ind_{\cE}].
\$ 
For $\gamma\leq \alpha$, taking $\delta_1,\delta_2\to 0$ we have 
\$
\liminf_{n\to\infty} \EE[R_{n+1}\hat\psi_{n+1} ] \geq  \EE\big[R_{n+1} \ind \{ s(X_{n+1})\leq t^*   \} \big].
\$
Combining the two bounds yields the asymptotic power 
\$
\lim_{n\to\infty}\EE[R_{n+1}\hat\psi_{n+1}]= \EE\big[R_{n+1} \ind \{ s(X_{n+1})\leq t^*   \} \big].
\$
Note that $t^*$ increases with $\gamma$, hence the asymptotic power is optimized at $\gamma=\alpha$ when $s(\cdot)$ is fixed.

We now fix $\gamma=\alpha$, and study the maximization of  $\EE\big[R_{n+1} \ind \{ s(X_{n+1})\leq t^*   \} \big]$ as a function of $s(\cdot)$. Due to the monotonicity of $\mF^*(t)$ in terms of $t$, we know that this is equivalent to 
\$
\maximize_{s\colon \cX\to \RR, t\in \RR} \quad  &\EE\big[ r(X,Y)\ind\{s(X)\leq t\}\big]\\ 
\text{subject to} \quad & \EE\big[ \cL(f,X,Y)\ind\{s(X)\leq t\}\big] \leq \alpha.
\$
Let $l(x):=\EE[\cL(f,X,Y)\given X=x]$, $r(x):=\EE[r(X,Y)\given X=x]$, and rewrite $\ind\{s(X)\leq t\} = b(X)$ equivalently via a binary function $b\colon \cX\to \{0,1\}$. The above program is further equivalent to 
\$
\maximize_{b\colon \cX\to \{0,1\}} \quad  &\EE\big[ r(X) b(X)\big]\\ 
\text{subject to} \quad & \EE\big[ l(X)b(X)\big] \leq \alpha.
\$
In the following, we prove the optimal solution similar to the Neyman-Pearson lemma~\citep{lehmann1986testing}. 
We define $\rho(x)=l(x)/r(x)$ and $b^*(x) = \ind\{\rho(x)\leq c_0\}$ where $c_0 = \sup\{c\colon \EE[l(X)\ind\{\rho(X)\leq c \}]\leq \alpha\}>0$. Since the distribution of $\rho(X)$ is non-atomic, we have $\EE[l(X)\cdot b^*(X)]=\alpha$.  
Let $b(\cdot)\colon \cX\to \{0,1\}$ be any binary function obeying $\EE[l(X)b(X)]\leq \alpha$. 
% Consider the quantity 
% \$
% \EE\big[  (l(X)-c_0) (b^*(X)-b(X))  \big].
% \$
When $\rho(X)-c_0<0$, it holds that $b^*(X)-b(X) = 1-b(X)\geq 0$. When $\rho(X)-c_0>0$, it holds that $b^*(X)-b(X)=-b(X)\leq 0$. Therefore, we always have $(\rho(X)-c_0) (b^*(X)-b(X))\leq 0$, which leads to $(l(X)-c_0r(X))(b^*(X)-b(X))\leq 0$ since $r(X)$ is nonnegative. As a result, 
\$
\EE\big[  (l(X)-c_0\cdot r(X)) (b^*(X)-b(X))  \big] \leq 0.
\$
Therefore, 
\$
c_0 \cdot \EE\big[r(X)b^*(X)-r(X)b(X)\big] \geq \EE[l(X)(b^*(X)-b(X))]=\alpha - \EE[l(X)b(X)]\geq 0,
\$ 
where the last inequality is due to the constraint for $b(X)$. 
This yields $\EE[r(X)b^*(X)]\geq \EE[r(X)b(X)]$ since $c_0>0$, which proves the optimality of $b^*(X)$. 
Therefore, the original problem is optimized at any $s(X)$ such that 
$
\ind\{s(X)\leq t\} = \ind\{l(x)/r(x)\leq c_0\},
$
for which a sufficient condition is that $s(x)$ is strictly increasing in $l(x)/r(x)$. 
\end{proof}

% \subsection{Proof of Theorem~\ref{thm:sel_general}}
% \label{app:subsec_thm_sel_general}

% \begin{proof}[Proof of Theorem~\ref{thm:sel_general}]
%     By the definition of the eBH procedure, it holds that $E_{n+j}\geq |\cR|/(\alpha k)$ for every $j\in \cR$ (called the self-consistency condition in \cite{wang2022false}). Therefore,
% \$
% \EE\Bigg[ \frac{\sum_{j=1}^m Y_{n+j} \cdot \ind\{j\in \cR\}}{1\vee|\cR|}   \Bigg] 
% &= \sum_{j=1}^m \sum_{k=1}^m \EE\Big[ \ind\{|\cR|=k\}Y_{n+j} \frac{\ind\{E_{n+j} \geq m/(\alpha k)\}}{k}  \Big]\\
% &\leq \sum_{j=1}^m \sum_{k=1}^m \EE\Big[ \ind\{|\cR|=k\}Y_{n+j} \frac{E_{n+j} \alpha k/m}{k}  \Big] \\ 
% &= \frac{\alpha}{m}\sum_{j=1}^m \sum_{k=1}^m \EE\big[ \ind\{|\cR|=k\}Y_{n+j}  E_{n+j}   \big] \\ 
% &= \frac{\alpha}{m}\sum_{j=1}^m   \EE\big[  Y_{n+j} E_{n+j} \big] \leq \alpha.
% \$
% The inequality uses  $\ind\{E_{n+j}\geq t\}\leq E_{n+j}/t$ deterministically 
% for any $t\geq 0$, and that $Y_{n+j}$ is non-negative. 
% \end{proof} 

\subsection{Proof of Theorem~\ref{thm:evalue_sel}}
\label{app:subsec_thm_sel_eval}
\begin{proof}[Proof of Theorem~\ref{thm:evalue_sel}]
    Similar to the proof of Theorem~\ref{thm:single}, we first have 
    \$
        \EE[L_{n+j}E_{\gamma, n+j}] &= \EE \Bigg[ L_{n+j}\cdot \inf_{\ell \in [0,1]} \bigg\{ \frac{(n+1)\cdot \ind\{s(X_{n+j}) \leq t_{\gamma,n+j}(\ell)\} }{\ell \ind\{s(X_{n+j}) \leq t_{\gamma,n+j}(\ell)\}+  \sum_{i=1}^n  L_i\ind\{s(X_{i}) \leq t_{\gamma,n+j}(\ell)\}} \bigg\}\Bigg] \\
        &\leq \EE \Bigg[\frac{L_{n+j} \cdot(n+1)\ind\{s(X_{n+j}) \leq T_{\gamma, n+j}\} }{L_{n+j} \ind\{s(X_{n+j}) \leq T_{\gamma, n+j}\}+  \sum_{i=1}^n  L_i\ind\{s(X_{i}) \leq T_{\gamma, n+j}\}} \Bigg]
    \$
    where $T_{\gamma, n+j} := t_{\gamma, n+j}(L_{n+j}) = \max \{t: \fr(t, L_{n+j}) \leq \gamma\}$, and
    \$
        \fr(t, L_{n+j}) = \frac{L_{n+j} \ind\{s(X_{n+j})\leq t\} + \sum_{i=1}^n L_i\ind\{s(X_i)\leq t\}}{1+\sum_{\ell\neq j} \ind\{s(X_{n+\ell}) \leq t\}}\cdot \frac{m}{n+1}.
    \$

    We note that by definition, $T_{\gamma, n+j}$ is invariant to permutations of $(Z_1, \dots, Z_n, Z_{n+j})$. In other words, $T_{\gamma, n+j}$ is deterministic if we condition on the unordered set $[\cZ_j] = [Z_1, \dots, Z_n, Z_{n+j}]$ and the (ordered) set of remaining data $\bar\cZ_j = \{Z_{n+\ell}\}_{\ell \neq j}$.
    Consequently, the value of $L_{n+j} \ind\{s(X_{n+j})\leq T_{\gamma,n+j}\} + \sum_{i=1}^n L_i\ind\{s(X_i)\leq T_{\gamma,n+j}\}$ is also determined. In addition, conditional on the event $[\cZ] = [z_1, \dots, z_n, z_{n+j}]$ for any fixed values of $z_1, \dots, z_n, z_{n+j}$, by exchangeability we have that
    \$
        (Z_1, \dots, Z_n, Z_{n+j}) \biggiven \big\{[\cZ]=[z_1,\dots,z_n, z_{n+j}] \big\} \quad \sim \quad \frac{1}{(n+1)!}\sum_{\sigma\in S_{n+j}} \delta_{(z_{\sigma(1)},\dots,z_{\sigma(n)},z_{\sigma(n+j)})}, 
    \$
    where $\delta_x$ is the point mass at $x$, and $S_{n+j}$ is the collection of all permutations on the set $\{1, \dots, n, n+j\}$. We write $[\boldsymbol{z}] := [z_1, \dots, z_n, z_{n+j}]$ and $\bar{\boldsymbol{z}} := \{z_{n+\ell}\}_{\ell \neq j}$ for simplicity. As such, 
    \$
    \EE \Bigg[\frac{L_{n+j} \cdot(n+1)\ind\{s(X_{n+j}) \leq T_{\gamma, n+j}\} }{L_{n+j} \ind\{s(X_{n+j}) \leq T_{\gamma, n+j}\}+  \sum_{i=1}^n  L_i\ind\{s(X_{i}) \leq T_{\gamma, n+j}\}} \biggiven [\cZ_j] = [\boldsymbol{z}],\bar\cZ_j = \bar{\boldsymbol{z}} \Bigg] \\
    = \frac{1}{n+1} \sum_{k \in \{1, \dots, n, n+j\} } \frac{\ell_k \cdot (n+1)\ind\{s(x_k) \leq T_{\gamma, n+j}\} }{\sum_{i=1}^n \ell_i \ind\{s(x_i) \leq T_{\gamma, n+j}\} + \ell_{n+j} \ind\{s(x_{n+j}) \leq T_{\gamma, n+j}\} } = 1,
    \$
    and we conclude the proof using the tower property.
\end{proof}

\subsection{Proof of Proposition~\ref{prop:mult_equiv}}
\label{app:subsec_mult_equiv}

\begin{proof}[Proof of Proposition~\ref{prop:mult_equiv}]
In this proof, we first show that the e-value computed using Algorithm~\ref{alg:sel} is identical to the e-value purposed in~\eqref{eq:def_eval_sel}. Then, we provide a detailed pseudocode that implements Algorithm~\ref{alg:sel} and runs under the claimed time complexity.

To simplify the computation, we begin by ruling out some cases where $E_{\gamma,n+j}=0$. Fix any $\gamma > 0$, we first observe that
\@\label{eq:comp_equiv}
E_{\gamma, n+j} \geq 1/\gamma \iff s(X_{n+j}) \leq t_{\gamma, n+j}(\ell) \text{ for any } \ell \in [0,1],  \tag{$*$}
\@
and the RHS is further equivalent to $E_{\gamma,n+j} > 0$ by definition.
First, both sides imply $t_{\gamma, n+j}(\ell) \neq -\infty$ and thus we can assume this condition.
Then, the LHS to RHS direction is easy by taking the contrapositive: if $s(X_{n+j}) > t_{\gamma,n+j}(\ell)$ for some $\ell\in[0,1]$,  then clearly the infimum is zero. For the other direction, if RHS is true, then letting $E_{\gamma,n+j}(\ell)$ to be the quantity being taken infimum over in~\eqref{eq:def_eval_sel}, we have 
\$
E_{\gamma, n+j}(\ell) &=  \frac{n+1}{\ell + \sum_{i=1}^n  L_i\ind\{s(X_{i}) \leq t_{\gamma,n+j}(\ell)\}} \\
&\geq \frac{n+1}{\ell + \sum_{i=1}^n  L_i\ind\{s(X_{i}) \leq t_{\gamma,n+j}(\ell)\}} \cdot \frac{1 +\sum_{\ell \neq j} \ind\{s(X_{n+\ell}) \leq t_{\gamma, n+j}(\ell) \} }{m}
\\ 
&=
1/\fr_{n+j}(t_{\gamma, n+j}(\ell)) \geq 1/\gamma,
\$
where $E_{\gamma, n+j}(\ell)$ is defined to be the value inside the infimum in~\eqref{eq:def_eval_sel}.
We also note the monotonicity of $t_{\gamma, n+j}(\cdot)$: since $\fr_{n+j}$ is non-decreasing in its second argument, we know $t_{\gamma, n+j}(\ell)$ is non-increasing in $\ell\in [0,1]$. Therefore if $s(X_{n+j}) \leq t_{\gamma, n+j}(1)$, the RHS in~\eqref{eq:comp_equiv}  would be true for any $\ell\in[0,1]$, and thus $E_{\gamma, n+j} \geq 1/\gamma$. In other words, we further have 
\$
    E_{\gamma, n+j} \geq 1/\gamma \iff s(X_{n+j}) \leq t_{\gamma, n+j}(1).
\$
Above equivalence  establishes that $E_{\gamma,n+j}=0$ if the RHS in~\eqref{eq:comp_equiv} does not hold, justifying Line 6 and 7 in Algorithm~\ref{alg:sel}. While the marginal risk control case (Proposition~\ref{prop:single_equiv}) essentially relies on a similar equivalence as above, we note that in the selective case, the equivalence itself is insufficient for computing the final outcome of eBH. Specifically, eBH requires evaluating of $\ind\{E_{\gamma, n+j} \geq m/(\alpha\tau)\}$ for different values of $\tau$, where $m/(\alpha\tau)$ may not equal $1/\gamma$ in general. %The equivalence above cannot be applied to different cutoffs $\gamma$, which is fixed beforehand.
 
We now proceed to address cases where the RHS in~\eqref{eq:comp_equiv} holds, i.e., assuming that $s(X_{n+j}) \leq t_{\gamma, n+j}(\ell)$ for any $\ell\in[0,1]$. 
In this case, we have 
\$
E_{\gamma, n+j}(\ell) &=  \frac{n+1}{\ell +  \sum_{i=1}^n  L_i\ind\{s(X_{i}) \leq t_{\gamma,n+j}(\ell)\}}.
\$
We now define the set of values of $\ell$ such that $t_{\gamma,n+j}(\ell)=t$: 
\$
\cL(t) := \{\ell \in [0,1]: t_{\gamma, n+j}(\ell) = t\}.
\$ 
Since now $s(X_{n+j}) \leq t_{\gamma, n+j}(\ell)$, for any $t$ such that $\cL(t) \neq \emptyset$, it must obey $t\in\cM^+ := \{s(X_{i}): i \in [n+m], s(X_i) \geq s(X_{n+j})\}$. 
Then we can rewrite $E_{\gamma, n+j}$ in terms of potential values of $t_{\gamma, n+j}(\ell)$:
\@\label{eq:sel_e_equiv}
E_{\gamma, n+j} &= \inf_{t \in \cM^+, \cL(t) \neq \emptyset} \inf_{\ell \in \cL(t)}  \frac{n+1}{\ell +  \sum_{i=1}^n  L_i\ind\{s(X_{i}) \leq t\}} \notag 
\\
&= \inf_{t \in \cM^+, \cL(t) \neq \emptyset} \frac{n+1}{\sup \cL(t) +  \sum_{i=1}^n  L_i\ind\{s(X_{i}) \leq t\}} \tag{$\triangle$}
\@
We first consider the simplest case where $\{t\colon \cL(t)\neq \emptyset\}$ is a singleton. By monotonicity, $t_{\gamma, n+j}(1) \leq t_{\gamma, n+j}(\ell) \leq t_{\gamma, n+j}(0)$ for all $\ell\in[0,1]$, hence $\{t\colon \cL(t)\neq \emptyset\}\subseteq[t_{\gamma,n+j}(1), t_{\gamma,n+j}(0)]$. As long as $t_{\gamma, n+j}(0)=t_{\gamma, n+j}(1)$, we would have $\{t\colon \cL(t)\neq \emptyset\} = \{t_{\gamma,n+j}(0)\}$, in which case 
\$
E_{\gamma, n+j} = \frac{n+1}{1 + \sum_{i=1}^n  L_i\ind\{s(X_{i}) \leq t_{\gamma, n+j}(1)\}}.
\$
This corresponds to the case addressed in Lines 8 and 9 of Algorithm~\ref{alg:sel}.

By the above alternative expression of the e-value, we can easily compute its value if we know how to efficiently compute the sets $\cL(t)$.
We now move to the general case by considering any $t \in [t_{\gamma, n+j}(1), t_{\gamma, n+j}(0)] \cap \cM^+$. We realize that
\$
\cL(t) &= \{\ell \in [0,1]: t_{\gamma, n+j}(\ell) = t\} \\
&= \big\{\ell \in [0,1]: \max\{\tau \in \cM: \fr_{n+j}(\tau;\ell) \leq \gamma \} = t \big\} \\
&= \big\{\ell \in [0,1]: \fr_{n+j}(t;\ell) \leq \gamma ~\text{and}~ \fr_{n+j}(t';\ell) > \gamma ~ \text{for all}~ t' > t, t' \in \cM  \big\} \\
&= \{\ell \in [0,1]: \fr_{n+j}(t;\ell) \leq \gamma \} \cap \bigcap_{t' > t, t' \in \cM} \{\ell \in [0,1]: \fr_{n+j}(t'; \ell) > \gamma\}.
\$
The sets in the above intersection must be intervals, due to the monotonicity of $\fr_{n+j}(\cdot;\cdot)$ in the second argument. Consequently, it suffices to compute the endpoints of these intervals.
For example, we can first check $\fr_{n+j}(t;0)$, the smallest value of $\fr_{n+j}(t;\ell)$ over $\ell\in[0,1]$.  If this smallest value is larger than $\gamma$, then clearly $\cL(t) = \emptyset$. Otherwise, since $\fr_{n+j}(t;\ell)$ is linear and increasing in $\ell$, we can compute the maximum offset, say $\bar{\ell}(t)$, such that $\fr_{n+j}(t;\bar{\ell}(t)) = \gamma$. Then the first set in the intersection would be $[0, \bar{\ell}(t)]$.
Similarly, to compute the second collection of sets, for any $t' > t, t' \in \cM$, we can compute the offset $\bar{\ell}(t')$ with $\fr_{n+j}(t'; \bar{\ell}(t')) = \gamma$, and the set would be $[\bar{\ell}(t'), 1]$ if $\bar{\ell}(t') \leq 1$.
% We now carry out this computation explicitly. For every set in the intersection, we find the value $\bar{\ell}(t)$ with $\fr_{n+j}(t;\bar{\ell}(t)) = \gamma$. 
Solving the equation $\fr_{n+j}(t;\bar{\ell}(t)) = \gamma$ gives the explicit formula
\$
\bar{\ell}(t) = \frac{\gamma(n+1)}{m} \Big(1 + \sum_{\ell \neq j}\ind\{s(X_{n+\ell}) \leq t\} \Big) - \sum_{i=1}^n L_i \ind\{s(X_{i}) \leq t\}
\$
since for $t \in \cM^+$ it holds that $\ind\{s(X_{n+j}) \leq t\} = 1$.
By the arguments above, we know that % $\cL(t)$ would be
\$
\cL(t) &= [0, \bar{\ell}(t)] \cap \bigcap_{t' > t, t \in \cM, \bar{\ell}(t') > 0} [\bar{\ell}(t'), 1] = \Big[\max_{t' > t,t' \in \cM, \fr_{n+j}(t'; 0) \leq \gamma} \bar{\ell}(t'), \bar{\ell}(t) \Big].
\$
% {\color{red} Why in the original derivation, it's $\max_{t<t'<t_{\gamma, n+j}(0), \fr(t'; 0) \geq \gamma}$? Why don't we need to consider $t' \geq t_{\gamma, n+j}(0)$?} \yingcomment{because $t_{\gamma,n+j}(\ell)\leq t_{\gamma, n+j}(0)$, so we don't need to consider candidate values above $t_{\gamma, n+j}(0)$?}
% The above maximization can further be simplified:  if $\fr(t'; 0) > \gamma$, the second set would be $[0,1]$, and we don't need to account for that $t'$ in the maximization.

To compute~\eqref{eq:sel_e_equiv}, it thus suffices to consider $t' \in \cM^*$, where
\$
\cM^* &= \{t \in \cM^+, \cL(t) \neq \emptyset\} \\
&= \cM^+ \cap [t_{\gamma, n+j}(1), t_{\gamma, n+j}(0)] \cap \Big\{t: \fr_{n+j}(t;0) \leq \gamma, \text{ and } \max_{t' > t,t' \in \cM, \fr_{n+j}(t'; 0) \leq \gamma} \bar{\ell}(t') \leq \bar{\ell}(t)  \Big\},
\$
and we have the simplified computation 
\$
E_{\gamma, n+j}
% &= \inf_{t \in \cM^+, \cL(t) \neq \emptyset} \frac{n+1}{\sup \cL(t) +  \sum_{i=1}^n  L_i\ind\{s(X_{i}) \leq t\}} \\
&= \inf_{t \in \cM^*} \frac{n+1}{\bar{\ell}(t) + \sum_{i=1}^n L_i \ind\{s(X_i) \leq t\} }.
\$
% The computation of $\cM^*$ is particularly convenient, since we only need to evaluate $\fr_{n+j}(t;0)$ for at most $|\cM|$ times, and computing $\bar{\ell}(t)$ for at most $|\cM|$ many values of $t$.  

In the above, we established the correctness of Algorithm~\ref{alg:sel}. While a naive implementation has cubic time complexity, we show below that an efficient implementation with time complexity at most $\mathcal{O}\big((n+m)m + (n+m)\log(n+m)\big)$ can be achieved by precomputing the prefix sums, $\mathrm{FR}_{n+j}$, and $t_{\gamma, n+j}$ values. We note that below pseudocode (Algorithm~\ref{alg:sel_pseudocode}) is 1-based.

In the pseudocode, array $A$ (Line 4) can be computed in linear time via the recurrence:
\$
A[i] = 
\begin{cases}
    A[i-1] + M[i][2], & \text{if } M[i] \text{ correspond to a calibration score, i.e. } M[i][2] \text{ is not null}, \\
    A[i-1],  & \text{otherwise}.
\end{cases}
\$
for $i > 1$. Similarly, arrays $B$ and $D$ admit linear-time computation due to the recurrence relations
\$
B[i] = 
\begin{cases}
    B[i-1] + 1, & \text{if } M[i][2] \text{ is null}, \\
    B[i-1],  & \text{otherwise}.
\end{cases},
~\text{and}~
D[i] = 
\begin{cases}
    \max(C[i+1], D[i+1]), & \text{if } FR_0[i+1] \leq \gamma, \\
    D[i+1],  & \text{otherwise}.
\end{cases}.
\$
We note that if there are ties among the scores, some values of $A[i]$ and $B[i]$ may be underestimated using above recurrence. To address this, we can either perform a backward pass to check for ties or use a sliding window to track indices with equal values.
The computational bottlenecks are therefore the sorting operation in Line 3, with complexity $\mathcal{O}((n+m)\log(n+m))$, and the $\mathcal{O}(m)$ iterations of Lines 6-21, each requiring $\mathcal{O}(n+m)$ time. Therefore, the overall time complexity of Algorithm~\ref{alg:sel_pseudocode} is at most $\mathcal{O}\big((n+m)m + (n+m)\log(n+m)\big)$.

\begin{algorithm}
    \small
    \captionsetup{font=small}
    \caption{Pseudocode for Algorithm~\ref{alg:sel}}
    \label{alg:sel_pseudocode}
\begin{algorithmic}[1]
    \REQUIRE{Labeled data $\{(X_i, Y_i)\}_{i=1}^n$, test data $\{X_{n+j}\}_{j=1}^m$, pretrained score $s$.} \\[0.5ex]
    \STATE Compute the true calibration risks $\{L_i\}_{i=1}^n$ and scores $\cM:=\{S_i\}_{i=1}^{n+m}$.
    \STATE Let $M_{\text{calib}}$ to be the array of pairs so that $M_{\text{calib}}[i][1]=S_i$ and  $M_{\text{calib}}[i][2]= L_i $ for $i = 1, \dots, n$, and analogously, let $M_{\text{test}}$ to be the array of pairs with elements $(S_{n+j}, \text{null})$ for $j = 1, \dots, m$.
    \STATE Concatenate $M_{\text{calib}}$ and $M_{\text{test}}$, and let $M$ to be the resulting array sorted according to the first entry.
    \STATE Compute the prefix sum arrays $A[i] = \sum_{k=1}^n L_k \ind\{S_k \leq M[i][1]\}$ and $B[i] = 1 + \sum_{k=1}^m \ind\{S_{n+k} \leq M[i][1]\}$, where $i = 1, \dots, n+m$.
    \STATE Initialize empty scalar arrays $FR_0, FR_1$ and $C$ of size $n+m$.
    \FOR{$j=1, \dots, m$}  
        \FOR{$i=1, \dots, n+m$}
            \STATE Compute $FR_0[i] = A[i] / (B[i] - \ind\{S_{n+j} \leq M[i][1]\}) \cdot m / (n+1)$. 
            % \hfill{\textcolor{Gray}{\texttt{// $\textrm{FR}_{n+j}(M[i][2], 0)$}}} 
            \STATE Compute $FR_1[i] = (A[i] + \ind\{S_{n+j} \leq M[i][1]\}) / (B[i] - \ind\{S_{n+j} \leq M[i][1]\}) \cdot m / (n+1)$.
            \STATE Let $C[i] = (n+1)/m \cdot \gamma \cdot (B[i] - \ind\{S_{n+j} \leq M[i][1]\}) - A[i]$.
        \ENDFOR
        \STATE Let $i_0$ be largest element in $\{1, \dots, n+m\}$ with $FR_0[i_0] \leq \gamma$, and let $t_0 = M[i_0][1]$.
        \STATE Let $i_1$ be largest element in $\{1, \dots, n+m\}$ with $FR_1[i_1] \leq \gamma$, and let $t_1 = M[i_1][1]$.
        \STATE Execute Line 6-9 of Algorithm~\ref{alg:sel}, with $t_0, t_1$ in the place of $t_{\gamma, n+j}(0), t_{\gamma, n+j}(1)$ respectively.
        \STATE Compute the array $D$ where $D[i] = \max_{M[j][1] > M[i][1], FR_0[j] \leq \gamma} C[j]$.
        \STATE Initialize empty set $\cM^*$.
        \FOR{$i=1, \dots, n+m$}
            \STATE Append $i$ to $\cM^*$ if $t_0 \geq M[i][1] \geq \max(S_{n+j}, t_1), FR_0[i] \leq \gamma$ and $C[i] \geq D[i]$.
        \ENDFOR
        \STATE Compute the e-value $E_{\gamma, n+j}$ as the minimum of $(n+1) / (A[i] + C[i])$ over $i \in \cM^*$.
    \ENDFOR  
    \ENSURE{The computed e-values $\{E_{\gamma, n+j}\}_{j=1}^m$.}
\end{algorithmic}
\end{algorithm}

% Formally, to see the computation complexity, we note that Line 1-2 needs $\cO(n+m)$ time. Line 3 can be completed within  $\cO((n+m)^2)$ time, which can be reduced if one first sorts the data by $s(X_i)$. In Lines 4-14, for each $j$, computing $t_{\gamma,n+j}(\ell)$ needs at most $\cO((n+m)^2)$ time (which can be reduced by first sorting); in Line 11-12, computing $\cM^*$ needs at most $\cO((n+m)^2)$ time;  Line 13 needs at most $\cO((n+m)^2)$ time. Therefore,  Lines 4-14 needs at most $\cO(m(n+m)^2)$ time. 
\end{proof}

\subsection{Proof of Proposition~\ref{thm:sdr_conn_CS}}
\label{app:subsec_thm_sdr_conn_CS}

\begin{proof}[Proof of Proposition~\ref{thm:sdr_conn_CS}]
    Throughout the proof, we denote $\cS^{\text{p}}$ (resp. $\cS^\text{e}$) as the function outputting the rejection set of BH (resp.~eBH) with p-values (resp.~e-values). Recall that we consider the conformal p-values 
    \@\label{eq:def_conformal_pval_j}
p_j = \frac{1+\sum_{i=1}^n\ind\{V(X_i, Y_i) \leq V(X_{n+j}, c)\}}{n+1},
\@
where $V(x, y) = \infty \ind\{y > c\} + s(x)$ is the clipped nonconformity score per~\cite{jin2023selection}.
    We also write $V_i = V(X_i, Y_i)$ for $i \in [n]$, and $V_{n+j} = V(X_{n+j}, Y_{n+j})$, $\widehat{V}_{n+j} = V(X_{n+j}, c)$ for $j \in [m]$. We let $\cS^{\cs}$ be the conformal selection set applied to the conformal p-values~\eqref{eq:def_conformal_pval_j}  at nominal level $\alpha$. In constructing SCoRE e-values, we set $\gamma = \alpha$.
    Also recall that we defined 
    \$
    e_j = \frac{\ind\{p_j \leq \alpha|\cS^\cs|/m\}}{\alpha|\cS^{\cs}|/m}.
    \$
    
    The following lemma is central to our proof, and its proof is at the end of this subsection.
    \begin{lemma} \label{lemma:cs_sel_vs_score_sel}
        For any $j = 1, \dots, m$, the following holds.
        \begin{enumerate}
            \item[(i)] For any $j \in \cS^{\cs}$, we have $\cS^{\cs} = \{\ell \in [m]: s(X_{n+\ell}) \leq t_{\gamma, n+j}(1)\}$.
            \item[(ii)] $j \in \cS^{\cs}$ if and only if $s(X_{n+j}) \leq t_{\gamma, n+j}(1)$.
        \end{enumerate}
    \end{lemma}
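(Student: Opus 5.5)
The plan is to make the conformal p-values and the BH step explicit under the binary risk $L_i=\ind\{Y_i\le c\}$. Then I compare the BH rejection threshold directly with $t_{\gamma,n+j}(1)$ for $\gamma=\alpha$.

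\emph{Rewriting the p-values.} With the clipped score, $V_i=s(X_i)$ if $Y_i\le c$ and $V_i=\infty$ otherwise, while $\widehat V_{n+\ell}=s(X_{n+\ell})$. Hence
\[
p_\ell=\frac{1+A(s(X_{n+\ell}))}{n+1},\qquad A(t):=\sum_{i=1}^n L_i\ind\{s(X_i)\le t\}.
\]
Each $p_\ell$ is a non-decreasing function of $s(X_{n+\ell})$. Write $N(t):=\sum_{k=1}^m\ind\{s(X_{n+k})\le t\}$ and
\[
\widehat{\mathrm{F}}(t):=\frac{1+A(t)}{n+1}\cdot\frac{m}{N(t)}.
\]
By the standard step-up characterization of BH, $\cS^{\cs}=\{\ell: s(X_{n+\ell})\le \hat t\}$. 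Here $\hat t$ is the largest \emph{test} score $t$ with $\widehat{\mathrm F}(t)\le\alpha$, and $\cS^{\cs}=\emptyset$ if no such score exists. I intend to state this characterization carefully for ties in the test scores: tied test scores have equal p-values, so BH rejects all or none of them, and $N(t)$ counts all of them.

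\emph{Key identity.} For $t\ge s(X_{n+j})$, the indicator $\ind\{s(X_{n+j})\le t\}=1$ and $1+\sum_{k\neq j}\ind\{s(X_{n+k})\le t\}=N(t)$. Therefore
\[
\fr_{n+j}(t;1)=\widehat{\mathrm F}(t)\quad\text{for all } t\ge s(X_{n+j}).
\]
I also use a grid observation. If $t^\ast\in\cM$ and $t'$ is the largest test score with $t'\le t^\ast$, then $N(t')=N(t^\ast)$ and $A(t')\le A(t^\ast)$. So $\widehat{\mathrm F}(t')\le\widehat{\mathrm F}(t^\ast)$, which lets me move from calibration grid points down to test grid points.

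\emph{Part (ii).} Write $t^\ast=t_{\alpha,n+j}(1)$.
\begin{itemize}
\item[($\Rightarrow$)] If $j\in\cS^{\cs}$, then $s(X_{n+j})\le\hat t$. Since $\hat t\in\cM$, $\hat t\ge s(X_{n+j})$ and $\widehat{\mathrm F}(\hat t)\le\alpha$, the identity gives $\fr_{n+j}(\hat t;1)\le\alpha$. Hence $t^\ast\ge\hat t\ge s(X_{n+j})$.
\item[($\Leftarrow$)] If $s(X_{n+j})\le t^\ast$, then $\widehat{\mathrm F}(t^\ast)=\fr_{n+j}(t^\ast;1)\le\alpha$. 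Let $t'$ be the largest test score $\le t^\ast$; it exists and satisfies $t'\ge s(X_{n+j})$. The grid observation gives $\widehat{\mathrm F}(t')\le\alpha$, so $\hat t\ge t'\ge s(X_{n+j})$, and therefore $j\in\cS^{\cs}$.
\end{itemize}

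\emph{Part (i).} Fix $j\in\cS^{\cs}$. By the forward direction of (ii), $t^\ast\ge\hat t$, which gives $\cS^{\cs}\subseteq\{\ell: s(X_{n+\ell})\le t^\ast\}$. For the reverse inclusion, take $t'$ to be the largest test score $\le t^\ast$, as in the backward direction of (ii). It satisfies $\widehat{\mathrm F}(t')\le\alpha$, so $t'\le\hat t$ by maximality of $\hat t$. Thus no test score lies in $(\hat t,t^\ast]$, and the two sets coincide.

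The main obstacle I expect is bookkeeping rather than ideas. The threshold $t_{\gamma,n+j}(1)$ ranges over the full grid $\cM$ (calibration and test scores), whereas BH only "sees" test scores. I also need to handle ties among scores, and the case where $t^\ast$ falls below $s(X_{n+j})$, where the identity fails. All of these are handled by the grid observation above together with restricting every comparison to $t\ge s(X_{n+j})$. I would verify these edge cases before writing the argument cleanly.
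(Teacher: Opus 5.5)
Your proof is correct, but it takes a more direct route than the paper's.

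\textbf{The paper's route.} The paper proves (i) using the leave-one-out proxy p-values $p^{(j)}_\ell$ of \cite{jin2023selection}. In these, the leading $1$ is replaced by $\ind\{\widehat V_{n+j}\le \widehat V_{n+\ell}\}$, so that the numerator agrees with that of $\fr_{n+j}(\cdot;1)$ at every test score. The paper must first argue that BH applied to the proxies reproduces $\cS^{\cs}$ whenever $j\in\cS^{\cs}$. It then compares the $k^*$-th and $\ell^*$-th ordered proxies with $\fr_{n+j}(\cdot;1)$. Finally, it gets the forward direction of (ii) from (i) and repeats the $\ell^*$ argument for the backward direction.

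\textbf{Your route.} You write $\cS^{\cs}$ directly as the threshold set $\{\ell: s(X_{n+\ell})\le \hat t\}$ for the global estimate $\widehat{\mathrm F}$. You then note that $\fr_{n+j}(t;1)=\widehat{\mathrm F}(t)$ on the only range that matters, $t\ge s(X_{n+j})$. This removes the need for proxies altogether. It also makes (i) a consequence of the forward direction of (ii). Ties among test scores are handled explicitly: BH cannot split tied p-values, and $N(t)$ counts all of them. The paper's rank-based indexing (e.g.\ setting $1+\sum_{\ell\neq j}\ind\{s(X_{n+\ell})\le s(X_{n+(\ell^*)})\}=\ell^*$) leaves this point implicit.

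\textbf{Common core.} Both arguments hinge on your ``grid observation.'' Between the largest test score below $t_{\gamma,n+j}(1)$ and $t_{\gamma,n+j}(1)$ itself, the denominator is constant and the numerator is non-decreasing. This is what moves the argument from the full grid $\cM$ down to test scores.

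\textbf{Trade-off.} The proxy construction keeps the argument parallel to the FDR proof of \cite{jin2023selection}. Yours is shorter and self-contained.

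\textbf{What to include when you write it up.} Your step-up characterization is correct, but everything rests on it, so write out its short verification. First, the BH set is a lower set in the scores whose largest element $t_0$ satisfies $\widehat{\mathrm F}(t_0)\le\alpha$. Second, conversely, every test score with $\widehat{\mathrm F}\le\alpha$ lies in that set.
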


    With Lemma~\ref{lemma:cs_sel_vs_score_sel} in hand, for any $j$, we have 
    \$
        E_{\gamma, n+j}(1) &= \frac{(n+1) \cdot \ind\{s(X_{n+j}) \leq t_{\gamma, n+j}(1)\} }{\ind\{s(X_{n+j}) \leq t_{\gamma, n+j}(1)\} + \sum_{i=1}^n L_i \ind\{s(X_i) \leq t_{\gamma, n+j}(1)\} } \\
        &= \frac{m}{\fr_{n+j}(t_{\gamma, n+j}(1), 1)} \cdot\frac{\ind\{s(X_{n+j}) \leq t_{\gamma, n+j}(1)\}}{1 + \sum_{\ell \neq j} \ind\{s(X_{n+\ell}) \leq t_{\gamma, n+j}(1)\} } \\
        &\geq \frac{m}{\alpha} \cdot\frac{\ind\{s(X_{n+j}) \leq t_{\gamma, n+j}(1)\}}{1 + \sum_{\ell \neq j} \ind\{s(X_{n+\ell}) \leq t_{\gamma, n+j}(1)\} } \\
        &\geq \frac{m}{\alpha} \cdot \frac{\ind\{p_j \leq \alpha| \cS^{\text{cs}} | / m\}}{|\cS^{\text{cs}}|} = e_j, 
    \$
    where the first inequality is due to the definition of $\fr_{n+j}(\cdot,\cdot)$, and the second inequality follows from Lemma~\ref{lemma:cs_sel_vs_score_sel}. Finally, if $e_j = 0$, then $j \notin \cS^{\text{CS}}$. By Lemma~\ref{lemma:cs_sel_vs_score_sel}, this implies $s(X_{n+j}) > t_{\gamma, n+j}(1)$, and hence $E_{\gamma,n+j} = 0$. We thus conclude the proof of Proposition~\ref{thm:sdr_conn_CS}.
\end{proof}

    \begin{proof}[Proof of Lemma~\ref{lemma:cs_sel_vs_score_sel}]
        Fix any $j \in \cS^{\text{CS}}$ throughout. We prove the two facts separately. 
    \vspace{-1em}
        \paragraph{Proof of (i).} 
        Similar to the proof of Theorem 2.6 in~\cite{jin2023selection}, we note that 
        \$
            \cS^{\text{CS}} = \cS^{\text{p}}(p_1, \dots, p_m) = \cS^{\text{p}}(p_1^{(j)}, \dots, p_m^{(j)})
        \$
        for any $j \in \cS^{\text{CS}}$, where the ``proxy'' p-values $\{p_{\ell}^{(j)}\}^m_{\ell=1}$ are given by 
        \$
            p_\ell^{(j)} = \frac{1}{n+1} \Big[  \ind\{\widehat{V}_{n+j} \leq \widehat{V}_{n+\ell}\} + \sum_{i=1}^n \ind\{V_i \leq \widehat{V}_{n+\ell}\} \Big].
        \$
        % is the $j$-th set of proxy p-values. %Note that $p_\ell^{(j)}$ here is defined differently as in~\cite{jin2023selection}, where $V_{n+j}$ is used in place of $\widehat{V}_{n+j}$. 
        To see this equivalence, comparing the pairs of p-values $p_\ell$ and $p_\ell^{(j)}$, we observe that when $p_j \leq p_\ell$, we have $\widehat{V}_{n+j} \leq \widehat{V}_{n+\ell}$, hence $p_\ell^{(j)} = p_\ell$. On the other hand, if $p_j > p_{\ell}$, we have $\widehat{V}_{n+j} > \widehat{V}_{n+\ell}$ and thus $p_\ell^{(j)} \leq p_\ell \leq p_j$.
        In both cases, the ordering of each $p_\ell$ relative to $p_j = p_j^{(j)}$ is preserved when replacing $p_\ell$ with $p_\ell^{(j)}$. By the step‑up property of the BH procedure, this implies that the BH selection set remains unchanged when replacing $(p_1,\dots,p_m)$ with $(p_1^{(j)},\dots, p_m^{(j)})$. 
        
        We now rank the proxy conformal p-values to obtain $p_{(1)}^{(j)} \leq \dots \leq p_{(m)}^{(j)}$. In addition, since each p-value increases with the predicted risk $s(X_i)$ by our definition, with a slight abuse of notation we also denote $s(X_{n+(1)}) \leq \dots \leq s(X_{n+(m)})$, where $s(X_{n+(1)}) $ corresponds to $p_{(1)}^{(j)}$, and so on.
        
        By the property of BH procedure, we know $p_{(k^*)}^{(j)} \leq \alpha k^*/m$ where $k^* = |\cS^{\cs}|$ since $j\in \cS^{\cs}$. Define $\ell^* = |\{\ell \in [m]: s(X_{n+\ell}) \leq t_{\gamma, n+j}(1)\}|$, so that there are $\ell^*$-many predicted test scores below $t_{\gamma,n+j}(1)$. We then have
        \$
        \fr_{n+j}&(s(X_{n+(k^*)}), 1) = \frac{\ind\{s(X_{n+j})\leq s(X_{n+(k^*)})\} + \sum_{i=1}^n L_i\ind\{s(X_i)\leq s(X_{n+(k^*)})\}}{1+\sum_{\ell\neq j} \ind\{s(X_{n+\ell}) \leq s(X_{n+(k^*)})\}}\cdot \frac{m}{n+1} \\
        &= \frac{ \ind\{s(X_{n+j})\leq s(X_{n+(k^*)})\} + \sum_{i=1}^n \ind\{L_i \neq 0\}\ind\{s(X_i)\leq s(X_{n+(k^*)})\}}{1+\sum_{\ell\neq j} \ind\{s(X_{n+\ell}) \leq s(X_{n+(k^*)})\}}\cdot \frac{m}{n+1} \\
        &= \frac{\ind\{\widehat{V}_{n+j} \leq \widehat{V}_{n+(k^*)} \} + \sum_{i=1}^n \ind\{V_i \leq \widehat{V}_{n+(k^*)}\}}{1+\sum_{\ell\neq j} \ind\{s(X_{n+\ell}) \leq s(X_{n+(k^*)})\}}\cdot \frac{m}{n+1} \\
        &= p_{(k^*)}^{(j)} \cdot \frac{m}{k^*} \leq \frac{\alpha k^*}{m} \frac{m}{k^*} = \alpha,
        \$
        by the construction of the risk function and scores.
        Consequently, we know $s(X_{n+(k^*)}) \leq t_{\gamma, n+j}(1)$ by the definition of $t_{\gamma, n+j}(\cdot)$. This implies   $s(X_{n+\ell}) \leq s(X_{n+(k^*)})$, hence $s(X_{n+\ell}) \leq t_{\gamma, n+j}(1)$, for any $\ell \in \cS^{\cs}$. We thus establish  the direction $\cS^{\cs} \subseteq \{\ell \in [m]:s(X_{n+\ell}) \leq t_{\gamma, n+j}(1) \}$.

        For the converse, we see that
        \$
            p_{(\ell^*)}^{(j)} &= \frac{\ind\{\widehat{V}_{n+j} \leq \widehat{V}_{n+(\ell^*)} \} + \sum_{i=1}^n \ind\{V_i \leq \widehat{V}_{n+(\ell^*)}\}}{n+1} \\
            &= \frac{\ind\{\widehat{V}_{n+j} \leq \widehat{V}_{n+(\ell^*)} \} + \sum_{i=1}^n \ind\{V_i \leq \widehat{V}_{n+(\ell^*)}\}}{1+\sum_{\ell\neq j} \ind\{s(X_{n+\ell}) \leq s(X_{n+(\ell^*)})\}} \cdot \frac{m}{n+1} \cdot \frac{\ell^*}{m} \\
            &= \frac{\ind\{s(X_{n+j})\leq s(X_{n+(\ell^*)})\} + \sum_{i=1}^n L_i\ind\{s(X_i)\leq s(X_{n+(\ell^*)})\}}{1+\sum_{\ell\neq j} \ind\{s(X_{n+\ell}) \leq s(X_{n+(\ell^*)})\}} \cdot \frac{m}{n+1} \cdot \frac{\ell^*}{m} \\
            &= \fr_{n+j}(s(X_{n+(\ell^*)}), 1) \cdot \frac{\ell^*}{m},
        \$
        where we used the fact that $1+\sum_{\ell\neq j} \ind\{s(X_{n+\ell}) \leq s(X_{n+(\ell^*)})\} = \ell^*$. From here, an important observation is that $s(X_{n+(\ell^*)})$ is the largest test prediction no greater than $t_{\gamma, n+j}(1)$. 
        When $t_{\gamma, n+j}(1)$ corresponds to a test point, we must have   $s(X_{n+(\ell^*)}) = t_{\gamma, n+j}(1)$, and thus $ \fr_{n+j}(s(X_{n+(\ell^*)}), 1) =  \fr_{n+j}(t_{\gamma, n+j}(1), 1)$. Otherwise if $t_{\gamma,n+j}(1)$ corresponds to a calibration point, we notice that the function $\fr_{n+j}(\cdot, 1)$ is monotonically increasing in the range $[s(X_{n+(\ell^*)}), t_{\gamma, n+j}(1)]$, since the nominator is increasing and the denominator is constant across this range. As such, we must have $\fr_{n+j}(s(X_{n+(\ell^*)}), 1) \leq  \fr_{n+j}(t_{\gamma, n+j}(1), 1)$.
        Combining the two cases, we have
        \$
            p_{(\ell^*)}^{(j)} = \fr_{n+j}(s(X_{n+(\ell^*)}), 1) \cdot \frac{\ell^*}{m} \leq \fr_{n+j}(t_{\gamma, n+j}(1), 1) \cdot \frac{\ell^*}{m} \leq \gamma \ell^* /m = \alpha \ell^*/m,
        \$
        so we must have $\ell^* \in \cS^{\cs}$ by the nature of the BH procedure. Therefore, for any $\ell \in [m]$ with $s(X_{n+\ell}) \leq t_{\gamma, n+j}(1)$, we know $s(X_{n+\ell}) \leq s(X_{n+(\ell^*)})$ and $p_\ell^{(j)} \leq p_{(\ell^*)}^{(j)}$, so we must also have $\ell \in \cS^{\cs}$. We thus establish the other direction, which, together with the preceding part, proves (i). 

    \vspace{-0.5em}
        \paragraph{Proof of (ii).} 
        If $j \in \cS^{\cs}$, we know $j \in \{\ell \in [m]: s(X_{n+\ell}) \leq t_{\gamma, n+j}(1)\}$ by (i), and thus $s(X_{n+j}) \leq t_{\gamma, n+j}(1)$. For the other direction, suppose $s(X_{n+j}) \leq t_{\gamma, n+j}(1)$. Then by definition of $\ell^*$ (recall its definition in our proof of (i)), we have $s(X_{n+j}) \leq s(X_{n+(\ell^*)}) \leq t_{\gamma, n+j}(1)$, which implies $p_{(\ell^*)}^{(j)} = p_{(\ell^*)}$. By the same arguments as in the proof of (i), we know  $\fr_{n+j}(s(X_{n+(\ell^*)}), 1) \leq  \fr_{n+j}(t_{\gamma, n+j}(1), 1) \leq \alpha$, hence $p_{(\ell^*)}^{(j)} = p_{(\ell^*)} \leq \alpha \ell^*/m$, and thus $\ell^* \in \cS^{\cs}$. As a result, we must have $j \in \cS^{\cs}$ since $p_j \leq p_{(\ell^*)}$. This concludes the proof of the lemma.
    \end{proof}

\subsection{Proof of Corollary~\ref{cor:compare_cs}}
\label{app:cor_compare_cs}

\begin{proof}[Proof of Corollary~\ref{cor:compare_cs}]
    For (i), observe that by Theorem~\ref{thm:evalue_sel} we have $\EE[L_{n+j}E_{\gamma,n+j}(L_{n+j})] \leq 1$. Since $L_{n+j} \in \{0, 1\}$, it follows that $L_{n+j}E_{\gamma,n+j}(L_{n+j}) = L_{n+j}E_{\gamma, n+j}(1)$, and hence $\EE[L_{n+j}E_{\gamma,n+j}'] \leq 1$, which yields  (i) by Theorem~\ref{thm:general_sel_risk}. For (ii), denote the selection set of conformal selection and SCoRE (at the same nominal level $\alpha$ and setting $\gamma=\alpha$) by $\cS^{\text{CS}}$ and $\cS^{\text{SCoRE}}$, respectively. By  Theorem~\ref{thm:sdr_conn_CS} and property of the eBH procedure we immediately have $\cS^{\text{SCoRE}} \supseteq \cS^{\text{CS}}$. Conversely, suppose that $j \in \cS^{\text{SCoRE}}$. Then clearly, $E_{\alpha, n+j} \neq 0$, which by Theorem~\ref{thm:sdr_conn_CS} implies $e_j \neq 0$. By definition of $e_j$, this in turn gives $p_j \leq \alpha|\cS^{\text{CS}}| / m$, so $j \in \cS^{\text{CS}}$. Hence $\cS^{\text{SCoRE}} \subseteq \cS^{\text{CS}}$, which concludes the proof of (ii).
\end{proof}

\subsection{Proof of Theorem~\ref{thm:eval_boost}}
\label{app:subsec_boost}

\begin{proof}[Proof of Theorem~\ref{thm:eval_boost}]
    In this proof, we define the notation $\EE_{\bE, \bL}$ and $\PP_{\bE, \bL}$ to denote expectation or probability conditional on the base e-values $\bE := (E_{n+1}, \dots, E_{n+m})$ and the risks $\bL := (L_{n+1}, \dots, L_{n+m})$. Our proof strategy is to analyze the impact of boosting coefficients $\xi_{n+j}$ on the selection set, after fixing a certain set of e-values and risks.

    First, by the law of total expectation, we have
    \begin{align*}
        \textnormal{SDR} = \EE[\textnormal{SDR}(\bE)] = \sum_{j=1}^m \EE[\textnormal{SDR}(\bE, j)],
    \end{align*}
    where we define
    \begin{align*}
        \textnormal{SDR}(\bE, \bL) := \EE_{\bE, \bL} \Bigg[ \frac{\sum_{j=1}^m L_{n+j} \ind\{j \in \cR\}}{|\cR|} \Bigg], \quad \text{and} \quad \textnormal{SDR}(\bE, \bL, j) := \EE_{\bE, \bL} \Bigg[ \frac{L_{n+j} \ind\{j \in \cR\}}{|\cR|} \Bigg].
    \end{align*}
    We note that the randomness of $\textnormal{SDR}(\bE, \bL)$ and $\textnormal{SDR}(\bE, \bL, j)$ only lies in the boosting coefficients $\xi_j$. By properties of the eBH procedure, we can then write
    \begin{align*}
        \textnormal{SDR}(\bE, \bL, j) &= \EE_{\bE, \bL} \Bigg[ \frac{L_{n+j} \ind\{E_{n+j}/\xi_{n+j} \geq m/\alpha |\cR|\}}{|\cR|} \Bigg] \\
        &= \sum_{k=1}^m \EE_{\bE, \bL}  \Bigg[ \frac{L_{n+j} \ind\{E_{n+j}/\xi_{n+j} \geq m/\alpha k\}}{k} \ind\{|\cR| = k\} \Bigg].
    \end{align*}
    From here, we consider the case of heterogeneous and homogeneous boosting separately.
    
    \textbf{Heterogeneous Boosting}. We employ a similar leave-one-out argument as in~\cite{jin2023model, bai2024optimized}. Define $\cR_{j \to \infty}$ as the rejection index set of eBH (at level $\alpha$) applied to the set $\{E_{n+1}/\xi_{n+1}, \dots, E_{n+j-1}/\xi_{n+j-1}, \infty, E_{n+j+1}/\xi_{n+j+1}, \dots, E_{n+m}/\xi_{n+m}\}$. Then, clearly $\cR \subseteq \cR_{j \to \infty}$, and when the $j$-th sample is already rejected, i.e. $E_{n+j}/\xi_{n+j} \geq m/\alpha k$, we know $\cR = \cR_{j \to \infty}$. This is due to the step-up nature of eBH. Consequently,
    \begin{align*}
        &\EE_{\bE, \bL}  \Bigg[ \frac{L_{n+j} \ind\{E_{n+j}/\xi_{n+j} \geq m/\alpha k\}}{k} \ind\{|\cR| = k\} \Bigg]\\
        &\leq \EE_{\bE, \bL}  \Bigg[ \frac{L_{n+j} \ind\{E_{n+j}/\xi_{n+j} \geq m/ \alpha k \} }{k} \ind\{|\cR_{j\to \infty}| = k\} \Bigg] \\
        &= \frac{L_{n+j}}{k}\EE_{\bE, \bL} \Big[\ind\{|\cR_{j\to \infty}| = k\} \Big] \cdot \PP(\xi_{n+j} \leq E_{n+j} \alpha k/m) \\
        &= \EE_{\bE, \bL} \Big[\ind\{|\cR_{j\to \infty}| = k\} \Big] \cdot L_{n+j} E_{n+j} \alpha/m 
    \end{align*}
    where the first equality is due to independence between $\cR_{j \to \infty}$ and $\xi_j$, and the second equality is due to the uniformity of $\xi_j$. Then, we know
    \begin{align*}
        \textnormal{SDR}(\bE, \bL) = \sum_{j=1}^m \sum_{k=1}^m \EE_{\bE, \bL} [\ind\{|\cR_{j\to \infty}| = k\}] \cdot L_{n+j}E_{n+j} \alpha/m = \sum_{j=1}^m L_{n+j}E_{n+j} \alpha/m.
    \end{align*}
    Finally, taking expectation over $(\bE, \bL)$ yields $\textnormal{SDR} \leq \sum_{j=1}^m \frac{\alpha}{m}\EE[L_{n+j}E_{n+j}] \leq \sum_{j=1}^m \alpha/m =\alpha$.

    \textbf{Homogeneous Boosting}. 
    To prove this case, we first further decompose the SDR. We have
    \begin{align*}
       & \textnormal{SDR}(\bE, \bL, j)
        = \sum_{k=1}^m \EE_{\bE, \bL} \Bigg[ \frac{L_{n+j}\ind\{E_{n+j} /\xi \geq m/\alpha k\} }{k} (\ind\{|\cR| \leq k\} - \ind\{|\cR| \leq k-1\}) \Bigg] \\
        &= 
        \EE_{\bE, \bL} \Big[  \frac{L_{n+j}\ind\{E_{n+j} /\xi \geq \alpha\} }{m} \Big] + \sum_{k=1}^m \EE_{\bE, \bL} \Big[ \frac{L_{n+j}\ind\{E_{n+j} /\xi \geq m/\alpha k\}}{k} \ind\{|\cR| \leq k\} \Big] \\
        & \qquad \qquad \qquad \qquad \qquad-
        \sum_{k=0}^{m-1} \EE_{\bE, \bL} \Big[ \frac{L_{n+j}\ind\{E_{n+j} /\xi \geq m/\alpha (k+1)\} }{k+1} \ind\{|\cR| \leq k\} \Big] \\
        &= \frac{L_{n+j}}{m}\PP_{\bE, \bL}(\xi \leq \alpha E_{n+j}) + \sum_{k=1}^m \frac{L_{n+j}}{k} \PP_{\bE, \bL}(\xi/E_{n+j} \leq \alpha k/m, |\cR| \leq k) \\
        &\qquad \qquad \qquad \qquad \qquad- \sum_{k=0}^{m-1} \frac{L_{n+j}}{k+1} \PP_{\bE, \bL}(\xi/E_{n+j} \leq \alpha (k+1)/m, |\cR| \leq k) \\
        &= \frac{L_{n+j}}{m}\PP_{\bE, \bL}(\xi \leq \alpha E_{n+j}) + \sum_{k=1}^m \frac{L_{n+j}}{k} \PP_{\bE, \bL}( |\cR| \leq k \mid \xi/E_{n+j} \leq \alpha k/m) \PP_{\bE, \bL}(\xi/E_{n+j} \leq \alpha k/m) \\
        &\qquad \qquad- \sum_{k=0}^{m-1} \frac{L_{n+j}}{k+1} \PP_{\bE, \bL}(|\cR| \leq k \mid \xi/E_{n+j} \leq \alpha (k+1)/m) \PP_{\bE, \bL}(\xi/E_{n+j} \leq \alpha (k+1)/m).
    \end{align*}
    To proceed, we relies on the PRDS property of the boosted e-values $(\xi /E_{n+1}, \dots, \xi /E_{n+m})$, if a common boosting factor is used. 

    \begin{lemma}
        Let $a_1, \dots, a_m \in \mathbb{R} \cup\{+\infty\}$ be non-negative, fixed constants, and let $\xi \sim \textnormal{Unif}(0,1)$. Then, the random variables $(a_1\xi, \dots, a_m\xi)$ are PRDS on the index set $\{j: a_j \neq \infty\}$.
    \end{lemma}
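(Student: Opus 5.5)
The plan is to exploit the fact that the whole random vector $(a_1\xi,\dots,a_m\xi)$ is a coordinatewise nondecreasing deterministic function of the single scalar $\xi$. Conditioning on any one coordinate therefore pins down the entire vector, and monotonicity in the conditioning value is immediate. Recall the PRDS property on an index set $I$: for every increasing set $D\subseteq(\RR\cup\{+\infty\})^m$ and every $j\in I$, the map $x\mapsto \PP\big((a_1\xi,\dots,a_m\xi)\in D \given a_j\xi = x\big)$ is nondecreasing in $x$. Here $D$ is increasing in the usual sense: $u\in D$ and $u\le v$ coordinatewise imply $v\in D$.

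First I would fix conventions so that the vector is well defined. Since $\xi\sim\textnormal{Unif}(0,1)$, we have $\xi>0$ almost surely, so for any $a_i=+\infty$ the coordinate $a_i\xi=+\infty$ almost surely; the ambiguous product $0\cdot\infty$ occurs only on a null set, which we discard. Define $g\colon(0,1)\to(\RR\cup\{+\infty\})^m$ by $g(u)=(a_1u,\dots,a_mu)$. Because every $a_i\ge 0$, each coordinate of $g$ is nondecreasing in $u$ (constant when $a_i\in\{0,+\infty\}$). Hence $u\le u'$ implies $g(u)\le g(u')$ coordinatewise, and consequently $u\mapsto \ind\{g(u)\in D\}$ is nondecreasing for any increasing $D$.

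Next, fix $j$ with $a_j<\infty$ and split into two cases.
\begin{itemize}
\item If $a_j>0$: the map $u\mapsto a_ju$ is a bijection from $(0,1)$ onto $(0,a_j)$, so $\xi = (a_j\xi)/a_j$. A regular version of the conditional law, given $a_j\xi=x$, is the point mass at $g(x/a_j)$. Therefore $\PP\big(g(\xi)\in D\given a_j\xi=x\big)=\ind\{g(x/a_j)\in D\}$, which is nondecreasing in $x$ by the previous step.
\item If $a_j=0$: the coordinate $a_j\xi\equiv 0$ is degenerate. The conditioning value $x$ then ranges over the single point $0$, so the monotonicity requirement holds vacuously; equivalently, the conditional probability equals the constant $\PP(g(\xi)\in D)$.
\end{itemize}
Indices with $a_j=+\infty$ are excluded from the index set exactly because conditioning on the degenerate value $+\infty$ carries no information. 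This is why the lemma restricts to $\{j\colon a_j\neq\infty\}$.

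I expect the only genuinely delicate point to be the measure-theoretic bookkeeping: choosing a version of the conditional probability, handling the null event $\{\xi=0\}$ where $0\cdot\infty$ is undefined, and treating the degenerate coordinates. The argument itself reduces to the observation that a comonotone vector driven by one uniform variable is trivially PRDS. If the PRDS definition used downstream is the unconditional version, namely that $\PP(g(\xi)\in D\given a_j\xi\le x)$ is nondecreasing in $x$, I would verify that form directly as well. In that form $\{a_j\xi\le x\}=\{\xi\le x/a_j\}$, and $\PP(g(\xi)\in D\given \xi\le c)$ is the average of a nondecreasing indicator over $\textnormal{Unif}(0,c)$. That average is nondecreasing in $c$, because the conditional law of $\xi$ given $\xi\le c$ is stochastically increasing in $c$.
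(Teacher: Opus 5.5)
Your proof is correct and follows essentially the paper's route. The paper gives no self-contained argument: it defers to Lemma C.1 of \cite{jin2023model}, adapted to allow $a_j=\infty$, and that lemma rests on the same observation you use. Namely, $(a_1\xi,\dots,a_m\xi)$ is a coordinatewise nondecreasing function of the single uniform $\xi$, so conditioning on one finite coordinate pins down the vector, with the convention $a_i\xi=+\infty$ for $a_i=+\infty$ that you adopt. Your extra check of the form $\PP(\,\cdot \mid a_j\xi\le x)$ is worthwhile, since that is the version invoked in the homogeneous-boosting proof. It is also the natural reason indices with $a_j=\infty$ are excluded: $\{a_j\xi\le x\}$ is then a null event for every finite $x$. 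This is more accurate than your remark that conditioning on $+\infty$ ``carries no information'', which would equally apply to the included case $a_j=0$.
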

    
    The proof of above lemma can be easily adapted from that of Lemma C.1 in~\cite{jin2023model} by additionally considering the case $a_j = \infty$. Setting $a_j = 1/E_{n+j}$ in this lemma, we know that $(\xi /E_{n+1}, \dots, \xi /E_{n+m})$ is PRDS on $\{j: E_{n+j} \neq 0\}$, conditional on $(\bE, \bL)$. As a result,
    \begin{align*}
        \PP_{\bE, \bL}( |\cR| \leq k \mid \xi/E_{n+j} \leq \alpha k/m) \leq \PP_{\bE, \bL}( |\cR| \leq k \mid \xi/E_{n+j} \leq \alpha (k+1)/m)
    \end{align*}
    for $j$ with $e_j < \infty$, since $|\cR| \leq k$ is an increasing set in the boosted e-values. Due to the independence of $\xi$ from every other variable, we obtain
    \begin{align*}
        &\textnormal{SDR}(\bE, \bL, j) \leq\frac{L_{n+j}}{m}\PP_{\bE, \bL}(\xi \leq \alpha E_{n+j})+ 
        \sum_{k=1}^{m-1} L_{n+j} \Big\{ \frac{1}{k} \PP_{\bE, \bL}(\xi/E_{n+j} \leq \alpha k/m) - \phantom{a}\\
        &\qquad \qquad \frac{1}{k+1} \PP_{\bE, \bL}(\xi/E_{n+j} \leq \alpha(k+1)/m) \Big\}\PP_{\bE, \bL}(|\cR| \leq k \mid\xi/E_{n+j} \leq \alpha k/m) \\
        &= \frac{ L_{n+j} \min\{\alpha E_{n+j}, 1\}}{m} + L_{n+j}\sum_{k=1}^{m-1} \Big\{\frac{\min\{1, \alpha kE_{n+j}/m\}}{k} - \frac{\min\{1, \alpha (k+1)E_{n+j}/m\}}{k+1} \Big\} \\
        & \qquad \qquad \quad \cdot \PP_{\bE, \bL}(|\cR| \leq k \mid\xi/E_{n+j} \leq \alpha k /m). \\
        &= \frac{ L_{n+j} \min\{\alpha E_{n+j}, 1\}}{m} + L_{n+j}\sum_{k=1}^{m-1} \Big\{\min\{1/k, \alpha E_{n+j}/m\} - \min\{1/(k+1), \alpha E_{n+j}/m\} \Big\} \\
        & \qquad \qquad \quad \cdot \PP_{\bE, \bL}(|\cR| \leq k \mid\xi/E_{n+j} \leq \alpha k /m). \tag{$*$}
    \end{align*}
    
    Now, if $\alpha E_{n+j} \leq 1,$ then both minimum term in the summation evaluated to $\alpha E_{n+j}/m$ for any $k$, and we have $\textnormal{SDR}(\bE, \bL, j) \leq L_{n+j} \min\{\alpha E_{n+j}, 1\}/m =  \alpha L_{n+j}E_{n+j}/m$. Otherwise, we let $k^* \in \mathbb{N}$ to be the unique integer with $\frac{1}{k^*+1} \leq \alpha E_{n+j}/m \leq \frac{1}{k^*}$, and we have
    \begin{align*}
        \textnormal{SDR}(\bE, \bL, j) &= \frac{ L_{n+j}}{m} + L_{n+j} \Bigg\{\frac{\alpha E_{n+j}}{m} - \frac{1}{k^*+1} + \sum_{k=k^*+1}^{m-1} (\frac{1}{k} - \frac{1}{k+1}) \Bigg\}\PP_{\bE, \bL}(|\cR| \leq k \mid\xi/E_{n+j} \leq \alpha k /m) \\
        &\leq \frac{L_{n+j}}{m} + L_{n+j} \Big( \frac{\alpha E_{n+j}}{m} - \frac{1}{m} \Big) \\
        &= \frac{\alpha L_{n+j}E_{n+j}}{m}.
    \end{align*}
    Putting above bounds together, we know
    \begin{align*}
        \textnormal{SDR}(\bE, \bL) \leq \sum_{j=1}^m L_{n+j}E_{n+j}\alpha/m,
    \end{align*}
    and we conclude the proof by taking the expectation over $(\bE, \bL)$.
\end{proof}

\subsection{Proof of Theorem~\ref{thm:asymptotic_sdr}}
\label{app:subsec_proof_asymptotic_sdr}

\begin{proof}[Proof of Theorem~\ref{thm:asymptotic_sdr}]
    Throughout, we view $s(\cdot)$ as fixed. 
    Note that by definition, we have 
    \$
    \frac{ \sum_{i=1}^n L_i\ind\{s(X_i)\leq t\}}{1+\sum_{\ell=1}^m \ind\{s(X_{n+\ell}) \leq t\}}\cdot \frac{m}{n+1} \leq \fr_{n+j}(t;\ell) \leq  \frac{1 + \sum_{i=1}^n L_i\ind\{s(X_i)\leq t\}}{ 1\vee \sum_{\ell=1}^m \ind\{s(X_{n+\ell}) \leq t\}}\cdot \frac{m}{n+1}.
    \$
    Then, as $m,n\to \infty$, the uniform law of large numbers applied to $\frac{1}{n}\sum_{i=1}^n L_i\ind\{s(X_i)\leq t\}$ and $\frac{1}{m}\sum_{\ell=1}^m \ind\{s(X_{n+\ell}) \leq t\}$ implies that 
    \$
    \sup_{1\leq j\leq m} \sup_{t\in\cM,\ell\in [0,1]} \big|\fr_{n+j}(t;\ell) - \fr(t) \big| ~\asto~ 0.
    \$
    Since the distribution of $s(X)$ has no point mass, the function $\fr(t)$ is continuous. Since $\fr(t)<\alpha$ for $t\in (t^*-\delta,t^*)$ for any sufficiently small $\delta>0$, we know that 
    \@\label{eq:conv_t_ell}
    \sup_{1\leq j\leq m} \sup_{ \ell\in [0,1]} \big| t_{\gamma,n+j} (\ell) - t_\gamma^* \big|  ~\asto ~ 0. 
    \@
    Also, due to the continuity of the distribution of $s(X)$, we have 
    \@\label{eq:fr_t_gamma_star}
    \fr(t_\gamma^*) =\frac{\EE[L\ind\{s(X)\leq t_\gamma^*\}]}{\PP(s(X)\leq t_\gamma^*)} = \gamma. 
    \@
    For simplicity, we write $s_i=s(X_i)$ for $i\in [n+m]$. We define 
    \$
    \hat{F}(\eta) = \frac{1}{m}\sum_{j=1}^m \ind\{E_{\gamma,n+j}\geq \eta\},\quad \eta>0.
    \$
    By Proposition~\ref{prop:mult_equiv}, we know that $t_{\gamma,n+j}(\ell)$ is decreasing in $\ell \in [0,1]$, and therefore
    \$
    \hat{F}(\eta) = \frac{1}{m}\sum_{j=1}^m \ind\{s_{n+j} \leq t_{\gamma,n+j}(0)\} \ind \Big\{ \frac{1+\sum_{i=1}^n L_i\ind\{s_i\leq t_{\gamma,n+j}(0)\}}{n+1} \leq 1/\eta \Big\}.
    \$
    By~\eqref{eq:conv_t_ell} and the uniform law of large numbers applied to $\frac{1}{n}\sum_{i=1}^n L_i\ind\{s_i \leq t\}$, as well as the continuity of the distribution of $s(X)$, we have 
    \$
    \sup_{1\leq j\leq m} \bigg| \frac{1+\sum_{i=1}^n L_i\ind\{s_i\leq t_{\gamma,n+j}(0)\}}{n+1} - \EE[L\ind\{s(X)\leq t_\gamma^*\}] \bigg| ~\asto ~ 0.
    \$
    As such, for any $\eta<1/\EE[L\ind\{s(X)\leq t_\gamma^*\}]$, by~\eqref{eq:conv_t_ell} we have 
    \$
    \min_{1\leq j\leq m} \ind \bigg\{ \frac{1+\sum_{i=1}^n L_i\ind\{s_i\leq t_{\gamma,n+j}(0)\}}{n+1} \leq 1/\eta \bigg\} ~\asto ~ 1,
    \$
    whereas for any $\eta> 1/\EE[L\ind\{s(X)\leq t_\gamma^*\}]$, by~\eqref{eq:conv_t_ell} we have 
    \$
    \max_{1\leq j\leq m} \ind \bigg\{ \frac{1+\sum_{i=1}^n L_i\ind\{s_i\leq t_{\gamma,n+j}(0)\}}{n+1} \leq 1/\eta \bigg\} ~\asto ~ 0.
    \$ 
    Therefore, the uniform law of large numbers and the continuity of the distribution of $s_{n+j}$'s imply  
    \@\label{eq:conv_F_eta}
    \sup_{\eta>0,~\eta\neq 1/ \EE[L\ind\{s(X)\leq t_\gamma^*\}]} \big|\hat{F}(\eta) - F^*(\eta)\big|~\asto ~ 0, 
    \@
    where  
    \$
    F^*(\eta) = \PP\big( s(X)\leq t_\gamma^*  \big) \ind\Big\{\eta \leq 1/\EE[L\ind\{s(X)\leq t_\gamma^*\}]  \Big\}.
    \$
    Note that the e-BH procedure can be rewritten as 
    \$
    \hat\psi_{n+j} = \ind\{E_{\gamma,n+j}\geq \hat\eta\}, \quad \hat\eta = \inf\bigg\{ \eta \colon \frac{m}{\eta\sum_{j=1}^m \ind\{E_{\gamma,n+j}\geq \eta\}} \leq \alpha  \bigg\}.
    \$
    Put differently, we have $\hat\eta = \inf\{\eta\colon \eta \hat{F}(\eta) \geq 1/\alpha\}$. Due to~\eqref{eq:conv_F_eta}, we know that $\eta\hat{F}(\eta) ~\asto~ 0$ uniformly over $\eta>1/\EE[L\ind\{s(X)\leq t_\gamma^*\}]$, whereas 
    \@\label{eq:conv_eta_F_eta}
    \sup_{\eta< 1/\EE[L\ind\{s(X)\leq t_\gamma^*\}]} \big| \eta\hat{F}(\eta) - \eta F^*(\eta) \big| ~\asto ~ 0.
    \@
    Writing $\eta^* = 1/\EE[L\ind\{s(X)\leq t_\gamma^*\}]$, we have 
    \$
    \eta^* F^*(\eta^*) = \frac{\PP(s(X)\leq t_\gamma^*)}{\EE[L\ind\{s(X)\leq t_\gamma^*\}]}  = \fr(t_\gamma^*)^{-1} = 1/\gamma,
    \$
    and $\eta F(\eta)=0$ for all $\eta>\eta^*$. On the other hand, for any $\eta<\eta^*$, it holds that 
    \$
    \eta F^*(\eta) = \eta \PP(s(X)\leq t_\gamma^*) < \frac{\PP(s(X)\leq t_\gamma^*)}{\EE[L\ind\{s(X)\leq t_\gamma^*\}]} = 1/\gamma,\quad \text{for all } \eta <\eta^*,
    \$
    where the inequality uses $\eta<\eta^*$ and~\eqref{eq:fr_t_gamma_star}.
    Therefore, for any $\gamma>\alpha$, we have $\PP(\hat\eta=+\infty)\to 1$ and thus $\PP(\hat\psi_{n+j}=1,~\forall j)\to 0$, i.e., the procedure is powerless. On the other hand, for any $\gamma\leq \alpha$, due to the uniform convergence~\eqref{eq:conv_eta_F_eta} and the linearity of the limiting function $\eta F^*(\eta)$, we see that 
    \$
    \hat\eta~\asto ~ \eta_\gamma^*,\quad \text{where}\quad \eta_\gamma^* = \frac{1}{\alpha \cdot \PP(s(X)\leq t_\gamma^*)} = \inf\{\eta\colon \eta F^*(\eta)\geq 1/\alpha\}<\eta^*.
    \$
    Recalling the definition of power and writing $R_{n+j}=r(X_{n+j},Y_{n+j})$ for simplicity, for any $\eta<\alpha$, 
    \@\label{eq:power_limit}
    &\frac{1}{m}\sum_{j=1}^m r(X_{n+j},Y_{n+j}) \hat\psi_{n+j} \notag \\ 
    &= \frac{1}{m}\sum_{j=1}^m R_{n+j} \ind\{E_{\gamma,n+j}\geq \hat\eta\} \notag \\ 
    &= \frac{1}{m}\sum_{j=1}^m R_{n+j} \ind\{s_{n+j} \leq t_{\gamma,n+j}(0)\} \ind \Big\{ \frac{1+\sum_{i=1}^n L_i\ind\{s_i\leq t_{\gamma,n+j}(0)\}}{n+1} \leq 1/\hat\eta \Big\} \notag \\ 
    &\asto ~ \EE\big[  r(X,Y) \ind\{s(X)\leq t_\gamma^* \} \big] \cdot \ind\big\{ \EE[L\ind\{s(X)\leq t_\gamma^*\}] \leq 1/\eta_\gamma^*\big\},
    \@
    where the last a.s.~convergence uses the uniform law of large numbers for $\{s_{n+j}\}_{j=1}^m$ and $\{s_i\}_{i=1}^n$, and the convergence of $1/\hat\eta~\asto~ 1/\eta_\gamma^* > 1/\eta^* = \EE[L\ind\{s(X)\leq t_\gamma^*\}] $.  
    We thus have 
    \$
    &\frac{1}{m}\sum_{j=1}^m r(X_{n+j},Y_{n+j}) \hat\psi_{n+j}~\asto ~ \EE\big[  r(X,Y) \ind\{s(X)\leq t_\gamma^* \} \big],\quad \text{for all } \gamma< \alpha. 
    \$
    By the dominated convergence theorem, this also implies the converges of the expectation since every $r(X_{n+j},Y_{n+j})$ is bounded. 
    We also remark that the limiting behavior at the critical point $\gamma=\alpha$ is unclear, since in the indicator function $ \ind \big\{ \frac{1+\sum_{i=1}^n L_i\ind\{s_i\leq t_{\gamma,n+j}(0)\}}{n+1} \leq 1/\hat\eta \big\}$, both sides converge to the same limit. We thus focus our discussion on $\gamma\uparrow \alpha$ in the next. 

    Now let $\gamma\uparrow \alpha$. The problem of optimizing the asymptotic power subject to SDR constraint reduces to maximizing $\EE\big[  r(X,Y) \ind\{s(X)\leq t_\gamma^* \} \big]$ where $\gamma\uparrow \alpha$, which is equivalent to 
    \$
    \maximize_{s(\cdot),t\in \RR} \quad &\EE\big[  r(X,Y) \ind\{s(X)\leq t \} \big] \\ 
    \textnormal{subject to} \quad &\frac{\EE[L\ind\{s(X)\leq t\}]}{\PP(s(X)\leq t)} \leq \alpha.
    \$
    Using the equivalent representation with the binary function $b(X)=\ind\{s(X)\leq t\}$ as the decision variable, the above optimization program is further equivalent to 
    \$
    \maximize_{b(\cdot) } \quad &\EE\big[  r(X,Y) b(X)\big] \\ 
    \textnormal{subject to} \quad & \EE[Lb(X)]  \leq \alpha \EE[b(X)].
    \$
    Now letting $r(X)=\EE[r(X,Y)\given X]$ and $l(X)=\EE[L\given X]$, it is further equivalent to 
    \$
    \maximize_{b(\cdot) } \quad &\EE\big[  r(X) b(X)\big] \\ 
    \textnormal{subject to} \quad & \EE[(l(X) -\alpha)b(X)]  \leq 0.
    \$
    It is clear that the optimal $b^*(X)$ must take the value of $1$ whenever $l(X)-\alpha \leq 0$, as it increase the objective without increasing $\EE[(l(X) -\alpha)b(X)]$. As such, the optimal solution and objective of the above program is further equivalent to those of
    \$
    \maximize_{b(\cdot) } \quad &\EE\big[  r(X) b(X)\big] \\ 
    \textnormal{subject to} \quad & \EE[(l(X) -\alpha)_+b(X)]  \leq \EE[(l(X) -\alpha)_-],
    \$
    where we denote $x_+=\max\{x,0\}$ and $x_-=\max\{-x,0\}$ for any $x\in \RR$. 

    We now define $\rho(x)= (l(x)-\alpha)_+/r(x)$, and $b^*(x):= \ind\{\rho(x) \leq c_0\}$ where $c_0= \sup\{c\colon \EE[(l(X)-\alpha)_+ \ind\{\rho(X)\leq c\} ] \leq \EE[(l(X)-\alpha)_-]\}$, and show the optimality of $b^*(X)$ with similar ideas as the Neyman-Pearson Lemma. Due to  the continuity of the distribution of $\rho(X)$, we know $\EE[(l(X)-\alpha)_+b^*(X)]=\EE[(l(X)-\alpha)_-]$. 
    Let $b(\cdot)$ be any binary function that obeys $\EE[(l(X)-\alpha)_+b(X)]\leq \EE[(l(X)-\alpha)_-]$. Since $b(x)\in \{0,1\}$, we have $b^*(x)-b(x)\geq 0$ whenever $\rho(x)\leq c_0$ and $b^*(x)-b(x)\leq 0$ whenever $\rho(x) > c_0$. This implies $ (\rho(X)-c_0)(b(X)-b^*(X)) \geq 0$ almost surely. Multiplying both sides by $r(X)$ and taking expectation, we obtain 
    \$
    \EE\Big[ \big\{(l(X)-\alpha)_+ - c_0r(X)\big\} \cdot\big\{ b(X)-b^*(X)\big\}  \Big] \geq 0.
    \$
    Re-organizing terms, we have 
    \$
    c_0\EE\big[r(X)b(X)\big] &\leq c_0\EE\big[r(X)b^*(X)\big] + \EE\big[  (l(X)-\alpha)_+ \cdot \{ b(X)-b^*(X) \}  \big] \\ 
    &= c_0\EE\big[r(X)b^*(X)\big] + \EE\big[  (l(X)-\alpha)_+   b(X)  \big] - \EE\big[  (l(X)-\alpha)_+   b^*(X)  \big] \\ 
    &\leq c_0\EE\big[r(X)b^*(X)\big],
    \$
    where the last inequality uses the fact that $\EE[(l(X)-\alpha)_+b(X)]\leq \EE[(l(X)-\alpha)_-]$ and $\EE[(l(X)-\alpha)_+b^*(X)] = \EE[(l(X)-\alpha)_-]$. Dividing both sizes by $c_0$, we then have $\EE[r(X)b(X)]\leq \EE[r(X)b^*(X)]$, confirming the optimality of $b^*(\cdot)$. 
    Recalling that $b^*(x)=1$ whenever $l(x)\leq \alpha$, it can be equivalently written as $b^*(x) = \ind\{(l(x)-\alpha)/r(x)\leq c_0\}$, where $c_0= \sup\{c\colon \EE[(l(X)-\alpha)  \ind\{\rho(X)\leq c\} ] \leq 0\}$. 

    So far, we have shown that the asymptotic power (as $\gamma\uparrow \alpha$) is optimized for any function $s(\cdot)$ such that $b^*(X)=\ind\{s(X)\leq t\}$ for the critical value of $t$ that obeys the constraint $\frac{\EE[L\ind\{s(X)\leq t\}]}{\PP(s(X)\leq t)} \leq \alpha$, where $b^*(x)=\ind\{(l(x)-\alpha)/r(x)\leq c_0\}$. Noting the equivalent constraint $ \EE[l(X)\ind\{s(X)\leq t\}]\leq \alpha \PP(s(X)\leq t)$, we see that this is true for any $s(x)$ that is monotone in $(l(x)-\alpha)/r(x)$, thereby completing the proof of the last statement.  
\end{proof}

\subsection{Proof of Theorem~\ref{thm:w_single}}

\begin{proof}[Proof of Theorem~\ref{thm:w_single}] \label{app:thm_w_single}
We use a similar proof strategy as in the proof of Theorem~\ref{thm:single}. Since $L_{n+j} \in [0,1]$, we have
\$
\EE[L_{n+1} E_{\gamma, n+1}] &= 
\EE\Bigg[L_{n+1} \cdot \inf_{\ell \in[0,1]} ~\Bigg\{ \frac{  \ind\{s(X_{n+1})\leq t(\ell)\} \cdot \sum_{i=1}^{n+1}w(X_{i}) }{\sum_{i=1}^n w(X_i)\cdot L_i \ind\{s(X_{i})\leq t(\ell)\} + w(X_{n+1})\cdot \ell\ind\{s(X_{n+1})\leq t(\ell)\} } \Bigg\} \Bigg]
\\
&\leq \EE\Bigg[ \frac{ L_{n+1}  \ind\{s(X_{n+1})\leq T_{\gamma,n+1}\} \cdot \sum_{i=1}^{n+1}w(X_{i})}{\sum_{i=1}^n w(X_i) \cdot L_i \ind\{s(X_{i})\leq T_{\gamma,n+1}\} + w(X_{n+1}) \cdot L_{n+1}\ind\{s(X_{n+1})\leq T_{\gamma,n+1}\} }  \Bigg],
\$
where $T_{\gamma, n+1} := t_{\gamma}(L_{n+1}) = \max\{t: \mathrm{F}(t, L_{n+1}) \leq \gamma\}$.  By definition, $\mathrm{F}(t, L_{n+1})$ is invariant to permutations of $(Z_1, \dots, Z_{n+1})$ for any $t$, hence so must be $T_{\gamma, n+1}$. $T_{\gamma, n+1}$ is therefore deterministic, conditional on $[\cZ]$. In addition, due to the weighted exchangeability~\citep{TibshiraniBCR19}, for any fixed values $z_1, \dots, z_{n+1}$, conditional on the event $[\cZ] = [z_1, \dots, z_{n+1}]$, the data sequence follows the distribution
\$
(Z_1,\dots,Z_{n+1})\biggiven \big\{[\cZ]=[z_1,\dots,z_{n+1}] \big\} \quad &\sim \quad \sum_{\sigma \in S_{n+1}} \frac{\prod_{i=1}^{n+1} w_i(x_{\sigma(i)})}{\sum_{\pi \in S_{n+1}} \prod_{i=1}^{n+1} w_i(x_{\pi(i)})} \delta_{(z_{\sigma(1)},\dots,z_{\sigma(n+1)})} \\
&= \quad \sum_{\sigma \in S_{n+1}} \frac{w_{n+1}(x_{\sigma(n+1)})}{\sum_{\pi \in S_{n+1}} w_{n+1}(x_{\pi(n+1)})} \delta_{(z_{\sigma(1)},\dots,z_{\sigma(n+1)})}
\$
where $w_i \equiv 1$ for $1 \leq i \leq n$, $w_{n+1} = w$ in the definition of weighted exchangeability, $\delta_x$ is the point mass at $x$, and $S_{n+1}$ is the collection of all permutations of $\{1, \dots, n+1\}$. Putting them together, for any fixed values $[z_1, \dots, z_{n+1}]$,
\$
&\EE\Bigg[ \frac{ L_{n+1}  \ind\{s(X_{n+1})\leq T_{\gamma,n+1}\} \cdot \sum_{i=1}^{n+1}w(X_{i})}{\sum_{i=1}^n w(X_i) \cdot L_i \ind\{s(X_{i})\leq T_{\gamma,n+1}\} + w(X_{n+1}) \cdot L_{n+1}\ind\{s(X_{n+1})\leq T_{\gamma,n+1}\} }\Bigggiven [\cZ] =[z_1,\dots,z_{n+1}]\Bigg] \\ 
&= \sum_{\sigma \in S_{n+1}} \frac{w_{n+1}(x_{\sigma(n+1)})}{\sum_{\pi \in S_{n+1}} w_{n+1}(x_{\pi(n+1)})} \frac{\sum_{i=1}^{n+1}w_{n+1}(x_{i}) \cdot \ell_{\sigma(n+1)} \ind\{ s(x_{\sigma(n+1)}) \leq T_{\gamma,n+1} \} }{\sum_{i=1}^{n+1} w_{n+1}(x_i) \cdot \ell_i \ind\{s(x_i)\leq T_{\gamma,n+1}\}} \\ 
&= \sum_{\sigma \in S_{n+1}} \sum_{j=1}^{n+1} \frac{\ind\{\sigma(n+1) = j\} \cdot w_{n+1}(x_{j})}{\sum_{\pi \in S_{n+1}} w_{n+1}(x_{\pi(n+1)})} \cdot \frac{\ell_{j} \sum_{i=1}^{n+1}w_{n+1}(x_{i}) \ind\{ s(x_j) \leq T_{\gamma,n+1} \} }{\sum_{i=1}^{n+1} w_{n+1}(x_i) \cdot \ell_i \ind\{s(x_i)\leq T_{\gamma,n+1}\}} \\ 
&= \sum_{j=1}^{n+1} \frac{n! \cdot w_{n+1}(x_{j})}{\sum_{\pi \in S_{n+1}} w_{n+1}(x_{\pi(n+1)})} \cdot \frac{\ell_{j} \sum_{i=1}^{n+1}w_{n+1}(x_{i}) \ind\{ s(x_j) \leq T_{\gamma,n+1} \} }{\sum_{i=1}^{n+1} w_{n+1}(x_i) \cdot \ell_i \ind\{s(x_i)\leq T_{\gamma,n+1}\}} \\ 
&= \sum_{j=1}^{n+1} \frac{n! \cdot w_{n+1}(x_{j})}{n! \sum_{i=1}^{n+1} w_{n+1}(x_i)} \cdot \frac{\ell_{j} (\sum_{i=1}^{n+1}w_{n+1}(x_{i})) \ind\{ s(x_j) \leq T_{\gamma,n+1} \} }{\sum_{i=1}^{n+1} w_{n+1}(x_i) \cdot \ell_i \ind\{s(x_i)\leq T_{\gamma,n+1}\}} \\
&= \sum_{j=1}^{n+1} \frac{n! \sum_{i=1}^{n+1}w_{n+1}(x_{i})}{n! \sum_{i=1}^{n+1} w_{n+1}(x_i)} \cdot \frac{\ell_{j} \cdot w_{n+1}(x_{j}) \ind\{ s(x_j) \leq T_{\gamma,n+1} \} }{\sum_{i=1}^{n+1} \ell_i \cdot w_{n+1}(x_i) \ind\{s(x_i)\leq T_{\gamma,n+1}\}} = 1.
\$
where $\ell_i := L(f, x_i, y_i)$. We now conclude the proof by the tower property.
\end{proof}

\subsection{Proof of Theorem~\ref{thm:w_evaluesel}}

\begin{proof}[Proof of Theorem~\ref{thm:w_evaluesel}] \label{app:thm_w_evaluesel}
Since $L_{n+j} \in [0,1]$, we first have
\$
&\quad ~ \EE[L_{n+j}E_{\gamma, n+j}] \\
&= \EE \Bigg[ L_{n+j}\cdot \inf_{\ell \in [0,1]} \bigg\{ \frac{ \ind\{s(X_{n+j}) \leq t_{\gamma,n+j}(\ell)\} \cdot  (w(X_{n+j})+\sum_{i=1}^n w(X_i)) }{w(X_{n+j}) \cdot\ell \ind\{s(X_{n+j}) \leq t_{\gamma,n+j}(\ell)\}+  \sum_{i=1}^n  w(X_i) \cdot L_i\ind\{s(X_{i}) \leq t_{\gamma,n+j}(\ell)\}} \bigg\}\Bigg] \\
&\leq \EE \Bigg[\frac{ L_{n+j} \ind\{s(X_{n+j}) \leq T_{\gamma, n+j}\} \cdot (w(X_{n+j})+\sum_{i=1}^n w(X_i))  }{w(X_{n+j}) \cdot L_{n+j} \ind\{s(X_{n+j}) \leq T_{\gamma, n+j}\}+  \sum_{i=1}^n  w(X_i) \cdot L_i\ind\{s(X_{i}) \leq T_{\gamma, n+j}\}} \Bigg]
\$
where $T_{\gamma,n+j}$ is defined as $t_{\gamma,n+j}(L_{n+j})$. By definition, we note that $T_{\gamma,n+j}$ is invariant to permutations of $(Z_1, \dots, Z_n, Z_{n+j})$, and so is the denominator inside the last expectation.

Consider the unordered set $[\cZ_j] = [Z_1, \dots, Z_n, Z_{n+j}]$ and the ordered set of remaining data $\bar\cZ_j = \{Z_{n+\ell}\}_{\ell \neq j}$. Conditional on $[\cZ_j]$ and $\bar\cZ_j$, the remaining randomness lies in which values in $[\cZ_j]$ the (ordered) random variables $(Z_1,\dots,Z_{n+j})$ take. Consider any fixed values $z_1,\dots,z_n,z_{n+1},\dots,z_{n+m}$, and consider the event $[\cZ_j] = [z_1,\dots,z_n, z_{n+j}]$ and $\bar\cZ_j = \bar\bz:=(z_{n+1},\dots,z_{n+j-1},z_{n+j+1},\dots,z_{n+m})$, and write the corresponding fixed values of the risks, denoted as $l_1,\dots,l_n,l_{n+j}$, where $l_i = \cL(f,x_i,y_i)$. The above arguments imply that conditional on $[\cZ_j] = [z_1,\dots,z_n, z_{n+j}]$ and $\bar\cZ_j = \bar\bz$, the random variable $T_{\gamma,n+j}$ equals a deterministic quantity, which we denote as $t_{[\bz],\bar\bz}$. In addition, 
\$
&~ \EE \Bigg[\frac{ L_{n+j} \ind\{s(X_{n+j}) \leq T_{\gamma, n+j}\} \cdot (w(X_{n+j})+\sum_{i=1}^n w(X_i))  }{w(X_{n+j}) \cdot L_{n+j} \ind\{s(X_{n+j}) \leq T_{\gamma, n+j}\}+  \sum_{i=1}^n  w(X_i) \cdot L_i\ind\{s(X_{i}) \leq T_{\gamma, n+j}\}}  \biggiven [\cZ_j] = [\boldsymbol{z}],\bar\cZ_j = \bar{\boldsymbol{z}} \Bigg] \\
&= \EE \Bigg[\frac{ L_{n+j} \ind\{s(X_{n+j}) \leq t_{[\bz],\bar\bz}\} \cdot (w(X_{n+j})+\sum_{i=1}^n w(X_i))  }{w(x_{n+j}) \cdot l_{n+j} \ind\{s(x_{n+j}) \leq t_{[\bz],\bar\bz}\}+  \sum_{i=1}^n  w(x_i) \cdot l_i\ind\{s(x_{i}) \leq t_{[\bz],\bar\bz}\}}  \biggiven [\cZ_j] = [\boldsymbol{z}],\bar\cZ_j = \bar{\boldsymbol{z}} \Bigg]  ,
\$
where the denominator is fixed given the conditioning information. 
Furthermore, by weighted exchangeability of the data $\cZ_j$, conditional on the event $[\cZ_j] = [z_1, \dots, z_n, z_{n+j}]$ where $z_i=(x_i,y_i)$, we have
\$
(Z_1, \dots, Z_n, Z_{n+j}) \biggiven \big\{[\cZ_j]=[z_1,\dots,z_n, z_{n+j}] \big\} \quad &\sim \quad \sum_{\sigma \in S_j} \frac{w(x_{\sigma(n+j)})}{\sum_{\pi \in S_{j}} w(x_{\pi(n+j)})} \delta_{(z_{\sigma(1)},\dots,z_{\sigma(n)},z_{\sigma(n+j)})}.
\$
We thus have, similar to the proof of Theorem~\ref{thm:w_single}, 
\$ 
&~ \EE \Bigg[\frac{ L_{n+j} \ind\{s(X_{n+j}) \leq t_{[\bz],\bar\bz}\} \cdot (w(X_{n+j})+\sum_{i=1}^n w(X_i))  }{w(x_{n+j}) \cdot l_{n+j} \ind\{s(x_{n+j}) \leq t_{[\bz],\bar\bz}\}+  \sum_{i=1}^n  w(x_i) \cdot l_i\ind\{s(x_{i}) \leq t_{[\bz],\bar\bz}\}}  \biggiven [\cZ_j] = [\boldsymbol{z}],\bar\cZ_j = \bar{\boldsymbol{z}} \Bigg] \\
&= \sum_{k \in \{1, \dots, n, n+j\} }  \PP \Big(Z_{n+j} = z_{n+k}\given [\boldsymbol{z}],\bar\cZ_j = \bar{\boldsymbol{z}} \Big)\cdot   \frac{ l_{n+k} \ind\{s(x_{n+k}) \leq t_{[\bz],\bar\bz}\} \cdot (w(x_{n+j})+\sum_{i=1}^n w(x_i))  }{w(x_{n+j}) \cdot l_{n+j} \ind\{s(x_{n+j}) \leq t_{[\bz],\bar\bz}\}+  \sum_{i=1}^n  w(x_i) \cdot l_i\ind\{s(x_{i}) \leq t_{[\bz],\bar\bz}\}} \\
&= \sum_{k \in \{1, \dots, n, n+j\} } \frac{w(x_k)}{\sum_{i=1}^n w(x_i) + w(x_{n+j})}\cdot   \frac{ l_{n+k} \ind\{s(x_{n+k}) \leq t_{[\bz],\bar\bz}\} \cdot (w(x_{n+j})+\sum_{i=1}^n w(x_i))  }{w(x_{n+j}) \cdot l_{n+j} \ind\{s(x_{n+j}) \leq t_{[\bz],\bar\bz}\}+  \sum_{i=1}^n  w(x_i) \cdot l_i\ind\{s(x_{i}) \leq t_{[\bz],\bar\bz}\}} \\
&= \sum_{k \in \{1, \dots, n, n+j\} } \frac{ w(x_k)\cdot l_{n+k} \ind\{s(x_{n+k}) \leq t_{[\bz],\bar\bz}\} }{w(x_{n+j}) \cdot l_{n+j} \ind\{s(x_{n+j}) \leq t_{[\bz],\bar\bz}\}+  \sum_{i=1}^n  w(x_i) \cdot l_i\ind\{s(x_{i}) \leq t_{[\bz],\bar\bz}\}} = 1,
\$
and the proof is complete by applying the tower property.
\end{proof}

\section{Proof of additional results}

\subsection{Proof of Proposition~\ref{prop:w_single_equiv}} \label{sec:prop_w_single_equiv}

\begin{proof}[Proof of Proposition~\ref{prop:w_single_equiv}] \label{app:prop_w_single_equiv}
We proceed by mimicking the proof of Proposition~\ref{prop:single_equiv}. In the weighted case, the following equivalence continues to hold:
\$
    E_{\gamma, n+1} \geq 1/\alpha \iff s(X_{n+1}) \leq t_{\gamma}(\ell), \text{ and } \textrm{F}(t_{\gamma}(\ell);\ell) \leq \alpha \text{ for any } \ell \in [0,1].
\$
Assuming the RHS, we have for any $\ell$
\$
    \quad& \frac{\ind\{s(X_{n+1})\leq t_\gamma(\ell)\} \cdot \sum_{i=1}^{n+1} w(X_i)}{\sum_{i=1}^n w(X_i)\cdot L_i \ind\{s(X_{i})\leq t_{\gamma}(\ell)\} + w(X_{n+1}) \cdot \ell \ind\{s(X_{n+1})\leq t_\gamma(\ell)\} } \\
    &= \ind\{s(X_{n+1}) \leq t_\gamma(\ell)\} / \mathrm{F}(t_\gamma(\ell), \ell) = 1 / \mathrm{F}(t_\gamma(\ell); \ell) \geq 1/\alpha.
\$
which implies $E_{\gamma,n+j} \geq 1/\alpha$. Conversely, if the RHS does not hold, then either $s(X_{n+1}) > t_\gamma(\ell)$ for some $\ell$, in which case $E_{\gamma,n+j} = 0$, or $\textrm{F}(t_\gamma(\ell);\ell) > \alpha$ for some $\ell$, in which case $E_{\gamma}(\ell) \leq 1 / \textrm{F}(t_\gamma(\ell);\ell) < 1/\alpha$. The equivalence is therefore established.

We now continue to examine the two conditions. For the first condition, we observe that
\$
\forall~ \ell \in [0,1], s(X_{n+1}) \leq t_{\gamma}(\ell) \iff \forall~ \ell \in [0,1],\textrm{F}(s(X_{n+1}), \ell) \leq \gamma.
\$
The direction $\Leftarrow$ is by the definition of $t_\gamma$, and the direction $\Rightarrow$ follows from the monotonicity of $\textrm{F}$ in the first argument: $\textrm{F}(s(X_{n+1}), \ell) \leq \textrm{F}(t_\gamma(\ell),\ell) \leq \gamma$. Since $\textrm{F}$ is also monotone in the second argument, the RHS condition reduces to $\textrm{F}(s(X_{n+1}), 1) \leq \gamma$, which is in turn
\$
\frac{w(X_{n+1}) +\sum_{i=1}^n w(X_i) L_i \ind {\{s(X_i)\leq s(X_{n+1})\}}}{\sum_{i=1}^{n+1} w(X_i)} \leq \gamma.
\$

For the second condition to hold, we must ensure that there is no $t \in \cM$ with $\textrm{F}(t;\ell) \in (\alpha,\gamma]$. This is automatic if $\gamma \leq \alpha$; otherwise, assuming the first condition, this reduces to
\$
\textrm{F}(t;\ell) = \frac{\ell \cdot w(X_{n+1}) +\sum_{i=1}^n w(X_i) L_i \ind  \{s(X_i)\leq t\} }{\sum_{i=1}^{n+1} w(X_i)} \notin (\alpha,\gamma\big], ~\forall t\in \cM, \ell\in [0,1].
\$
The proof is complete after combining all the demonstrated equivalences.
\end{proof}

% \tiancomment{it seems that the proof of computation shortcuts are almost the same as in the unweighted case. shall we just skip the 2 proofs then (for mdr, sdr)? for validity of the e-value, we may include since they involves weighted exchangeability.} \yingcomment{yes!}

\subsection{Proof of Proposition~\ref{prop:mult_equiv_w}}
\label{app:subsec_mult_equiv_w}

\begin{proof}[Proof of Proposition~\ref{prop:mult_equiv_w}] \label{app:prop_w_mult_equiv}
    We use the same strategy as the proof of Proposition~\ref{prop:mult_equiv}. First, we observe that the equivalence
    \@
        E_{\gamma,n+j} \geq 1/\gamma \iff s(X_{n+j}) \leq t_{\gamma,n+j}(\ell) \text{ for any } \ell \in [0,1]
    \@
    continues to hold with weights, by the same reasoning as in the unweighted proof. In addition, we see that $t_{\gamma,n+j}$ is still a non-increasing function of $\ell$. As such, we have
    \$
        E_{\gamma,n+j} \geq 1/\gamma \iff s(X_{n+j}) \leq t_{\gamma,n+j}(1),
    \$
    justifying Lines 6 and 7 of Algorithm~\ref{alg:w_sel}. Now, assume above conditions hold, i.e. $s(X_{n+j}) \leq t_{\gamma,n+j}(1)$. Then in this case,
    \$
        E_{\gamma,n+j}(\ell) = \frac{\sum_{i=1}^n w(X_i) + w(X_{n+j})}{\ell \cdot w(X_{n+1}) + \sum_{i=1}^n L_i \cdot w(X_i) \ind\{s(X_i) \leq t_{\gamma,n+j}(\ell) \}}.
    \$
    We now define the set of $\ell$'s such that $t_{\gamma,n+j}(\ell) = t$ by
    \$
        \cL(t) := \{\ell \in[0,1]: t_{\gamma,n+j}(\ell) = t \}.
    \$
    Since we have $s(X_{n+j}) \leq t_{\gamma,n+j}(\ell)$, for any $t$ that $\cL(t) \neq \emptyset$, we must have $t \in \cM^+ := \{s(X_i): i \in [n+m], s(X_i) \geq s(X_{n+j})\}$. We can then express $E_{\gamma,n+j}$ in terms of potential values of $t_{\gamma,n+j}(\ell)$:
    \$
        E_{\gamma,n+j} = \inf_{t \in \cM^+, \cL(t) \neq \emptyset} \frac{\sum_{i=1}^n w(X_i) + w(X_{n+j})}{\sup \cL(t) \cdot  w(X_{n+j}) + \sum_{i=1}^n L_i \cdot w(X_i) \ind\{s(X_i) \leq t\} }.
    \$
    By monotonicity, $t_{\gamma,n+j}(1) \leq t_{\gamma,n+j}(\ell) \leq t_{\gamma,n+j}(0)$ for any $\ell \in [0,1]$. Hence if $t_{\gamma,n+j}(0) = t_{\gamma,n+j}(1)$, we would have $\{t: \cL(t) \neq \emptyset\} = \{t_{\gamma,n+j}(0)\}$. In this case,
    \$
        E_{\gamma,n+j} = \inf_{t \in \cM^+, \cL(t) \neq \emptyset} \frac{\sum_{i=1}^n w(X_i) + w(X_{n+j})}{w(X_{n+j}) + \sum_{i=1}^n L_i \cdot w(X_i) \ind\{s(X_i) \leq t\} },
    \$
    which corresponds to Lines 8 and 9 of Algorithm~\ref{alg:w_sel}.

    Finally, for the general case, following the steps in the proof of Proposition~\ref{prop:mult_equiv}, we can show that
    \$
        \cL(t) = \{\ell \in [0,1]: \fr_{n+j}(t;\ell) \leq \gamma \} \cap \bigcap_{t' > t, t' \in \cM} \{\ell \in [0,1]: \fr_{n+j}(t'; \ell) > \gamma\}.
    \$
    Since $\fr_{n+j}$ is a monotone function of $\ell$, the sets in above expression must be intervals. By computing the endpoints of these intervals, we see that
    \$
        \cL(t) &= [0, \bar{\ell}(t)] \cap \bigcap_{t' > t, t \in \cM, \bar{\ell}(t') > 0} [\bar{\ell}(t'), 1] = \Big[\max_{t' > t,t' \in \cM, \fr_{n+j}(t'; 0) \leq \gamma} \bar{\ell}(t'), \bar{\ell}(t) \Big],
    \$
    where
    \$
         \bar{\ell}(t)=\frac{\gamma}{m} \cdot \frac{\sum_{i=1}^n w(X_i) + w(X_{n+j}) }{w(X_{n+j})} \big(1 + \sum_{\ell \neq j}\ind\{s(X_{n+j}) \leq t\} \big) - \sum_{i=1}^n \frac{w(X_i)}{w(X_{n+j})} L_i \ind\{s(X_{i}) \leq t\}.
    \$
    Therefore, the set of $t$ with $\cL(t) \neq \emptyset$ is reduced to
    \$
        \cM^* = \cM^+ \cap [t_{\gamma, n+j}(1), t_{\gamma, n+j}(0)] \cap \Big\{t: \fr_{n+j}(t;0) \leq \gamma, \text{ and } \max_{t' > t,t' \in \cM, \fr_{n+j}(t'; 0) \leq \gamma} \bar{\ell}(t') \leq \bar{\ell}(t)  \Big\},
    \$
    and we obtain the simplified computation by considering all $t \in \cM^*$:
    \$
        E_{\gamma,n+j} = \inf_{t\in \cM^*} \frac{\sum_{i=1}^n w(X_i) + w(X_{n+j})}{\bar\ell(t) + \sum_{i=1}^n L_i \cdot w(X_i) \ind\{s(X_i) \leq t\} }.
    \$
    By above, we just showed the correctness of Algorithm~\ref{alg:w_sel}. For the computation complexity part, it is straightforward to check that the pseudocode listed in Algorithm~\ref{alg:sel_pseudocode} works for the weighted case with the updated definition:
    \$
        A[i] &= \sum_{i=1}^n L_k w(X_k) \ind\{S_k \leq M[i]\}.
    \$
    Consequently, Algorithm~\ref{alg:w_sel} can execute in at most $\cO((n+m)m + (n+m) \log(n+m))$ time as well, concluding the proof of the proposition.
\end{proof}

\subsection{Proof of Theorem~\ref{thm:dr_mdr} (MDR double robustness)}
\label{subsec:dr_mdr}

\begin{proof}[Proof of Theorem~\ref{thm:dr_mdr}]
    For each test point $j$, we define 
    \$
    \textrm{F}_{n+j}(t;\ell) =  \frac{ \sum_{i=1}^n \hat{w}_i L_i \ind\{s(X_{i})\leq t\} + \hat{w}_{n+j}\cdot \ell \ind\{s(X_{n+1})\leq t\} }{\sum_{i=1}^{n } \hat{w}_i + \hat{w}_{n+j} },
    \$
    and so the e-values are correspondingly obtained by (slight simplifying the notations by dropping $\gamma$)
    \$
    E_{n+j} = \inf_{\ell \in[0,1]} ~\Bigg\{ \frac{  \ind\{s(X_{n+j})\leq t_{n+j}(\ell)\} \cdot (\hat{w}_{n+j}+\sum_{i=1}^{n } \hat{w}_i) }{\sum_{i=1}^n \hat{w}_i\cdot L_i \ind\{s(X_{i})\leq t_{n+j}(\ell)\} + \hat{w}_{n+j}\cdot \ell\ind\{s(X_{n+j})\leq t_{n+j}(\ell)\} } \Bigg\},
    \$
    where $t_{n+j}(\ell) = \sup\{t\in \cM\colon \textnormal{F}_{n+j}(t,\ell)\leq \alpha\}$.  
    Now define 
    \$
    \bar{\textnormal{F}}_{n+j}(t) := \textnormal{F}_{n+j}(t;L_{n+j}),\quad  \hat{t}_{n+j} = t_{n+j}(L_{n+j}) = \sup\{t\in \RR\colon \bar{\textnormal{F}}_{n+j}(t)\leq \alpha\}
    \$
    for the unknown risk $L_{n+j} = \cL(f,X_{n+j},Y_{n+j})$. Then by definition, $E_{n+j}\leq \bar{E}_{n+j}$ holds deterministically, where  we define 
    \$
    \bar{E}_{n+j}:= \frac{  \ind\{s(X_{n+j})\leq \hat{t}_{n+j} \} \cdot (\hat{w}_{n+j}+\sum_{i=1}^{n } \hat{w}_i) }{\sum_{i=1}^n \hat{w}_i\cdot L_i \ind\{s(X_{i})\leq \hat{t}_{n+j}\} + \hat{w}_{n+j}\cdot L_{n+j} \cdot \ind\{s(X_{n+j})\leq \hat{t}_{n+j}\} }.
    \$
    This leads to an upper bound on the MDR:
    \$
    \mdr_{n,m} &= \EE[L_{n+j}\ind\{E_{n+j}\geq 1/\alpha\}] \\
   & \leq \EE[L_{n+j}\ind\{\bar{E}_{n+j}\geq 1/\alpha\}] \\ 
   & = \EE\bigg[L_{n+j} \ind\Big\{ \frac{\ind\{s(X_{n+j})\leq \hat{t}_{n+j}\} }{\bar{\mathrm{F}}_{n+j}(\hat{t}_{n+j})} \geq 1/\alpha\Big\} \bigg],
    \$
    where the last equality follows from the definition of $\bar{E}_{n+j}$. Rearranging, we then have (for each $j$)
    \@\label{eq:mdr_bd1}
    \mdr_{n,m}\leq \EE \big[L_{n+j} \ind\{s(X_{n+j})\leq \hat{t}_{n+j}\} \ind\{\bar{F}(\hat{t}_{n+j})\leq \alpha \} \big] \leq \EE \big[L_{n+j} \ind\{s(X_{n+j})\leq \hat{t}_{n+j}\}  \big],
    \@
    as $\bar{F}(\hat{t}_{n+j})\leq \alpha $ always holds since $\hat{t}_{n+j}$ is searched over a finite set $\cM$. 
    Here we denote the random variable $L = \cL(f,X,Y)$. The expectation in~\eqref{eq:mdr_bd1} is over all the randomness (including the training process), so $\mdr_{n,m}$ can be viewed as an unknown, deterministic scalar (as $s(\cdot)$ is viewed as fixed). 
    
    Let $t^* = \bar{F}^{-1}(\alpha)$ be as in Theorem~\ref{thm:dr_mdr}, where we define 
    \$
    \bar{F}(t) := \frac{\EE_P[\bar{w}(X)l(X) \ind\{s(X)\leq t\}]}{\EE_P[\bar{w}(X)]} = \frac{G(t)}{\EE_P[\bar{w}(X)]},
    \$
    and the expectation is with respect to a new copy $X\sim P$, viewing $s(\cdot)$ as fixed. Thus $t^*\in \RR$ depends on the score $s(\cdot)$ only. 
    The proof of Claim~\ref{claim:t_conv_mdr} is right after this proof. 
    \begin{claim}\label{claim:t_conv_mdr}
    Under the conditions above, the random variable $\hat\delta:=\sup_{j\in [m]}|\hat{t}_{n+j}-t^*| = o_P(1)$. 
    \end{claim}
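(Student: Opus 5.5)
The plan is a uniform law of large numbers for the random criterion $\bar{\mathrm{F}}_{n+j}(t)$, followed by an argmax-type inversion at the level $\alpha$. Write
\[
\bar{\mathrm{F}}_{n+j}(t)=\frac{\frac1n\sum_{i=1}^n \hat w_i L_i\ind\{s(X_i)\le t\}+\frac1n\hat w_{n+j}L_{n+j}\ind\{s(X_{n+j})\le t\}}{\frac1n\sum_{i=1}^n\hat w_i+\frac1n\hat w_{n+j}}.
\]
Because $\sup_i\hat w_i\le M$ and $L\in[0,1]$, the test-point terms are at most $M/n$ uniformly in $j$ and $t$. They can therefore be dropped at cost $O(1/n)$, and this removes the dependence on $j$. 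The denominator is $1+o_P(1)$ by the normalization in Assumption~\ref{assump:balance_mdr}. It thus suffices to show that
\[
\hat G(t):=\frac1n\sum_{i=1}^n\hat w_iL_i\ind\{s(X_i)\le t\}
\]
satisfies $\sup_t|\hat G(t)-\bar F(t)|=o_P(1)$, at least on a neighborhood of $t^*$. The normalization $\EE_P[\bar w(X)]$ in $\bar F$ should match the limit of $\frac1n\sum\hat w_i$, which equals $1$ by assumption.

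The first step is to replace $\hat w_i$ by $\bar w(X_i)$. By Cauchy--Schwarz and $L_i\le 1$,
\[
\Bigl|\frac1n\sum_i(\hat w_i-\bar w(X_i))L_i\ind\{s(X_i)\le t\}\Bigr|\le\Bigl(\frac1n\sum_i(\hat w_i-\bar w(X_i))^2\Bigr)^{1/2},
\]
uniformly in $t$. This is $o_P(1)$ by the stated convergence; I read the condition as being over calibration indices. The second step is to apply the Glivenko--Cantelli theorem to the class $\{(x,y)\mapsto\bar w(x)\cL(f,x,y)\ind\{s(x)\le t\}\}_t$. This class is a fixed bounded function times a VC class of indicators, so it is uniformly Glivenko--Cantelli, and $\frac1n\sum_i\bar w(X_i)L_i\ind\{s(X_i)\le t\}\to\EE_P[\bar w(X)L\ind\{s(X)\le t\}]=G(t)$ uniformly in $t$. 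Combining the two steps gives $\sup_{j,t}|\bar{\mathrm F}_{n+j}(t)-\bar F(t)|=o_P(1)$.

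The third step is the inversion. Fix $\epsilon>0$. Since $G$ is continuous and strictly increasing at $t^*$ and $\bar F(t^*)=\alpha$, there is $\eta>0$ with $\bar F(t^*+\epsilon)>\alpha+\eta$ and $\bar F(t^*-\epsilon)<\alpha-\eta$. On the event that the uniform deviation is below $\eta$, every $j$ has $\bar{\mathrm F}_{n+j}(t^*+\epsilon)>\alpha$. Monotonicity of $\bar{\mathrm F}_{n+j}$ in $t$ then forces $\hat t_{n+j}<t^*+\epsilon$.

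For the lower side we also have $\bar{\mathrm F}_{n+j}(t)\le\alpha$ at $t=t^*-\epsilon$. The subtlety is that $\hat t_{n+j}$ ranges over the grid $\cM$. Because $s(X)$ has no point mass, with probability tending to one some calibration score lies in $[t^*-\epsilon,t^*-\epsilon/2]$. This holds whenever that interval has positive mass; if it does not, $G$ is flat there, which contradicts strict increase. Monotonicity gives $\bar{\mathrm F}_{n+j}\le\alpha$ at that grid point, so $\hat t_{n+j}\ge t^*-\epsilon$.

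Hence $\PP(\hat\delta>\epsilon)\to0$. I expect the main obstacle to be this grid and strict-monotonicity bookkeeping. A secondary point is checking that the normalizations of $\bar F$ and of the empirical denominator agree, via $\frac1n\sum\hat w_i\to1$.
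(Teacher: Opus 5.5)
Your proposal is correct and follows essentially the same route as the paper's proof. Both absorb the $j$-dependent test-point term using $\hat w_{n+j}\le M$ (the paper does this via a deterministic sandwich bound), swap $\hat w_i$ for $\bar w(X_i)$ by Cauchy--Schwarz, and apply a VC/Glivenko--Cantelli bound (the paper's Lemma~\ref{lem:weighted_cdf_conv}) to get $\sup_{t,j}|\bar{\mathrm F}_{n+j}(t)-\bar F(t)|=o_P(1)$, then invert using strict monotonicity at $t^*$. The differences are minor: you get the denominator's limit from the balancing normalization (which forces $\EE_P[\bar w(X)]=1$) where the paper uses $\frac1n\sum_i\hat w_i\to\EE_P[\bar w(X)]$ directly, and your grid argument for $\hat t_{n+j}\ge t^*-\epsilon$ is extra care the paper omits. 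That grid argument relies on strict increase on a neighborhood of $t^*$, matching the paper's ``strictly increasing around $t^*$''.
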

    With Claim~\ref{claim:t_conv_mdr}, continuing with~\eqref{eq:mdr_bd1} we know 
    \$
    \mdr_{n,m} \leq \EE \big[L_{n+j} \ind\{s(X_{n+j})\leq t^*\}  \big] + \EE \big[L_{n+j} \ind\{ t^*< s(X_{n+j})\leq \hat{t}_{n+j}\}  \big],
    \$
    where since $L_{n+j}\in [0,1]$, we know that for any $\epsilon>0$, 
    \$
    \EE \big[L_{n+j} \ind\{ t^*< s(X_{n+j})\leq \hat{t}_{n+j}\}  \big]
    &\leq \PP(t^*< s(X_{n+j})\leq \hat{t}_{n+j}) \\ 
    &\leq \PP( |\hat{t}_{n+j} -t^*|>\epsilon) + \PP( t^*<s(X_{n+j})\leq t^*+\epsilon)  \\ 
    &= o(1) + \PP_Q( t^*<s(X )\leq t^*+\epsilon).
    \$
    Since $s(X)$ has no point mass,  
    taking the sup-limit on both sides, and by the arbitrariness of $\epsilon>0$, we know 
    \@\label{eq:mdr_bd2}
    \limsup_{n,m\to \infty} \mdr_{n,m} \leq \EE_Q\big[L  \ind\{s(X )\leq t^*\}  \big] .
    \@
    In the next, we prove the upper bound for the right-handed side of~\eqref{eq:mdr_bd2} under either of the two conditions:
    \begin{itemize}
        \item First, if $\bar{w}(\cdot)=w(\cdot)$, by the covariate shift assumption it is straightforward to see that
    \$
    \bar{F}(t) = \EE_Q[l(X)\ind\{s(X)\leq t\}] = \EE_Q[L \ind\{s(X )\leq t\}],
    \$
    so the RHS of~\eqref{eq:mdr_bd2} is equal to $\bar{F}(t^*) = \alpha$ since $t^* = \bar{F}^{-1}(\alpha)$. 
    \item  Second, suppose $\bar{l}(\cdot) = l(\cdot)$. Then by the triangle inequality, 
    \@\label{eq:conv_cdf_hatl}
    &\sup_{t\in \RR} \bigg| \frac{1}{m}\sum_{j=1}^m \hat{l}(X_{n+j}) \ind\{s(X_{n+j})\leq  {t}\} - \EE_Q[\bar{l}(X)\ind\{s(X)\leq t\}] \bigg| \notag \\ 
    &\leq \sup_{t\in \RR} \bigg| \frac{1}{m}\sum_{j=1}^m \hat{l}(X_{n+j}) \ind\{s(X_{n+j})\leq  {t}\} - \EE_Q[\hat{l}(X)\ind\{s(X)\leq t\}] \bigg| \notag \\ 
    &\qquad + \sup_{t\in \RR} \big| \EE_Q[\hat{l}(X)\ind\{s(X)\leq t\}]- \EE_Q[\bar{l}(X)\ind\{s(X)\leq t\}]\big| \notag \\ 
    &\leq O_P(1/\sqrt{m}) + \EE_Q[|\hat{l}(X)-\bar{l}(X)|] = o_P(1).
    \@
    In the expectations above, both $\hat{l}(\cdot)$ and $s(\cdot)$ are viewed as fixed functions, and the expectation is over a new independent draw $X\sim Q$. 
    In addition, the $O_P(1/\sqrt{m})$ term is obtained by the following arguments. By Lemma~\ref{lem:weighted_cdf_conv}, we know that 
    \$
    \EE\Bigg[  \sup_{t\in \RR} \bigg| \frac{1}{m}\sum_{j=1}^m \hat{l}(X_{n+j}) \ind\{s(X_{n+j})\leq  {t}\} - \EE_Q[\hat{l}(X)\ind\{s(X)\leq t\}] \bigg|    \Bigggiven \hat{l}(\cdot),s(\cdot)\Bigg] \leq \frac{CM}{\sqrt{n}},
    \$
    where $M=\sup_{x}\hat{l}(x)$. Then applying the tower property and  Markov's inequality we obtain the $O_P(1/\sqrt{m})$ bound.

    Since $s(X)$ has no point mass and the map $t\mapsto \EE_Q[\bar{l}(X)\ind\{s(X)\leq t\}]$ is strictly increasing at $t^\dagger := \sup\{t\colon \EE_Q[\bar{l}(X)\ind\{s(X)\leq t\}\leq \alpha\}$, we have $\hat{t} =t^\dagger + o_P(1)$ for the cutoff $\hat{t}$ in Assumption~\ref{assump:balance_mdr}. 
    % Denote $H(t):=\EE_P[\bar{w}(X) l(X) \ind\{s(X)\leq t\}]$. 
    
    Since $\frac{1}{n}(\hat{w}_i - \bar{w}(X_i))^2 = o_P(1)$ and $\|\hat{l}(\cdot)-l(\cdot)\|_{L_2}=o_P(1)$, we have 
    \$
    &\sup_{t\in \RR} \bigg| \frac{1}{n}\sum_{i=1}^n \hat{w}_i \hat{l}(X_i)\ind\{s(X_i)\leq  {t}\} - \EE_P[\bar{w}(X) l(X) \ind\{s(X)\leq t\}] \bigg| \notag \\  
    &\leq \underbrace{\sup_{t\in \RR} \bigg| \frac{1}{n}\sum_{i=1}^n \hat{w}_i \hat{l}(X_i)\ind\{s(X_i)\leq  {t}\} - \frac{1}{n}\sum_{i=1}^n \bar{w}(X_i) l(X_i)\ind\{s(X_i)\leq  {t}\} \bigg|}_{(a)}\notag \\ 
    &\qquad + \underbrace{\sup_{t\in \RR} \bigg| \frac{1}{n}\sum_{i=1}^n \bar{w}(X_i) l(X_i)\ind\{s(X_i)\leq  {t}\} - \EE_P[\bar{w}(X) l(X) \ind\{s(X)\leq t\}] \bigg|}_{(b)}.  
    \$
    First, invoking Lemma~\ref{lem:weighted_cdf_conv} for $f = \bar{w}(\cdot)l(\cdot)$ and Markov's inequality and tower property we know $(b) = o_P(1/\sqrt{m})$. On the other hand, 
    \@\label{eq:wl_bd1} 
    (a) & = \sup_{t\in \RR} \bigg| \frac{1}{n}\sum_{i=1}^n \big( \hat{w}_i \hat{l}(X_i)  -  \bar{w}(X_i) l(X_i) \big) \ind\{s(X_i)\leq  {t}\} \bigg|  \notag\\ 
     &\leq \sup_{t\in \RR} \bigg| \frac{1}{n}\sum_{i=1}^n \hat{w}_i \big( \hat{l}(X_i) - l(X_i) \big) \ind\{s(X_i)\leq  {t}\} \bigg| + \sup_{t\in \RR} \bigg| \frac{1}{n}\sum_{i=1}^n \bar{l}(X_i)\big( \hat{w}_i   - \bar{w}(X_i) \big) \ind\{s(X_i)\leq  {t}\} \bigg|   \notag\\ 
     &\leq  \frac{1}{n}\sum_{i=1}^n  \hat{w}_i \big| \hat{l}(X_i) - l(X_i)  \big|  +   \frac{1}{n}\sum_{i=1}^n \bar{l}(X_i) \big| \hat{w}_i   - \bar{w}(X_i) \big|  .
    \@
    where we repeatedly apply the triangle inequality. By the Cauchy-Schwarz inequality, 
    \$
    \frac{1}{n}\sum_{i=1}^n  \hat{w}_i \big| \hat{l}(X_i) - l(X_i)  \big| \leq \frac{1}{n}\sqrt{\sum_{i=1}^n \hat{w}_i^2} \cdot \sqrt{\sum_{i=1}^n \big| \hat{l}(X_i) - l(X_i)  \big|^2} = \frac{O(\sqrt{n})O_P(\sqrt{n} \|\hat{l}(\cdot)-l(\cdot)\|_{L_2})}{n}   = o_P(1),
    \$
    and due to the boundedness of $\bar{l}(X)=l(X)\in [0,1]$, by the Cauchy-Schwarz inequality, 
    \$
     \frac{1}{n}\sum_{i=1}^n \bar{l}(X_i) \big| \hat{w}_i   - \bar{w}(X_i) \big|  \leq  \frac{1}{n}\sum_{i=1}^n  \big| \hat{w}_i   - \bar{w}(X_i) \big| 
     \leq \sqrt{\frac{1}{n}\sum_{i=1}^n (\hat{w}_i-\bar{w}(X_i))^2} = o_P(1).
    \$
    Putting the above two inequalities together with~\eqref{eq:wl_bd1}, we obtain 
    \@\label{eq:cov_weighted_cdf}
(a) =\sup_{t\in \RR} \bigg| \frac{1}{n}\sum_{i=1}^n \hat{w}_i \hat{l}(X_i)\ind\{s(X_i)\leq  {t}\} - \EE_P[\bar{w}(X) l(X) \ind\{s(X)\leq t\}] \bigg| = o_P(1),
    \@
    and therefore 
    \@\label{eq:emp_w_cdf_bd}
    \sup_{t\in \RR} \bigg| \frac{1}{n}\sum_{i=1}^n \hat{w}_i \hat{l}(X_i)\ind\{s(X_i)\leq  {t}\} - \EE_P[\bar{w}(X) l(X) \ind\{s(X)\leq t\}] \bigg| = o_P(1).
    \@
    Since $\hat{t} = t^\dagger +o_P(1)$, applying~\eqref{eq:conv_cdf_hatl} and~\eqref{eq:emp_w_cdf_bd} to $\hat{t}$ we know 
    \$ 
     \frac{1}{n}\sum_{i=1}^n \hat{w}_i \hat{l}(X_i)\ind\{s(X_i)\leq \hat{t}\}& =  \EE_P[\bar{w}(X) l(X) \ind\{s(X)\leq \hat{t}\}]+ o_P(1) \\ 
     &= \EE_P[\bar{w}(X) l(X) \ind\{s(X)\leq t^\dagger\}] +o_P(1), \\ 
    \frac{1}{m}\sum_{j=1}^m \hat{l}(X_{n+j}) \ind\{s(X_{n+j})\leq \hat{t}\} 
    & = \EE_Q[l(X)\ind\{s(X)\leq \hat{t}\}] + o_P(1) \\ & = \EE_Q[l(X)\ind\{s(X)\leq t^\dagger\}] + o_P(1).
    \$
    Putting this together with Assumption~\ref{assump:balance_mdr}, and by the continuity of $t\mapsto \EE_P[\bar{w}(X)l(X)\ind\{s(X)\leq t\}]$ and $t\mapsto \EE_Q[l(X)\ind\{s(X)\leq t\}]$, we have 
    \$
    \EE_P[\bar{w}(X) l(X) \ind\{s(X)\leq t^\dagger\}] = \EE_Q[l(X)\ind\{s(X)\leq t^\dagger\}] + o_P(1).
    \$
    Similarly, the second balancing condition yields $\EE[\bar{w}(X)]=1$. 
    This further implies $t^\dagger = t^*$ due to the continuity and monotonicity of $G(t)$ at $t=t^*$.  This implies 
    \$
    \EE\big[L_{n+j} \ind\{s(X_{n+j})\leq t^*\}  \big]
    = \EE_Q[L\ind\{s(X)\leq t^\dagger\}] 
    = \EE_Q[l(X)\ind\{s(X)\leq t^\dagger\}]  \leq \alpha 
    \$
    by the definition of $t^\dagger$. This, together with~\eqref{eq:mdr_bd2}, completes the proof for the second case. 
    \end{itemize}
    We therefore complete the proof of Theorem~\ref{thm:dr_mdr}. To see how this implies Theorem~\ref{thm:w_mdr_asymp}, the convergence of $\bar{w}_{n,m}$ to the true weight $w$ implies that the weight is correctly specified. Consequently, the required continuity and monotonicity of the two mappings agrees and reduces to the given condition. Take $\hat{l}(X_i) = 1$ as a constant, Assumption~\ref{assump:balance_mdr} is automatically satisfied. The theorem therefore applies, establishing asymptotic MDR control in the setting of Theorem~\ref{thm:w_mdr_asymp}.
\end{proof}

    \begin{proof}[Proof of Claim~\ref{claim:t_conv_mdr}]
        It holds deterministically that for any $j\in [m]$, 
    \@\label{eq:bd_hat_F}
     \frac{ \sum_{i=1}^n \hat{w}_i L_i \ind\{s(X_{i})\leq t\}   }{\sum_{i=1}^{n } \hat{w}_i +M }\leq \bar{\textnormal{F}}_{n+j}(t ) \leq  \frac{ \sum_{i=1}^n \hat{w}_i L_i \ind\{s(X_{i})\leq t\} + M }{\sum_{i=1}^{n } \hat{w}_i  }.
    \@
    Under the convergence conditions in Theorem~\ref{thm:dr_mdr}, by Cauchy-Schwarz inequality, we know 
    \$
    \sup_{t\in \RR} \bigg| \frac{1}{n}\sum_{i=1}^n \hat{w}_i L_i \ind\{s(X_{i})\leq t\} - \frac{1}{n}\sum_{i=1}^n \bar{w}(X_i) L_i \ind\{s(X_{i})\leq t\} \bigg| \leq \sqrt{\frac{1}{n}\sum_{i=1}^n (\hat{w}_i - \bar{w}(X_i))^2 } =o_P(1). 
    \$
    and similarly 
    $
    \frac{1}{n} \sum_{i=1}^{n } \hat{w}_i = \EE_P[\bar{w}(X)]+o_P(1).
    $
    In addition, invoking Lemma~\ref{lem:weighted_cdf_conv} we know \$
\sup_{t\in \RR} \bigg| \frac{1}{n} \sum_{i=1}^n \bar{w}(X_i) L_i \ind\{s(X_{i})\leq t\} - \EE_P[\bar{w}(X)l(X) \ind\{s(X)\leq t\}] \bigg| = o_P(1).
    \$
    Thus, taking $n\to \infty$ in~\eqref{eq:bd_hat_F} we know 
    % Taking $n\to\infty$ in~\eqref{eq:bd_hat_F} yields
    % \$
    % \sup_{\ell\in [0,1],j\in [m]}|\mathrm{F}_{n+j}(t;\ell) - \bar{F}(t) | \stackrel{P}{\to} 0 
    % \$
    % for any fixed $t\in \RR$. 
    % Since $\mathrm{F}(t;\ell)$ is increasing in $t$, through a standard finite-grid argument similar to that in the uniform law of large numbers we have 
    \$
    \sup_{ t\in \RR, j\in [m]}|\bar{\textnormal{F}}_{n+j}(t ) - \bar{F}(t) | \stackrel{P}{\to} 0 .
    \$
    Since $\bar{F}(t)$ is strictly increasing around $t^* = \bar{F}^{-1}(\alpha)$, we know 
    $
    \sup_{j\in [m]} | \hat{t}_{n+j} - t^*| \stackrel{P}{\to} 0. 
    $
    % \yingcomment{This inversion is why we need a fixed $s(\cdot)$, otherwise we'd need $s$ to converge to some $\bar{s}$ and replace things by $\bar{s}$.}
    \end{proof}

\subsection{Proof of Theorem~\ref{thm:dr_sdr} (SDR double robustness)}
\label{app:subsec_proof_dr_sdr}
\begin{proof}[Proof of Theorem~\ref{thm:dr_sdr}]
Take $\gamma=\alpha$. The e-values used are defined as (simplifying the notations)
\begin{align} \label{eq:def_eval_sel_hatw}
E_{  n+j} := \inf_{\ell \in [0,1]} \bigg\{ \frac{\ind\{s(X_{n+j}) \leq t_{ n+j}(\ell)\} \cdot ( \hat{w}_{n+j}+\sum_{i=1}^n \hat{w}_i) }{ \hat{w}_{n+j} \cdot \ell\ind\{s(X_{n+j}) \leq t_{ n+j}(\ell)\}+  \sum_{i=1}^n \hat{w}_i \cdot L_i\ind\{s(X_{i}) \leq t_{ n+j}(\ell)\}} \bigg\}, \qquad
\end{align}
where $t_{ n+j}(\ell) = \max\big\{t\colon  {\textrm{FR}_{n+j}}(t;\ell) \leq \alpha\big\}$, and
\$
{\textrm{FR}_{n+j}}(t;\ell) = \frac{ \hat{w}_{n+j} \cdot \ell\ind\{s(X_{n+j})\leq t\} + \sum_{i=1}^n \hat{w}_i \cdot L_i\ind\{s(X_i)\leq t\}}{1+\sum_{\ell\neq j} \ind\{s(X_{n+\ell}) \leq t\}}\cdot \frac{m}{\hat{w}_{n+j}+\sum_{i=1}^n \hat{w}_i}.
\$
Plugging in $\ell = L_{n+j}=\cL(f,X_{n+j},Y_{n+j})$, we know 
\$
E_{n+j}\leq \bar{E}_{n+j} := \frac{\ind\{s(X_{n+j}) \leq \hat{t}_{n+j}\} \cdot ( \hat{w}_{n+j}+\sum_{i=1}^n \hat{w}_i) }{ \hat{w}_{n+j} \cdot L_{n+j}\ind\{s(X_{n+j}) \leq \hat{t}_{n+j}\}+  \sum_{i=1}^n \hat{w}_i \cdot L_i\ind\{s(X_{i}) \leq \hat{t}_{n+j}\}},
\$
where 
\$
&\hat{t}_{n+j}:= t_{n+j}(L_{n+j}) = \max\{t\in \cM\colon \bar{\textrm{F}}_{n+j}(t) \leq \alpha\},\quad \\ 
&\bar{\textrm{F}}_{n+j}(t) :=  \frac{ \hat{w}_{n+j} \cdot L_{n+j} \ind\{s(X_{n+j})\leq t\} + \sum_{i=1}^n \hat{w}_i \cdot L_i\ind\{s(X_i)\leq t\}}{1+\sum_{k\neq j} \ind\{s(X_{n+k}) \leq t\}}\cdot \frac{m}{\hat{w}_{n+j}+\sum_{i=1}^n \hat{w}_i}.
\$
By construction, 
\@\label{eq:barE_sel_eq}
\bar{E}_{n+j} &= \frac{\ind\{s(X_{n+j})\leq \hat{t}_{n+j}\}}{\bar{F}_{n+j}(\hat{t}_{n+j})} \cdot \frac{m}{1+\sum_{\ell\neq j} \ind\{s(X_{n+\ell}) \leq \hat{t}_{n+j} \}}\notag \\ 
&\leq  \frac{\ind\{s(X_{n+j})\leq \hat{t}_{n+j}\}}{1+\sum_{\ell\neq j} \ind\{s(X_{n+\ell}) \leq \hat{t}_{n+j} \}} \cdot \frac{m}{\alpha}.
\@
Here the second inequality holds because $\bar{F}_{n+j}(\hat{t}_{n+j})\leq \alpha$ since $\hat{t}_{n+j}$ searches over the finite set $\cM$.

By construction, and since $\sup_i|\hat{w}_i|\leq M$, it holds deterministically and uniformly over all $j\in [m]$ that 
\@\label{eq:Fnj_dtm_bd}
 \frac{ \sum_{i=1}^n \hat{w}_i \cdot L_i\ind\{s(X_i)\leq t\}}{  \frac{1}{m}\sum_{k=1}^m \ind\{s(X_{n+k}) \leq t\} +  \frac{1}{m}}\cdot \frac{1}{M+\sum_{i=1}^n \hat{w}_i}.
 \leq 
\bar{\textrm{F}}_{n+j}(t) \leq   \frac{ M + \sum_{i=1}^n \hat{w}_i \cdot L_i\ind\{s(X_i)\leq t\}}{\frac{1}{m}\sum_{k=1}^m \ind\{s(X_{n+k}) \leq t\} }\cdot \frac{1}{ \sum_{i=1}^n \hat{w}_i}.
\@
Now define 
\$
& G(t):= \EE_P[\bar{w}(X)L\ind\{s(X)\leq t\}] , \quad H_Q(t)= \PP_Q(s(X) \leq t) = \PP(s(X_{n+j})\leq t), \\ 
& \bar{F}(t) = \frac{G(t)}{H_Q(t) \EE_P[\bar{w}(X)]},\quad 
t^* = \sup\{t\colon \bar{F}(t)\leq \alpha\}.
\$
The given convergence conditions imply
\$
\sup_{t\in \RR} \bigg| \frac{1}{n}\sum_{i=1}^n \hat{w}_i \cdot L_i\ind\{s(X_i)\leq t\} - \frac{1}{n}\sum_{i=1}^n \bar{w}(X_i) \cdot L_i\ind\{s(X_i)\leq t\} \bigg|
\leq \sqrt{\frac{1}{n}\sum_{i=1}^n (\hat{w}_i - \bar{w}(X_i))^2} = o_P(1) 
\$
and similarly  $ |\frac{1}{n}\sum_{i=1}^n \hat{w}_i - \EE_P[\bar{w}(X)]| = o_P(1)$. 
In addition, $\sup_{t\in \RR}|\frac{1}{m}\sum_{k=1}^m \ind\{s(X_{n+k}) \leq t\} - H_Q(t) | = o_P(1)$  due to the uniform law of large numbers or Lemma~\ref{lem:weighted_cdf_conv}.  Therefore, combining these results with~\eqref{eq:Fnj_dtm_bd}, and since  $H_Q(t^*)>0$, there exists a constant $\delta>0$ such that for any $\epsilon \in (0, \delta)$, 
\@\label{eq:conv_Fj}
\sup_{t \geq t^*-\epsilon ,j\in [m]} \big| \bar{\textrm{F}}_{n+j}(t) - \bar{F}(t) \big| = o_P(1). 
\@
Recall that $\bar{F}(t)$ is continuous at $t^* = \sup\{t\colon \bar{F}(t)\leq \alpha\}$, and for any sufficiently small $\epsilon>0$, there exists some $t_\epsilon\in (t^*-\epsilon,t^*)$ such that $\bar{F}(t_\epsilon)<\alpha$. Thus, by~\eqref{eq:conv_Fj} we know 
\$
\PP\Big(\inf_{j\in [m]} \hat{t}_{n+j} \geq t_\epsilon \Big) \geq \PP\Big( \sup_{j\in [m]} \bar{\mathrm{F}}_{n+j}(t_\epsilon) \leq (\alpha + \bar{F}(t_\epsilon))/2 \Big) \to 1
\$
as $n,m\to \infty$. 
On the other hand, by the definition of $t^*$ and the right-continuity of $\bar{\mathrm{F}}$, for any $\epsilon>0$ there exists some $\delta>0$ so that $\bar{\mathrm{F}}(t)>\alpha+\delta$ for all $t'>t+\epsilon$. Thus by~\eqref{eq:conv_Fj} we know 
\$
\PP\Big(\sup_{j\in [m]} \hat{t}_{n+j} \leq t+\epsilon \Big) \geq \PP\Big( \inf_{j\in [m]} \inf_{t'\geq t+\epsilon} \bar{\mathrm{F}}_{n+j}(t+\epsilon) \geq \alpha + \delta/2 \Big) \to 1
\$
as $n,m\to \infty$. Putting the two directions together, and by the arbitrariness of $\epsilon>0$, we know 
\@\label{eq:cov_hatt}
\sup_{j\in [m]} |\hat{t}_{n+j} - t^*| = o_P(1). 
\@
For any $\epsilon>0$, we define the event 
\$
\cE_{\epsilon} = \bigg\{ \sup_{j\in [m]} |\hat{t}_{n+j} - t^*|  > \epsilon  \bigg\} &\cup \bigg\{ \sup_{t\in \RR} \Big|\frac{1}{m}\sum_{j=1}^m L_{n+j}\ind\{s(X_{n+j})\leq t\} - \EE_Q[L\ind\{s(X)\leq t\}]\Big|  > \epsilon  \bigg\} \\ 
& \cup \bigg\{ \sup_{t\in \RR} \Big|\frac{1}{m}\sum_{j=1}^m \ind\{s(X_{n+j})\leq t\} - \PP_Q(s(X)\leq t)\Big|  > \epsilon  \bigg\} .
\$
which satisfies $\PP(\cE_{\epsilon})\to 0$ for any fixed $\epsilon>0$ as $n,m\to \infty$ by~\eqref{eq:cov_hatt} and the uniform law of large numbers or Lemma~\ref{lem:weighted_cdf_conv}.

By the definition of the eBH procedure (Theorem~\ref{thm:general_sel_risk}), we know that $\cR = \{j\in [m]\colon E_{n+j}\geq m/(\alpha \hat\tau)\}$ for $\hat\tau = |\cR|$. Thus the SDR can be bounded as 
\$
\sdr_{n,m} &= \EE\bigg[ \frac{\sum_{j=1}^m L_{n+j}\ind\{j\in \cR\}}{1\vee \hat\tau} \ind_{\cE_\epsilon}\bigg] + \EE\bigg[ \frac{\sum_{j=1}^m L_{n+j}\ind\{j\in \cR\}}{1\vee \hat\tau} \ind_{\cE_\epsilon^c}\bigg] \\
&\leq  \EE\bigg[ \frac{\sum_{j=1}^m L_{n+j}\ind\{E_{n+j}\geq m/(\alpha \hat\tau)\}}{1\vee \hat\tau}\ind_{\cE_\epsilon^c}\bigg] + \PP(\cE_\epsilon )\\ 
&\leq    \EE\bigg[ \frac{\sum_{j=1}^m L_{n+j}\ind\{\bar{E}_{n+j}\geq m/(\alpha \hat\tau)\}}{1\vee \hat\tau}\ind_{\cE_\epsilon^c}\bigg] + \PP(\cE_\epsilon )\\ 
&\leq  \sum_{j=1}^m \EE\Bigg[ L_{n+j}\frac{\ind\{\frac{\ind\{s(X_{n+j})\leq \hat{t}_{n+j}\}}{1+\sum_{k\neq j} \ind\{s(X_{n+k}) \leq \hat{t}_{n+j} \}}  \geq \frac{1}{  \hat\tau} \}}{1\vee \hat\tau}\ind_{\cE_\epsilon^c}\bigg] + \PP(\cE_\epsilon )\\ 
&\leq \sum_{j=1}^m  \EE\Bigg[ \frac{L_{n+j}\ind\{s(X_{n+j})\leq \hat{t}_{n+j}\} \ind\{ {1+\sum_{\ell\neq j} \ind \big\{s(X_{n+\ell}) \leq \hat{t}_{n+j} \} \leq  \hat\tau} \big\}}{1\vee \hat\tau}\ind_{\cE_\epsilon^c}\bigg] + \PP(\cE_\epsilon ) \\ 
&\leq \sum_{j=1}^m \EE\Bigg[ \frac{ L_{n+j}\ind\{s(X_{n+j})\leq \hat{t}_{n+j}\}  }{1+\sum_{k\neq j} \ind \big\{s(X_{n+k}) \leq \hat{t}_{n+j} \}}\ind_{\cE_\epsilon^c}\bigg] + \PP(\cE_\epsilon )\\
&=\sum_{j=1}^m  \EE\Bigg[ \frac{ L_{n+j}\ind\{s(X_{n+j})\leq \hat{t}_{n+j}\}  }{ 1\vee \sum_{k=1}^m \ind \big\{s(X_{n+k}) \leq \hat{t}_{n+j} \}}\ind_{\cE_\epsilon^c}\bigg] + \PP(\cE_\epsilon ),
\$
where the second inequality uses $E_{n+j}\leq \bar{E}_{n+j}$ and the fact that $L_{n+j}\leq 1$ hence the ratio in the expectation is upper bounded by $1$, the third inequality uses~\eqref{eq:barE_sel_eq}, and the last two inequalities follow from certain re-arrangements. Here on the event $\cE_\epsilon^c$, it holds simultaneously for all $j\in [m]$ that 
\$
\frac{ L_{n+j}\ind\{s(X_{n+j})\leq t^*-\epsilon \}  }{ 1\vee \sum_{k=1}^m \ind \big\{s(X_{n+k}) \leq t^*+\epsilon \}} \leq \frac{ L_{n+j}\ind\{s(X_{n+j})\leq \hat{t}_{n+j}\}  }{ 1\vee \sum_{k=1}^m \ind \big\{s(X_{n+k}) \leq \hat{t}_{n+j} \}} 
\leq \frac{ L_{n+j}\ind\{s(X_{n+j})\leq t^*+\epsilon \}  }{ 1\vee \sum_{k=1}^m \ind \big\{s(X_{n+k}) \leq t^*-\epsilon \}},
\$
which implies 
\$
\sdr_{n,m} \leq   \EE\Bigg[ \frac{ \sum_{j=1}^m L_{n+j}\ind\{s(X_{n+j})\leq t^* +\epsilon\}  }{ 1\vee \sum_{k=1}^m \ind \big\{s(X_{n+k}) \leq t^*-\epsilon \}}\ind_{\cE_\epsilon^c}\Bigg] + \PP(\cE_\epsilon ).
\$
Since $\PP_Q(s(X)\leq t^*)>0$, we know   $\epsilon < \PP_Q(s(X)\leq t^*-\epsilon)$ holds for sufficiently small $\epsilon>0$. Thus, taking $\epsilon>0$ sufficiently small, we have 
\$
\sdr_{n,m} \leq   \EE\bigg[ \frac{ \EE_Q[L\ind\{s(X)\leq t^*+\epsilon \}] + \epsilon }{  \PP_Q(s(X)\leq t^* - \epsilon ) - \epsilon}\ind_{\cE_\epsilon^c}\bigg] + \PP(\cE_\epsilon ) 
\leq \frac{ \EE_Q[L\ind\{s(X)\leq t^*+\epsilon \}] + \epsilon }{  \PP_Q(s(X)\leq t^* - \epsilon ) - \epsilon} + \PP(\cE_\epsilon ) . 
\$
By the arbitrariness of $\epsilon>0$ and the continuity of $s(X)$, we know 
\$
\limsup_{n,m\to\infty} \sdr_{n,m}  &\leq \frac{\EE_Q[L\ind\{s(X)\leq t^*  \}]  }{  \PP_Q(s(X)\leq t^*   )  } \\ &= \bar{F}(t^*) \frac{\EE_Q[L\ind\{s(X)\leq t^*  \}]\cdot \EE_P[\bar{w}(X)]}{\EE_P[L\bar{w}(X)\ind\{s(X)\leq t^*  \}]} 
\leq \alpha\cdot \frac{\EE_Q[L\ind\{s(X)\leq t^*  \}]\cdot \EE_P[\bar{w}(X)]}{\EE_P[L\bar{w}(X)\ind\{s(X)\leq t^*  \}]} .
\$
We now proceed to show that the above quantity is upper bounded by $\alpha$, under either of the two conditions.
\begin{itemize}
    \item First, if $\bar{w}(\cdot)=w(\cdot)$, then by definition we know $\EE_P[\bar{w}(X)]=1$, and $\EE_Q[L\ind\{s(X)\leq t^*\}] = \EE_P[L \bar{w}(X)\ind\{s(X)\leq t^*\}]$. This implies 
    \$
    \frac{\EE_Q[L\ind\{s(X)\leq t^*  \}]\cdot \EE_P[\bar{w}(X)]}{\EE_P[L\bar{w}(X)\ind\{s(X)\leq t^*  \}]} = 1
    \$
    and thus the desired result. 
    \item Second, suppose $\bar{l}(\cdot)=l(\cdot)$. Recall the balancing cutoff 
    \$
    \hat{t} = \sup\bigg\{t\colon \frac{\frac{1}{n}\sum_{i=1}^n \hat{w}_i \hat{l}(X_i)\ind\{s(X_i)\leq  {t}\}}{1\vee \sum_{j=1}^m  \ind\{s(X_{n+j})\leq  {t}\}} \leq \alpha \bigg\}.
    \$
    Following the same arguments as in the proof of Theorem~\ref{thm:dr_mdr} for~\eqref{eq:cov_weighted_cdf}, the given conditions imply 
        \@\label{eq:cov_w_hatl_sdr}
\sup_{t\in \RR} \bigg| \frac{1}{n}\sum_{i=1}^n \hat{w}_i \hat{l}(X_i)\ind\{s(X_i)\leq  {t}\} - \EE_P[\bar{w}(X) l(X) \ind\{s(X)\leq t\}] \bigg| = o_P(1) , 
    \@
    and $\sup_{t\in \RR}|\frac{1}{m}\sum_{j=1}^m  \ind\{s(X_{n+j})\leq  {t}\} - \PP_Q(s(X)\leq t)| =o_P(1)$. Also, taking $m,n\to\infty$ in the balancing conditions of Assumption~\ref{assump:balance_sdr} yields $\EE_P[\bar{w}(X)]=1$. Thus we know 
    \$
    \sup_{t\in \RR} \bigg| \frac{\frac{1}{n}\sum_{i=1}^n \hat{w}_i \hat{l}(X_i)\ind\{s(X_i)\leq  {t}\}}{1\vee \sum_{j=1}^m  \ind\{s(X_{n+j})\leq  {t}\}} - \bar{F}(t) \bigg| = o_P(1). 
    \$
    Since $\bar{F}(t)$ is continuous at $t^* = \sup\{t\colon \bar{F}(t)\leq \alpha\}$, and for any sufficiently small $\epsilon>0$, there exists some $t\in (t^*-\epsilon,t^*)$ such that $\bar{F}(t)<\alpha$, with similar arguments as those in the proof of~\eqref{eq:cov_hatt} we know 
    \@\label{eq:cov_hatt_dagger}
    \hat{t} =t^*+o_P(1).
    \@
    
    With the similar arguments as in the proof of~\eqref{eq:conv_cdf_hatl} in  Theorem~\ref{thm:dr_mdr} we can show 
    \@\label{eq:cov_hatl_sdr}
    &\sup_{t\in \RR} \bigg| \frac{1}{m}\sum_{j=1}^m \hat{l}(X_{n+j}) \ind\{s(X_{n+j})\leq  {t}\} - \EE_Q[\bar{l}(X)\ind\{s(X)\leq t\}] \bigg| = o_P(1).
    \@
    which, together with~\eqref{eq:cov_w_hatl_sdr} and the balancing conditions in Assumption~\ref{assump:balance_sdr}, leads to 
    \$
     \EE_P[\bar{w}(X) l(X) \ind\{s(X)\leq \hat{t}\}]  =   \EE_Q[  l(X) \ind\{s(X)\leq \hat{t}\}] + o_P(1),\quad \EE[\bar{w}(X)]=1,
    \$
    where   $\hat{t}$ shall be viewed as fixed and $X$ as an independent copy. By~\eqref{eq:cov_hatt_dagger} and the continuity of $s(X)$ we then have 
    \$
    \EE_P[\bar{w}(X) l(X) \ind\{s(X)\leq t^*\}]  =   \EE_Q[  l(X) \ind\{s(X)\leq t^*\}] +o_P(1),
    \$
    which implies 
    \$ 
    \frac{\EE_Q[L\ind\{s(X)\leq t^*  \}]\cdot \EE_P[\bar{w}(X)]}{\EE_P[L\bar{w}(X)\ind\{s(X)\leq t^*  \}]} = 1 +o_P(1),
    \$
    and thus the desired result.
\end{itemize}
We therefore conclude the proof of Theorem~\ref{thm:dr_sdr}. To see how this implies Theorem~\ref{thm:w_sdr_asymp}, the convergence of $\bar{w}_{n,m}$ to the true weight $w$ implies that the weight is correctly specified. As $\bar{w} = w$, the given condition on $F(t)$ exactly translates to the condition $\bar{F}(t)$ in the current theorem. In addition, Assumption~\ref{assump:balance_sdr} is automatically satisfied taking $\hat\ell(X_i) = 1$, since the weight estimates are consistent. The theorem therefore applies, establishing asymptotic SDR control in the setting of Theorem~\ref{thm:w_sdr_asymp}.
\end{proof}

\section{Additional details and results for numerical experiments} 

\subsection{Additional results for Section~\ref{subsec:app_drug}} \label{app:drug_more}

In this part, we present the analysis results on three additional drug discovery tasks under distribution shift. The results for datasets \texttt{clearance\_hepatocyte}, \texttt{clearance\_microsome} and \texttt{ppbr\_az} are shown in Figures~\ref{fig:drug_az}, \ref{fig:drug_microsome} and \ref{fig:drug_ppbr}, where the reward function is   diversity. 
Figures~\ref{fig:drug_caco_wang_activity} to~\ref{fig:drug_ppbr_activity} show the corresponding results for the four datasets with the activity reward function.

\begin{figure}[htbp]
    \centering
    \includegraphics[width=0.85\linewidth]{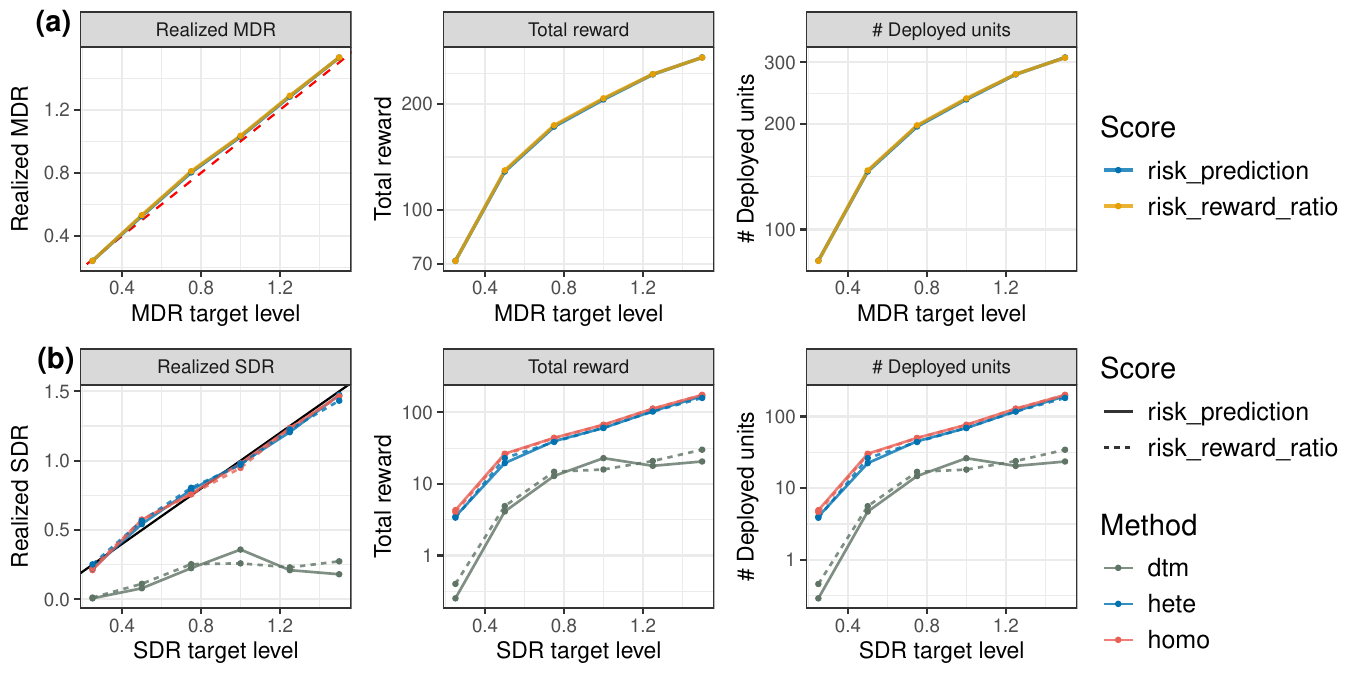}
    \caption{{\small MDR (a) and SDR (b) control for drug discovery with the \texttt{clearance\_hepatocyte} dataset in Therapeutic Data Commons with estimated covariate shift and diversity reward. Details are otherwise the same as Figure~\ref{fig:real-drug}.}}
    \label{fig:drug_az}
\end{figure}

\begin{figure}[htbp] 
    \centering
    \includegraphics[width=0.85\linewidth]{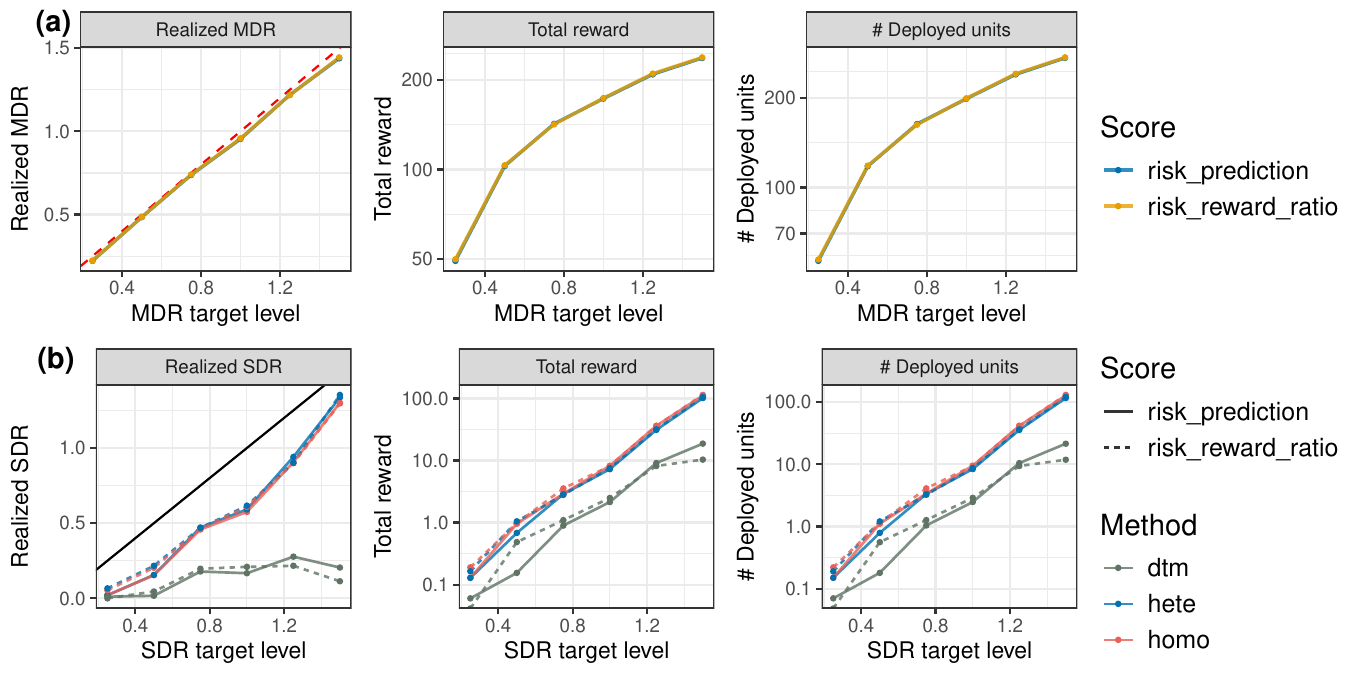}
    \caption{{\small MDR (a) and SDR (b) control for drug discovery with the \texttt{clearance\_microsome} dataset in Therapeutic Data Commons with estimated covariate shift and diversity reward. Details are otherwise the same as Figure~\ref{fig:real-drug}.}}
    \label{fig:drug_microsome}
\end{figure}

\begin{figure}[htbp] 
    \centering
    \includegraphics[width=0.85\linewidth]{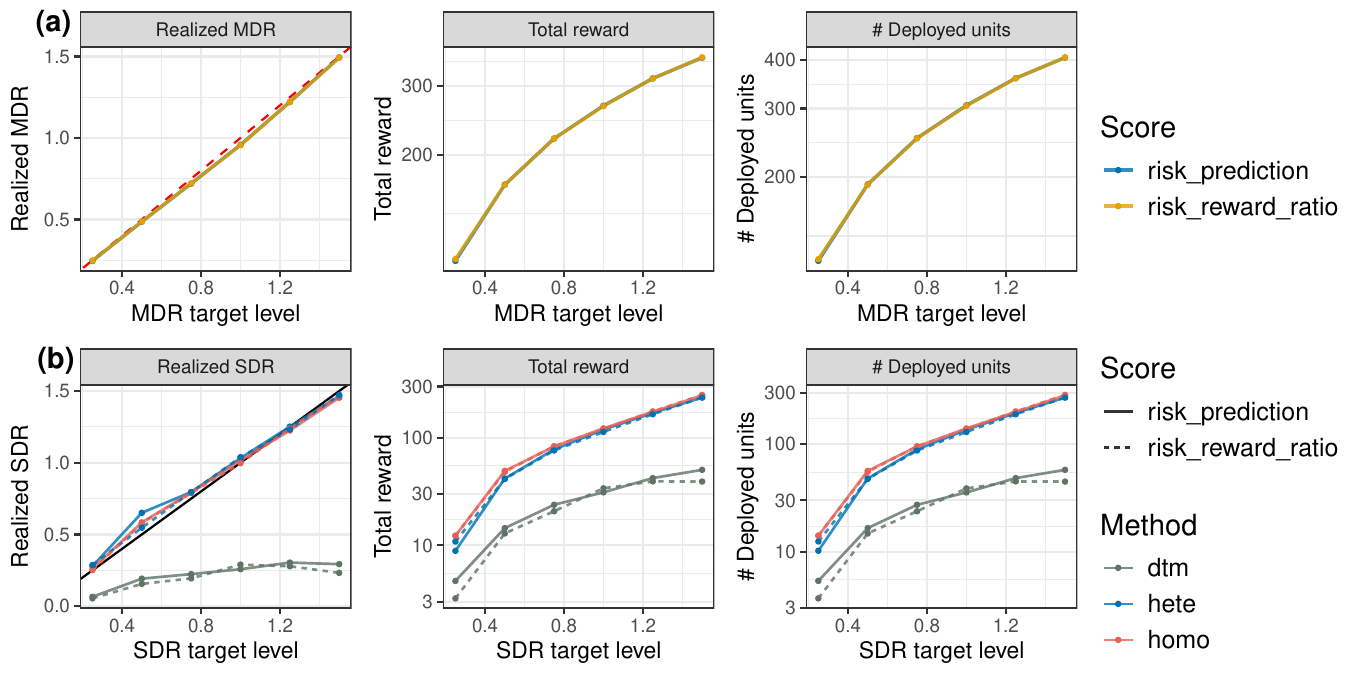}
    \caption{{\small MDR (a) and SDR (b) control for drug discovery with the \texttt{ppbr\_az} dataset in Therapeutic Data Commons with estimated covariate shift and diversity reward. Details are otherwise the same as Figure~\ref{fig:real-drug}.}}
    \label{fig:drug_ppbr}
\end{figure}

\begin{figure}[htbp]
    \centering
    \includegraphics[width=0.85\linewidth]{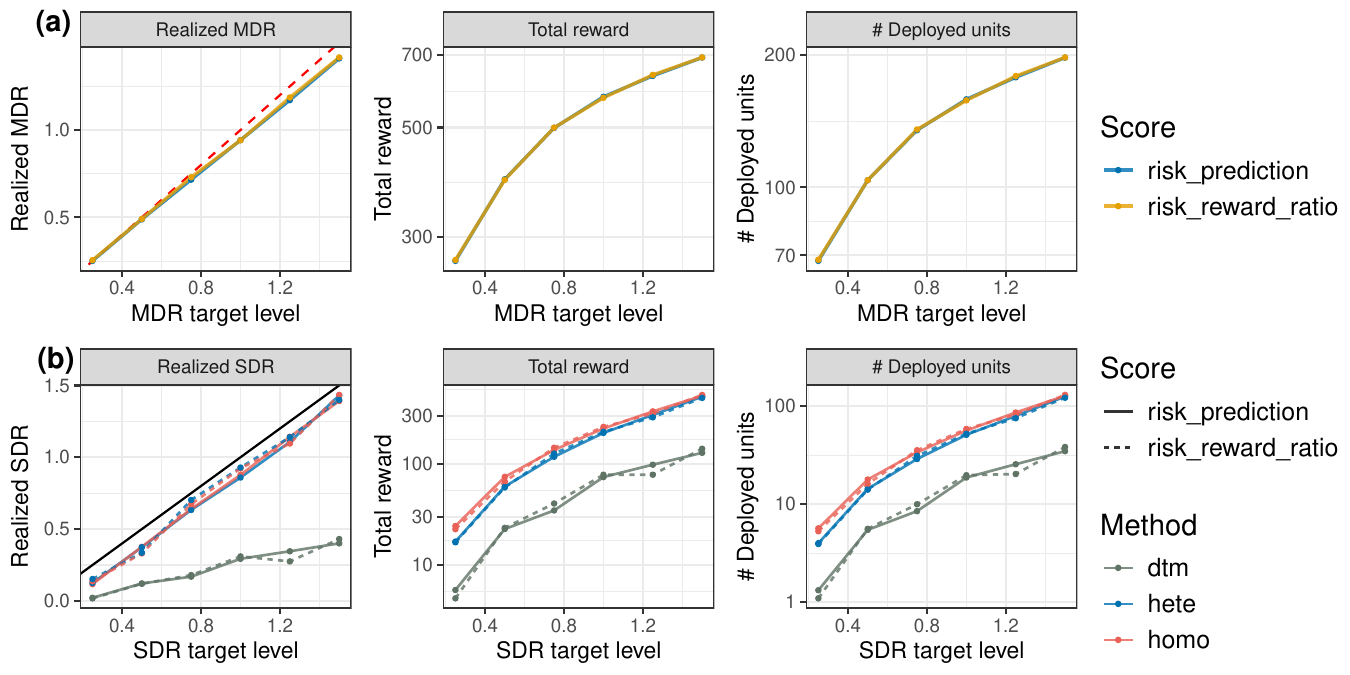}
    \caption{{\small MDR (a) and SDR (b) control for drug discovery with the \texttt{caco\_wang} dataset in Therapeutic Data Commons with estimated covariate shift and activity reward. Details are otherwise the same as Figure~\ref{fig:real-drug}.}}
    \label{fig:drug_caco_wang_activity}
\end{figure}

\begin{figure}[htbp]
    \centering
    \includegraphics[width=0.85\linewidth]{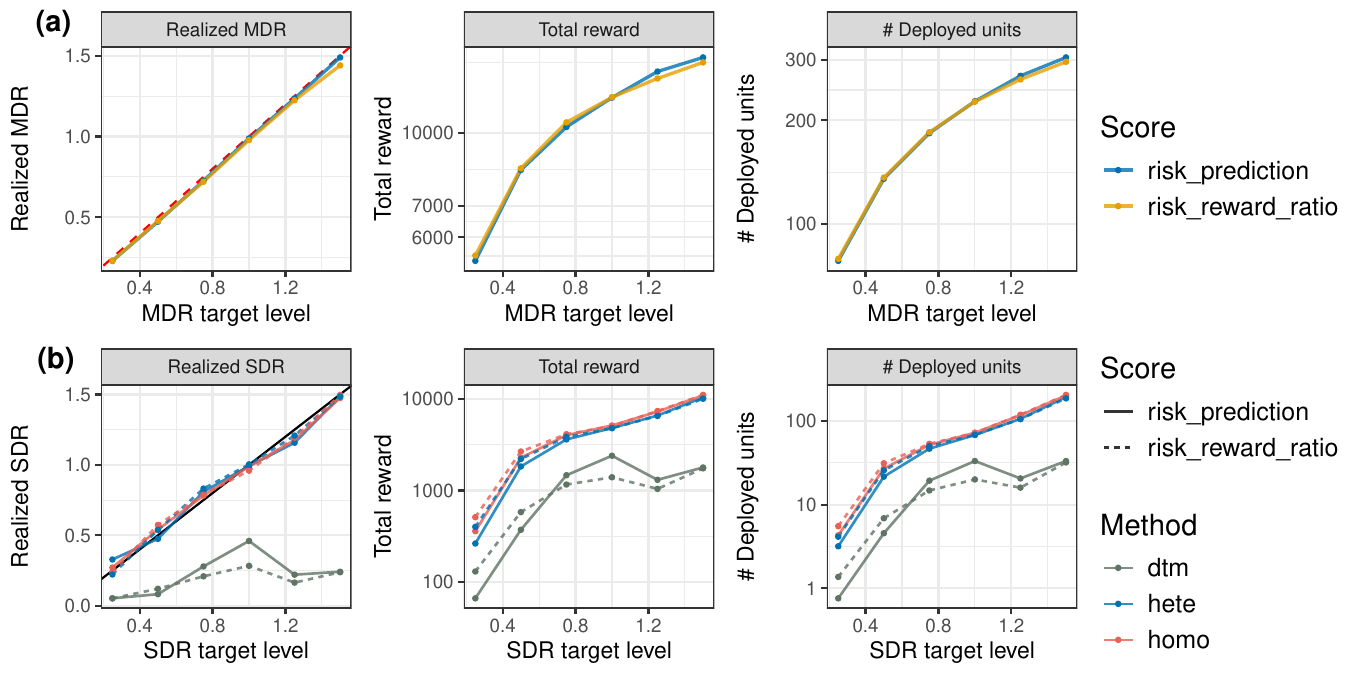}
    \caption{{\small MDR (a) and SDR (b) control for drug discovery with the \texttt{clearance\_hepatocyte} dataset in Therapeutic Data Commons with estimated covariate shift and activity reward. Details are otherwise the same as Figure~\ref{fig:real-drug}.}}
    \label{fig:drug_az_activity}
\end{figure}

\begin{figure}[htbp] 
    \centering
    \includegraphics[width=0.85\linewidth]{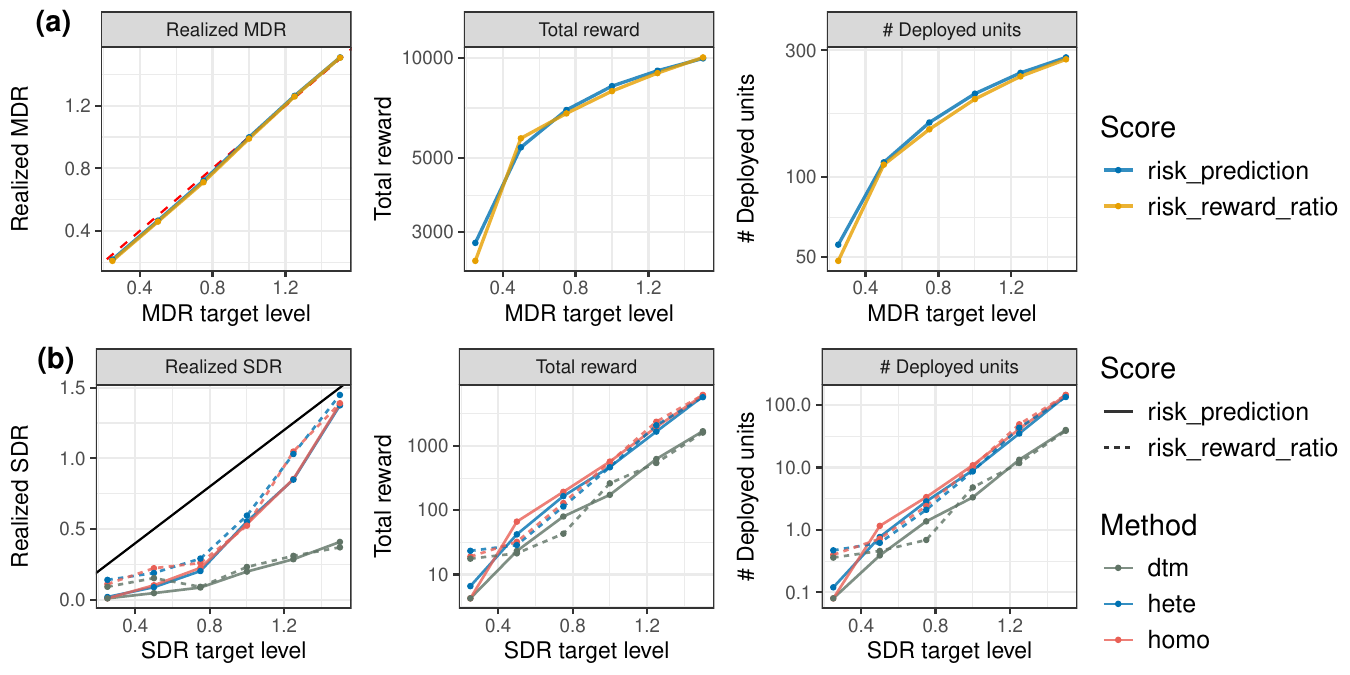}
    \caption{{\small MDR (a) and SDR (b) control for drug discovery with the \texttt{clearance\_microsome} dataset in Therapeutic Data Commons with estimated covariate shift and  activity reward. Details are otherwise the same as Figure~\ref{fig:real-drug}.}}
    \label{fig:drug_microsome_activity}
\end{figure}

\begin{figure}[htbp] 
    \centering
    \includegraphics[width=0.85\linewidth]{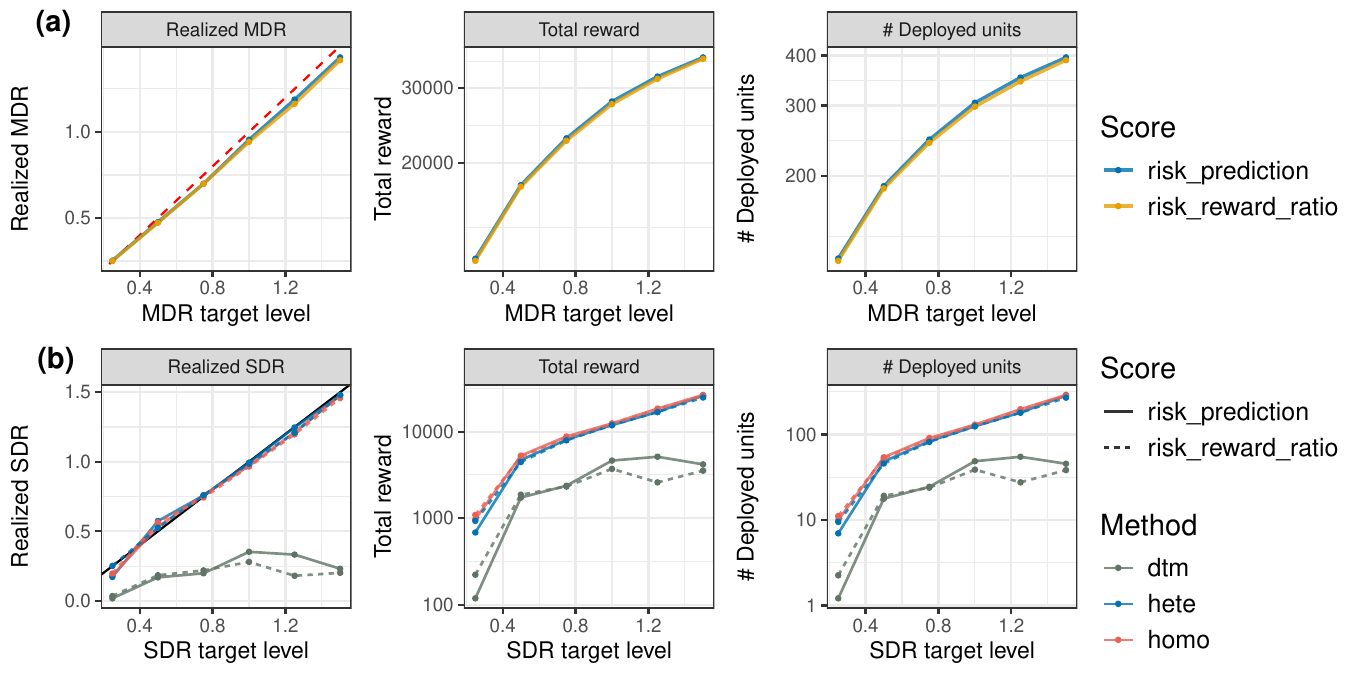}
    \caption{{\small MDR (a) and SDR (b) control for drug discovery with the \texttt{ppbr\_az} dataset in Therapeutic Data Commons with estimated covariate shift and activity reward. Details are otherwise the same as Figure~\ref{fig:real-drug}.}}
    \label{fig:drug_ppbr_activity}
\end{figure}

\subsection{Experiment setups for Section~\ref{subsec:app_llm}} \label{app:llm_detail}

In our LLM abstension application (Section~\ref{subsec:app_llm}), we use the same subset (p10, p11 and p12 folders) of the MIMIC-CXR dataset~\citep{johnson2019mimic} as in~\cite{gui2024conformal}, which is accessed from the PhysioNet project page \url{https://physionet.org/content/mimic-cxr/2.0.0/} under the PhysioNet Credentialed Health Data License 1.5.0. In our experiments, we draw a subset of images in the test folder determined by the same split as \cite{gui2024conformal}. In this way, the randomness is purely from randomly splitting the data into labeled data and test samples. 

The foundation model for generating the radiology reports is the one fine-tuned in \cite{gui2024conformal}. We include the details here for completeness. Specifically, this vision-language model combines the Vision Transformer \texttt{google/vit\-base-patch16-224-in21k} pre-trained on ImageNet-21k\footnote{\url{https://huggingface.co/google/vit-base-patch16-224-in21k}} as the image encoder and GPT2 as the text decoder.  
Each raw image is resized to $224\times 224$ pixels. 
The model  is fine-tuned on a hold-out dataset with a sample size of $43,300$ for $10$ epochs with a batch size of $8$, and other hyperparameters are set to default values. 
When generating reports, all the parameters are kept the same as the conformal alignment paper; we refer the readers to \cite[Appendix C.2]{gui2024conformal} for these details. 

We use exactly the same procedures as \cite{gui2024conformal} to compute $12$ features which (heuristically) measure the uncertainty of LLM-generated outputs:
\vspace{0.5em}
\begin{itemize}%[leftmargin=*, topsep=0pt] 
    \item \emph{Input uncertainty scores} (\texttt{Lexical\_Sim}, \texttt{Num\_Sets}, \texttt{SE}). Following~\cite{kuhn2023semantic}, we compute a set of features that measure the uncertainty of each LLM input through similarity among multiple answers. The features include lexical similarity (\texttt{Lexical\_Sim}), the \texttt{rouge-L} similarity among the  answers. In addition, we use a natural language inference (NLI) classifier to categorize the $M$ answers into semantic groups, and compute the number of semantic sets (\texttt{Num\_Sets}) and semantic entropy (\texttt{SE}).  Following \cite{kuhn2023semantic, lin2023generating}, 
    we use an off-the-shelf DeBERTa-large model \citep{he2020deberta} as the NLI predictor.
    \item \emph{Output confidence scores} (\texttt{EigV(J/E/C)}, \texttt{Deg(J/E/C)}, \texttt{Ecc(J/E/C)}). We also follow~\citep{lin2023generating} to compute features that measure the so-called output confidence: with $M$ generations, we compute the eigenvalues of the graph Laplacian (\texttt{EigV}), the pairwise distance of generations based on the degree matrix (\texttt{Deg}), and the Eccentricity (\texttt{Ecc}) which incorporates the embedding information of each generation. Note that each quantity is associated with a similarity measure; we follow the notations in \cite{lin2023generating} and use the suffix \texttt{J}/\texttt{E}/\texttt{C} to differentiate similarities based on the Jaccard metric, NLI prediction for the entailment class, and NLI prediction for the contradiction class, respectively. 
\end{itemize}
\vspace{0.5em}

A CheXbert model~\cite{smit2020chexbert} is employed to evaluate the factuality of LLM generated radiology reports. The model converts both the reference report from human experts and the generated report into two 14-dimensional vector, where each entry indicates the presence, absence, uncertainty or lack of mention for a medical condition. Based on these values, we set the risk as
\$
    \cL(f, X, Y) = \text{\# of type-I error} + 1/2 \cdot \text{\# of type-II error}
\$
where the type-I and type-II errors correspond to mismatched label values when the reference label is positive or otherwise. The confidence-weighted reward is defined as
\$
    r_1(X, Y) = 4 \cdot \text{\# of non-ambiguous matching labels} + \text{\# of other matching labels}
\$
where `non-ambiguous' means that the matching answer neither uncertain nor lack of mention. This reward encourages the selection of LLM outputs that make confident, definitive statements, thereby avoiding degenerate cases where the generated reports are dominated by uncertain or clinically uninformative conclusions.
Finally, a random forest model with default parameters is used for risk and reward prediction.

\subsection{Simulation setups for Section~\ref{sec:simu}} \label{app:simu_detail}

For each setting in the simulation studies, we draw covariates $X \sim \text{Unif}[-1,1]^d$, with the dimension taken to be $d=20$. We then form the responses as $Y = \mu(X) + \epsilon$, where the regression function $\mu$ and the noise distribution are detailed in Table~\ref{tab:simu_settings}. The same table also reports the definition of the risk function $\cL(f, X, Y)$ under each setting.

\renewcommand{\arraystretch}{2}
\begin{table}[h]
    \centering
    \small
    \begin{tabular}{c|c|c|c}
        \hline
        Setting & $\mu(\cdot)$ & $\epsilon_i$ & $\cL(\cdot)$ \\
        \hline
        1 & \makecell{$3 +\ind\{x_1x_2 > 0, x_4 > 0.5\} \cdot (x_4 + 0.5)$ \\ \hspace{0.5cm} $+ \ind\{x_1x_2 \leq 0, x_4 < -0.5\} \cdot (x_4 - 0.5) $} & $\textnormal{clip}( \sigma (5.5 - \mu(x)), -1.5, 1.5)$ & $\frac{1}{6} Y\ind\{Y > 2\}$ \\
        \hline
        2 & $2 +x_1x_2 + x_3^2 + e^{x_4 - 1}$ & $\textnormal{clip}( \sigma (6 - \mu(x)), -1, 1)$ & $\frac{1}{6} Y\ind\{Y > 2\}$ \\
        \hline
        3 & \makecell{$3 +\ind\{x_1x_2 > 0, x_4 > 0.5\} \cdot (x_4 + 0.5)$ \\ \hspace{0.5cm} $+ \ind\{x_1x_2 \leq 0, x_4 < -0.5\} \cdot (x_4 - 0.5) $} & $\textnormal{clip}( \sigma (5.5 - \mu(x)), -1.5, 1.5)$ & $\frac{1}{c} \text{clip}((Y-f(X))^2, 0, c)$ \\
        \hline
        4 & $2 +x_1x_2 + x_3^2 + e^{x_4 - 1}$ & $\textnormal{clip}( \sigma (6 - \mu(x)), -1, 1)$ & $\frac{1}{c} \text{clip}((Y-f(X))^2, 0, c)$ \\
        \hline
        5 & \makecell{$\ind\{x_1x_2 > 0, x_4 > 0.5\} \cdot (x_4 + 0.25)$ \\ \hspace{0.5cm} $+ \ind\{x_1x_2 \leq 0, x_4 < -0.5\} \cdot (x_4 - 0.25) $} & $\sigma (5.5 - \mu(x)) / 2$ & $\text{sigmoid}(-Y \cdot \tau)$ \\
        \hline
        6 & $x_1x_2 + x_3^2 + e^{x_4 - 1}$ & $\sigma (5.5 - \mu(x)) / 2)$ & $\text{sigmoid}(-Y \cdot \tau)$ \\
        \hline
    \end{tabular}
    % \captionsetup{font=small}
    \caption{{\small Details of the six data generating processes used in the simulation studies.}}
    \label{tab:simu_settings}
\end{table}

% maybe add the motivation of designing the 3 different risk functions?

In Table~\ref{tab:simu_settings}, we write the clipping operator as $\text{clip}(x, a, b) := \max\{a, \min\{b, x\}\}$ for $a, b \in \mathbb{R}, a \leq b$, and in setting 1-4, we apply it to the noise and predictor MSEs so that every risk value is confined to $[0,1]$, which is required for our procedure. In setting 3 and 4, the clipping constant $c$ is set to 0.6 and 0.4 respectively, corresponding to the approximate 0.95-th quantile of the MSE in these experiments (therefore, $c$ varies with different noise levels). Both settings also employ a pre-trained prediction model $f$, implemented as a random forest model (using the \texttt{scikit-learn} Python package) and fitted using an independent hold-out sample of 1000 observations. 
In setting 5 and 6, the sigmoid function is defined as $\text{sigmoid}(z) = 1/(1+e^{-z})$, and the temperature parameter $\tau$ is set to 10. A larger $\tau$ produces a closer approximation of the true indicator function. We use the parameter $\sigma$ to scale the noise level, and here $\sigma$ is fixed at $0.1$ in all settings. Finally, the risk and reward estimators $\hat{l}$ and $\hat{r}$ are instantiated as two random forest models and trained on an independent training dataset of size 1000. 

In the covariate shift setting, we apply an artificially crafted reweighting function $w$ to the covariates. Specifically, we define $w(x) = \text{sigmoid}(\theta^\top x)$, where $\theta_i = 0.1 \cdot \ind\{i \leq 5\}$.
The weights are estimated using probabilistic classification on an additional dataset of 2000 observations (1000 from each population).

\subsection{Details for baseline implementations in Section~\ref{subsec:simu_risk_control}}
\label{app:simu_baselines}

% TODO
Here we provide detail for the baseline methods introduced in Section~\ref{subsec:simu_risk_control}. For the MDR case, the two variants \texttt{Hoeffding} and \texttt{Rademacher} give different uniform bounds on $\mdr(t)$. The \texttt{Hoeffding} approach fixes a grid $\mathcal{G}$ consisting of $|\mathcal{G}| = 101$ evenly-spaced points between 0 and 1, and set $\epsilon_n = \sqrt{\log(2|\mathcal{G}|/\delta) / 2n}$ as the slack determined by Hoeffding's inequality. As such, $\widehat{\mdr}(t) + \epsilon_n$ is a uniform upper bound on $\mdr(t)$ over all $t \in \mathcal{G}$ with probability at least $1-\delta$. With $\hat{t} = \max\{t\in \cG\colon \widehat{\mdr}(t) + \epsilon_n \leq \alpha \}$, we have the PAC-type guarantee:
\begin{align*}
    \mdr(\hat{t}) \leq \alpha \quad \text{with probability} \geq 1-\delta.
\end{align*}
Similarly, the \texttt{Rademacher} approach bounds $\widehat{\mdr}(t)$ over all $t \in [0,1]$ by $\widehat{\mdr}(t) + 2 \widehat{\text{Rad}}(\mathcal{D}_{\text{calib}}) + 3\sqrt{\log(2/\delta)/2n}$. Here, $\widehat{\text{Rad}}(\mathcal{D}_{\text{calib}})$ denotes the empirical Rademacher complexity of the function class $\{t \mapsto L_i \ind\{s(X_i) \leq t\}\}$ for all $(X_i, L_i) \in \mathcal{D}_{\text{calib}}$; it is evaluated by empirically sampling $k=100$ Rademacher random variables. The slack term $3\sqrt{\log(2/\delta)/2n}$ is added once to account for the estimation of the empirical MDR and twice for that of the empirical Rademacher complexity. With this uniform upper bound on $t \in [0,1]$, we set the grid to all predicted values $\mathcal{G} = \{s(X_i)\}_{i=1}^n$ for tightness. It is straightforward to see that such approach also ensures above PAC-type guarantee.

For SDR control, the two variants are constructed similarly. Both variants bound the numerator $\EE[\cL(f, X,Y) \ind\{s(X) \leq t\}]$ and denominator $\PP(s(X) \leq t)$ separately. For the \texttt{Hoeffding} variant, the numerator is upper bounded by
\begin{align*}
    A_h(t) := \frac{1}{n} \sum_{i=1}^n L_i \ind\{s(X_i) \leq t\} + \sqrt{\frac{1}{2n} \log(4|\mathcal{G}|/\delta)}
\end{align*}
and the denominator is lower bounded by
\begin{align*}
    B_h(t) := \frac{1}{n} \ind\{s(X_i) \leq t\} - \sqrt{\frac{1}{2n} \log(4|\mathcal{G}|/\delta)}.
\end{align*}
Setting $\mathcal{G}$ to be a fixed evenly-spaced grid of size $|\mathcal{G}| = 100$, above bounds hold uniformly over $t\in \mathcal{G}$ with probability at least $1-\delta/2$. Therefore, with probability at least $1-\delta$, $\widehat{\sdr}^+(t) := A_h(t)/B_h(t)$ if $B_h(t) > 0$ and $\infty$ otherwise is an uniform upper bound on $\sdr^*(t)$. Now, for the \texttt{Rademacher} approach, we set $\mathcal{G} = \{s(X_i)\}_{i=1}^n$, and the upper and lower bounds are
\begin{align*}
    A_r(t) &:= \frac{1}{n} \sum_{i=1}^n L_i \ind\{s(X_i) \leq t\} +2 \widehat{\text{Rad}}(\mathcal{D}_{\text{calib}}) + 3\sqrt{\frac{1}{2n} \log(4/\delta)}, \\
    B_r(t) &:= \frac{1}{n}  \ind\{s(X_i) \leq t\} - 2\widetilde{\text{Rad}}(\mathcal{D}_{\text{calib}}) + 3\sqrt{\frac{1}{2n} \log(4/\delta)}
\end{align*}
where $\widetilde{\text{Rad}}(\mathcal{D}_{\text{calib}})$ denotes the empirical Rademacher complexity of the function class $\{t \mapsto \ind\{s(X_i) \leq t\}\}$. Again, $\widehat{\sdr}^+(t) := A_r(t)/B_r(t)$ if $B_r(t) >0$ and $\infty$ otherwise would be a valid uniform upper bound on $\sdr^*(t)$. As constructed, \texttt{Hoeffding} and \texttt{Rademacher} variants guarantees $\sdr^*(\hat{t}) \leq \alpha$ with probability at least $1-\delta$.

\subsection{Additional simulation results in Section~\ref{subsec:simu_weight}}
\label{app:simu_weight_res}

In this section, we present the omitted results for SCoRE under covariate shifts in Section~\ref{subsec:simu_weight}.

Figure~\ref{fig:weight_mdr_full} presents the complete results for SCoRE-MDR with estimated weights under the three covariate shift models. 
Figures~\ref{fig:weight_sdr_full_err},~\ref{fig:weight_sdr_full_nsel}, and~\ref{fig:weight_sdr_full_reward} present the realized SDR, number of selections, and total reward from SCoRE-SDR with estimated weights under the three models.  

\begin{figure}[h]
    \centering
    \includegraphics[width=0.9\linewidth]{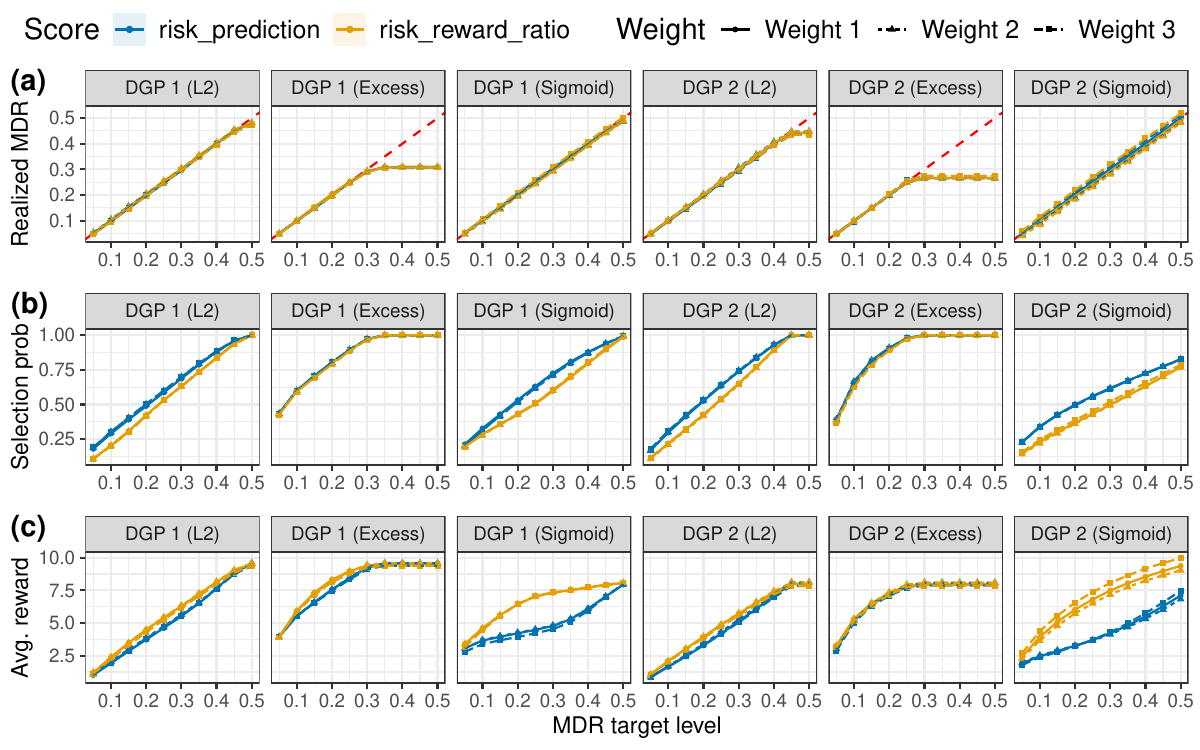}
    \caption{{\small Results for SCoRE-MDR with estimated weights under three covariate shift models (with the reward of Sigmoid risk re-scaled for easier visualization). Details are otherwise the same as in Figure~\ref{fig:simu_mdr}.}}
    \label{fig:weight_mdr_full}
\end{figure}

\begin{figure}[h]
    \centering
    \includegraphics[width=0.9\linewidth]{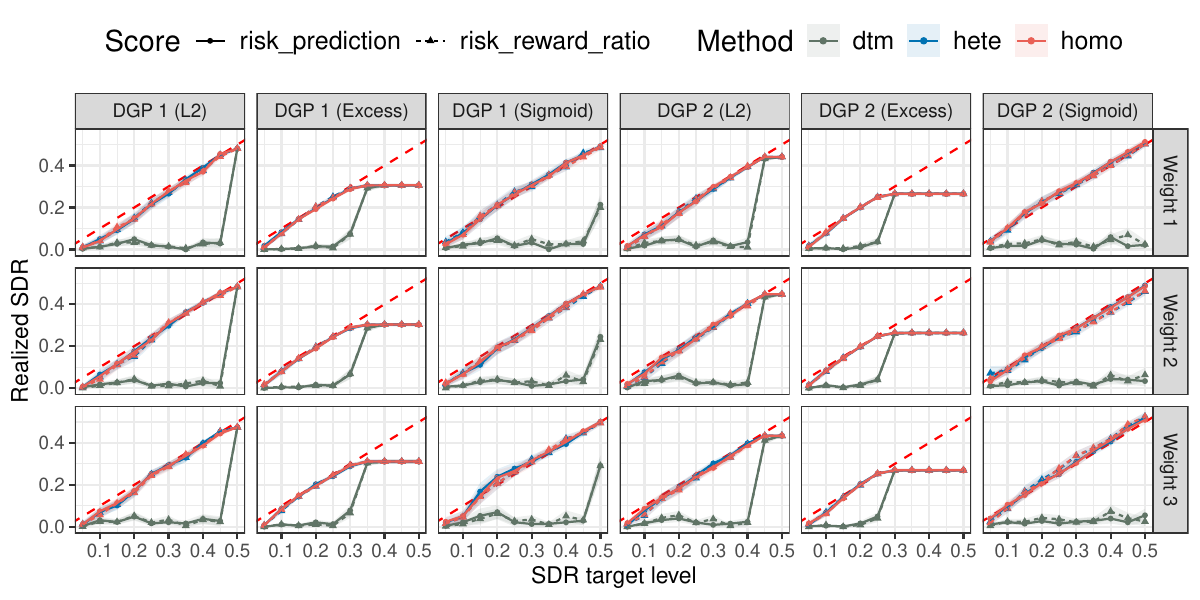}
    \caption{{\small Realized SDR of SCoRE-SDR  with estimated weights under three covariate shift models. Each row is a weight model. Details are otherwise the same as in Figure~\ref{fig:simu_sdr}.}}
    \label{fig:weight_sdr_full_err}
\end{figure}

\begin{figure}[H]
    \centering
    \includegraphics[width=0.9\linewidth]{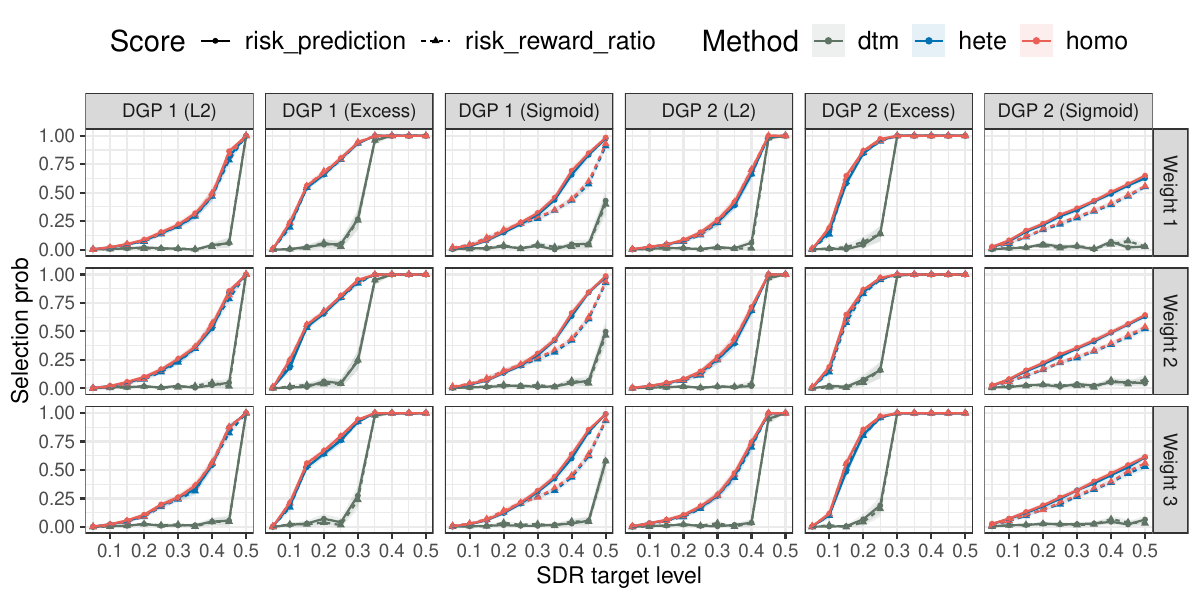}
    \caption{{\small Number of selection by SCoRE-SDR  with estimated weights under three covariate shift models. Each row is a weight model. Details are otherwise the same as in Figure~\ref{fig:simu_sdr}.}}
    \label{fig:weight_sdr_full_nsel}
\end{figure}

\begin{figure}[H]
    \centering
    \includegraphics[width=0.9\linewidth]{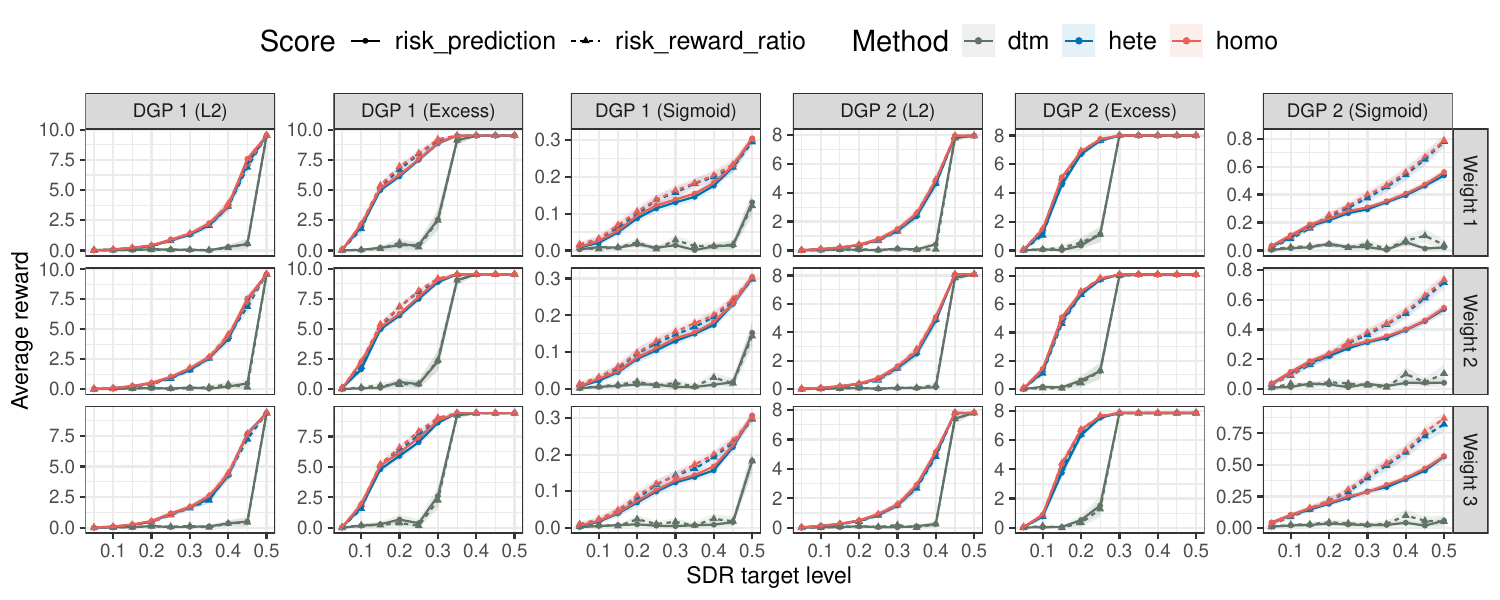}
    \caption{{\small Average total reward of SCoRE-SDR  with estimated weights under three covariate shift models. Each row is a weight model. Details are otherwise the same as in Figure~\ref{fig:simu_sdr}.}}
    \label{fig:weight_sdr_full_reward}
\end{figure}

% \clearpage

\section{Auxiliary lemmas}

    \begin{lemma}\label{lem:weighted_cdf_conv}
        Let $f\colon \cX\to [0,M]$ be any fixed, bounded function, and $s\colon \cX\to \RR$ be a fixed function so that $s(X)$ has no point mass for $X\sim Q$. Let $\{X_i\}_{i=1}^m$ be i.i.d.~samples from $Q$ and independent of $f$ and $s$.  Then there exists a universal constant $C>0$ such that 
        \$
        \EE\Bigg[ \sup_{t\in \RR} \bigg| \frac{1}{m} \sum_{i=1}^m f(X_i)\ind\{s(X_i)\leq t\} - \EE_Q[f(X)\ind\{s(X)\leq t\}] \bigg|  \Bigg] \leq \frac{CM}{\sqrt{m}}. 
        \$
    \end{lemma}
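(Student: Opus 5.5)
This lemma is a weighted version of the Dvoretzky--Kiefer--Wolfowitz (or Glivenko--Cantelli) bound. We prove it by symmetrization plus a chaining/VC argument for the class $\cF=\{x\mapsto f(x)\ind\{s(x)\leq t\}\colon t\in\RR\}$. Since $f$ and $s$ are fixed and independent of the sample, we can treat them as deterministic throughout.

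\textbf{Step 1 (reduction to the index class).} Write $W_i=s(X_i)$ and $V_i=f(X_i)\in[0,M]$. The pairs $(W_i,V_i)$ are i.i.d., and
\[
\sup_t\Big|\frac1m\sum_i V_i\ind\{W_i\le t\}-\EE[V\ind\{W\le t\}]\Big|
\]
depends on $t$ only through the half-lines $(-\infty,t]$ in the one-dimensional variable $W$. So the class is $\{(w,v)\mapsto v\ind\{w\le t\}\}$. This is the product of the fixed envelope $v\le M$ with a VC class of indicators of VC dimension $1$.

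\textbf{Step 2 (symmetrization and Rademacher bound).} By standard symmetrization, the expected supremum is at most
\[
2\,\EE\sup_t\Big|\frac1m\sum_i\varepsilon_iV_i\ind\{W_i\le t\}\Big|.
\]
Condition on the sample and order the indices by increasing $W_i$. As $t$ ranges over $\RR$, the vector $(V_i\ind\{W_i\le t\})_i$ takes at most $m+1$ distinct values. Each value is a prefix of the sorted sequence.

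The supremum is therefore the maximum of partial sums $\big|\sum_{k\le j}\varepsilon_{(k)}V_{(k)}\big|$ over $j=0,\dots,m$. Conditionally, this is a martingale in $j$ with bounded increments. By Doob's $L^2$ maximal inequality,
\[
\EE\max_j\Big|\sum_{k\le j}\varepsilon_{(k)}V_{(k)}\Big|\le 2\Big(\sum_k V_{(k)}^2\Big)^{1/2}\le 2M\sqrt m.
\]
Dividing by $m$ gives the bound $4M/\sqrt m$, so the constant is $C=4$. Note that this prefix/martingale trick avoids any log factor, which a union bound over the $m+1$ prefixes would introduce.

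\textbf{Potential obstacle.} The step needing care is the measurability of the supremum over uncountable $t$, together with ties among the $W_i$. The no-point-mass assumption on $s(X)$ rules out ties almost surely. It also lets us replace the supremum over $\RR$ by a supremum over rationals (using right-continuity in $t$), so the supremum is measurable.

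Even without the no-atom assumption, the prefix argument still works after grouping tied indices into blocks. So this assumption is essentially used only for convenience.
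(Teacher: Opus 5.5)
Your proof is correct, but it takes a different route from the paper's.

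\textbf{The paper's route.} The paper bounds the covering numbers of $\{x\mapsto f(x)\mathbf{1}\{s(x)\le t\}\colon t\in\mathbb{R}\}$ by those of the VC class of half-line indicators, scaled by the envelope $M$. It then cites a generic maximal inequality for VC-type classes. This is an entropy/chaining bound with an unspecified universal constant.

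\textbf{Your route.} You use the fact that the index sets $\{i\colon s(X_i)\le t\}$ form a chain. After symmetrization and conditioning on the sample, the supremum over $t$ becomes a maximum of prefix sums of a conditional Rademacher martingale. Doob's $L^2$ inequality then bounds this directly.

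\textbf{What each approach buys.} Your argument is fully elementary: no covering numbers and no chaining. It gives the explicit constant $C=4$. As you note, grouping ties into blocks means you take a maximum over a subset of the prefixes, so the no-atom hypothesis is not needed. The paper's argument does not use that hypothesis either. The paper's approach is a one-line appeal to empirical-process theory, and it is more general. It applies unchanged to any VC-type index class, including non-nested ones such as multivariate thresholds, where your prefix/martingale reduction would not be available. Your argument is tailored to one-dimensional thresholds, which is exactly the setting the lemma and its uses in the paper require.
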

    \begin{proof}[Proof of Lemma~\ref{lem:weighted_cdf_conv}]
    Define 
    \$
    f_t(x) := f(x)\mathbf 1\{s(x)\le t\},\qquad \mathcal F := \{f_t: t\in\mathbb R\}.
    \$
        Consider the function class $\mathcal H:=\{h_t(x)=\mathbf 1\{s(x)\le t\}: t\in\mathbb R\}$ which is well-known to be a VC class. Hence there exist constants $A,v<\infty$ (e.g., $A=\sqrt{2}$, $v=2$) such that the covering number obeys
\[
N\big(\varepsilon,\mathcal H,L_2(\PP)\big)\le \Big(\frac{A}{\varepsilon}\Big)^v,\qquad 0<\varepsilon<1.
\] 
Due to the boundedness of $f(\cdot)$, it is straightforward to see that 
\[
N\big(\varepsilon,\mathcal F,L_2(\PP)\big)\le \Big(\frac{AM}{\varepsilon}\Big)^v,\qquad 0<\varepsilon<1,
\] 
so $\mathcal F$ is a VC-type class with envelope $F(x)\equiv M$.
By a standard maximal inequality for VC-type classes we obtain, for a universal constant $C_0>0$, that 
\[
\mathbb E\Big[\sup_{\tilde{f}\in\mathcal F}\big|\sqrt{m}(P_m-P)f\big| \Big]\le C_0 \|F\|_{L_2(Q)}=C_0 M,
\]
where $P_m(\tilde{f})=\frac{1}{m}\sum_{i=1}^n \tilde{f}(X_i)$ and $P(\tilde{f})=\EE[f(X)]$.  
Dividing by $\sqrt m$ yields the displayed expectation bound.
    \end{proof}
    % Lemma~\ref{lem:cond_conv_op1} is from \cite{jin2024tailored}, Lemma I.5, whose proof is omitted here.
    % \begin{lemma}\label{lem:cond_conv_op1}
    %     Let $\{\cF_n\}_{n\geq 1}$ be a sequence of $\sigma$-algebra, and let $\{A_n\}_{n\geq 1}$ be a sequence of non-negative random variables. If $\EE(A_n\given \cF_n) =o_P(1)$, then $A_n = o_P(1)$. 
    % \end{lemma}

\end{document}